\documentclass{article}
\usepackage[utf8]{inputenc}
\usepackage{amsmath}
\usepackage{amssymb}
\usepackage{braket}
\usepackage{graphicx}
\usepackage{float}
\usepackage{braket}
\usepackage{bbold}
\usepackage[margin=1in]{geometry}
\usepackage{amsthm}
\usepackage[dvipsnames]{xcolor}
\usepackage{comment}
\usepackage{mathtools}
\usepackage{hyperref}
\usepackage{thm-restate}
\usepackage{thmtools}
\usepackage[capitalise]{cleveref}
\usepackage{tikz}
\usepackage{standalone}
\usepackage{textcomp}
\usepackage{cancel}
\usepackage{stmaryrd}
\usepackage{subcaption}
\usepackage[sorting=none,style=numeric-comp,date=year]{biblatex}
\usepackage{float}
\usepackage{subcaption}
\usepackage{graphicx}
\usepackage{soul}
\usepackage{authblk}
\usetikzlibrary{arrows.meta,calc}

\bibliography{references}

\newcommand{\C}[1]{\mathcal{#1}}
\newcommand{\PBP}[0]{\mathfrak{P} = (\C{C}, (\{\C{M}^{A_k}_{x_k}\}_{x_k})_{k=0}^{N+1})}
\newcommand{\PBPp}[0]{\mathfrak{P}' = (\C{C}', (\{\C{M}^{\prime A_k}_{x_k}\}_{x_k})_{k=0}^{N+1})}
\newcommand{\PBPchoose}[3]{#1 = (#2, (\{#3\}_{x_k})_{k=0}^{N+1})}
\newcommand{\QP}[0]{\PBPchoose{\mathfrak{Q}}{\C{Q}}{\M}}
\newcommand{\QPp}[0]{\PBPchoose{\mathfrak{Q}}{\C{Q}}{\Mp}}
\newcommand{\CM}[2]{(#1 \circ \bigotimes_{k=0}^{N+1} #2)^{\hookrightarrow}}
\newcommand{\M}[0]{\C{M}^{A_k}_{x_k}}
\newcommand{\Mp}[0]{\C{M}^{\prime A_k}_{x_k}}
\newcommand{\lc}[1]{(#1)^{\hookrightarrow}}

\newcommand{\F}[2]{\C{F}^{#1, #2}}

\newcommand{\FI}[0]{\C{F}^{A^I_k, \C{T}^I_k}}

\newcommand{\FIall}[0]{\C{F}^{A^I_k, \C{T}}}

\newcommand{\FO}[0]{\C{F}^{A^O_k, \C{T}^O_k}}

\newcommand{\FOall}[0]{\C{F}^{A^O_k, \C{T}}}

\newcommand{\FIO}[0]{\C{F}^{A^{I/O}_k, \C{T}^{I/O}_k}}

\newcommand{\FIOall}[0]{\C{F}^{A^{I/O}_k, \C{T}}}

\newtheorem{defi}{Definition}

\Crefname{prop}{Proposition}{Propositions}
\crefname{prop}{Prop.}{Props.}
\Crefname{defi}{Definition}{Definitions}
\crefname{defi}{Def.}{Defs.}
\Crefname{prop}{Proposition}{Propositions}
\crefname{lemma}{Lemma}{Lemmas}
\crefname{coro}{Cor.}{Cors.}
\Crefname{coro}{Corollary}{Corollaries}
\crefname{example}{Ex.}{Exs.}
\Crefname{example}{Example}{Examples}
\crefname{remark}{Remark}{Remarks}
\crefname{conj}{Conj.}{Conjs.}
\Crefname{conj}{Conjecture}{Conjectures}
\crefname{thm}{Thm.}{Thms.}
\Crefname{thm}{Theorem}{Theorems}

\hypersetup{
    colorlinks=true,
    linkcolor=blue,
    filecolor=magenta,      
    urlcolor=cyan,
    citecolor=green,
    }

\DeclareFieldFormat{title}{\mkbibquote{#1}}

\title{Higher-order quantum processes respecting closed labs in a spacetime have quantum controlled causal order}

\author[1,2]{Matthias Salzger}
\author[3,2]{V.~Vilasini}

\affil[1]{International Centre for Theory of Quantum Technologies, University of Gda\'{n}sk, 80--309 Gda\'{n}sk, Poland}
\affil[2]{Institute for Theoretical Physics, ETH Zurich, 8093 Z\"{u}rich, Switzerland}
\affil[3]{Universit\'e Grenoble Alpes, Inria, 38000 Grenoble, France}

\date{\today}

\begin{document}

\maketitle

\begin{abstract}
In quantum causality and quantum information, there is a vast landscape of abstract quantum protocols that permit cyclic or non-acyclic causal structures between quantum operations. This includes widely studied frameworks for indefinite causal order and higher-order quantum processes, such as process matrices. However, a longstanding open question has been which is the largest class of such abstract processes that admit physical realisations without post-selection. In this work, we provide a rigorous answer by adopting a top-down approach grounded in relativistic causality principles, motivated by the fact that physical experiments are implemented consistently with such principles in spacetimes with acyclic lightcone structures. Building on the framework of causal boxes, which characterise the most general quantum information-processing protocols compatible with fixed background spacetimes, we formalise additional physically motivated constraints (Acting Once + Local Order) capturing the closed-laboratory assumptions of the process matrix framework at a fine-grained spacetime level. We prove that any protocol realisable in a classical acyclic spacetime and satisfying these spatiotemporal closed-lab conditions is behaviourally equivalent to a quantum circuit with quantum control of causal orders (QC-QC), providing a top-down derivation of QC-QCs from physical principles. Our results therefore show that QC-QCs constitute precisely the class of higher-order quantum processes, including those with indefinite orders, that can be physically realised within classical spacetime, clearly ruling out the possibility of any experiment in this regime that could realise more general non-causal processes under such a closed-labs assumption. This clarifies the relationship between abstract higher-order process matrix frameworks and experimentally accessible quantum protocols, as well as the interplay between coarse-grained cyclic and fine-grained acyclic operational causal structures. We also develop characterisation techniques and results for process box protocols that lead to new causality-based open questions concerning spacetime quantum protocols and relativistic quantum experiments.

\end{abstract}

\pagenumbering{arabic}

\tableofcontents

\section{Introduction}
\label{sec:intro}

Many frameworks used in quantum information theory are formulated without explicit reference to any underlying spacetime. This abstraction has led to a broad landscape of general quantum protocols, where the (information-theoretic) causal structure between quantum operations need not be definite and acyclic, allowing for intriguing situations involving paradox-free quantum causal loops and protocols with indefinite causal order (ICO) of agents’ quantum operations. The most general such abstract and spacetime agnostic quantum protocols that can be formulated in a compositional manner arise by considering sets of quantum operations that are composed in a well-defined manner, but possibly through feedback loops. These possibly cyclic structures can equivalently be described in several frameworks such as postselected closed timelike curves (P-CTCs) \cite{Lloyd_2011}, category-theoretic approaches to quantum theory \cite{pinzani2019categorical,selby2025generalised,coecke2017picturing}, cyclic quantum networks \cite{VilasiniRennerPRA,VilasiniRennerPRL}, and the cyclic quantum causal modelling formalism of \cite{ferradini2025}\footnote{A prior formalism for cyclic quantum causal models \cite{Barrett_2021} also exists, but this models a strict subset of P-CTCs which correspond to valid process operators.}.

However, current day quantum experiments are implemented within the regime of a classical background spacetime, whose light cone structure is acyclic. This raises the question of the connection between abstract quantum protocols and actual experiments and, once this has been clarified, which of these quantum protocols can then be realised within such experiments without invoking post-selection (as it is known that all such general cyclic protocols can be realised experimentally if we allow post-selection). A first step toward answering this from a top-down perspective was taken in \cite{VilasiniRennerPRA}, where it was shown that the most general quantum protocols (including those with cycles such as P-CTCs) that can be realised in a background spacetime compatible with relativistic causality, correspond to so-called causal boxes \cite{Portmann_2017}. Causal boxes model quantum protocols on an explicit background spacetime and allow for physical feedback loops consistent with relativistic causality (as opposed to causal loops permitting retrocausality), multiple rounds of information processing, superpositions in the spacetime position of quantum systems and are fully closed under composition. However, this leaves open the precise relationship between the aforementioned abstract protocols and protocols realisable within spacetime.

Within the spacetime agnostic landscape, a broad set of abstract quantum protocols that has received considerable interest goes variously under the names indefinite causal order, higher-order processes or process matrices \cite{Hardy2005,Oreshkov_2012,Chiribella2013}. These frameworks model quantum protocols where agents perform arbitrary quantum operations within “closed labs”, each agent receiving an input quantum system from and releasing an output quantum system to an “environment” exactly once, and they allow for protocols that lack a definite acyclic causal order between the quantum operations performed by the different agents.
Such process matrices are known to be a linear subset of the aforementioned P-CTCs \cite{Ara_jo_2017_CTC}. Various works have studied them for possible applications and advantages compared to protocols with a more standard fixed causal order \cite{Chiribella_2012, Colnaghi_2012, Ara_jo_2014, liu2024, Gu_rin_2016, Chiribella_2021, Zhao_2020, Guha_2020, felce2021refrigeration, Simonov_2025}. In particular, some indefinite causal order processes are said to be causally non-separable, meaning that they cannot be decomposed as convex mixtures of processes with a definite (but possibly dynamic) acyclic causal order \cite{Oreshkov_2012,Ara_jo_2015, Oreshkov_2016}. This can be thought of as analogous to entangled or non-separable quantum states, which cannot be expressed as convex mixtures of product states. Other process matrices can violate so-called causal inequalities which yield device-independent certification of the absence of such definite acyclic causal order, which can be thought of as analogous to Bell inequalities \cite{Oreshkov_2012,Branciard_2015}. Several other interesting subsets and examples of process matrices (or equivalently, quantum super maps or higher-order quantum processes) have been identified and continue to be widely studied (see e.g., \cite{Baumeler_2016, Guerin_2018,Wechs_2019, Wechs_2021, salzger2023mscthesis, mothe2025, Steffinlongo_2026}).

However, a longstanding open question is the following: \textit{Which indefinite causal order processes can be physically realised?} This question has been approached by imposing information-theoretic principles such as purifiability to narrow down the space of possible ICO processes \cite{Ara_jo_2017,Chiribella_2010}, but this still permits processes such as the Lugano process \cite{Baumeler_2014} which violate causal inequalities, where no clear physical interpretation (within known physics) has been found.\footnote{Although explicit models of this process within several operationally motivated frameworks such as routed quantum circuits and time delocalised subsystems have been proposed \cite{vanrietvelde2021routed,Wechs_2023}.}
The central question considered in this work adds a relativistic refinement to the above, in the spirit of \cite{VilasiniRennerPRA} while considering the physical regime of currently performable quantum experiments: \textit{Which indefinite causal order processes can be physically realised through experiments (in principle), in a meaningful way, in a classical acyclic spacetime?} Answering this question involves clarifying what ``meaningful'' entails, as generic experiments may be incompatible with the setup assumptions of the process matrix framework such as “closed labs” needed to ensure that certifications of indefinite order developed in that framework remain reliable. The situation is analogous to Bell scenarios and experiments. If an experiment produces a Bell violation, this is insufficient to call this a genuine violation of a Bell inequality or regard this as a genuinely non-classical phenomenon, even if the correlations produced are non-signalling. Instead, there are certain additional conditions which must be satisfied in the experimental setup (e.g., that the agents are unable to communicate due to spacelike separation) in order to ensure that the observed Bell inequality violation can be regarded as ``loophole free’’. Only under such constraints do we consider the Bell inequality violation to be meaningful  \cite{Hensen2015,Giustina2015,Shalm2015}. 

Understanding the physical structure of process matrices has also been approached in a bottom-up manner in \cite{Wechs_2021} by taking the notion of a quantum circuit and generalising it to allow for indefinite causal order, but without explicit reference to spacetime. The resulting quantum circuits with quantum control of causal orders (QC-QCs) exhibit quantum superpositions of the order of the agents’ operations which can be controlled by some quantum system or dynamically by the choice of other agents’ operations. In a previous work \cite{salzger2024mappingindefinitecausalorder}, we have shown that all QC-QCs can be embedded into classical spacetime as causal boxes satisfying a spatiotemporal version of the closed laboratory assumption of process matrices. We also showed there, in a constructive manner, that all QC-QCs, even those with indefinite causal order, once realised in spacetime as above, admit a description in terms of a definite acyclic causal order between more fine-grained operations, consistent with relativistic causality  (using the notion of fine-graining defined in \cite{VilasiniRennerPRA}). This makes precise in what sense QC-QCs can be thought of as physical protocols in classical spacetime but leaves the question of whether they are the most general process matrices that could be physically realised in this regime, which is the question we address here. 

Finally, it is important to mention that several sophisticated table-top experiments \cite{Procopio_2015, Rubino_2017,Goswami_2018,Wei_2019, Rubino_2022, Guo_2020, Goswami_2020, Taddei_2021} have already claimed to physically implement a canonical example of a QC-QC having ICO, the \emph{quantum switch} \cite{Chiribella2013}. The physical interpretation of such experiments, and whether they have succeeded in genuinely implementing indefinite causal order in Minkowski spacetime has been a subject of intense debate, that has touched upon deep foundational questions on the very definition of events and their localisation in quantum theory \cite{VilasiniRennerPRA,Chiribella2013,Paunkovi__2020, de_la_Hamette_2025, Kabel_2025, Oreshkov_2019, Ormrod_2023, Felce_2022,vilasini2025}. We will not go into the details of this debate here, but we will briefly return to it in \cref{sec:conclusion} to clarify how our results and conclusions hold independently of the ``side'' that one takes on this debate.

{\bf Summary of contributions and structure of paper:} 
The main contribution of this work is to show that any protocol or experiment realisable in an acyclic background spacetime that respects the closed labs assumption of the process matrix framework, necessarily behaves as a QC-QC. This answers the longstanding question on physical realisability of process matrices within this classical spacetime regime and provides a top-down derivation of the structure of QC-QCs from physical principles. Our result includes a large class of spacetimes allowed by general relativity and covers the regime of quantum experiments to date, in particular Minkowski spacetime as well as curved, globally hyperbolic spacetimes, but treated as a classical background that does not share back-reaction with the quantum process being realised.

In achieving this main result, we introduce a number of new techniques, additional results and examples which also outline future directions that might be of independent interest (see also \cref{fig:results_overview} for a visual summary of some of these). We review the causal box framework in \cref{sec:cb} and the higher-order processes and QC-QC frameworks in \cref{sec:hop}. A detailed summary of all main contributions along with the structure of the rest of the paper are given below. 

\begin{figure}[t!]
    \centering
\begin{tikzpicture}[scale=0.6, transform shape,
    thick,
    font=\small,
    >={Stealth}
]

\def\Rout{4.8}
\def\Rmid{3.4}
\def\Rin{1.8}
\def\Dx{8}

\begin{scope}[shift={(-\Dx,0)}]
\node at (0,6) {\Large{\shortstack{Subsets obtained via physical \\principles in spacetime}}};
\draw[blue] (0,0) circle (\Rout);
\draw (0,0) circle (\Rmid);
\draw (0,0) circle (\Rin);

\node[blue] at (0,4.0) {All q.\ causal structures (cyclic)};
\node at (0,2.6) {+rel.\ caus.\ in spacetime};
\node at (0,2.2) {$= \mathrm{CB}$};
\node at (0,1) {+closed labs};
\node at (0,0.6) {$= \mathrm{PB}$};

\fill (2,1.5) coordinate (GYNIa) circle (2.3pt);
\node[below] at (2,1.5) {$\mathrm{GYNI}^{a}$};

\fill (2.4,-0.9) coordinate (Ex) circle (2.3pt);

\fill (0,-2.5) coordinate (Luganoa) circle (2.3pt);
\node[left] at (0,-2.5) {Lugano$^{a}$};

\end{scope}

\begin{scope}[shift={(\Dx,0)}]
\node at (0,6) {\Large{\shortstack{Subsets obtained via abstract \\ information-theoretic approaches}}};
\draw[blue] (0,0) circle (\Rout);
\draw (0,0) circle (\Rmid);
\draw (0,0) circle (\Rin);

\node[blue] at (0,4.0) {All q.\ causal structures (cyclic)};
\node at (0,2.6) {PM};
\node at (0,1) {QC-QC};

\fill (-3,3) coordinate (GYNIC) circle (2.3pt);
\node[right] at (-3,3) {$\mathrm{GYNI}^{c}$};

\fill (0,-2.5) coordinate (Luganoc) circle (2.3pt);
\node[right] at (0,-2.5) {Lugano$^{c}$};

\fill (0,-0.9) coordinate (QCQCpt) circle (2.3pt);

\end{scope}


\draw[<->,red]
(-\Dx+\Rin,0) --node[anchor=south]{\cref{cor:bothmaps} (from \cref{theorem:pbtoqcqc}+\cite{salzger2024mappingindefinitecausalorder})} (\Dx-\Rin,0);

\draw[<->,red]
(Ex)
to[out=-5,in=-175] node[anchor=south]{\cref{ex:nolo}}
(QCQCpt);

\draw[<->,red]
(Luganoa)
to[out=-10,in=-170] node[anchor=south]{\cref{ex:lugano}}
(Luganoc)
;

\draw[<->,red]
(GYNIa)
to[out=10,in=170] node[anchor=south]{\cref{ex:trivvio}}
(GYNIC)
;

\end{tikzpicture}

    \caption{{\bf Visual overview of main result and examples} On both sides, the largest set labelled as ``all quantum causal structures (cyclic)'' corresponds more precisely to post-selected closed timelike curves (P-CTCs) \cite{Lloyd_2011}. On the left, relativistic causality in spacetime, gives causal boxes \cite{VilasiniRennerPRA}, while on the right, process matrices (PM) are known to be a linear subset of P-CTCs \cite{Ara_jo_2017_CTC}. Here (building on \cite{Vilasini_2020}), we define process boxes by further restricting causal box protocols with (spatiotemporal) closed labs, in the top-down approach shown on the left. On the right, QC-QCs are defined in a bottom-up manner \cite{Wechs_2021}, explicitly as a subset of PMs.  Our main result, \cref{theorem:pbtoqcqc} maps PBs to QC-QCs, and our prior work \cite{salzger2024mappingindefinitecausalorder}  maps QC-QCs to PB, both in a manner that preserves relevant operational behaviour (cf. \cref{cor:bothmaps}). \cref{ex:trivvio} shows that there exists a CB protocol that maximally violates the GYNI causal inequality \cite{Branciard_2015} trivially in a definite acyclic causal structure (denoted by the point GYNI$^a$), but  on the right side, such maximal GYNI violation require cyclic scenarios with perfect 2-way signalling (denoted GYNI$^c$) which are not valid PMs \cite{Barrett_2021,Yokojima_2021}. \cref{ex:lugano} shows that the Lugano process, originally with a cyclic structure (Lugano$^c$ on right) admits an acyclification in spacetime (Lugano$^a$ on left) but this must violate (spatiotemporal) closed labs. Finally, \cref{ex:nolo} shows that there also exist CB protocols violating this closed labs condition which can still map to QC-QCs. Different types of violations of closed labs (weak vs strong) are distinguished in \cref{sec:relaxing} where we find that \cref{ex:trivvio} and \cref{ex:lugano} (which do not map to QC-QCs) entail strong violations while \cref{ex:nolo} (which maps to a QC-QC) entails a weak violation. }
    \label{fig:results_overview}
\end{figure}

\begin{itemize}
    \item {\bf The framework (causal box and process box protocols):} In \cref{sec:multipartite}, we first introduce multi-partite causal box (CB) protocols, for describing relativistic quantum information protocols involving $N$ agents interacting with each other via an ``environment'', where the agents' operations and the environment are both modelled as causal boxes. This brings CBs one step closer to the set-up of process matrices (PMs). However, observing that such CB protocols still allow for trivial, maximal violations of causal inequalities (\cref{ex:trivvio}) that is not possible with PMs, in \cref{sec:pbdef}, we restrict them further via two key principles capturing the closed labs assumption of the PM framework at the fine-grained, spacetime level: Acting Once (AO) and Local Order (LO), together called \emph{spatiotemporal closed labs}, which define the subset of CB protocols called process box (PB) protocols. Originally the idea of restricting CBs through such conditions, to define PBs was proposed in \cite{Vilasini_2020}. Here we fully formalise this at the level of CB and PB protocols, showing that these are joint properties of the ``environment'' and the agent operations, and thus providing a more general formalisation of AO and LO than in \cite{Vilasini_2020}, with the interest of allowing the most general possible spacetime scenarios that can be said to respect the set-up assumptions of the process matrix framework. Moreover, we have provided an alternative formalisation of these conditions in a compositional manner, allowing easier characterisation and connection to other frameworks. 
    \item {\bf Characterisation results for process boxes:} Even with the spatiotemporal closed labs assumption imposed, general PB protocols can look very different from known process matrices or relevant subsets thereof. In \cref{sec:characterisation}, we detail these differences, and then present a key characterisation result (\cref{thm:simplifying}) which shows that the structure of PB protocols can in fact be considerably simplified while preserving their relevant operational behaviour. In particular, this includes simplifying the partially ordered spacetime to a totally ordered one, which uses an important property of generalised invariance under relabelling (of spacetime locations) of CB protocols established in \cref{lemma:relabeling} and from time-dependent to time-independent agent operations in PB protocols.
    \item {\bf Mapping process boxes to QC-QCs:} In \cref{sec:mapping}, we first present an example of a physically motivated PB protocol, based on the experiment for a long coherence time quantum switch \cite{Goswami_2018,Goswami_2020}, and its mapping to QC-QCs, highlighting that multiple PB protocols (more fine-grained description) can map to the same QC-QC protocol (coarse-grained description), just as an abstract process can have many different physical realisations. We then proceed to give our general mapping from process boxes to QC-QCs, defining what it means for them to be \emph{behaviourally equivalent}, and culminating in the main theorem (\cref{theorem:pbtoqcqc}) which shows that for every PB protocol, there exists a behaviourally equivalent QC-QC protocol.
    \item {\bf Physical principles and ruling out the Lugano process:} An important question in quantum causality has been to understand why the so-called Lugano process, a highly studied ICO process violating causal inequalities, can or cannot be physically realised. Previous work \cite{Wechs_2023} showed that the Lugano process admits a realisation in the framework of time-delocalised subsystems \cite{Oreshkov_2019}. Our results, on the other hand, provide a top-down approach from physical principles, to explain why it is impossible to realise the Lugano process in any experiment in a fixed spacetime which respects the closed labs assumption even on a fine-grained level. We compare our results and assumptions with those of \cite{Wechs_2023} in detail, to illustrate how there is no mathematical contradiction between the two results. In essence, by virtue of being a regular quantum circuit, the Lugano process construction of \cite{Wechs_2023} can be described in terms of valid causal boxes \cite{Portmann_2017} and thus respects the general definitions of spatiotemporal realisation and relativistic causality proposed in \cite{VilasiniRennerPRA, VilasiniRennerPRL}.\footnote{\cite{VilasiniRennerPRA, VilasiniRennerPRL} only defined necessary conditions for spatiotemporal realisations of processes and relativistic causality, sufficient for the general no-go theorems considered there. Those conditions are insufficient to guarantee that such a ``realisation'' is loophole free i.e., respects the set-up assumptions like closed labs, which are further formalised at the fine-grained level, in the process box framework considered here.} However, it is not a valid process box protocol as it violates the spatiotemporal closed labs condition of this framework. There is no contradiction because
   the closed labs condition of \cite{Wechs_2023} is strictly weaker than our spatiotemporal closed labs, such that the Lugano process satisfies the former but necessarily violates the latter. We argue that for loophole free tests of causal inequalities (analogous to such tests of Bell inequalities), one must impose such assumptions at the spacetime implementation level as we do (as opposed to the abstract level), and the analysis highlights that realisation with time delocalised subsystems (although interesting in its own right) is not sufficient to ensure physical and loophole free experimental realisations in a background spacetime. 
    \item {\bf Possible relaxations of the assumptions:} In \cref{sec:relaxing}, we detail the scope for further relaxing our spatiotemporal closed labs (i.e., AO+LO) assumption, distinguishing weak vs strong form of violation of each of AO and LO. While our example (\cref{ex:trivvio}) for trivial violation of causal inequalities, as well as the Lugano process (\cref{ex:lugano}) violate at least one of these conditions strongly (and cannot be mapped to QC-QCs), we present a novel example in \cref{ex:nolo} which violates both of them weakly and can still be mapped to a behaviourally equivalent QC-QC protocol. This outlines clear scope for further generalising our results, as well as a number of open questions on the characterisation of causal box protocols, and their relationships (in particular, via coarse-graining) with different classes of abstract cyclic quantum causal structures (see also \cref{fig:results_overview}). 
\end{itemize}

\section{Review}

\subsection{Causal boxes}\label{sec:cb}

\subsubsection{Messages and Fock spaces}\label{sec:fock}

The causal box framework models a background spacetime minimally as a partially ordered set $\C{T}$ as a partial order is sufficient to capture the lightcone structure, which is the main object of interest for causality. In the original work \cite{Portmann_2017}, the set $\C{T}$ can be infinite, whereas for us the finite case will be sufficient. 

A quantum message $\ket{\psi}$ sent/received at time $t$ is represented as a state $\ket{\psi}^A \otimes \ket{t}^{\C{T}}$. we will usually write this as $\ket{\psi, t}^{A, \C{T}}$ or $\ket{\psi, t}^A$ for compactness, dropping the superscript $\C{T}$ if the set of positions is unambiguous. The element $\ket{\psi}^A \in \C{H}^A$ encodes the information of the message sent or received on wire $A$ while $\ket{t} \in \C{H}^\C{T} \cong \mathbb{C}^{|\C{T}|}$ tells us the position of the message in spacetime. 

The space $\C{H}^A \otimes \C{H}^{\C{T}}$, which we will usually write compactly as $\C{H}^{A, \C{T}}$ is then the 1-message space. This is, however, not the full state space. More generally, the framework allows an arbitrary number of messages distributed over spacetime, including superpositions of different number of such messages, which can be modelled using the symmetric Fock space associated with this 1-message space
\begin{equation}\label{eq:fockdef}
    \C{F}^{A, \C{T}} := \C{F}(\C{H}^{A, \C{T}}) = \bigoplus_{n=0}^\infty \vee^n \C{H}^{A, \C{T}}.
\end{equation}
The symbol $\vee^n$ denotes $n$-fold symmetric tensor product. The symmetric Fock space is used here as the position labels already include all the ordering relations. This is particularly important for messages at the same position, as the two states $\ket{0, t}^A \otimes \ket{1, t}^A$ and $\ket{1, t}^A \otimes \ket{0, t}^A$ should really be undistinguishable. The one-dimensional subspace $(\C{H}^{A, \C{T}})^{\otimes 0}$ of the Fock space $\C{F}^{A,\C{T}} $ corresponds to the vacuum state $\ket{\Omega}^{A,\C{T}}\cong \ket{\Omega,\C{T}}^A$, 
i.e., there are no messages on the wire $A$ at any position in $\C{T}$. 

There are two wire isomorphisms in the causal box framework \cite{Portmann_2017} which we will use frequently and implicitly (that is to say, we will use equality signs, even if the equality only holds up to these isomorphisms) in this work. For any $\C{H}^A, \C{H}^B$ and any $\C{T}' \subseteq \C{T}$, we have
\begin{gather}\label{eq:wireiso}
\begin{aligned}
    \F{C}{\C{T}} &\cong \F{A}{\C{T}} \otimes \F{B}{\C{T}} \\
    \F{A}{\C{T}} &\cong \F{A}{\C{T}'} \otimes \F{A}{\C{T}\backslash \C{T}'}
\end{aligned}
\end{gather}
where $\C{H}^C = \C{H}^A \oplus \C{H}^B$. These isomorphisms tell us that we can split and combine wires. Two wires carrying messages corresponding to the Hilbert spaces $\C{H}^A, \C{H}^B$ can be combined into a single wire carrying messages corresponding to the direct sum. Similarly, two wires corresponding to two different spacetime regions can be combined into a single wire covering both regions. As we will see, these isomorphisms allow to first define causal boxes with just a single input and output wire which can then be split up into multiple in and outputs later when it is convenient.

We can also naturally embed any wire corresponding to some spacetime region $\C{T}'$, $\F{A}{\C{T}'}$, into a larger spacetime region $\C{T}$, by appending the vacuum on $ \C{T} \backslash \C{T}'$.
\begin{equation}
\label{eq:wireisostate}
   \F{A}{\C{T}'} \cong \F{A}{\C{T}}\otimes \ket{\Omega, \C{T} \backslash \C{T}'}. 
\end{equation}

\subsubsection{Definition of causal boxes}\label{sec:defcb}

We give now a simplified definition of causal boxes that captures those causal boxes defined on a finite $\C{T}$. This captures that causal boxes respect special relativity, meaning that there should be no signalling outside the lightcone (given by the partial order of $\C{T}$). This is formalised with the help of causality functions. 
\begin{defi}[Causality function \cite{Portmann_2017}]\label{def:chi}
A causality function $\chi$ for a finite spacetime $\C{T}$ maps bottom-closed subsets\footnote{A set $\C{T}' \subseteq \C{T}$ is bottom-closed (in $\C{T}$) iff for all $t, t' \in \C{T}$ with $t \preceq t'$, $t' \in \C{T}'$ implies $t \in \C{T}'$. In \cite{Portmann_2017}, the more involved notions of cuts and bounded cuts (Def. 4.2 there) of the spacetime is used. However, in our case (i.e., finite spacetimes) there is no distinction between bottom-closed subsets, cuts and bounded cuts.} of $\C{T}$ to bottom-closed subsets of $\C{T}$ and satisfies the following conditions for any non-empty $\C{T}', \C{T}'' \subseteq \C{T}$ which are bottom-closed
\begin{gather}
\begin{aligned}
    \chi(\C{T}' \cup \C{T}'') &= \chi(\C{T}') \cup \chi(\C{T}'')\\
    \C{T}' \subseteq \C{T}'' &\implies \chi(\C{T}') \subseteq \chi(\C{T}'') \\
    \chi(\C{T}') &\subsetneq \C{T}'.
\end{aligned}
\end{gather}
\end{defi}
The causality function tells us that in order to compute the outputs in some bottom-closed spacetime region $\C{T}'$, it is sufficient to know the inputs in a strict bottom-closed subset of that region given by $\chi(\C{T}')$. The fact that it is proper subset and that $\C{T}'$ is bottom-closed (i.e., it includes the past lightcones of all its elements) ensures consistency with special relativity.
\begin{defi}[Causal boxes \cite{Portmann_2017}]\label{def:causalbox}
A causal box is a system with an input wire $X$ and an output wire $Y$, together with a CPTP map
\begin{equation}
    \C{C}: \mathcal{L}(\F{X}{\C{T}}) \rightarrow \mathcal{L}(\F{Y}{\C{T}})
\end{equation}
that fulfils for all $\C{T}' \subseteq \C{T}$ which are bottom-closed
\begin{equation}\label{eq:cbreq}
    \text{tr}_{\C{T} \backslash \C{T}'} \circ \C{C} = \text{tr}_{\C{T} \backslash \C{T}'} \circ \C{C} \circ \text{tr}_{\C{T} \backslash \chi(\C{T}')} 
\end{equation}
for some causality function $\chi$, where $\text{tr}_{\C{T} \backslash \C{T}'}$ ($\text{tr}_{\C{T} \backslash \chi(\C{T}')}$) corresponds to tracing out all messages with positions in $\C{T} \backslash \C{T}'$ ($\C{T} \backslash \chi(\C{T}')$) and replacing it with the vacuum (this can also be understood in light of \cref{eq:wireisostate}). We call \cref{eq:cbreq} the causality condition. 
\end{defi}

Note that from this definition it directly follows that a causal box restricted to some set of positions $\C{T}' \subseteq \C{T}$ which is bottom-closed is once again a causal box. 

We can also consider subnormalised causal boxes. For this to be well-defined it is not sufficient to just replace the CPTP map in \cref{def:causalbox} with a CP map but to explicitly define subnormalised causal boxes as particular outcomes of causal boxes.
\begin{restatable}[Subnormalised causal boxes]{defi}{subcb}\label{def:subcb}
We call $\C{C}: \mathcal{L}(\F{X}{\C{T}}) \rightarrow \mathcal{L}(\F{Y}{\C{T}})$ a subnormalised causal box if there exists a (normalised) causal box $\C{C}':\C{L}(\F{X}{\C{T}}) \rightarrow \mathcal{L}(\F{YR}{\C{T}})$ and a product state $\ket{\psi}^R = \bigotimes_{t \in \C{T}} \ket{\psi_t, t}$ such that
\begin{equation}\label{eq:subcbcond}
    \C{C}(\cdot) = \bra{\psi} \C{C}'(\cdot) \ket{\psi}.
\end{equation}
\end{restatable}
Operationally, subnormalised causal boxes thus correspond to post-selecting the state on wire $R$. We note that in \cite{Portmann_2017}, the state $\ket{\psi}^R$ is assumed to always be the vacuum state. The two definitions are equivalent, but the above definition makes it easier to model different measurement outcomes as different subnormalised causal boxes obtained from the same normalised causal box, which will be useful for us later on. 

As the vacuum state is a product state it is clear that our definition here is at least as general as the definition of \cite{Portmann_2017}. On the other hand, if we have $\C{C}(\cdot) = \bra{\psi} \C{C}'(\cdot) \ket{\psi}$ as in the definition above with $\ket{\psi}^R = \bigotimes_{t \in \C{T}} \ket{\psi_t, t}$, then there exists for each $t \in \C{T}$ a unitary $U_t: \F{R}{t} \rightarrow \F{R}{t}$ such that $U_t \ket{\psi_t, t} = \ket{\Omega, t}$. From this it follows immediately that $\bra{\Omega, \C{T}} \bigotimes_{t \in \C{T}} U_t \C{C}'(\cdot)\bigotimes_{t' \in \C{T}} U_{t'}^\dagger \ket{\Omega, \C{T}} = \bigotimes_{t \in \C{T}} \bra{\psi, t} \C{C}'(\cdot) \bigotimes_{t' \in \C{T}} \ket{\psi_{t'}, t'} = \C{C}(\cdot)$. The map $\bigotimes_{t \in \C{T}} U_t \C{C}'(\cdot)\bigotimes_{t' \in \C{T}} U_{t'}$ is a causal box as it is evidently CPTP and $\text{tr}_t U_t \cdot U_t^\dagger = \text{tr}_t (\cdot)$ due to unitarity of $U_t$. Indeed, this argument shows that for any product state $\ket{\psi}$ there exists a normalised causal box $\C{C}'$ such that \cref{eq:subcbcond} holds. 

\subsubsection{Representations of causal boxes}\label{sec:repcb}

There are two particularly useful representations of causal boxes, one of these being the Choi-Jamiołkowski representation and the other the so-called sequence representation \cite{Portmann_2017}. The Choi-Jamiołkowski representation, however, involves some subtleties as the normal Choi-Jamiołkowski operator may become unbounded and thus ill-defined when considering infinite-dimensional Hilbert spaces \cite{Holevo_2011}. In this case, the Choi-Jamiołkowski representation of a CPTP map $\C{C}: \C{L}(\C{H}^X) \rightarrow \C{L}(\C{H}^Y)$ is a sesquilinear form defined on  $\C{H}^X \times \C{H}^Y=\text{span}\{\ket{\psi^X}\otimes \ket{\psi^Y}|\ket{\psi^X}
\in \C{H}^X,\ket{\psi^Y}
\in \C{H}^Y\}$ (i.e., the \textit{finite} linear span) as follows
\begin{gather}
\begin{aligned}
    R_{\C{C}}&: (\C{H}^X \times \C{H}^Y) \times (\C{H}^X \times \C{H}^Y) \rightarrow \mathbb{C} \\
    &   \ket{\psi^X}\otimes \ket{\psi^Y} \otimes  \ket{\varphi^X} \otimes \ket{\varphi^Y}\mapsto \bra{\psi^Y} \C{C}(\ket{\bar{\psi}^X} \bra{\bar{\varphi}^X}) \ket{\varphi^Y}
\end{aligned}
\end{gather}
where $\ket{\bar{\psi}^X} = \sum_i \ket{i} \overline{\braket{i | \psi^X}}$. This is more compactly written as
\begin{equation}
     R_{\C{C}}(\psi^X\otimes \psi^Y,\varphi^X\otimes \varphi^Y):= \bra{\psi^Y} \C{C}(\ket{\bar{\psi}^X} \bra{\bar{\varphi}^X}) \ket{\varphi^Y}
\end{equation}
If $\C{H}^X \times \C{H}^Y$ coincides with $\C{H}^X \otimes \C{H}^Y$, then the usual Choi-Jamiołkowski operator is bounded and can be recovered from this sesquilinear form \cite{Holevo_2011,Portmann_2017}.

The sequence representation is a sequence of isometries $V_1,..., V_M$, such that $V = V_M \circ V_{M-1} \circ ... \circ V_1$ is a purification of the causal box. Further, each $V_m$ acts exclusively on some disjoint slice of spacetime $\C{T}_m$ and outputs on the ``next'' slice $\C{T}_{m+1}$, the idea thus being that $V_m$ represents the action of $\C{C}$ on this particular slice of the overall spacetime $\C{T}$. More explicitly, for a causal box with input wire $X$, output wire $Y$, $V_1$ prepares a pure normalised state on $\C{F}^{Y, \C{T}_{1}} \otimes \C{H}^{\alpha_1}$, $V_m: \C{F}^{X, \C{T}_{m-1}} \otimes \C{H}^{\alpha_{m-1}} \rightarrow \C{F}^{Y, \C{T}_{m}} \otimes \C{H}^{\alpha_m}$ for $1<m<M$ and $V_M: \C{F}^{X, \C{T}_{M-1}} \otimes \C{H}^{\alpha_{M-1}} \rightarrow \C{H}^{\alpha_M}$. The sets $\{\C{T}_m\}_{m=1}^{M-1}$ form a partition of $\C{T}$, i.e., $\C{T} = \bigsqcup_{m=1}^{M-1} \C{T}_m$ (where $\bigsqcup$ denotes disjoint union) and are ordered in the sense that for $m < m'$, there exists no $t \in \C{T}_m, t' \in \C{T}_{m'}$ with $t \succ t'$. The (possibly infinite-dimensional) Hilbert spaces $\C{H}^{\alpha_m}$ can be viewed as internal memory wires. The existence of this representation essentially follows from the Stinespring dilation \cite{Stinespring_1955} of CPTP maps. Appropriate slices can be determined in a recursive fashion via a causality function of the causal box as $\chi(\bigsqcup_{m=1}^{n+1} \C{T}_m) = \bigsqcup_{m=1}^{n} \C{T}_m$. 

\subsubsection{Composition of causal boxes}\label{sec:composition}

Causal boxes are fully compositional under a general set of physical compositions. Indeed, besides the usual parallel and sequential ways of composing CPTP maps, we can also feed the outputs of a causal box back as an input, which is called loop composition. The causality condition then ensures that these looped back outputs cannot affect themselves (or anything outside of their future lightcone). 

\begin{defi}[Loop composition \cite{Portmann_2017}]\label{def:comp}
Let $\C{C}: \mathcal{L}(\F{AB}{\C{T}})\rightarrow\mathcal{L}(\F{CD}{\C{T}})$ be a subnormalised causal box with input systems $A$ and $B$ and output systems $C$ and $D$ with $\C{H}^B \cong \C{H}^C$. Let $\{\ket{k}^C\}_k$ be any orthonormal basis of $\F{C}{\C{T}}$, and denote with $\{\ket{k}^B\}_k$ the corresponding basis of $\F{B}{\C{T}}$ i.e. for all $k$, $\ket{k}^C\cong\ket{k}^B$. The result of looping the output system $C$ to the input system $B$ is denoted as $\C{C}^{C \hookrightarrow B}$ and is given by its Choi-Jamiołkowski representation as
\begin{gather}
\begin{aligned}
    \label{eq:loop}
    R_{\C{C}^{C\hookrightarrow B}} (\psi^A\otimes \psi^D,\varphi^A\otimes \varphi^D) &= \sum_{k,l} R_{\C{C}} (\psi^A\otimes \bar{k} \otimes k \otimes \psi^D, \varphi^A \otimes \bar{l} \otimes l \otimes \varphi^D) \\
    &= \sum_{k,l} \bra{\psi^D}  \bra{k}\C{C}(\ket{\bar{\psi}^A} \bra{\bar{\phi}^A} \otimes \ket{k}\bra{l})\ket{l} \ket{\varphi^D}
\end{aligned}
\end{gather}
\end{defi}
The sequential composition of two subnormalised causal boxes $\C{C}_A: \C{L}(\F{A}{\C{T}}) \rightarrow \C{L}(\F{C}{\C{T}})$ and $\C{C}_B: \C{L}(\F{B}{\C{T}}) \rightarrow \C{L}(\F{D}{\C{T}})$ can be represented as $(\C{C}_A \otimes \C{C}_B)^{C\hookrightarrow B}$. Causal boxes are closed under arbitrary parallel, sequential and loop compositions and the result of composition is invariant under different orders of composition \cite{Portmann_2017, VilasiniRennerPRA, VilasiniRennerPRL}. In particular, this implies that for a (sub)normalised causal box $\C{C}$ from inputs $A$ and $B$ to outputs $C$ and $D$, the loop composition $\C{C}^{C \hookrightarrow B}$ in the above definition is also a valid (sub)normalised causal box from input $A$ to output $D$.

We will generally use the same label to denote isomorphic pairs of input and output systems such as $B$ and $C$ above, in which case we will use the symbol $\hookrightarrow$ as a superscript to denote full loop composition (i.e., loop compose all outputs with matching inputs if and only if such an input exists) of a map $\C{C}$, i.e., $\C{C}^{\hookrightarrow}$, similar to how $\circ$ is often used to denote full sequential composition (i.e., sequentially compose all outputs of one map with inputs of the other map if and only if such an input exists). For example, consider two maps $\C{C}_1: \C{L}(\F{A}{\C{T}} \otimes \F{B}{\C{T}})\rightarrow \C{L}(\F{A}{\C{T}} \otimes \F{C}{\C{T}})$ and $\C{C}_2: \C{L}(\F{C}{\C{T}} \otimes \F{D}{\C{T}})\rightarrow \C{L}(\F{B}{\C{T}} \otimes \F{E}{\C{T}})$. Then, $\C{C}_2 \circ \C{C}_1$ consists of feeding the output $C$ of $\C{C}_1$ into the matching input $C$ of $\C{C}_2$ and we obtain a map from the systems $A, B, D$ to the systems $A, B, E$. On the other hand, $\lc{\C{C}_2 \circ \C{C}_1}$ would additionally loop the output $A$ of $\C{C}_1$ to the input $A$ of $\C{C}_1$ and the output $B$ of $\C{C}_2$ to the input $B$ of $\C{C}_1$, ultimately yielding a map from $D$ to $E$. Note that for an arbitrary number of maps $\C{C}_1,..., \C{C}_N$, as a consequence of order-independence of composition of causal boxes, loop composition is in particular invariant under cyclic permutations (it only depends on the pairs of systems, one output and one input, being composed) \cite{Portmann_2017, VilasiniRennerPRA, VilasiniRennerPRL},
\begin{equation}
    (\C{C}_N \circ ... \circ \C{C}_1)^{\hookrightarrow} = (\C{C}_k \circ \C{C}_{k-1}
\circ ... \circ \C{C}_{k+1})^{\hookrightarrow}
\end{equation}
for all $k=1,..., N$. If further there exist no two output systems which are the same and no two input systems which are the same among the sets of input respectively output systems of $\C{C}_1, ..., \C{C}_N$, we can even unambiguously write
\begin{equation}
    (\C{C}_N \circ ... \circ \C{C}_1)^{\hookrightarrow} = (\C{C}_N \otimes ... \otimes \C{C}_1)^{\hookrightarrow}
\end{equation}
by using the fact that sequential composition can be expressed in terms of parallel and loop composition. Note that the RHS is ambiguous when multiple input respectively output systems have the same label. For example, for maps $\C{C}_1: A \rightarrow A, \C{C}_2: A \rightarrow A$, it is ambiguous whether $\lc{\C{C}_1 \otimes \C{C}_2}$ refers to $\lc{\C{C}_2 \circ \C{C}_1}$ or $\lc{\C{C}_1} \otimes \lc{\C{C}_2}$. 

Instead of working with the sesquilinear form, one can also use the following representation which relies on the CP(TP) maps directly.

\begin{restatable}[An alternative representation for loop composition]{lemma}{looprep}\label{lemma:looprep}
    Let $\C{C}: \C{L}(\C{F}^{XY, \C{T}}) \rightarrow \C{L}(\C{F}^{YZ, \C{T}})$ be a (subnormalised) causal box. Let $\C{T}_1,...,\C{T}_M$ be a disjoint partition of $\C{T}$ such that $\chi(\bigsqcup_{m=1}^{n+1} \C{T}_m) = \bigsqcup_{m=1}^{n} \C{T}_m$ for all $m$ for some causality function $\chi$ of $\C{C}$. Then,
    \begin{equation}
        \C{C}^{\hookrightarrow}(\cdot) = \sum_{ij} \bra{i}^Y \C{C}(\cdot \otimes \ket{i}\bra{j}^Y) \ket{j}^Y
    \end{equation}
    for any basis $\{\ket{i}^Y\}$ of $\C{F}^{Y, \C{T}}$ which is a product basis in terms of the decomposition $\F{Y}{\C{T}} \cong \bigotimes_{m=1}^M \F{Y}{\C{T}_m}$.
\end{restatable}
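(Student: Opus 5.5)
The plan is to start from the sesquilinear-form definition of loop composition (\cref{def:comp}) and reduce it to an operator expression. First I take the loop composition $\C{C}^{\hookrightarrow}$ of the output $Y$ into the input $Y$. By \cref{def:comp}, for any orthonormal basis $\{\ket{k}\}$ of $\F{Y}{\C{T}}$, its Choi form is
\begin{equation*}
R_{\C{C}^{\hookrightarrow}}(\psi^X\otimes\psi^Z,\varphi^X\otimes\varphi^Z)=\sum_{k,l}\bra{\psi^Z}\bra{k}\C{C}(\ket{\bar\psi^X}\bra{\bar\varphi^X}\otimes\ket{k}\bra{l})\ket{l}\ket{\varphi^Z}.
\end{equation*}
If this double sum converges for the chosen basis, then the map $\rho\mapsto\sum_{ij}\bra{i}\C{C}(\rho\otimes\ket{i}\bra{j})\ket{j}$ has exactly this Choi form on rank-one inputs $\ket{\bar\psi^X}\bra{\bar\varphi^X}$. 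Linearity then extends the identity to finite-rank operators, and continuity of the CP maps extends it to all trace-class inputs. So the claim reduces to two points. The first is that the sum is well defined, which is the obstacle for infinite-dimensional Fock spaces. The second is that it is independent of which product basis is chosen. Independence from the choice of basis in the definition itself is already part of \cref{def:comp}, since the result of loop composition is well defined.

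To handle convergence I use the slicing $\F{Y}{\C{T}}\cong\bigotimes_{m}\F{Y}{\C{T}_m}$ together with the causality condition~\eqref{eq:cbreq}, taking $\chi$ so that $\chi(\bigsqcup_{m\le n+1}\C{T}_m)=\bigsqcup_{m\le n}\C{T}_m$. Writing $\ket{i}=\bigotimes_m\ket{i_m}$, I carry out the sum slice by slice, working from the last slice backwards.

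The outputs on $\C{T}_M$ are traced out by pairing $\bra{i_M}\cdot\ket{j_M}$. Because the sum over $i_M=j_M$ of the diagonal terms is a partial trace, I consider this last slice first. By causality, once the output on $\C{T}_M$ has been traced, the output on $\bigsqcup_{m<M}\C{T}_m$ depends only on inputs in $\bigsqcup_{m<M-1}\C{T}_m$. The inputs on $\C{T}_{M-1}$ and $\C{T}_M$ are then effectively traced as well. As a result, the off-diagonal terms $i_M\neq j_M$ cancel, and the sum over $i_M$ collapses to a finite, well-defined trace. Iterating this argument down to $m=1$ shows that the full sum is an iterated partial contraction.

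Each of these contractions is a genuine quantum operation, namely composing the output on slice $m$ into the input on slice $m$, where that input influences only later slices. Equivalently, in the sequence representation $V=V_M\cdots V_1$, the product-basis sum implements $\sum_{i_m}\ket{i_m}\bra{i_m}=\mathbb{1}$ on $\F{Y}{\C{T}_m}$ between $V_m$ and $V_{m+1}$, which is just feeding the wire through. This gives convergence in the strong or ultraweak sense, and it also gives basis independence, since the identity operator does not depend on the basis used to resolve it.

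The main obstacle I expect is making this rigorous in infinite dimensions. Concretely, I need to justify exchanging the order of the infinite sums and to show that the non-product-basis cross terms would otherwise fail to converge; this failure is why the product structure matters. I would handle it using the sequence representation. Taking Stinespring isometries, each contraction becomes an insertion of a resolution of the identity, and such insertions converge strongly by standard arguments. Taking matrix elements then recovers the formula of \cref{def:comp}.
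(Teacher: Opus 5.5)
Your closing route is the paper's proof. You take the sequence representation $V=V_M\circ\cdots\circ V_1$ adapted to the given slicing. For a product basis, $V'=\sum_i\bra{i}^Y V\ket{i}^Y$ is then just $V$ with the $Y$-output of $V_m$ on $\C{T}_m$ fed into the $Y$-input of $V_{m+1}$, i.e.\ an insertion of $\mathbb{1}$ on $\F{Y}{\C{T}_m}$, so $V'$ is an isometry. Finally you compare Choi forms by pulling $\bra{\psi^Z}\cdot\ket{\varphi^Z}$ through the convergent sum. In the paper, the well-definedness of $\sum_{ij}\bra{i}\C{C}(\cdot\otimes\ket{i}\bra{j})\ket{j}$ is \emph{derived} from this isometry, since the sum equals $\text{tr}_{\alpha_M}V'\cdot V'^\dagger$; it is not assumed up front. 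The subnormalised case is reduced to the normalised one via \cref{def:subcb}.

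You should, however, delete the slice-by-slice causality paragraph. It is not equivalent to the argument above, and it is false beyond the last slice.

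The collapse is valid for $\C{T}_M$ only. There, $\chi(\C{T})=\bigsqcup_{m<M}\C{T}_m$ means the $Y$-input on $\C{T}_M$ influences nothing, so the $i_M\neq j_M$ terms vanish and the $i_M$-sum is a partial trace.

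For $\C{T}_{M-1}$ you invoke \eqref{eq:cbreq} with $\C{T}'=\bigsqcup_{m<M}\C{T}_m$. That condition requires tracing \emph{all} outputs on $\C{T}_M$, including $Z$, which $\C{C}^{\hookrightarrow}$ retains. The $Y$-input on $\C{T}_{M-1}$ can influence the $Z$-output on $\C{T}_M$, so it is not effectively traced, and the off-diagonal terms need not vanish.

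Iterating your argument would make every $Y$-input irrelevant, giving $\C{C}^{\hookrightarrow}(\rho)=\text{tr}_Y\,\C{C}(\rho\otimes\ket{\Omega}\bra{\Omega}^Y)$. This already fails for the box on $\C{T}=\{1,2,3\}$ that forwards $X$ at $t=1$ to $Y$ at $t=2$, and $Y$ at $t=2$ to $Z$ at $t=3$. Its loop composition is a delayed identity from $X$ to $Z$, while your formula outputs vacuum on $Z$. This feedback from earlier $Y$-slices into later $Z$-outputs is exactly what the identity insertions between $V_m$ and $V_{m+1}$ account for. The remark that non-product bases would fail to converge is also not needed for the lemma.
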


\subsection{Higher-order quantum processes}\label{sec:hop}

The standard quantum circuit paradigm describes protocols with a well-defined acyclic ordering between the different operations (or quantum gates). This is consistent with a clear arrow of time. Recently, frameworks have been proposed which go beyond this to define abstract information-processing protocols without assuming the existence of a definite acyclic order between the different operations, or the existence of a background spacetime structure \cite{Barrett_2021, Oreshkov_2012, Chiribella2013, Wechs_2021}.

Mathematically, a higher-order quantum process or supermap takes a fixed number $N$ of CP maps $\C{M}_{A_k}: \C{L}(\C{H}^{A^I_k}) \rightarrow \C{L}(\C{H}^{A^O_k})$ and maps them to another CP map
\begin{equation}
  \C{W}:  (\C{M}_{A_1},...,\C{M}_{A_N}) \mapsto \C{M}: \C{L}(\C{H}^{P}) \rightarrow \C{L}(\C{H}^{F}).
\end{equation}
These supermaps must satisfy conditions which  are analogous to the conditions on CPTP maps. They map CP(TP) maps to a CP(TP) map and they must do so even if the maps $\C{M}_{A_k}$ are actually extended maps whose input and output both include an arbitrary ancilla. Physically, we think of the maps $\C{M}_{A_k}$ as modelling the local operations of agents $A_k$ in their own labs and the application of $\C{W}$ models how these combine into a global situation. The mathematical conditions imposed on the supermap ensure some basic consistency. For example, the probability for a given setup is $\text{tr}(\C{W}(\C{M}_{A_1},...,\C{M}_{A_N})(\rho))$ where $\rho \in \C{L}(\C{H}^P)$ and the fact that $\C{W}$ maps CPTP maps to CPTP maps ensures that the overall probability of any experiment adds up to 1. 

The supermap $\C{W}$ can be represented with a Choi-Jamiołkowski matrix, which is usually referred to as a process matrix \cite{Oreshkov_2012}. We can then write its action as
\begin{equation}\label{eq:supermap}
    W(\C{M}_{A_1},...,\C{M}_{A_N}) \coloneqq (M_{A_1} \otimes ... \otimes M_{A_N}) * W \in \C{L}(\C{H}^{PF})
\end{equation}
where $M_{A_i}$ denotes the Choi-Jamiołkowski matrix of the local operation $\C{M}_{A_i}$ and $*$ is the link product \cite{Chiribella_2008, Chiribella_2009}. Additionally, we proved in \cite{salzger2024mappingindefinitecausalorder} that the link product can equivalently be expressed using loop composition, hence, we will write the action of supermaps as 
\begin{equation}
    \C{W}(\C{M}_{A_1},...,\C{M}_{A_N}) = \lc{\C{W} \circ \bigotimes_{k=1}^N \C{M}_{A_k}}.
\end{equation}
Note that on the RHS $\C{W}$ should be interpreted as the CP(TP) map which is uniquely associated with the supermap.

\subsubsection{Quantum circuits with quantum control of causal order}

The framework of process matrices describes very general scenarios as no background spacetime is imposed. In particular, some of them go beyond what is achievable through processes compatible with some fixed order and violate causal inequalities \cite{Oreshkov_2012}, similar to how some entangled states violate Bell inequalities. However, at the same time, most process matrices currently lack a clear physical interpretation. 

Quantum circuits with quantum control of causal order (QC-QC) \cite{Wechs_2021} are a subclass of process matrices (including those with indefinite causal order) which generalise the notion of a quantum circuit to  allow for controlled superpositions of (possibly dynamic) causal orders\footnote{A similar framework was developed in \cite{Purves_2021}. We will work with the framework of QC-QCs.}. In a QC-QC the causal order of the agents' operations is either classically or quantumly controlled. The overall action is depicted in \cref{fig:qcqc}, illustrating the sense in which these processes can be interpreted as quantum circuits. We will briefly describe the most important features of QC-QCs as well as a mathematical characterisation which will be useful for us. For a detailed introduction we refer to the original work \cite{Wechs_2021}. 

\begin{figure}
    \centering
    \includegraphics[width=0.8\linewidth]{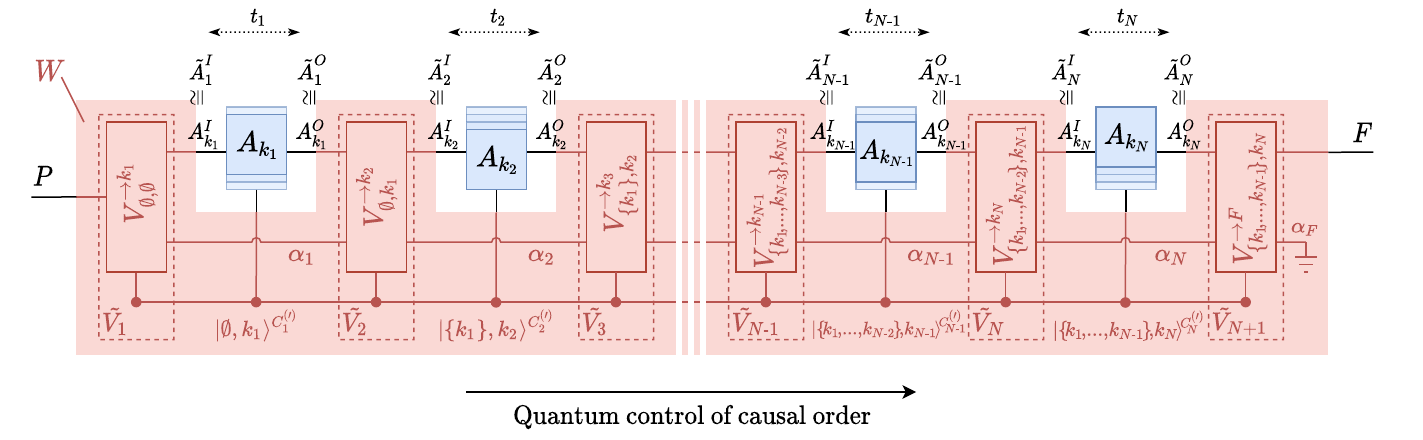}
    \caption{Graphical depiction of a quantum circuit with quantum control of causal order (QC-QC). At each step, the QC-QC (shaded pink) applies an internal operation conditioned on some control system (carried on the bottom wire). In each hole an agent operation is plugged in, but it is not fixed which one. Instead, this can depend classically or quantumly on the state of some control or dynamically on the action of previous operations. The figure is taken from \cite{Wechs_2021}, where it is listed as Fig. 10.}
    \label{fig:qcqc}
\end{figure}

We can imagine a QC-QC to be made up of $N$ steps. In any given step, say the $n$-th one, which agent acts is controlled via a system $\C{H}^{C_n}$ whose basis states take the form $\ket{\C{K}_{n-1}, k_n}$, where $\C{K}_{n-1} \subseteq \C{N}$ (with $\C{N} = \{1,...,N\}$ being the set of all agent labels) is a subset of cardinality $n-1$ and $k_n \in \C{N} \backslash \C{K}_{n-1}$. The state $\ket{\C{K}_{n-1}, k_n}$ indicates that the agents labelled by the elements of $\C{K}_{n-1}$ have previously acted in some arbitrary order and that agent $k_n$ is to act in the current step. As the control system can be in quantum superposition this introduces quantum uncertainty into the causal order. In between the agent operations the circuit applies an operation which can itself be conditioned on the control system and which sends the target system to the next agent, say $k_{n+1}$, updating the control in the process to $\ket{\C{K}_{n-1} \cup k_n, k_{n+1}}$. The internal operations of the circuit can be denoted by the state of the control and the agent the system is being sent to
\begin{gather}\label{eq:qcqcint}
\begin{aligned}
    V^{\rightarrow k_1}_{\emptyset, \emptyset}&: \C{H}^{P} \rightarrow \C{H}^{A^I_{k_1}} \otimes \C{H}^{\alpha_1}\\
    V^{\rightarrow k_{n+1}}_{\{k_1,...,k_{n-1}\}, k_n}&: \C{H}^{A^O_{k_n}} \otimes \C{H}^{\alpha_n} \rightarrow \C{H}^{A^I_{k_{n+1}}} \otimes \C{H}^{\alpha_{n_1{}}}\\
    V^{\rightarrow F}_{\{k_1,...,k_{N-1}\},k_N}&: \C{H}^{A^O_{k_N}} \otimes \C{H}^{\alpha_N} \rightarrow \C{H}^{F} \otimes \C{H}^{\alpha_F}
\end{aligned}
\end{gather}
Here, the systems $\alpha_n, \alpha_{F}$ are ancillary wires which the QC-QC uses to carry information internally. The above maps must then compose into a valid physical transformation for the QC-QC to be valid itself. Mathematically, this means that the maps
\begin{gather}\label{eq:qcqcint2}
\begin{aligned}
    V_1 &= \sum_{k_1} V^{\rightarrow k_1}_{\emptyset, \emptyset} \otimes \ket{\emptyset, k_1} \\
    V_{n+1} &= \sum_{\substack{\C{K}_{n-1}, k_n, \\k_{n+1}}} V^{\rightarrow k_{n+1}}_{\{k_1,...,k_{n-1}\}, k_n} \otimes \ket{\C{K}_{n-1} \cup k_n, k_{n+1}} \bra{\C{K}_{n-1}, k_n} \\
    V_{N+1} &= \sum_{k_N} V^{\rightarrow F}_{\C{N} \backslash k_N,k_N} \otimes \bra{\C{K}_N \backslash k_N, k_N}\\
\end{aligned}
\end{gather}
define isometries on their effective input space at least, that is on states that can actually be produced by the preceding operations $V_n$ together with arbitrary agent operations. Indeed, any set of maps as in \cref{eq:qcqcint} for which this is the case define a QC-QC and any QC-QC admits such a description.

The resulting supermap is then (up to tracing out the purifying degree of freedom $\alpha_F$)
\begin{equation}\label{eq:qcqcaction}
   (A_1,...,A_N) \mapsto \sum_{(k_1,...,k_N)} V^{\rightarrow F}_{\{k_1,...,k_{N-1}\},k_N} \circ A_{k_N} \circ V^{\rightarrow k_N}_{\{k_1,...,k_{N-2}\}, k_{N-1}} ... A_{k_{n+1}} \circ V^{\rightarrow k_{n+1}}_{\{k_1,...,k_{n-1}\}, k_n} \circ A_{k_n} ... A_{k_1} \circ V^{\rightarrow k_1}_{\emptyset, \emptyset}
\end{equation}
where we sum over all possible orders of the agents $(k_1,...,k_N)$ and $A_k: \C{H}^{A^I_k} \rightarrow \C{H}^{A^O_k}$ is an arbitrary Kraus operator (note that here we use the notation $A_k$ for both the agent and the Kraus operator they apply. It should be clear from context to which of the two we are referring in any given case).

\section{Multi-partite causal box protocols on finite spacetimes}\label{sec:multipartite}

In this section, we will discuss how to model multipartite information processing protocols using causal boxes. We consider $N$ agents $\{A_k\}_{k=1}^N$ implementing a protocol in a spacetime $\C{T}$ modelled as a partially ordered set with finite cardinality $|\C{T}|$. Note that this does not imply a fundamental discretisation of the spacetime, but simply that we consider protocols with a finite number of information-processing steps. Hence, a finite sample of relevant spacetime locations is sufficient to describe them. An element of $\C{T}$ could also represent a small enough spacetime region, instead of a spacetime point, as long as there is a partial order between such regions.\footnote{If the regions are too large, bidirectional influences between two regions becomes possible and we no longer have a partial order but rather a pre-order \cite{VilasiniRennerPRA, VilasiniRennerPRL}.} 

{\bf Agents' operations} To each agent $A_k$, we will associate a set of input positions $\C{T}^I_k \subseteq \C{T}$ and a set of output positions $\C{T}^O_k \subseteq \C{T}$ as well as a set of causal boxes describing the agent's allowed operations $\{\M\}_{x_k}$, where $x_k$ is a classical setting that parametrises the choice of operation. Note that the set of allowed $x_k$ here need not be finite, hence these sets of allowed operations are not necessarily finite, or even countable. In particular, the set $\{\M\}_{x_k}$ could include all valid causal boxes. Each allowed operation is therefore a causal box of the form
\begin{equation}\label{eq:agentop}
    \M: \C{L}(\FI) \rightarrow \C{L}(\FO \otimes (\C{H}^{R_k^{x_k}} \otimes \ket{T}))
\end{equation}
where we conceptualise the wire $R_k^{x_k}$ as carrying the agents' measurement outcome in the future, i.e., $T \succeq t$ for all $t \in \C{T}$. We allow this wire, in particular its dimension, to depend on the choice of operation for generality (indicated by the superscript $x_k$). This form is w.l.o.g, as long as agents cannot act on each other's outcome wires, which is indeed what will be ensured in the framework for capturing that agents act within separate closed labs.  Even if we considered multiple outcome wires with different spacetime positions, we could equivalently post-process them to a single outcome at $T$. We note a related work \cite{Emilien_inprep} where the explicit internal modelling of $\M$, leading to two distinct measurement models for agents in spacetime, is explored. While we do not explore these details here, our model here is general and consistent with both those measurement models.

Furthermore, while we do not model the outcome wire as a Fock space, we could always embed it into one, hence, it still makes sense to call $\M$ a causal box. We can write 
\begin{equation}
    \C{M}^{A_k}_{a_k|x_k} := \bra{a_k, T} \M \ket{a_k, T}
\end{equation}
corresponding to the outcome $a_k$. The map $\C{M}^{A_k}_{a_k|x_k}$ is thus a subnormalised causal box. 

{\bf Multi-partite causal box} The interaction between different agents will be mediated by what we call a multi-partite causal box, which is a causal box of the form
\begin{equation}
    \C{C}: \C{L}(\F{P}{t_P} \otimes \bigotimes_{k =0}^N \FO )\rightarrow \C{L}(\bigotimes_{k=1}^N \FI \otimes \F{F}{t_F})
\end{equation}  
from the outputs of a set of $N$ agents and a global past system $P$, to the inputs of the same $N$ agents together with a global future system $F$. The past and future spaces are taken to carry messages of dimension $d_P$ and $d_F$ with fixed spacetime labels $t_P$ and $t_F$, with $t_P\prec t \prec t_F \prec T$ for all other $t \in \C{T} \backslash \{t_P, t_F, T\}$. Note that this too can ultimately be seen as a causal box with a single input and output wire at all spacetime positions in $\C{T}$ by using \cref{eq:wireisostate} to append vacuum and then the wire isomorphism \cref{eq:wireiso}. Notice that a multi-partite process box is analogous to the channel representation of a process matrix which is a channel from the outputs of all agents and the global past to the inputs of all agents and the global future.

\begin{restatable}[Multi-partite causal box protocols]{defi}{CBP}\label{CBP}
    A multi-partite causal box protocol is a tuple $\mathfrak{P}:=(\C{C},(\{\M\}_{x_k})_{k=1}^N,\{\C{M}^P_{x_p}\}_{x_p},\{\C{M}^F_{x_f}\}_{x_f})$ specified by a multi-partite causal box $\C{C}$ together with a set of causal boxes $\{\M\}_{x_k}$ as defined in \cref{eq:agentop} for each agent $\{A_k\}_{k=1}^N$ along with sets of causal boxes $\{\C{M}^P_{x_P}\}_{x_P}$ and $\{\C{M}^F_{x_F}\}_{x_F}$ for the global past and future agents, where the maps $\C{M}^P_{x_P}$ have a trivial input space along with output space $P$ and the result space $R_P^{x_P}$, while for the maps of the global future we have
\begin{equation}
    \C{M}^F_{x_F}: \C{L}(\C{F}^{F, t_F}) \rightarrow \C{L}(\C{H}^{R_F^{x_F}} \otimes \ket{T}).
\end{equation} 
\end{restatable}
In the following, we will denote the past and future agents as the $k=0^{th}$ and $k=N+1^{th}$ agents and simply write $\PBP$ for the protocol involving $N+2$ agents, $\{A_0,...,A_{N+1}\}$. 

{\bf Composition} The composition of all the maps in a multi-partite causal box protocol $\mathfrak{P}$ (obtained by connecting wires with matching system and spacetime labels via loop composition) defines a state on the result wires $\C{L}(\bigotimes_{k=0}^{N+1} \C{H}^{R_k^{x_k}} \otimes \ket{T})$ 
\begin{equation}
    \CM{\C{C}}{\M} \in \C{L}(\bigotimes_{k=0}^{N+1} \C{H}^{R_k^{x_k}} \otimes \ket{T})
\end{equation}
or a probability, when using the subnormalised causal boxes $\C{M}^{A_k}_{a_k|x_k}$  for agents: 
\begin{equation}
    P(a_0,...,a_{N+1}|x_0,...,x_{N+1})=\CM{\C{C}}{ \C{M}^{A_k}_{a_k|x_k}} 
\end{equation}
The causal box framework guarantees that arbitrary composition of causal boxes via connecting systems with such matching labels yields a valid causal box, which implies that the above must be a valid probability distribution for any multi-partite causal box and agents' operations defined on corresponding spaces. 

\begin{figure}
    \centering
    \includegraphics[width=0.5\linewidth]{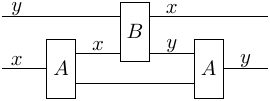}
    \caption{A trivial violation of a causal inequality. The label on each wire indicates the variable that wire carries at that point. Alice and Bob receive an input bit $x$ respectively $y$. Alice sends her bit to Bob who swaps the two bits $x, y$. Finally, Alice acts a second time and outputs the bit she received from Bob.}
    \label{fig:trivvio}
\end{figure}

Despite the structural similarities in the constructions so far, arbitrary causal box protocols should not be thought of as processes as the following example shows.

\begin{restatable}[Trivial causal inequality violation]{example}{trivvio}\label{ex:trivvio}
    We give an example of a bipartite causal box protocol (illustrated in \cref{fig:trivvio}) which maximally violates the causal inequality obtained from the so-called guess your neighbour's input game (GYNI) \cite{Branciard_2015}. In this game, Alice receives a bit $x$, outputs a bit $a$, Bob receives a bit $y$, outputs a bit $b$ and the goal for them is that $x=b$ and $y=a$. This game gives rise to a causal inequality under certain assumptions, more precisely, if the agents act within closed labs (acting once with inputs preceding outputs) and choose operations freely, then for any causally ordered strategy of Alice and Bob, their winning probability is at most $\frac{1}{2}$. 

    We will define the action of the agents and the causal box only on states which can actually occur in the protocol for simplicity. We assume that Alice can only receive an input at $t=4$ while Bob can only receive an input at $t=2$ and that all states are classical. The agents' operations are then
    \begin{gather}\label{eq:trivvio}
    \begin{aligned}
        \C{M}^{A}_{x} (\ket{y, t=4} \bra{y, t=4}^{A^I}) &= \ket{x,t=1} \bra{x, t=1}^{A^O} \otimes \ket{y, t=5}\bra{y, t=5}^{R_A} \\
        \C{M}^{B}_{y} (\ket{x, t=2} \bra{x, t=2}^{B^I}) &= \ket{y,t=3} \bra{y, t=3}^{B^O} \otimes \ket{x, t=5}\bra{x, t=5}^{R_B}.
    \end{aligned}
    \end{gather}
    The causal box acts as
    \begin{equation}
        \C{C} (\ket{x, t=1} \bra{x,t=1}^{A^O} \otimes \ket{y,t=3} \bra{y, t=3}^{B^O}) = \ket{x, t=2} \bra{x,t=2}^{B^I} \otimes \ket{y,t=4} \bra{y, t=4}^{A^I}.
    \end{equation}
    Composing these yields
    \begin{equation}
        \lc{\C{C} \circ (\C{M}^{A}_{x} \otimes \C{M}^{B}_{y})} = \ket{y, t=5}\bra{y, t=5}^{R_A} \otimes \ket{x, t=5}\bra{x, t=5}^{R_B}
    \end{equation}
    and we see that Alice and Bob win the GYNI game with certainty. 

    This causal inequality violation is, however, not a witness of indefinite causal order. The protocol is perfectly compatible with a well-defined causal order, simply not one where each agent acts once and only once with their inputs happening before their outputs, i.e., this violates the closed labs assumption. Indeed, from \cref{eq:trivvio}, we see that Alice sends out a message at $t=1$ before she receives one at $t=4$.
\end{restatable}

Multipartite causal box protocols model multipartite quantum experiments in spacetime. Two experimenters can make different choices in their experimental design, for example, they might choose different physical systems, different measurement apparatuses, or even just when and where they carry out their experiment. Still, if the observations we care about are the same in both, we would still think of their experiments as being equivalent in terms of their relevant behaviour. We formalise this in the definition below. This will in particular become relevant when we connect causal box protocols to QC-QCs. There may be many ways in which an abstract QC-QC can be physically realised in spacetime, each such realisation corresponding to some causal box protocol with behavioural equivalence among the different realisations of the same QC-QC.

\begin{defi}[Behavioural equivalence of causal boxes]\label{def:pb_eq}
Consider two $N$-partite causal box protocols, $\PBP$ and $\PBPp$. We say that $\mathfrak{P}$ and $\mathfrak{P}'$ are behaviourally equivalent if for all $k\in\{0,...,N+1\}$ and for each agent operation $\M$, there exists an agent operation $\Mp$, and vice versa, such that
\begin{equation}
    \CM{\C{C}}{\M} = \CM{\C{C}'}{\Mp}.
\end{equation}

\end{defi}

\section{Defining process boxes compositionally}\label{sec:pb}
\subsection{Definition}\label{sec:pbdef}

As we have seen in the previous section, general causal boxes can violate causal inequalities trivially via violations of the closed labs assumption of the process matrix/higher-order frameworks. Hence, we cannot think of arbitrary causal box protocols as ``processes'' in the same sense of the higher-order framework. The goal of this section is to define the subset of causal box protocols which do in fact deserve to be called processes, by formalising the closed labs assumption within the context of such spacetime protocols. An initial definition has previously been developed in \cite{Vilasini_2020} which first proposed the idea of \emph{process boxes}. Here, we will build on that work, while fully formalising and generalising the definition of process boxes in the process.

An important aspect of processes is that each agent acts exactly once, with an implicit local order: inputs are received before producing relevant outputs. Ignoring this assumption leads to trivial causal inequality violation as seen in \cref{ex:trivvio}. We first formalise the idea of an agent acting once as there being only a single message on their input and output wire. For this purpose, let us first define the 1-message subspace and the associated projector onto that space. 

\begin{defi}[1-message subspace supported in a spacetime region]\label{def:oneprojectors}
 
Consider a spacetime $\C{T}$. Let $A_k$ be an agent with non-trivial input/output space and corresponding Fock space $\C{F}^{A^{I/O}_k, \C{T}}$. For any subset $\C{T}' \subseteq \C{T}$, the \emph{1-message subspace supported in $\C{T}'$} is the subspace of $\C{F}^{A^{I/O}_k, \C{T}}$ consisting of states with
exactly one message located somewhere in $\C{T}'$ and vacuum at all other spacetime locations in $\C{T}$. This space can be written as
\begin{equation}
    \C{H}^{A^{I/O}_k,\C{T}'} = \C{H}^{A^{I/O}_k} \otimes \C{H}^{\C{T}'} \subsetneq\FIOall
\end{equation}
which is naturally embedded in the full Fock space by identifying
\begin{equation}
\ket{\psi , t} \mapsto \ket{\psi,t} \otimes \ket{\Omega, \C{T}\backslash\{t\}} .
\end{equation} 
  The projector onto this space from the full Fock space on $\C{T}$ is thus well-defined and can be denoted as
    \begin{equation}\label{eq:oneprojectors}
        \C{P}^{I/O, \C{T}'}_k: \C{L}(\FIOall) \rightarrow \C{L}(\C{H}^{A^{I/O}_k, \C{T}'}).
    \end{equation}
    If $\C{T}' = \C{T}$, we will also denote this projector as $\C{P}^{I/O}_k$. If $A_k$ has a trivial in/output, then the corresponding projector $\C{P}^{I/O,\C{T}'}_k$ is defined to be trivial for all $\C{T}'$ (i.e., simply acts as the identity on the trivial space).
\end{defi}
Note that $P^{I/O,\C{T}'}_k$ acts on the full Fock space over $\C{T}$. It projects onto states with exactly one message somewhere in $\C{T}'$ and vacuum on all other spacetime locations in $\C{T}$. These projectors should not be interpreted as acting independently on subspaces associated with individual locations. In particular, an expression like $\C{P}^{I, t}_k \otimes \C{P}^{I, t'}_k$ is not well-defined.

In general, we should not think of projectors on the Fock space as physical operations as they can allow backwards in time signalling (an explicit example is given in \cref{ex:sigproj}). However, in the case of the projectors in \cref{eq:oneprojectors} it turns out that we can interpret them as subnormalised pseudo-causal boxes, which we define as follows: A pseudo-causal box is essentially a causal box but processing can be instantaneous (this relaxes the condition that $\chi(\C{T}')\subsetneq \C{T}'$ for valid causality functions on bottom-closed sets $\C{T}'\subseteq \C{T}$ to allow $\chi(\C{T}')\subseteq \C{T}'$). A formal definition is given in \cref{defi:pseudocb} in \cref{app:cb}. There, we also show that composing either all the input or all the output wires of pseudo-causal boxes with a causal box yields a causal box (\cref{lemma:itscb}). This means that we can think of pseudo-causal boxes as physical operations where the processing time is allowed to go to 0 (in contrast to causal boxes where this must be strictly non-zero).

\begin{restatable}{lemma}{projcb}\label{lemma:projcb}
    The projectors $\C{P}^{I, \C{T}'}_k$ and $\C{P}^{I, \C{T}'}_k$ are subnormalised pseudo-causal boxes for all $k$ and all $\C{T}' \subseteq \C{T}$.
\end{restatable}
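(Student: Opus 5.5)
We need to show that the projector $\C{P}^{I,\C{T}'}_k$ (and likewise $\C{P}^{O,\C{T}'}_k$; the statement presumably means both) is a subnormalised pseudo-causal box. We do not have the formal \cref{defi:pseudocb} in front of us, so we take it to mean the following: a CP map with a causality function $\chi$ satisfying $\chi(\C{T}'')\subseteq\C{T}''$ on bottom-closed sets, which arises as a post-selected outcome of a normalised pseudo-causal box, exactly as in \cref{def:subcb}.

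\textbf{Step 1: a normalised dilation.} We construct an explicit normalised pseudo-causal box $\C{C}'$ and a product state $\ket{\psi}^R$ such that $\C{P}^{I,\C{T}'}_k(\cdot)=\bra{\psi}\C{C}'(\cdot)\ket{\psi}$. Fix a total order $t_1,\dots,t_{|\C{T}|}$ of $\C{T}$ that extends the partial order. We let $\C{C}'$ be a coherent sequential counter.

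At each position $t_j$, an isometry $V_j$ acts on $\F{A^I_k}{t_j}\otimes\C{H}^{\mathrm{mem}}$. The memory stores the number of messages seen so far in $\C{T}'$, capped at $2$, together with a bit recording whether any message has been seen outside $\C{T}'$. The isometry $V_j$ does two things:
\begin{itemize}
\item it outputs the input state unchanged at the same position $t_j$, which is instantaneous and hence allowed for a pseudo-causal box;
\item it updates the memory by the number-operator value of the local Fock space, which acts as a controlled increment on an orthogonal decomposition of $\F{A^I_k}{t_j}$ by particle number.
\end{itemize}
Each $V_j$ is unitary on its joint space, so the composite $V=V_{|\C{T}|}\cdots V_1$ is an isometry.

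At the final location $T$, the memory is mapped to a flag on a wire $R$ placed at $T$: the flag is $\ket{1}$ iff the count equals exactly $1$ and no message was seen outside $\C{T}'$. Because the memory entangles the input with the flag coherently, we follow this with an uncomputation step: running $V^\dagger$ on the memory only restores it to its initial state. This works because the flag is a function of the particle-number sector, and the sector is diagonal in the output basis.

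Projecting $R$ onto $\ket{1,T}$ then gives exactly the projector onto the subspace with one message in $\C{T}'$ and vacuum elsewhere, i.e.\ $\C{P}^{I,\C{T}'}_k$. To match \cref{def:subcb} literally, we make $R$ a wire over all of $\C{T}$ carrying the product state $\ket{\psi}=\bigotimes_t\ket{\psi_t,t}$, with vacuum at every $t\neq T$ and $\ket{1}$ at $T$.

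\textbf{Step 2: the causality condition.} For a bottom-closed set $\C{T}''$, the output on $\C{T}''$ is the input on $\C{T}''$ passed through unchanged. The flag lives at $T$, so it lies in $\C{T}''$ only when $\C{T}''\ni T$, i.e.\ when $\C{T}''=\C{T}$. Hence $\chi(\C{T}'')=\C{T}''$ is a valid pseudo-causality function: $\chi$ is union-preserving, monotone, and satisfies $\chi(\C{T}'')\subseteq\C{T}''$. The partial-trace identity \cref{eq:cbreq}, read with this $\chi$, holds because tracing out outputs on $\C{T}\setminus\C{T}''$ kills all dependence on inputs outside $\C{T}''$, since those inputs were only copied forward locally. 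The flag's dependence on all inputs is harmless, because $T$ succeeds everything and the flag is only relevant when $\C{T}''=\C{T}$.

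\textbf{Step 3: the output case.} The argument for $\C{P}^{O,\C{T}'}_k$ is identical with $A^O_k$ in place of $A^I_k$. If $A_k$ has a trivial input or output space, the projector is the identity, which is trivially a pseudo-causal box.

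\textbf{Main obstacle.} The delicate point is Step 1: we must ensure that the map obtained after post-selection is exactly the projector $\rho\mapsto P\rho P$, with coherences preserved, rather than a dephased version of it. This is why we need both uncomputation of the memory and the fact that the count is a function of orthogonal number sectors. If uncomputation proves awkward in the sequence representation, the fallback is a direct verification. Since $P=\sum_{t\in\C{T}'}\Pi_t$, where $\Pi_t$ projects onto one message at $t$ and vacuum elsewhere, $P$ is a product-compatible projector. We would then check that $\mathrm{tr}_{\C{T}\setminus\C{T}''}(P\rho P)$ depends on $\rho$ only through $\mathrm{tr}_{\C{T}\setminus\C{T}''}\rho$, up to the subnormalised allowance, or otherwise argue that subnormalised pseudo-causal boxes are exactly those realisable with the flag post-selected at $T$.
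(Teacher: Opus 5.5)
You have the right architecture: pass the input through unchanged, record message counts in a register at the final location $T$, and post-select on it. However, the dilation you specify is not a pseudo-causal box, and Step 2 skips the one point that needs proof. The register sits at $T$, so it is traced out for every proper bottom-closed $\C{T}''$. Tracing it out dephases the passed-through input according to whatever function of the message numbers it stores. The output on $\C{T}''$ is therefore not simply the input on $\C{T}''$, and whether it depends only on that input hinges on how fine the stored function is.

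With your binary flag (what remains after uncomputation), the channel is $\rho\mapsto P\rho P\otimes\ket{1}\bra{1}+P^{\perp}\rho P^{\perp}\otimes\ket{0}\bra{0}$, with $P^{\perp}$ the complementary projector. This channel signals backwards. Take $\C{T}'=\C{T}=\{1,2\}$ and $\C{T}''=\{1\}$, and compare $\tfrac{1}{\sqrt2}(\ket{\Omega,1}+\ket{0,1})\otimes\ket{\Omega,2}$ with $\tfrac{1}{\sqrt2}(\ket{\Omega,1}+\ket{0,1})\otimes\ket{\phi_2,2}$, where $\ket{\phi_2}$ is a two-message state. Both have the same marginal at $t=1$. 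In the first, the two branches carry $0$ and $1$ messages, so they get different flag values and the coherence at $t=1$ is destroyed. In the second, they carry $2$ and $3$ messages, get the same flag value, and the coherence survives. This is the mechanism of \cref{ex:sigproj}.

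Your capped counter fails in the same way even without uncomputation, since it identifies counts $2$ and $3$. A capped increment or an OR-bit is also not an isometry in the first place. Finally, the uncomputation cannot be implemented: by time $T$ the input messages have already been emitted at their own locations.

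The missing idea is to store the exact, unbounded counts: $n$ in $\C{T}'$ and $m$ in $\C{T}\setminus\C{T}'$. That is, use $\ket{\psi}\mapsto\ket{\psi}\otimes\ket{n,m,T}$ on the $(n,m)$ sector, which is an isometry as a direct sum over orthogonal sectors; this is the paper's $V^{I/O,\C{T}'}_{k,count}$. Split $\C{T}$ into $\C{T}'\cap\C{T}''$, $(\C{T}\setminus\C{T}')\cap\C{T}''$ and the corresponding pieces of $\C{T}\setminus\C{T}''$, with message numbers $n_1,\dots,n_4$. Tracing out the register and the region outside $\C{T}''$ then produces $\delta_{n_1+n_3,m_1+m_3}\delta_{n_3,m_3}\delta_{n_2+n_4,m_2+m_4}\delta_{n_4,m_4}=\delta_{n_1,m_1}\delta_{n_2,m_2}\delta_{n_3,m_3}\delta_{n_4,m_4}$. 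So the surviving dephasing involves only data inside $\C{T}''$, and this computation is what proves pseudo-causality.

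No uncomputation is needed to avoid the dephasing you worry about. On the sector $(n,m)=(1,0)$ the register is in a definite state, so $\bra{1,0,T}V\rho V^\dagger\ket{1,0,T}=P\rho P$ with all coherences intact. Your fallback of checking $\text{tr}_{\C{T}\setminus\C{T}''}(P\rho P)$ directly also fails, because even the trace of $P\rho P$ depends on messages outside $\C{T}''$. That is precisely why subnormalised pseudo-causal boxes are defined through a normalised dilation.
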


{\bf Physical intuition on 1-message projectors and subspaces} We now discuss an equivalent way to understand these projectors, which arises through the wire isomorphism. This serves to provide further physical intuition and to show consistency with related literature on causal/process boxes, in particular the initial process box proposal of \cite{Vilasini_2020} as well as our previous work \cite{salzger2024mappingindefinitecausalorder} and a related work involving one of us \cite{Emilien_inprep}. Moreover, this is also relevant for connecting to the routed circuit formalism (and its sectorial constraints) \cite{vanrietvelde2021routed} and the analysis of the fine-grained quantum switch in \cite{vilasini2025}.

Note that the 1-message space for any system $S$ of dimension $d_S$ associated with spacetime locations $\C{T}'$, is by definition equivalent to a direct sum of 1-message spaces of $S$ associated to each location $t\in\C{T}'$
\begin{equation}
\label{eq: 1msgspace_dirsum}
     \C{H}^{S, \C{T}'}=\C{H}^S\otimes \mathbb{C}^{|\C{T}'|}=\bigoplus_{t\in\C{T}'}    \C{H}^{S} \otimes \ket{t}=\bigoplus_{t\in\C{T}'}    \C{H}^{S, t}  =\text{span}\Big(\{\ket{i,t}\}_{i\in \{0,...,d_S-1\},t\in\C{T}'}\Big)  
\end{equation}
Moreover, this is isomorphic to a subspace of a tensor product of spaces associated to each location $t\in\C{T}'$, where each space in the product can contain a vacuum (zero-message) or a single non-vacuum (1-message) state
\begin{align}
    \begin{split}
    \label{eq: 1msgspace_span_tensorprod}
    \C{H}^{S, \C{T}'}\cong \text{span}\Big(\{\ket{i,t}\ket{\Omega,\C{T}'\backslash \{t\}}\}_{i\in \{0,...,d_S-1\},t\in\C{T}'}\Big)\subseteq \bigotimes_{t\in\C{T}'} \Big(  (\C{H}^S \oplus \ket{\Omega})\otimes \ket{t}\Big).
    \end{split}
\end{align}
Finally, by the trivial embedding in \cref{eq:wireisostate}, we also have for any $\C{T}'\subseteq \C{T}$,
\begin{equation}
    \text{span}\Big(\{\ket{i,t}\ket{\Omega,\C{T}'\backslash \{t\}}\}_{i\in \{0,...,d_S-1\},t\in\C{T}'}\Big)\cong \text{span}\Big(\{\ket{i,t}\ket{\Omega,\C{T}'\backslash \{t\}}\}_{i\in \{0,...,d_S-1\},t\in\C{T}'}\Big) \otimes \ket{\Omega,\C{T}\backslash \C{T}'}^S\subseteq \C{F}^{S,\C{T}}
\end{equation}
Therefore note that in particular, we can equivalently express the 1-message projector for any spacetime location $\C{T}'=\{t\}$ as the projector on the subspace  $\text{span}\Big(\{\ket{i,t}\ket{\Omega,\C{T}\backslash \{t\}}\}_{i\in \{0,...,d_S-1\}}\Big)$.

\begin{restatable}[Signalling measurements]{example}{sigproj}\label{ex:sigproj}
    Consider the Fock space $\C{F}(\mathbb{C}^2 \otimes \C{H}^{\C{T}})$ where $\C{T} = \{1,2\}$ and the projector $\C{P}_{\leq 1} = P_{\leq 1}(\cdot)P_{\leq 1}$, where $P_{\leq1}$ projects onto the subspace containing 0 or 1 messages, $(\mathbb{C}^2 \oplus \ket{\Omega}) \otimes \C{H}^{\C{T}}$. This projector allows for backwards-in-time signalling, no matter how we choose other orthogonal projectors $\{\C{P}_i\}_i$ to obtain a complete measurement
    \begin{equation}
        \text{tr}_{t=2} \circ (\C{P}_{\leq 1} + \sum_i \C{P}_i) \neq \text{tr}_{t=2} \circ (\C{P}_{\leq 1} + \sum_i \C{P}_i) \circ \text{tr}_{t=2},
    \end{equation}
    that is, the choice of input state at $t=2$ can influence the output state at the earlier time $t=1$. To see this, consider the two states $(\ket{\Omega, t=1}+ \ket{0, t=1})\ket{\Omega, t=2}/\sqrt{2}$ and $(\ket{\Omega, t=1}+ \ket{0, t=1}) \ket{0,t=2} /\sqrt{2}$. Tracing out the $t=2$ time stamp yields the same state at $t=1$, $(\ket{\Omega, t=1}+ \ket{0, t=1})/\sqrt{2}$, and so applying the measurement to these two states and then tracing out the $t=2$ time stamp should yield the same state. The first state lies in the image of $\C{P}_{\leq 1}$, hence, the measurement acts trivially on it and after tracing out the $t=2$ time stamp, we obtain the pure state $(\ket{\Omega, t=1}+ \ket{0, t=1})/\sqrt{2}$. For the second state, we find 
    \begin{gather}
    \begin{aligned}
        \text{tr}_{tr=2} \circ (\C{P}_{\leq 1} &+ \sum_i \C{P}_i) ((\ket{\Omega, t=1}+ \ket{0, t=1})(\bra{\Omega, t=1}+ \bra{0, t=1}) \otimes \ket{0,t=2}\bra{0, t=2} /2) \\
        &= \frac{1}{2} \ket{\Omega, t=1} \bra{\Omega, t=1}  +\frac{1}{2}\sum_i \text{tr}_{t=2} \circ \C{P}_i (\ket{0, t=1}\bra{0, t=1} \otimes \ket{0,t=2}\bra{0,t=2}).
    \end{aligned}
    \end{gather}
    This is either a mixed state or equal to vacuum at $t=1$. In both cases, it is different from $(\ket{\Omega, t=1}+ \ket{0, t=1})/\sqrt{2}$ and so we observe signalling from $t=2$ to $t=1$, backwards in time.
\end{restatable}

With the projectors of \cref{def:oneprojectors}, we can now formalise the notion of each agent acting exactly once.

\begin{defi}[Acting once condition for a causal box protocol]\label{def:ao}
    Given a causal box protocol $\mathfrak{P}$ we say that $\mathfrak{P}$ satisfies acting once (AO) if for all $x_k$
     \begin{equation}
        \CM{\C{C}}{(\C{P}^O_k \circ \M \circ \C{P}^I_k)} = \CM{\C{C}}{\M}
    \end{equation}
\end{defi}

\begin{figure}
    \centering
    \includegraphics{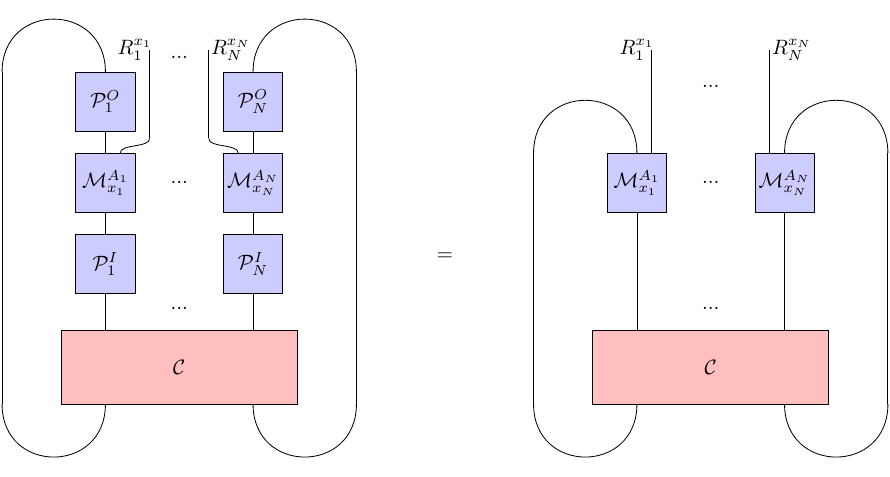}
    \caption{Graphical depiction of the Action Once (AO) condition of \cref{def:ao}. The projectors $\C{P}^I_k, \C{P}^O_k$ can be viewed as physical operations (\cref{lemma:projcb}), corresponding to measuring the number of messages on an agent's input respectively output wire over the entire experimental procedure and getting the outcome ``1 message''. The condition then states that this measurement is non-disturbing, verifying that the agent only received exactly a single message and sent exactly a single message.}
    \label{fig:aocond}
\end{figure}

This condition is illustrated graphically in \cref{fig:aocond}. In other words, in the causal box protocol, it is guaranteed that each agent receives and sends one message at some spacetime location in $\C{T}$ (or more precisely, in $\C{T}^{I/O}_k \subseteq \C{T}$ for agent $A_k$) or a superposition thereof.

Note that the above condition also restricts the global past and future to their respective one-message spaces. In particular, this restricts the global past's operations, which for general causal box protocols are states in the Fock space, to states in the 1-message space. The projectors $\C{P}^{I}_{0}$ and $\C{P}^{O}_{N+1}$ are trivial since the past input and future output spaces are trivial. 

\Cref{lemma:projcb} implies that we can interpret the AO condition in an operational manner. There exists an operational method (corresponding to the pseudo-causal boxes of which the projectors $\C{P}^{I/O}_k$ are a particular outcome\footnote{The pseudo-causal box adds the idealisation that this measurement can be instantaneous, this is merely a simplification that allows to preserve the input/output time stamps of the original causal box protocol.}) to check if a protocol satisfies AO during a run of the protocol. This procedure is moreover non-disturbing if AO is satisfied. 

Intuitively, one would expect that if a causal box protocol satisfies AO, then only checking the number of messages on the input wires, but not the number of messages on the output wires (or vice versa), should not disturb the protocol either. The below lemma shows that this is true. We note that the physicality of the projectors $\C{P}^{I/O}_k$ is an important ingredient here as an analogous result does not hold for projectors which are not subnormalised pseudo-causal boxes (as seen in \cref{ex:sigproj}).

\begin{restatable}{lemma}{checkio}\label{lemma:checkio}
    If a causal box protocol satisfies AO, then for all $x_k$
    \begin{equation}
        \CM{\C{C}}{(\C{P}^O_k \circ \M \circ \C{P}^I_k)} = \CM{\C{C}}{(\M \circ \C{P}^I_k)} = \CM{\C{C}}{(\C{P}^O_k \circ \M)}= \CM{\C{C}}{\M}
    \end{equation}
\end{restatable}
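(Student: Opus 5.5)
The plan is a ``no weight outside the 1-message subspace'' argument. \cref{def:ao} gives the equality with both projectors inserted; we show that this forces the complementary branches to carry zero weight, and then remove one projector at a time by linearity.

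Write $\C{Q}^{I}_k := (\mathbb{1}-P^{I}_k)(\cdot)(\mathbb{1}-P^{I}_k)$, and similarly $\C{Q}^{O}_k$. The pair $\{\C{P}^{I}_k,\C{Q}^{I}_k\}$ is then a two-outcome measurement on the input Fock space, and likewise for the output.
\begin{itemize}
\item As in the proof of \cref{lemma:projcb}, I expect this whole instrument, and not only the outcome $\C{P}$, to be a pseudo-causal box. The reason is that counting ``exactly one message over $\C{T}$'' can be done location by location, with a classical counter that only moves forward in time.
\item Hence, by \cref{lemma:itscb}, composing $\C{C}$ with the agent operations $\mathcal{D}^O\circ\M\circ\mathcal{D}^I$, where $\mathcal{D}=\C{P}+\C{Q}$ is the dephasing channel, yields a valid normalised causal box. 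Its output is a normalised state on the result wires.
\end{itemize}

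Expanding both $\mathcal{D}$'s bilinearly gives four terms: $\C{P}\C{P}$, $\C{P}\C{Q}$, $\C{Q}\C{P}$ and $\C{Q}\C{Q}$.
\begin{itemize}
\item Each term is the composition of subnormalised causal boxes, so it is a positive operator.
\item The $\C{P}\C{P}$ term equals $\CM{\C{C}}{\M}$ by AO, and this has unit trace.
\item Therefore the three remaining terms have zero trace, and since they are positive, they vanish identically.
\end{itemize}

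Vanishing of these diagonal terms alone does not yet give the claim. The quantities in the lemma involve $\M\circ\C{P}^I_k$ without dephasing on the output, so coherent cross terms such as $P^O(\cdot)(\mathbb{1}-P^O)$ also appear. To control them I would pass to a pure (vector) picture.
\begin{itemize}
\item Purify $\C{C}$ and each $\M$ via their sequence representations (Stinespring dilations).
\item Compute the loop composition with the product-basis formula of \cref{lemma:looprep}.
\item This expresses the composed object as $\mathrm{tr}_{\alpha}\ket{\Phi(X_I,X_O)}\bra{\Phi(X_I',X_O')}$. Here $\ket{\Phi}$ is, for fixed purification, a vector depending linearly on Kraus-level insertions $X_I$ on the agent's input and $X_O$ on its output. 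Its entries are sums over the looped basis $\ket{i}$.
\end{itemize}
Once this form is available, the argument closes quickly.
\begin{itemize}
\item Vanishing of the $\C{Q}\C{Q}$-type diagonal terms means $\|\Phi(\mathbb{1}-P^I,\cdot)\|=0$ and $\|\Phi(\cdot,\mathbb{1}-P^O)\|=0$.
\item So these vectors are zero, and every cross term containing them vanishes.
\item Linearity then gives $\Phi(\mathbb{1},\mathbb{1})=\Phi(P^I,\mathbb{1})=\Phi(\mathbb{1},P^O)=\Phi(P^I,P^O)$, which yields all three equalities in \cref{lemma:checkio} at once.
\end{itemize}

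The main obstacle is justifying this sesquilinear vector picture for loop composition in infinite-dimensional Fock space. It must be shown that the sum over looped basis elements defines an honest vector, so that ``zero norm implies zero vector'' applies. One must also check that inserting $P$ or $\mathbb{1}-P$ at the Kraus level is compatible with \cref{lemma:looprep}, whose product basis is tied to a slicing by $\chi$. The reason for first invoking \cref{lemma:projcb} and \cref{lemma:itscb} is to stay within causal or pseudo-causal boxes, so that loops remain well defined. Ideally I would derive the vanishing of cross terms directly from positivity of the $2\times 2$ block operator $\sum_{a,b}\C{C}(\ldots)\otimes\ket{a}\bra{b}$ built from a coherent (non-dephasing) version of the instrument. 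A positive block matrix with a zero diagonal block has zero off-diagonal blocks, and this route avoids the infinite-dimensional subtleties.
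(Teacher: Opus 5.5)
\textbf{There is a genuine gap: the load-bearing premise of your first stage is false.} Your overall strategy is the same as the paper's: normalisation of a physical counting measurement forces the non-1-message branches to vanish, and purifying the loop kills the coherent cross terms. The paper gets the lemma in one line from \cref{coro:normalise} (via \cref{lemma:nondist}), taking $\C{C}\circ\bigotimes_k(\C{P}^O_k\circ\M)$ as the causal box and $\bigotimes_k\C{P}^I_k$ as the projector, and then swapping $I$ and $O$.

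The false premise is that the two-outcome instrument $\{\C{P}^I_k,\C{Q}^I_k\}$, with $\C{Q}^I_k=(\mathbb{1}-P^I_k)(\cdot)(\mathbb{1}-P^I_k)$, is a pseudo-causal box. It is not, and hence neither is $\mathcal{D}=\C{P}+\C{Q}$. A counterexample:
\begin{itemize}
\item Take $\C{T}=\{1,2\}$ and $\ket{\chi}=(\ket{\Omega}+\ket{\phi_2})/\sqrt{2}$, with $\ket{\phi_2}$ a two-message state at $t=1$.
\item Compare the inputs $\ket{\chi}\otimes\ket{\Omega, t=2}$ and $\ket{\chi}\otimes\ket{0, t=2}$, which have the same marginal at $t=1$.
\item In the first case the total counts are $0$ and $2$, both in the range of $\mathbb{1}-P$. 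So $\mathcal{D}$ acts trivially, and the $t=1$ marginal of the output is the pure state $\ket{\chi}$.
\item In the second case the counts are $1$ and $3$, which $\mathcal{D}$ decoheres. The $t=1$ marginal is $\tfrac12(\ket{\Omega}\bra{\Omega}+\ket{\phi_2}\bra{\phi_2})$.
\end{itemize}
This is signalling from $t=2$ to $t=1$, the same phenomenon as in \cref{ex:sigproj}.

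Your ``classical counter moving forward in time'' heuristic does justify something, but only the \emph{full} count $\sum_n P_n(\cdot)P_n$, i.e.\ the isometry $V_{count}$ of \cref{lemma:projcb}. That instrument decoheres all message numbers; the coarse binary instrument keeps coherence inside $\mathbb{1}-P$. Consequently two of your claims are unjustified:
\begin{itemize}
\item ``$\mathcal{D}^O\circ\M\circ\mathcal{D}^I$ is a normalised causal box, so the total has unit trace.''
\item ``Each of the four terms is a loop of subnormalised causal boxes, hence positive.'' The $\C{Q}$-terms need not be subnormalised causal boxes, so neither positivity nor even convergence of their loops is guaranteed.
\end{itemize}
Your closing $2\times 2$ block-positivity idea has the same defect. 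The coherent binary flag isometry reduces to $\mathcal{D}$ once the flag at $T$ is traced out, which pseudo-causality requires for every bottom-closed set not containing $T$.

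\textbf{The repair.} Use the full count instrument $\sum_{n}\C{P}_n\otimes\ket{n,T}\bra{n,T}$ on all relevant wires, exactly as in \cref{lemma:nondist,coro:normalise}.
\begin{itemize}
\item The total is then a loop of normalised causal boxes by \cref{lemma:itscb}.
\item The outcome with all counts equal to one has trace one by AO, so every branch in which some count differs from one vanishes.
\item In your vector picture this gives $\Phi(P_n,\cdot)=0$ for each $n\neq 1$.
\item You must then justify $\Phi(\mathbb{1}-P,\cdot)=\sum_{n\neq1}\Phi(P_n,\cdot)$. This is an infinite sum whose interchange with the loop is precisely what \cref{lemma:nondist} handles, using a basis diagonalising the $P_n$ and absolute convergence.
\end{itemize}
The vector picture you worried about is already supplied by \cref{lemma:looprep}. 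The loop of a purification $V$ is $\sum_i\bra{i}V\ket{i}$ over a slice-adapted product basis, which is again an isometry. So the Choi form factorises and ``zero norm implies zero vector'' applies.
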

\begin{proof}
This is a consequence of \cref{coro:normalise} with the sequential composition of $\C{C} \circ \bigotimes_{k=0}^{N+1} (\C{P}^O_k \circ \M)$ playing the role of the causal box and $\bigotimes_{k=0}^{N+1} \C{P}^I_k$ playing the role of the projector (and analogously, when switching $I$ and $O$).
\end{proof}
As we have seen in \cref{ex:trivvio}, another way we can trivially violate causal inequalities (using a causally ordered protocol) is by allowing agents to send a non-vacuum message before they receive a non-vacuum message. Indeed, in the higher-order processes framework, one assumes that there is a local causal ordering inside each agent's lab, with their output coming after their input (both of these being non-vacuum by definition), even if the overall causal order between different agents' operations is not specified. 

To unambiguously formalise the condition for causal box protocols, we will consider situations where if an agent $A_k$ receives some (non-vacuum) message at an input location $t\in \C{T}_k^I$ they will output a (non-vacuum) message at a unique later output location $t'\in \C{T}_k^O$, $t'\succ t$, such that no two input locations associate to the same output location. This correspondence between input and output locations is a restriction, although it is physically well motivated and satisfied in particular in all scenarios where each agent's device has a fixed processing time translating inputs at time $t$ to outputs at $t+\Delta$ for $\Delta>0$ \cite{VilasiniRennerPRA, VilasiniRennerPRL}. The relaxation of this restriction is discussed later in \cref{sec:relaxing}, and entails open questions.

We wish to ultimately formalise the above intuition in a manner similar to AO, where it only needs to hold effectively during an actual run of the protocol. As a first step, however, we present a simpler formulation that refers only to the agent operations.

\begin{figure}
    \centering
    \includegraphics[width=0.5\linewidth]{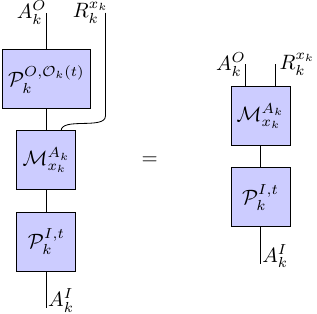}
    \caption{Graphical depiction of the Agent Local Order (ALO) condition of \cref{def:alo}. 
This tells us that whenever $\M$ receives a single message at input position $t$ (and no other non-vacuum inputs in $\C{T}_k^I$) as ensured by the input projector $\C{P}^{I, t}_k$, it is guaranteed to output a single non-vacuum message at a unique later time $\C{O}_k(t)\succ t$ (and no other non-vacuum outputs in $\C{T}_k^O$) as ensured by the output projector $ \C{P}^{O, \C{O}_k(t)}_k$.}
    \label{fig:alocond}
\end{figure}

\begin{restatable}[Agent local order condition for a causal box protocol]{defi}{alo}\label{def:alo}
    For an agent $A_k$ with non-trivial input and output spaces, we say that an agent operation $\M$ satisfies Agent Local Order (ALO) if there exists an injective map $\C{O}_k: \C{T}_k^I \rightarrow \C{T}_k^O$  with $\C{O}_k(t) \succ t$ for all $t \in \C{T}_k^I$ such that for all $t\in \C{T}$
    \begin{equation}\label{eq:alo}
        \C{P}^{O, \C{O}_k(t)}_k \circ \M \circ \C{P}^{I, t}_k = \M \circ \C{P}^{I, t}_k.
    \end{equation}
    We say that the agent $A_k$ with non-trivial input and output spaces satisfies ALO if all their operations satisfy ALO with the same map $\C{O}_k$. We say that a causal box protocol satisfies ALO if all its agents with non-trivial input and output systems satisfy ALO.
\end{restatable}

We also graphically illustrate this definition in \cref{fig:alocond}. 

We now generalise this definition to a condition called Local Order (LO) that involves the whole causal box protocol (more analogous to the Acting Once or AO condition) as opposed to just the agent operations as in Agent Local Order or ALO (see also the graphical depiction of LO in \cref{fig:locond}).

\begin{defi}[Local order condition for a causal box protocol]\label{def:lo}
Given a causal box protocol $\mathfrak{P}$, we say that $\mathfrak{P}$ satisfies local order (LO) if for each agent $A_k$ with non-trivial input and output spaces, there exists an injective map $\C{O}_k: \C{T}_k^I\rightarrow \C{T}_k^O$ with $\C{O}_k(t) \succ t$ for all $t \in \C{T}_k^I$ such that for all $x_k$ 
\begin{equation}
    \CM{\C{C}}{\C{P}_{eff,k}(\M)} = \CM{\C{C}}{(\M \circ \C{P}^I_k)}
\end{equation}
where
\begin{equation}
\label{eq: Peff}
    \C{P}_{eff,k}(\M)(\rho) := \sum_{t,t' \in \C{T}^I_k} P^{O, \C{O}_k(t)}_k \M(P^{I, t}_k \rho P^{I, t'}_k) P^{O, \C{O}_k(t)}
\end{equation}
if the input and output spaces of $A_k$ are non-trivial. 

If the output space is trivial we instead have $\C{P}_{eff, k}(\M) = \M\circ \C{P}_k^I$ and if the input space is trivial, then $\C{P}_{eff, k}(\M) = \M$.
\end{defi}

This effective definition is indeed more general, i.e., a weaker assumption than ALO as shown in the following lemma. 

\begin{restatable}[ALO implies LO]{lemma}{aloimpliesnlo}\label{lemma:aloimpliesnlo}
    If $\M$ satisfies ALO, then
    \begin{equation}
        \M \circ \C{P}^I_k= \C{P}_{eff,k} (\M).
    \end{equation}
    This implies that if $\PBP$ is a causal box protocol satisfying AO and ALO, then $\mathfrak{P}$ is a process box protocol (satisfies AO and LO).
\end{restatable}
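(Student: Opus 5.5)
The plan is to expand the input projector $\C{P}^I_k$ as a sum over locations and then apply the ALO condition \cref{eq:alo} to each term separately. At the level of projection operators, the 1-message space supported in $\C{T}^I_k$ is the orthogonal direct sum $\bigoplus_{t\in\C{T}^I_k}\C{H}^{A^I_k,t}$ (cf.\ \cref{eq: 1msgspace_dirsum}). Hence $P^I_k=\sum_{t\in\C{T}^I_k}P^{I,t}_k$, where we assume without loss of generality that input messages only arise in $\C{T}^I_k$, which is how $\FI$ is defined. Consequently, for any $\rho$,
\begin{equation}
\M\circ\C{P}^I_k(\rho)=\sum_{t,t'\in\C{T}^I_k}\M(P^{I,t}_k\rho P^{I,t'}_k).
\end{equation}
Comparing with \cref{eq: Peff}, it remains to show that each term satisfies $\M(P^{I,t}_k\rho P^{I,t'}_k)=P^{O,\C{O}_k(t)}_k\M(P^{I,t}_k\rho P^{I,t'}_k)P^{O,\C{O}_k(t)}_k$.

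This is the main obstacle. ALO only controls the diagonal blocks $\M(P^{I,t}_k\sigma P^{I,t}_k)$, whereas the off-diagonal terms with $t\neq t'$ mix two sectors. On the diagonal, ALO says the output of a positive operator supported on $\C{H}^{A^I_k,t}$ lies in the range of $P^{O,\C{O}_k(t)}_k$. For the cross terms I would use a Kraus (or Stinespring) representation $\M(\cdot)=\sum_j K_j\cdot K_j^\dagger$ of the CP map. The diagonal condition, applied to rank-one inputs $\ket{\phi}\bra{\phi}$ with $\ket{\phi}\in\C{H}^{A^I_k,t}$, gives
\begin{equation}
\sum_j K_j\ket{\phi}\bra{\phi}K_j^\dagger=\sum_j P^{O,\C{O}_k(t)}_kK_j\ket{\phi}\bra{\phi}K_j^\dagger P^{O,\C{O}_k(t)}_k.
\end{equation}
Taking the trace against $\mathbb{1}-P^{O,\C{O}_k(t)}_k$ then yields $(\mathbb{1}-P^{O,\C{O}_k(t)}_k)K_j\ket{\phi}=0$ for every $j$. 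Therefore $K_jP^{I,t}_k=P^{O,\C{O}_k(t)}_kK_jP^{I,t}_k$, where the outcome wire $R_k^{x_k}$ is carried along trivially since $P^O$ acts as the identity on it.

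Substituting this identity on both sides of a cross term gives
\begin{equation}
\M(P^{I,t}_k\rho P^{I,t'}_k)=P^{O,\C{O}_k(t)}_k\M(P^{I,t}_k\rho P^{I,t'}_k)P^{O,\C{O}_k(t')}_k.
\end{equation}
The right projector carries $\C{O}_k(t')$ rather than the $\C{O}_k(t)$ appearing in \cref{eq: Peff}. This discrepancy needs care: as written, \cref{eq: Peff} sandwiches every term with $P^{O,\C{O}_k(t)}_k$ on both sides. If $t\neq t'$, injectivity of $\C{O}_k$ makes $P^{O,\C{O}_k(t)}_kP^{O,\C{O}_k(t')}_k=0$, so the two expressions would differ on coherences.

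I would therefore read the definition as intended with $P^{O,\C{O}_k(t')}_k$ on the right, the natural effective version in which the location-resolved coherences are preserved. Under that reading, the term-by-term identity above gives $\M\circ\C{P}^I_k=\C{P}_{eff,k}(\M)$ exactly. I would state this interpretation explicitly in the proof.

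The trivial-output and trivial-input cases are immediate from the conventions in \cref{def:lo}.

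For the final claim, assume AO and ALO. Applying the operator identity to every agent with non-trivial input and output spaces gives $\CM{\C{C}}{\C{P}_{eff,k}(\M)}=\CM{\C{C}}{(\M\circ\C{P}^I_k)}$, using that composition is linear in each agent's map. This is precisely LO. Together with AO, this shows that $\mathfrak{P}$ is a process box protocol.
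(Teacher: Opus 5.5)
Your proof is correct and is essentially the paper's argument. The paper uses a Stinespring purification $V_k$ instead of Kraus operators, derives $P^{O,r}_kV_kP^{I,t}_k=\delta_{r,\C{O}_k(t)}V_kP^{I,t}_k$ from ALO by positivity, and then sums over $t$, just as you do with $K_jP^{I,t}_k=P^{O,\C{O}_k(t)}_kK_jP^{I,t}_k$. Your reading of $\C{P}_{eff,k}$ in \cref{def:lo} with $P^{O,\C{O}_k(t')}_k$ on the right is the intended one: it matches \cref{eq:lodecomp} and \cref{eq:ceffrho}, and it is exactly what conjugating the paper's identity $V_kP^I_k=\sum_tP^{O,\C{O}_k(t)}_kV_kP^{I,t}_k$ yields, whereas the literal formula would annihilate the $t\neq t'$ coherences and make the lemma false.
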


The object $\C{P}_{eff,k}$ can be rewritten in terms of an encoding and decoding operation which are subnormalised pseudo-causal boxes connected by a memory which is a causal box, giving it a well-defined operational interpretation (it can be viewed as a higher-order operation acting on the agents' operation as shown in \cref{fig:locond}). Concretely, it can be seen as applying the subnormalised pseudo-causal box $\C{P}^I_{LO,k}: \C{L}(\FI) \rightarrow \C{L}(\FI \otimes \F{\alpha}{\C{T}^I_k})$ which is defined via a single Kraus operator
\begin{equation}
    \sum_{t \in \C{T}^I_k} P^{I, t}_k  \otimes \ket{0,t}^\alpha
\end{equation}
followed by $\M \otimes \sum_{t, t' \in \C{T}^I_k} \ket{\C{O}_k(t)} \bra{t} \cdot \ket{t'}^\alpha\bra{\C{O}_k(t')}^\alpha$ and finally the subnormalised pseudo-causal box $\C{P}^O_{LO,k}: \C{L}(\FO  \otimes \F{\alpha}{\C{T}^I_k}) \rightarrow \C{L}(\FO)$ which is defined via a single Kraus operator
\begin{equation}
    \sum_{t \in \C{T}^O_k} P^{O, t}_k  \otimes \bra{0, t}^{\alpha}.
\end{equation}
Summarising this in one equation, we have
\begin{equation}\label{eq:lodecomp}
    \C{P}_{eff,k}(\M) = \C{P}^I_{LO,k} \circ (\M \otimes \C{O}_k) \circ \C{P}^I_{LO,k}.
\end{equation}
where we abused notation to write $\C{O}_k = \sum_{t, t' \in \C{T}^I_k} \ket{\C{O}_k(t)} \bra{t} \cdot \ket{t'}^\alpha\bra{\C{O}_k(t')}^\alpha$.

That these maps are indeed subnormalised pseudo-causal boxes is shown in \cref{lemma:lopseudocb} in \cref{app:cb}. 

\begin{figure}
    \centering
    \includegraphics[width=\linewidth]{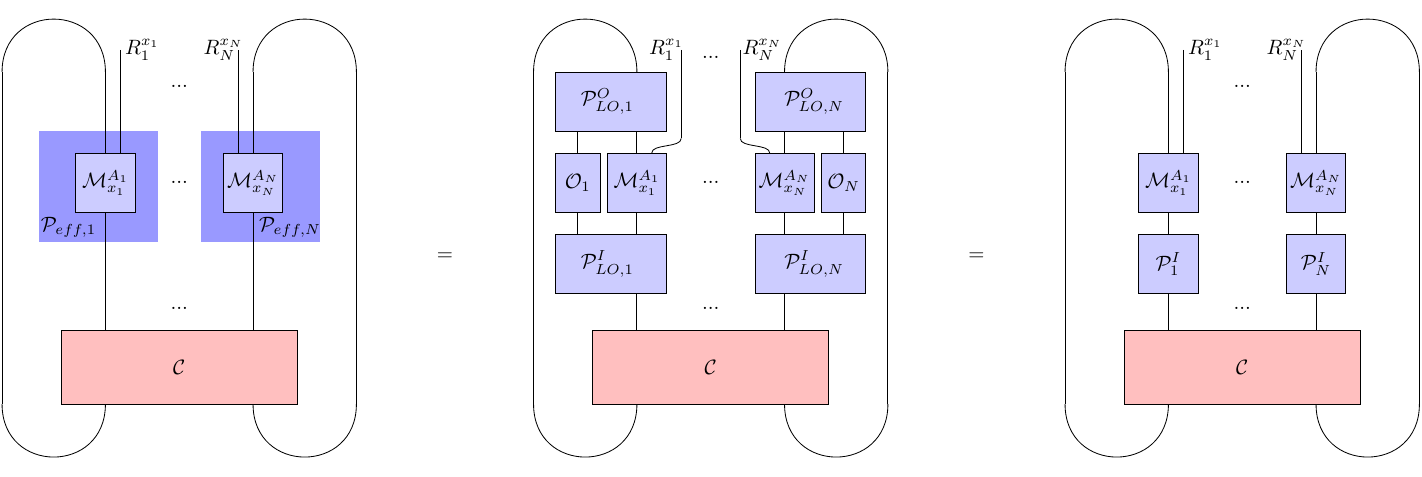}
    \caption{Graphical depiction of the Local Order (LO) condition of \cref{def:lo}. The projectors $\C{P}_{eff,k}$ act on the agents' operations. This operation can be decomposed into a comb-like structure involving two subnormalised pseudo-causal boxes ($\C{P}^{I/O}_{LO,k}$) and a causal box $\C{O}_k$ which simply maps messages at time $t$ to messages at time $\C{O}_k(t)$. It can thus be viewed as a higher-order transformation acting on the agents' operation, as shown in the middle figure. The condition ensures that during the protocol, the agents' operations do not send out a non-vacuum message before receiving a non-vacuum message from the multi-partite causal box $\C{C}$. } 
    \label{fig:locond}
\end{figure}

\begin{defi}[Process box protocols]
    We call a multipartite causal box protocol $\mathfrak{P}$ a process box protocol if $\mathfrak{P}$ satisfies AO and LO.
\end{defi}

Combining \cref{def:lo} and \cref{lemma:checkio}, we immediately find that a process box protocol satisfies
\begin{equation}
    \CM{\C{C}}{\M} = \CM{\C{C}}{\C{P}_{eff,k} (\M)}.
\end{equation}

Here we have defined process box protocols in terms of LO rather than ALO as the former is weaker, and thus allows for a larger class of protocols to qualify as process box protocols. However, in the next section we will show that any protocol satisfying LO is behaviourally equivalent to one satisfying ALO.

{\bf \noindent Effective Choi} A multipartite causal box $\C{C}$ acts on an infinite dimensional Fock space and its Choi representation is therefore generically given by a sesquilinear form \cite{Portmann_2017}. However, when we know that this causal box $\C{C}$ is in fact part of a process box protocol $\mathfrak{P}$, the AO and LO restrictions enable us to write an effective, finite-dimensional Choi representation of $\C{C}$ which can be equivalently used whenever we consider its behaviour under composition with the agents' allowed operations in this protocol. For this, we define the effective Hilbert space of the joint input and output of such a $\C{C}$ which is identified via the AO and LO restrictions
\begin{equation}
    \C{H}_{eff} = \bigotimes_k \bigoplus_{t \in \C{T}} (\C{H}^{A^I_k, t} \otimes \C{H}^{A^O_k, \C{O}_k(t)}) \subsetneq \bigotimes_k \C{H}^{A^I_k, \C{T}} \otimes \C{H}^{A^O_k, \C{T}} \subsetneq \bigotimes_k \FIall \otimes \FOall
\end{equation}
where $\C{H}^{A^O_k, \succ t} = \bigoplus_{t' \succ  t} \C{H}^{A^O_k,t'}$.

For a process box protocol, we can then write (notice that in the below all three expressions are density matrices in $\C{L}(\bigotimes_{k=0}^{N+1} \C{H}^{R^{x_k}_k, T})$ and the equation is thus well-typed) 
     \begin{equation}\label{eq:ceff}
        \CM{\C{C}}{\M} = \CM{\C{C}}{\C{P}_{eff,k}(\M)} = C_{eff} * \bigotimes_{k=0}^{N+1} M^{A_k}_{x_k}|_{\C{L}(\C{H}^{A^I_k, \C{T}})}
    \end{equation}
    where $M^{A_k}_{x_k}|_{\C{L}(\C{H}^{A^I_k, \C{T}})}$ is the Choi-Jamiołkowski matrix of the restriction to the one-message space of $\M$ (equivalently, of $\M \circ \C{P}^I_k$) and we call $C_{eff} \in \C{L}(\C{H}_{eff})$ the effective Choi-Jamiołkowski of $\C{C}$ and it is defined as the Choi-Jamiołkowski matrix of the effective causal box 
    \begin{equation}\label{eq:ceffrho}
        \C{C}_{eff}(\cdot) :=  \sum_{\{t_k, t'_k \in \C{T}^I_k\}_k} \bigotimes_{k=0}^{N+1} P^{I, t_k}_k \C{C}(\bigotimes_{k=0}^{N+1}P^{O, \C{O}_k(t_k)}_k \cdot \bigotimes_{k=0}^{N+1}P^{O, \C{O}_k(t_k')}) \bigotimes_{k=0}^{N+1}P^{I, t_k'}_k.
    \end{equation}
    Note that the effective causal box can be viewed as a map on a finite-dimensional space and hence its Choi-Jamiołkowski matrix is well-defined. \Cref{eq:ceff} can be verified straightforwardly by writing $\CM{\C{C}}{\C{P}_{eff,k}(\M)}$ with the alternative representation for loop composition of \cref{lemma:looprep} and the equivalence of loop composition and the link product in the finite-dimensional setting \cite{salzger2024mappingindefinitecausalorder}.

The conditions AO and LO above are both properties of the CB protocol as a whole (involving both the CB and the agents' operations) and are effective, in the sense, that they only care about what happens when we compose the protocol. This means that agents can have operations that send multiple messages as long as this never occurs in an actual run of the protocol. It also means that if we take a process box protocol and add additional allowed agent operations, the result may no longer be a process box protocol (but still a causal box protocol), as motivated in the example below. This is the reason why we introduced the notion of allowed operations of agents, instead of simply allowing all possible operations.

\begin{restatable}{example}{norestriction}\label{ex:norestriction}
    We consider a causal box protocol with a single agent $A$ with input time stamps $t=2,4$ and output time stamps $t=3,5$. The message spaces on both the input and output side correspond to the Fock space of a qubit. The causal box acts as (we will restrict the definition to the case where there is at most one message on the wire at every time for simplicity)
    \begin{equation}
        V \ket{i, t=3}^{A^O} \ket{j, t=5}^{A^O} = \begin{cases} \ket{0, t=2}^{A^I} \ket{\Omega, t=4}^{A^I} \ket{j}^{\alpha_F}, & i =0\\
        \ket{0, t=2}^{A^I} \ket{1, t=4}^{A^I} \ket{j}^{\alpha_F}, & i =1
        \end{cases}
    \end{equation}
    where $\alpha_F$ is an ancilla. From this definition we see that the causal box always sends a message to the agent at $t=2$ and sends a second message at $t=4$ iff the agent sent back the state $\ket{1}$ at $t=3$. This means that for $\C{C}(\cdot):=V (\cdot) V^\dagger$,  $\mathfrak{P} := \{\C{C}, \{\C{M}^A_x\}_x\}$ can satisfy AO only if for all $x$, the state $\C{M}^{A}_x (\ket{0, t=2}\bra{0, t=2})$ has no overlap with $\ket{1, t=3}$.
\end{restatable}

\subsection{Characterisation results}\label{sec:characterisation}

Our main goal is to relate general process box protocols to QC-QCs. However, there are several aspects of process box protocols which resist an immediate translation into the QC-QC framework. As a result, we might expect process box protocols to be more general in some sense (note that \cite{salzger2024mappingindefinitecausalorder} shows that process box protocols are at least as general as QC-QCs). 

The goal for this section is as follows: we wish to simplify process box protocols as far as possible so that they look as close as possible to the naive picture of a QC-QC in spacetime. For this purpose, we will first discuss in an informal fashion which features of process box protocols might be hard to translate to the QC-QC picture and how they might be represented in a different manner to make this translation. We will then turn this informal discussion into two rigorous statements (\cref{lemma:relabeling,thm:simplifying}) showing that for every process box protocol there exists a behaviourally equivalent process box protocol with properties very close to what we would expect from a naive embedding of a QC-QC into spacetime. These will also serve as useful characterisation results for the process box framework, showing that the structure of general process boxes can be notably simplified while preserving their relevant operational behaviour.

\begin{itemize}
\item Process box protocols incorporate a background space time explicitly, while QC-QCs do not, however, we could imagine a total time ordering on the operations where the internal operation $V_n$ is applied during the $n$-th time step. Even with this idea in mind, the process box protocol setting appears more general as its background spacetime can be a partial order.
\item In the QC-QC framework we can imagine that in each step exactly one agent acts, while in a process box protocol, there could be spacetime positions where no agent does anything (i.e., only vacuum is sent or received) or multiple agents act. This could also be controlled, even dynamically. 
\item Agents with trivial input are only constrained by AO but not LO. It then appears that they can freely choose when to act while agents in the QC-QC framework do not have this freedom.
\item In QC-QCs, we could think of the agents as outputting a message immediately after receiving a message, which corresponds to $\C{O}_k(t)= t+1$ for all agents, while in a process box protocol an agent is able to ``wait'' (corresponding to $\C{O}_k(t) > t+1$). Additionally, this waiting time could itself depend on $t$ and be non-uniform for different input positions.
\item In a process box protocol, an agent's operation can depend on the spacetime position of the incoming message while in a QC-QC the agent applies the same operation regardless of the ``time stamp''. 
\item For process box protocols, we codified the assumptions AO and LO which characterise processes in an effective way, i.e., they only have to hold when we actually compose the multi-partite causal box with the agent operations. On the other hand, in the QC-QC/process matrix framework these assumptions are baked in. Intuitively, it seems that only the 1-message space should matter but this does not necessarily mean that there exists a QC-QC which is well-defined and behaviourally equivalent to the process box protocol. 
\end{itemize}

We begin by discussing the background spacetime. In the causal box framework, the spacetime imposes restrictions on what is an allowed operation, as these need to satisfy the causality condition. For process box protocols, the spacetime imposes an additional constraint in the form of local order. These constraints depend on the order relations between spacetime points and not the arbitrary labels we assign to them. Hence, we can freely relabel our spacetime and the operational predictions should remain the same. This is also what we would expect from special relativity, with the labels corresponding to the frame-dependent spacetime coordinates and the order relations corresponding to the frame-independent lightcone structure.

We can also introduce new order relations even and still obtain a behaviourally equivalent process box protocol. This is because the \textit{lack of} an ordering relation between two spacetime points, i.e., two spacetime points being spacelike separated, is what imposes restrictions on the causal box/process box protocol. 

Indeed, we can even go a step further and make the relabelling dependent on the system as long as some conditions hold. We formalise this in the lemma below.

\begin{restatable}[Invariance of causal boxes under relabelling of spacetime locations]{lemma}{relabeling}
\label{lemma:relabeling}
Let $\PBP$ be an $N$-partite causal box protocol with locations in $\C{T}$. Let $\C{T}'$ be another set of locations, and consider a set of relabelling maps $\{\C{R}_S: \C{T} \rightarrow \C{T}'|S=I_k, O_k, k=1...,N\}$, such that for all $t_1, t_2 \in \C{T}$ with $t_1 \prec t_2$, the following conditions hold
\begin{gather}\label{eq:relabelcondition}
\begin{aligned}
    \C{R}_{I_k}(t_1) &\prec \C{R}_{I_k}(t_2), \C{R}_{O_k}(t_2) \\
    \C{R}_{O_k}(t_1) &\prec \C{R}_{O_k}(t_2), \C{R}_{I_l}(t_2) 
\end{aligned}
\end{gather}
for all $k, l$. We then define a new, relabelled $N$-partite causal box protocol $\PBPp$, as follows
\begin{equation}
\label{eq: relabelledCB}
    \C{C}' \coloneqq \C{R} \circ \C{C} \circ \C{R}^{-1}
\end{equation}
and for all agents $A_k$ and setting choices $x_k$,
\begin{equation}
\label{eq: relabelledlocalop}
\M \mapsto \Mp \coloneqq \C{R} \circ \M \circ \C{R}^{-1}
\end{equation}
where $\C{R}(\ket{\psi, t}^{S}) = \ket{\psi, \C{R}_S(t)}^{S}, \C{R}(\ket{\psi, t_{P/F}}^{P/F}) = \ket{\psi, t'_{P/F}}^{P/F}$ and $\C{R}(\ket{\psi, T}^{R_k}) = \ket{\psi, T'}^{R_k}$ (and linearly extended to mixed states) such that $t'_P \prec t' \prec t'_F \prec T'$ for all $t'\in \C{T}'$ for single messages and $\C{R}$ acts on multi-message states by acting on each message separately. Then, $\mathfrak{P}$ and $\mathfrak{P}'$ are behaviourally equivalent. Further, if $\mathfrak{P}$ is a process box protocol, then so is $\mathfrak{P}'$.
\end{restatable}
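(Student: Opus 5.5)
The plan is to prove three things in order. First, the relabelled maps are valid causal boxes on $\C{T}'$. Second, composition commutes with relabelling, which gives behavioural equivalence. Third, AO and LO transfer from $\mathfrak{P}$ to $\mathfrak{P}'$. Throughout, $\C{R}$ is a unitary identification between Fock spaces on the image positions, so we may assume each $\C{R}_S$ is injective. If it is not, the order conditions \cref{eq:relabelcondition} together with $\prec$ being strict already force injectivity on comparable pairs. For the remaining pairs I would first pass to a disjoint copy of $\C{T}'$ for each wire, which only adds vacuum positions via \cref{eq:wireisostate}. Message spaces not in the image of $\C{R}_S$ are simply fixed to vacuum, so $\C{R}^{-1}$ means the adjoint composed with a vacuum projection. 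This is harmless because $\C{C}'$ and $\Mp$ only ever receive inputs produced by relabelled maps.

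\textbf{Step 1 (causality).} Given a causality function $\chi$ for $\C{C}$ (respectively $\M$), I define $\chi'$ on bottom-closed $\C{T}'_0\subseteq\C{T}'$ in three moves:
\begin{itemize}
\item pull back: $\C{T}_0=\{t: \C{R}_{\text{out}}(t)\in \C{T}'_0\}$ for the output wires;
\item take its bottom-closure and apply $\chi$;
\item push forward under the input relabellings and bottom-close in $\C{T}'$.
\end{itemize}
The conditions \cref{eq:relabelcondition} are exactly what is needed here. If an input at $t_1$ may influence an output at $t_2$ with $t_1\prec t_2$, then $\C{R}_{\text{in}}(t_1)\prec\C{R}_{\text{out}}(t_2)$ for every combination of wires: $I_k\to O_k$ for agents and $O_k\to I_l$ for $\C{C}$. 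Hence the causality condition \cref{eq:cbreq} of $\C{C}$ transfers to $\C{C}'$, because $\text{tr}_{\C{T}'\setminus\C{T}'_0}\circ\C{R}=\C{R}\circ\text{tr}_{\C{T}\setminus\C{T}_0}$. I expect this step to be the main obstacle. The delicate points are checking union-preservation and strictness of $\chi'$, and handling positions of $\C{T}'$ that are not images. Instead of building $\chi'$ explicitly, I would try the sequence representation of \cref{sec:repcb}. The slices $\C{T}_m$ of $\C{C}$ map under $\C{R}$ to sets that can be totally preordered consistently, by \cref{eq:relabelcondition}. The isometries $\C{R}V_m\C{R}^{-1}$ then give a sequence representation on $\C{T}'$, and this suffices to define a valid causality function. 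The same argument covers the agents' operations, $P$, $F$ and the outcome wires, since $t'_P\prec t'\prec t'_F\prec T'$.

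\textbf{Step 2 (behavioural equivalence).} Every wire that is loop-composed pairs an output with label $S$ and an input with the same label $S$, and both are relabelled by the same $\C{R}_S$. Take the alternative representation of \cref{lemma:looprep} with a product basis $\{\ket{i}\}$. Its image $\{\C{R}\ket{i}\}$ is again a product basis adapted to the new slices. It follows that
\begin{equation*}
\CM{\C{C}'}{\Mp}=\C{R}\,\CM{\C{C}}{\M}\,\C{R}^{-1}.
\end{equation*}
On the outcome wires $\C{R}$ only changes $T$ to $T'$, so the two states coincide under this trivial identification. The map $\M\mapsto\Mp$ is a bijection on allowed operations, with inverse given by $\C{R}^{-1}$. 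This gives both directions of \cref{def:pb_eq}.

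\textbf{Step 3 (AO and LO).} The relabelling maps the 1-message projectors to 1-message projectors, $\C{R}\,\C{P}^{I/O,\C{T}_0}_k\,\C{R}^{-1}=\C{P}^{I/O,\C{R}(\C{T}_0)}_k$, because $\C{R}$ acts message-wise and preserves message number. Conjugating the AO identity and using Step 2 therefore gives AO for $\mathfrak{P}'$. For LO, I set
\begin{equation*}
\C{O}'_k:=\C{R}_{O_k}\circ\C{O}_k\circ\C{R}_{I_k}^{-1}
\end{equation*}
on $\C{R}_{I_k}(\C{T}^I_k)$. This map is injective, and $\C{O}_k(t)\succ t$ implies $\C{O}'_k(\C{R}_{I_k}(t))\succ\C{R}_{I_k}(t)$ by the first line of \cref{eq:relabelcondition}. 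We then get $\C{P}_{eff,k}(\Mp)=\C{R}\,\C{P}_{eff,k}(\M)\,\C{R}^{-1}$, and the LO identity transfers in the same way.
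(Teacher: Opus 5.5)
Your three steps are the paper's proof: the same $\chi'$ (pull back along the output relabellings, apply $\chi$, push forward along the input relabellings, bottom-close), invariance of loop composition under relabelling, and intertwining of the 1-message projectors with $\C{O}'_k=\C{R}_{O_k}\circ\C{O}_k\circ\C{R}_{I_k}^{-1}$. Steps 2 and 3 are essentially right, subject to the caveat below.

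\textbf{Step 1 is the gap.} It is the only step with real content, and you leave it open. You name strictness and union-preservation of $\chi'$ as the obstacle and then fall back on the sequence representation. That fallback does not work here. \cref{eq:relabelcondition} only constrains comparable pairs, and it does not order $\C{R}_{I_k}(t_1)$ against $\C{R}_{I_l}(t_2)$ or $\C{R}_{O_l}(t_2)$ for $k\neq l$. So the relabelled slices need be neither disjoint nor ordered in $\C{T}'$.

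The explicit route closes directly, in two moves.
\begin{itemize}
\item Bottom-closedness. The preimage of a bottom-closed $\C{T}'_0$ under $\C{R}_{O_k}$ (for $\M$), and the union over $k$ of its preimages under $\C{R}_{I_k}$ (for $\C{C}$), is already bottom-closed by the same-wire clauses of \cref{eq:relabelcondition}. So your extra bottom-closure is unnecessary, and $\chi$ is applied to a legitimate argument. Union and inclusion preservation are then inherited from preimage, $\chi$, image and bottom-closure.
\item Strictness. The missing ingredient is \cref{lemma:chikill}: a causality function never retains a maximal element. Hence every $t''\in\chi(\C{T}_0)$ lies strictly below some $\tilde t\in\C{T}_0$. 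The cross-wire clauses then put the image of $t''$ strictly below a point of $\C{T}'_0$: $\C{R}_{I_k}(t'')\prec\C{R}_{O_k}(\tilde t)$ for the agents, and $\C{R}_{O_l}(t'')\prec\C{R}_{I_k}(\tilde t)$ for all $k,l$ for $\C{C}$.
\end{itemize}
Therefore $\chi'(\C{T}'_0)\subseteq\C{T}'_0$ and contains no maximal element of $\C{T}'_0$, so it is a proper subset. This is your heuristic about influences along $t_1\prec t_2$, made rigorous.

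\textbf{Caveat on $\C{R}^{-1}$.} You take $\C{R}^{-1}$ to be the adjoint preceded by a projection onto vacuum at positions outside $\mathrm{Im}(\C{R}_S)$. That convention is not harmless; it breaks Step 1.
\begin{itemize}
\item $\C{C}'$ then loses trace on any input carrying a message at such a position, so it is not CPTP.
\item If that position is late, the trace in \cref{eq:cbreq} replaces the late message by vacuum, so the two sides of \cref{eq:cbreq} differ. This is the backwards signalling of \cref{ex:sigproj}.
\item That the protocol only feeds relabelled states into $\C{C}'$ matters for behavioural equivalence, not for $\C{C}'$ being a causal box.
\end{itemize}
There are two ways to fix this.
\begin{itemize}
\item Regard $\C{C}'$ and $\Mp$ as boxes on the Fock spaces over the image positions only, where $\C{R}$ is unitary. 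This is what the paper implicitly does.
\item Alternatively, let $\C{R}^{-1}$ first trace out the non-image input positions. This is CPTP and commutes with the partial traces in \cref{eq:cbreq}. You must then replace your conjugation identity for the projectors by the intertwining relations $\C{R}\circ\C{P}^{I/O}_k=\C{P}^{\prime I/O}_k\circ\C{R}$ and $\C{R}\circ P^{I/O,t}_k=P^{\prime I/O,\C{R}_{I_k/O_k}(t)}_k\circ\C{R}$. These suffice for AO and LO, because inside the composed protocol every wire carries states in the range of $\C{R}$.
\end{itemize}
Finally, injectivity of each $\C{R}_S$ is simply presupposed by the statement, since it uses $\C{R}^{-1}$. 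Positions shared by different wires are harmless. Per-wire copies of $\C{T}'$ are therefore not needed, and would in fact change the target spacetime.
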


We state two weaker forms of this lemma as corollaries.
\begin{restatable}[Agent-independent relabelling]{coro}{relabcoro1}
 Let $\PBP$ be an $N$-partite causal box protocol with locations in $\C{T}$, and $\mathfrak{P}'$ be another such protocol associated with locations in $\C{T}'$ obtained from $\mathfrak{P}$ using \cref{eq: relabelledCB} and \cref{eq: relabelledlocalop} but where the relabelling maps $\C{R}_S$ do not depend on the system $S$. That is there is a single relabelling map $\C{R}_{all}: \C{T}\rightarrow \C{T}'$ that is order-preserving, i.e., $t_1\prec t_2\Rightarrow \C{R}_{all}(t_1)\prec \C{R}_{all}(t_2)$ and $\C{R}_S = \C{R}_{all}$ for all systems $S$. Then $\mathfrak{P}$ and $\mathfrak{P}'$ are behaviourally equivalent, and if the former is a process box protocol, so is the latter.
\end{restatable}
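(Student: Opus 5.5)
This corollary is a special case of \cref{lemma:relabeling}, so the plan is simply to check that its hypotheses hold and then invoke it. Set $\C{R}_{I_k} = \C{R}_{O_k} = \C{R}_{all}$ for every $k$. The global past and future positions, and the outcome position $T$, are handled by the lemma as stated: it sends $t_P \mapsto t'_P$, $t_F \mapsto t'_F$ and $T \mapsto T'$, with $t'_P \prec t' \prec t'_F \prec T'$ for all $t' \in \C{T}'$. The corollary inherits that same convention.

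The only thing to verify is the condition \cref{eq:relabelcondition}. Take $t_1 \prec t_2$ in $\C{T}$. The lemma requires $\C{R}_{I_k}(t_1) \prec \C{R}_{I_k}(t_2)$, $\C{R}_{I_k}(t_1) \prec \C{R}_{O_k}(t_2)$, $\C{R}_{O_k}(t_1) \prec \C{R}_{O_k}(t_2)$ and $\C{R}_{O_k}(t_1) \prec \C{R}_{I_l}(t_2)$ for all $k,l$. Since every $\C{R}_S$ equals $\C{R}_{all}$, all four conditions reduce to the single statement $\C{R}_{all}(t_1) \prec \C{R}_{all}(t_2)$. This is exactly the assumed order-preservation of $\C{R}_{all}$.

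With the hypotheses met, \cref{lemma:relabeling} gives two conclusions. First, the protocol $\mathfrak{P}'$ built via \cref{eq: relabelledCB} and \cref{eq: relabelledlocalop} is behaviourally equivalent to $\mathfrak{P}$. Second, $\mathfrak{P}'$ is a process box protocol whenever $\mathfrak{P}$ is.

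There is no real obstacle here. The one point to watch is well-definedness: the relabelling maps in the lemma are only required to satisfy \cref{eq:relabelcondition}, not to be injective. The corollary asks for no more than the lemma does, however, so anything the lemma's proof needs for $\C{R}$ and $\C{R}^{-1}$ carries over unchanged.
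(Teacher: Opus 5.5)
Your proposal is correct and takes the same route as the paper, which presents this corollary as a weaker form of \cref{lemma:relabeling}: setting $\C{R}_{I_k}=\C{R}_{O_k}=\C{R}_{all}$ reduces every condition in \cref{eq:relabelcondition} to the order-preservation of $\C{R}_{all}$, exactly as you verify. Your closing remark is also apt: anything the lemma implicitly presupposes, such as injectivity of the relabelling so that $\C{R}^{-1}$ exists and $\C{R}$ is unitary, carries over to the corollary unchanged.
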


\begin{restatable}[A relabelling lemma with stronger assumptions]{coro}{relabcoro2}
 Let $\PBP$ be an $N$-partite causal box protocol with locations in $\C{T}$, and $\mathfrak{P}'$ be another such protocol associated with locations in $\C{T}'$ obtained from $\mathfrak{P}$ using \cref{eq: relabelledCB} and \cref{eq: relabelledlocalop} for a set of relabellings $\{\C{R}_S: \C{T} \rightarrow \C{T}'|S=I_k, O_k, k=1,...,N\}$ such that for all $t_1, t_2 \in \C{T}$ with $t_1 \prec t_2$, we have that $\C{R}_S(t_1)\prec \C{R}_{S'}(t_2)$ for all systems $S,S'$. Then $\mathfrak{P}$ and $\mathfrak{P}'$ are behaviourally equivalent, and if the former is a process box protocol, so is the latter.
\end{restatable}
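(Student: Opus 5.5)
The statement to prove is the second corollary (the relabelling lemma with stronger assumptions). The plan is to derive it directly from \cref{lemma:relabeling}. We only need to check that the hypothesis of the corollary implies the conditions \cref{eq:relabelcondition}; the conclusion then carries over verbatim.

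Fix $t_1,t_2\in\C{T}$ with $t_1\prec t_2$. The corollary assumes $\C{R}_S(t_1)\prec \C{R}_{S'}(t_2)$ for \emph{all} pairs of systems $S,S'\in\{I_k,O_k\}_{k=1}^N$. We instantiate this as follows:
\begin{itemize}
\item $S=S'=I_k$ gives $\C{R}_{I_k}(t_1)\prec\C{R}_{I_k}(t_2)$;
\item $S=I_k$, $S'=O_k$ gives $\C{R}_{I_k}(t_1)\prec\C{R}_{O_k}(t_2)$;
\item $S=S'=O_k$ gives $\C{R}_{O_k}(t_1)\prec\C{R}_{O_k}(t_2)$;
\item $S=O_k$, $S'=I_l$ (any $l$) gives $\C{R}_{O_k}(t_1)\prec\C{R}_{I_l}(t_2)$.
\end{itemize}
These are exactly the four relations in \cref{eq:relabelcondition}, required for all $k,l$. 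Equivalently, the lemma's hypothesis is a sub-collection of the corollary's hypothesis, obtained by restricting which pairs $(S,S')$ are considered.

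The remaining data in \cref{lemma:relabeling} are identical in both statements. This covers the requirement $t'_P\prec t'\prec t'_F\prec T'$ for all $t'\in\C{T}'$, the action of $\C{R}$ on the past, future and result wires, and the definitions \cref{eq: relabelledCB} and \cref{eq: relabelledlocalop}. We assume these as part of ``obtained from $\mathfrak{P}$ using \cref{eq: relabelledCB} and \cref{eq: relabelledlocalop}''. Applying \cref{lemma:relabeling} then yields that $\mathfrak{P}$ and $\mathfrak{P}'$ are behaviourally equivalent, and that $\mathfrak{P}'$ is a process box protocol whenever $\mathfrak{P}$ is.

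The only potential subtlety is bookkeeping rather than mathematics. We must ensure that the index ranges of $S,S'$ in the corollary cover the mixed-agent pairs $(O_k,I_l)$ with $l\neq k$ that \cref{eq:relabelcondition} demands. Since the corollary quantifies over all systems, they do, and no genuine obstacle is expected.
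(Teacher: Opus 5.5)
Your proof is correct and takes essentially the paper's approach: the paper gives no separate argument and presents this corollary as an immediate weaker form of \cref{lemma:relabeling}. You make explicit the one check needed, namely that the conditions in \cref{eq:relabelcondition} are exactly the instances $(S,S')\in\{(I_k,I_k),(I_k,O_k),(O_k,O_k),(O_k,I_l)\}$ of the corollary's hypothesis, and that check is correct.
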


We will use this lemma to argue that w.l.o.g. we can assume that the spacetime is totally ordered, bringing us a bit closer to QC-QCs. It also allows us to deal with situations where multiple agents receive or send messages at the same time. This is somewhat similar to QC-PARs (QC-QCs where the operations happen in parallel instead of in sequence, see \cite{Wechs_2021} for more details) but is also more general, as here we could also have superposition between parallel and sequential operations. We give a simple example below.

\begin{figure}
    \centering
    \includegraphics[width=0.8\linewidth]{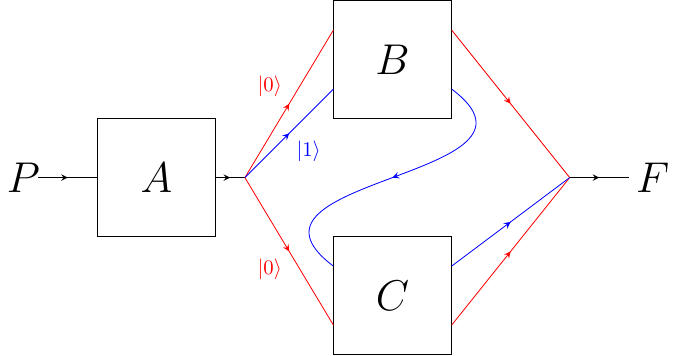}
    \caption{The protocol sends a state $\psi$ to Alice. Depending on her output, either Bob and Charlie act in parallel (red path) or sequentially (blue path).}
    \label{fig:dynamicalparallel}
\end{figure}

\begin{restatable}{example}{dynamicalpar}\label{ex:dynamicalpar}
    We consider a 3-partite process box protocol with trivial global past. The 3-partite causal box is given by the isometry
    \begin{equation}
        V \ket{i, t=2}^{A^O} \ket{j, r}^{B^O} \ket{k, s}^{C^O} = \begin{cases}
            \ket{\psi, t=1}^{A^I} \ket{i, t=3}^{B^I} \ket{i, t=3}^{C^I} \ket{jkrs, t=5}^F, &i=0 \\
            \ket{\psi, t=1}^{A^I} \ket{i, t=3}^{B^I} \ket{j, r+1}^{C^I} \ket{\Omega krs, t=5}^F, &i=1
        \end{cases}
    \end{equation}
    for arbitrary time stamps $r, s \in \{ 2,4\}$ and $\ket{\psi}^{A^I}$ is some particular state and the allowed agent operations are all those which map a one-message state at $t$ to a one-message state at $t+1$. The protocol is depicted schematically in \cref{fig:dynamicalparallel}.
\end{restatable}

In the example, we see that Alice can decide whether Bob and Charlie act in parallel or sequentially. This is not something that explicitly appears in the QC-QC framework. However, it is still possible to cast this protocol so that it looks completely sequential, simply by changing Charlie's (and the global future's) time stamps
\begin{equation}
        V \ket{i, t=2}^{A^O} \ket{j, r}^{B^O} \ket{k, s}^{C^O} = \begin{cases}
            \ket{\psi, t=1}^{A^I} \ket{i, t=3}^{B^I} \ket{i, t=5}^{C^I} \ket{jkr(s-2), t=9}^F, &i=0 \\
            \ket{\psi, t=1}^{A^I} \ket{i, t=3}^{B^I} \ket{j, r+3}^{C^I} \ket{\Omega kr(s-2), t=9}^F, &i=1
        \end{cases}
\end{equation}
and similarly for Charlie's operations.

We will soon see that we can actually take this a step further and find for every process box protocol a behaviourally equivalent one such that each time stamp is associated to a single agent who may send respectively receive during this time step, by using \cref{lemma:relabeling}. We motivate some of these useful properties further before stating them formally in \cref{thm:simplifying}.

The possibility of delays between the agent's input and the agent's output is somewhat limited by the fact that we fix $\C{O}_k(t)$ for the whole protocol. Additionally, since it is then determined that the agent will send a message at $\C{O}_k(t)$ if they receive a message at $t$, one may expect that it one can transform to a behaviourally equivalent scenario where the agent instead outputs at $t+1$ and the causal box essentially pretends it received the message at $\C{O}_k(t) \succeq t+1$. This will allow to fix $\C{O}_k(t)=t+1$ for all agents $A_k$ and input times $t$.

Finally, let us discuss the question of time-dependent agent operations. While this might seem more general than what we can do with a QC-QC, it turns out that whenever we have time-dependent agent operations we can think of this as the time stamp taking over part of the role of the message and we can always shift this back into the message. We illustrate this with a simple example of a time-dependent operation where the protocol essentially encodes a bit in the time stamp. Imagine an agent whose input message space is 1-dimensional. Their input Fock space is then given by $\C{F}(\mathbb{C} \otimes \C{H}^{\C{T}}) \cong \C{F}(\C{H}^{\C{T}})$ where $\C{T}$ is totally ordered. The causal box can still communicate a bit to this agent by using the parity of the time stamp, i.e., sending a message at an even time $t$ if it wants to communicate the bit $0$ and at an odd time $t$ if it wants to communicate the bit $1$. The agent can then output whatever message they want at $t+1$. For example, if they wanted to implement a NOT gate, they would output $\ket{0, t+1}$ if $t$ is odd and $\ket{1, t+1}$ if $t$ is even. It is quite clear that at least in this case time-dependence does not gain us anything as we can achieve the same thing by giving the agent access to a qubit message space, i.e., replacing $\C{F}(\C{H}^{\C{T}})$ with $\C{F}(\mathbb{C}^2 \otimes \C{H}^{\C{T}})$. Whenever the causal box would send a message $\ket{2t+i}$ with $i=0,1$, it instead sends the message $\ket{i, 2t+i}$ and the agent simply implements the time-independent operation $\ket{i, 2t+i} \mapsto \ket{i\oplus 1, 2t+i+1}$ on it (note that the action on the time stamps is simply to add 1, independently of $t$). In general, we can have a more complex interplay between the message space and the time stamp, but the same intuition holds. 

We formalise all these intuitive ideas in the theorem below which tells us that we can always find a behaviourally equivalent process box protocols that looks more similar to QC-QCs with respect to the points discussed before.

\begin{restatable}[Simplifying the set of process boxes]{thm}{simplifying}
\label{thm:simplifying}
Any process box protocol $\PBP$ on a finite spacetime $\C{T}$ 
is behaviorally equivalent to a process box protocol $\PBPp$ with the following properties:

\begin{enumerate}
    \item \textbf{Total order:} The spacetime is the totally ordered set $\C{T}' = \{1,2,...,M\}$ for some $M \in \mathbb{N}$
    \item \textbf{Non-trivial input:} All agents except for the global past have non-trivial input.
    \item \textbf{Time-independent agent operations:} For every agent $A_k$, the action of each local operation $\Mp$ on the 1-message space can be decomposed as 
    \begin{equation}\label{eq:osr}
    \Mp|_{\C{L}(\C{H}^{A^{\prime I}_k, \C{T}^{\prime I}_k})} = \Mp \circ \C{P}^{\prime I}_k = \sum_i A^i_{k, x_k} \cdot A^{i \dagger}_{k, x_k} \otimes T_{+1} \cdot T^\dagger_{+1}
\end{equation}
for some $ A^i_{k, x_k}: \C{H}^{A^{\prime I}_k} \rightarrow \C{H}^{A^{\prime O}_k } \otimes \C{H}^{R_k^{x_k}, T=M} $ and where $T_{+1}\ket{t} = \ket{t+1}$ acts on time stamps by increasing them by one.\footnote{In \cite{salzger2024mappingindefinitecausalorder}, we gave a similar condition for the Kraus operators of an agent's local operation. The Kraus operators of the local operation $\Mp$ have exactly that form.} Note that we added a prime to the one-message projectors on the input spaces of the new protocols to distinguish it from the ones in the original protocol. In particular, $\C{O}_k(t) = t+1$ for all $k$.
    \item \textbf{1-message output:} The image of the causal box $\C{C}'$ is the tensor product of the 1-message input spaces of the agents, i.e., 
    \begin{equation}
        \C{C}' = \bigotimes_{k=1}^{N+1} \C{P}^{\prime I}_k \circ \C{C}'.
    \end{equation}
    \item \textbf{Disjoint input and output times:} The sets $\C{T}^{\prime I/O}_k$ are disjoint for distinct $k$ and $\C{T}^{\prime I}_k$ contains only even times while $\C{T}^{\prime O}_k$ contains only odd times for each $k$. Thus, for each time step there exists exactly one agent that can send or receive a message.
\end{enumerate}

\end{restatable}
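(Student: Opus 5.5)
The plan is to obtain $\mathfrak{P}'$ from $\mathfrak{P}$ by a chain of transformations. Each step preserves behavioural equivalence and the process box property (AO+LO), and each enforces one more of the listed properties. Since behavioural equivalence is transitive, it suffices to check each step separately.

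\textbf{Step 1 (restrict to the effective 1-message picture).} Using \cref{eq:ceff} and \cref{lemma:checkio}, only $\C{P}_{eff,k}(\M)$ and $\C{C}_{eff}$ of \cref{eq:ceffrho} matter for the composed behaviour. I would replace each agent operation by $\C{P}_{eff,k}(\M)$, completed to a normalised causal box by sending any orthogonal remainder to a fixed 1-message output. Similarly, I would replace $\C{C}$ by $\C{C}_{eff}$, completed by a "default" 1-message state on the orthogonal complement. The completions must be done causally; for this I would use the pseudo-causal box structure from \cref{lemma:projcb} and \cref{lemma:lopseudocb}. This enforces ALO (hence LO by \cref{lemma:aloimpliesnlo}) and the \emph{1-message output} property.

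Agents with trivial input (other than $A_0$) are handled by routing to them, via the causal box, a one-dimensional "trigger" input at a time just before each admissible output time. The agent's output time is then encoded as a function of the trigger time. Because the original agent had its own output superposed over times, the time becomes part of the message, as described next.

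\textbf{Step 2 (total order and disjoint, interleaved times).} I would pick a linear extension of $\C{T}$ and then apply \cref{lemma:relabeling}. The relabelling is system-dependent: each pair (input location $t$ of agent $k$, output location $\C{O}_k(t)$) gets its own fresh slot. Input slots receive even times $2j$ and outputs the odd times $2j+1$, ordered consistently with the linear extension, so that all of \cref{eq:relabelcondition} holds. This gives \emph{total order}, \emph{disjoint input and output times}, and already $\C{O}_k(t)=t+1$. Processing that was "delayed" until $\C{O}_k(t)$ is absorbed into the causal box, which is allowed to hold the output and act on it only at the relabelled later times. The key check is that the $\prec$ relations required by \cref{eq:relabelcondition} hold across different agents. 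I expect this to work by sorting all events by the original order and breaking ties arbitrarily.

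\textbf{Step 3 (time-independence).} I would enlarge each agent's message space to $\C{H}^{A^I_k}\otimes\C{H}^{\C{T}^I_k}$, and analogously for outputs. The causal box then writes the time stamp into the message ($\ket{\psi,t}\mapsto\ket{\psi,t,t}$), and the agent applies the Kraus operators $A^i_{k,x_k}=\sum_t A^{i}_{k,x_k,t}\otimes\ket{t}\!\bra{t}$ together with $T_{+1}$. The causal box reads the copied label when receiving. Causality is preserved because the label is only consulted at or after the message's own time stamp.

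\textbf{Main obstacle.} The hardest part is Step 1: making the effective objects into genuine normalised causal boxes while keeping AO and LO valid for \emph{all} allowed operations. This is needed because AO and LO are effective conditions, and projecting can break normalisation or causality, as \cref{ex:sigproj} shows. I would first attempt the completion slice by slice using the sequence representation, appending a fixed vacuum-to-1-message isometry at each slice and conditioning on a flag memory wire. I would then verify normalisation via \cref{coro:normalise}.
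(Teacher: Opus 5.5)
Your ingredients (agent-side completion of $\C{P}_{eff,k}$, a linear extension, moving the time stamp into the message) are the right ones. However, Step 2 does not go through as stated.

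\textbf{Step 2.} \cref{lemma:relabeling} requires, for $t_1\prec t_2$, the same-agent relations $\C{R}_{I_k}(t_1)\prec\C{R}_{I_k}(t_2)$, $\C{R}_{I_k}(t_1)\prec\C{R}_{O_k}(t_2)$ and $\C{R}_{O_k}(t_1)\prec\C{R}_{O_k}(t_2)$. These are incompatible with forcing $\C{R}_{O_k}(\C{O}_k(t))=\C{R}_{I_k}(t)+1$ as soon as an agent has another input time inside one of its waiting windows. For instance, $\C{T}^I_k=\{2,4\}$ with $\C{O}_k(2)=5$, $\C{O}_k(4)=6$ is allowed by LO, yet it forces $\C{R}_{I_k}(2)<\C{R}_{I_k}(4)<\C{R}_{O_k}(5)=\C{R}_{I_k}(2)+1$, which is impossible. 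A non-monotone choice such as $\C{O}_k(2)=7$, $\C{O}_k(4)=5$ breaks the monotonicity of $\C{R}_{O_k}$. So the cross-agent relations you worry about are not the problem; the intra-agent ones are. No relabelling satisfying \cref{eq:relabelcondition} yields $\C{O}_k(t)=t+1$ in general, and a conjugation $\C{R}\circ\C{C}\circ\C{R}^{-1}$ cannot make the causal box ``hold'' anything.

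The paper removes the waiting time before relabelling by actually modifying the protocol (\cref{lemma:tiao}):
\begin{itemize}
\item the causal box copies the input time into a message register;
\item the agent applies its original one-message action to message and register, followed by $T_{+1}$;
\item an isometry $V_{diff,k}$, extended to a pseudo-causal box by \cref{lemma:extension} (legitimate since it only moves time stamps forward), re-times the returned message from $t+1$ to $\C{O}_k(t)$ before it enters $\C{C}$.
\end{itemize}
Only then is property 5 obtained, via a relabelling ($\C{R}_{I_k}(m)=2(m-1)N+2k$, $\C{R}_{O_k}(m)=2(m-2)N+2k+1$) that does satisfy \cref{eq:relabelcondition}.

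If you want to keep your single compressing relabelling, you must bypass \cref{lemma:relabeling}. Rebuild each agent operation from its ALO one-message action via \cref{lemma:extension}, since its original behaviour on multi-message inputs can become acausal after compression. Then verify causality of $\C{R}\circ\C{C}\circ\C{R}^{-1}$ directly; on the resulting totally ordered spacetime this only uses $\C{R}_{O_m}(s)\prec\C{R}_{I_l}(u)$ for $s\prec u$.

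\textbf{Step 1.} The obstacle you single out here is self-inflicted. $\C{C}_{eff}$ ties the output time $t_k$ of $\C{C}$ to the later input time $\C{O}_k(t_k)$, so it is a non-trace-preserving, post-selection-like object. A completion that emits a default state whenever the input lies outside the effective space would have to decide at $t_k$ on the basis of later inputs. The paper never modifies $\C{C}$ this way. LO is enforced only on the agent side (\cref{lemma:nlotoalo}, essentially your agent-side completion). Property 4 comes from post-composing $\C{C}$ with the pseudo-causal box $\C{C}_{one}$ of \cref{lemma:vone} on each input wire (\cref{lemma:cbone}), with behavioural equivalence, AO and LO following from \cref{coro:normalise}. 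Because $\C{C}_{one}$ places its default message at the last time step, it needs a total order, so linearisation must come before this step, not after.

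\textbf{Trivial-input agents.} Routing a trigger before each admissible output time gives the agent several input messages, violating AO, and the causal box cannot know in advance which time the agent will pick. What works (and what the paper does) is:
\begin{itemize}
\item let the agent output at one fixed time with its intended output time written into the message;
\item have the causal box release the message at that time ($V_{diff,k}$ for trivial-input agents);
\item after shifting all other labels by one, supply a single trigger at $t=1$.
\end{itemize}
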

The above theorem is strictly stronger than the conjunction of conjecture 1 and lemma 6 from \cite{salzger2023mscthesis}. Using the wire isomorphisms, we can write \cref{eq:osr} also as
\begin{gather}
\begin{aligned}
    \Mp|_{\C{L}(\C{H}^{A^I_k, \C{T}'})}(\cdot)= \sum_i \sum_{t,t' \in \C{T}^I_k}& (A^i_{k, x_k} \cdot A^{i \dagger}_{k, x_k} \otimes \ket{t+1}\bra{t} \cdot \ket{t'}\bra{t'+1})\\
    &\otimes \ket{\Omega, \C{T}\backslash \{t+1\}} \bra{\Omega, \C{T}\backslash \{t\}} \cdot \ket{\Omega, \C{T}\backslash \{t'\}} \bra{\Omega, \C{T}\backslash \{t'+1\}}.
\end{aligned}
\end{gather}

The simplified causal box from the above theorem can always be written in terms of a sequence representation $V_1,...,V_M$ with $M$ isometries, one for each time step, where $V_m$ maps inputs at $t=m$ to outputs at $t=m+1$.

It is worth stating which issues we have yet to resolve. As we can see from property 5 in the theorem above, it is still possible that no agent acts during a particular time step which is not possible in the QC-QC framework. Another question that is still unanswered is about the ``control'' part of quantum circuits with quantum control of causal order. From the description in \cref{thm:simplifying} it is not clear why such operations are necessarily controlled. Both of these questions will be the subject of the next section.

\section{Mapping process boxes to QC-QCs}\label{sec:mapping}
\subsection{Examples and properties}

Before getting to the general result, we will look at an example of a process box protocol and its mapping to a QC-QC. The quantum switch has been implemented in a number of experiments using long coherence times \cite{Goswami_2018,Goswami_2020}. In the below, we model this in a minimal discrete fashion as a process box protocol.

\begin{restatable}[Long-coherence time quantum switch]{example}{coherence}\label{ex:coherence}
    We consider a process box protocol with two agents, which we will refer to as $A_0$ and $A_1$, and a global past (consisting of a target and a control) and future. We define the causal box via its sequence representation, without explicitly writing the action of the causal box for certain cases which can never occur in a run of the protocol. The isometry $V_{m+1}$ acts on states at $t=2m+1$ and sends out states at $t=2m+2$. We will leave the time stamps implicit below.
    \begin{gather}
    \begin{aligned}
        V_1 \ket{\psi}^{P_T} \ket{i}^{P_C} &= a_1 \ket{\psi}^{A^I_i}\ket{\Omega}^{A^I_{i \oplus 1}} \ket{\Omega}^{\alpha} \ket{i}^{\alpha_C} \ket{1}^{\alpha_S} + b_1 \ket{\Omega}^{A^I_0} \ket{\Omega}^{A^I_1}\ket{\psi}^{\alpha} \ket{i}^{\alpha_C} \ket{0}^{\alpha_S} \\
        V_{m+1} \ket{\psi_0}^{A_0^O} \ket{\psi_1}^{A_1^O} \ket{\psi}^{\alpha} \ket{i}^{\alpha_C} \ket{j}^{\alpha_S} &= \begin{cases} a_{m+1} \ket{\psi}^{A^I_i} \ket{\Omega}^\alpha \ket{i}^{\alpha_C} \ket{1}^{\alpha_S} \\+ b_{m+1} \ket{\Omega}^{A^I_i} \ket{\psi}^{\alpha} \ket{i}^{\alpha_C} \ket{0}^{\alpha_S}, &j = 0, \psi_0 = \psi_1 = \Omega \\
        \ket{\Omega}^{A^I_i}\ket{\psi_{i}}^{A^I_{i\oplus1}} \ket{\psi_{i\oplus1}}^{\alpha} \ket{i}^{\alpha_C} \ket{1}^{\alpha_S}, &j=1, \psi= \Omega \\
        \end{cases} \\
        V_{M+1} \ket{\Omega}^{A^O_0} \ket{\Omega}^{A^O_1} \ket{\psi}^{\alpha} \ket{i}^{\alpha_C} \ket{j}^{\alpha_S} &= \ket{\psi}^{F} \ket{ij}^{\alpha_F}
    \end{aligned}
    \end{gather}
    where $|a_m|^2 + |b_m|^2 = 1$ for all $m$ and $b_{M-1} =0$ (this guarantees that if neither agent has received a message up until the time step $t=2(M-1)$, the first agent receives a message during that time step). We take the agent operations to be of the form from property 3 of \cref{thm:simplifying}, i.e., time-independent (note that this implies, together with the action of $V_{m+1}$ on a non-vacuum state, that if one agent receives a message at $t$, the other one receives a message at $t+2$ and sends an output at $t+3$. Hence, both agents output at or before $t=2M+1$. In particular, $V_{M+1}$ only ever receives vacuum from the agents). This process box protocol is essentially the quantum switch. If we plug in $\ket{0}$ or $\ket{1}$ for the control at $P_2$ we can even think of it as a quantum circuit with fixed causal order (a QC-FO). This latter case shows that we can have time non-localisation without anything that could be called indefinite causal order.
\end{restatable}

While \cref{thm:simplifying} gives us a form for process box protocols which look very similar to QC-QCs, a key aspect that is still missing is the control. This is not just a technical question, but also one of physical significance. Can every process in a classical spacetime be viewed as a controlled one and if so, why? 

Using the sequence representation, we can show that we can always identify subsystems of ancillas that we can interpret as a control system, given process box protocols of the form of \cref{thm:simplifying}. 

\begin{restatable}[Adding the control]{lemma}{addcontrol}
\label{lemma:addcontrol}
Let $\PBP$ be a process box protocol that satisfies the conditions set out in \cref{thm:simplifying}. Denote with $I(m)$ the unique agent which can receive a message at $t=2m+2$ and with $O(m)$ the unique agent which can send a message at $t=2m+1$. Then there exists a sequence representation $\bar{V}_1,...,\bar{V}_{M+1}$ for some $M \in \mathbb{N}$ of $\C{C}$ such that for $m \leq M$,
\begin{equation}\label{eq:addcontrol0}
     \bar{V}_{m+1} = \sum_{\C{K}: |\C{K}| \leq m} (\bar{V}^{\rightarrow I(m)}_{\C{K},{m+1}} \otimes \ket{\C{K} \cup I(m)} \bra{\C{K}} +\bar{V}^{\rightarrow \Omega}_{\C{K},{m+1}} \otimes \ket{\C{K}} \bra{\C{K}})
\end{equation}
where the second tensor factor of each term is a control system which is composed between the different $\bar{V}_{m+1}$, $\C{K} \subseteq \C{N}$ and
\begin{equation}
    \bar{V}_{M+1} = \sum_{\C{K}} \bar{V}^{\rightarrow F}_{\C{K},M} \otimes \ket{\C{K} \cup F} \bra{\C{K}}
\end{equation}
where 
    \begin{gather}
    \begin{aligned}
    \bar{V}^{\rightarrow I(m)}_{\C{K},{m+1}}: &\F{A^O_{O(m)}}{t=2m+1} \otimes \C{H}^{\alpha_m} \rightarrow \C{H}^{A^I_{I(m)}, t=2m+2} \otimes \C{H}^{\alpha_{m+1}}\\
    \bar{V}^{\rightarrow \Omega}_{\C{K},{m+1}}: &\F{A^O_{O(m)}}{t=2m+1} \otimes \C{H}^{\alpha_m} \rightarrow \ket{\Omega}^{A^I_{I(m)}, t=2m+2} \otimes \C{H}^{\alpha_{m+1}}.
    \end{aligned}
    \end{gather}
\end{restatable}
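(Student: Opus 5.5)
Starting from the sequence representation $V_1,\dots,V_{M+1}$ of $\C{C}$ that the simplified form of \cref{thm:simplifying} provides, we will build the control register step by step. By property 5 each time step involves at most one agent: at $t=2m+1$ only $O(m)$ can output and at $t=2m+2$ only $I(m)$ can receive. By property 4 the output of $\C{C}$ lies in $\bigotimes_k \C{H}^{A^{I}_k,\C{T}^I_k}$, so each agent receives exactly one message over the whole run. Define the projectors $\Pi^{m}_{\Omega}$ and $\Pi^m_{1}$ onto vacuum, respectively one message, on $\F{A^I_{I(m)}}{t=2m+2}$. Then $V_{m+1}=(\Pi^m_1\otimes \mathbb{1})V_{m+1}+(\Pi^m_\Omega\otimes\mathbb{1})V_{m+1}$, since by property 4 the components with two or more messages vanish on the effective input space. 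They vanish on the effective input space only; we handle this by restricting to it and extending isometrically, as in the QC-QC construction.

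Next we define the control inductively: $\bar V_1 := V_1$ followed by writing $\ket{\{I(0)\}}$ or $\ket{\emptyset}$ into a register, according to the branch $\Pi^0_1$ or $\Pi^0_\Omega$. For $m\ge1$ we set
\begin{equation}
\bar V_{m+1}:=\sum_{\C{K}}\Big((\Pi^m_1 V_{m+1})\otimes\ket{\C{K}\cup I(m)}\bra{\C{K}}+(\Pi^m_\Omega V_{m+1})\otimes\ket{\C{K}}\bra{\C{K}}\Big),
\end{equation}
and read off $\bar V^{\rightarrow I(m)}_{\C{K},m+1}$ and $\bar V^{\rightarrow\Omega}_{\C{K},m+1}$. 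Here we may keep $V_{m+1}$ independent of $\C{K}$. Alternatively we could allow dependence, since the control is classical information that is a function of the message history. The register is a coherent record of which agents have already received, and it is composed between steps like $\alpha_m$. The case $\C{K}\ni I(m)$ can be dropped, or its branch set to zero on the effective space, because AO forbids a second reception. Finally $\bar V_{M+1}:=\sum_\C{K} V_{M+1}\otimes\ket{\C{K}\cup F}\bra{\C{K}}$.

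The main obstacle is proving that this is still a valid sequence representation of $\C{C}$. Two things must be checked. \emph{Isometry:} if two histories lead to different $\C{K}$, their records differ, and the orthogonality of the register states must not destroy the isometry property. It does not, because appending a copy of the orthogonal projector label to an isometry, via $\sum_b \Pi_b V\otimes\ket{\text{lbl}(b)}$, preserves isometry only if the labels for distinct $b$ are orthogonal or the ranges are orthogonal. Here the ranges are already orthogonal through $\Pi^m_1\perp\Pi^m_\Omega$, so the map is isometric on the effective input space. \emph{Unchanged channel:} after the final step the register must be traceable without changing $\C{C}$. Here we argue that the record is a function of the agents' output spacetime positions already encoded in $\F{A^I}{}$. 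More carefully, tracing the register decoheres between different $\C{K}$, but these branches are already orthogonal in the output Fock space: the set of agents that received by step $m$ is determined by which input wires carry vacuum at which times, and the output state is recorded in $\bigotimes_k\FI$. Hence $\mathrm{tr}_{\text{ctrl}}$ agrees with the original, and $\bar V$ purifies $\C{C}$. Causality is respected because $\bar V_{m+1}$ acts only on slice-$m$ inputs and memory. Throughout we restrict to the effective input space and extend arbitrarily isometrically, using \cref{lemma:looprep} to confirm that the loop-composed protocol is unaffected.
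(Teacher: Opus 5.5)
\textbf{The step that fails is your ``unchanged channel'' argument.} Tracing out a register that coherently records which of several mutually orthogonal output branches occurred dephases those branches. The orthogonality of the branches in the agents' input Fock spaces is exactly what makes this dephasing non-trivial, not what makes it harmless. For instance, a box emitting $(\ket{\psi,t}\ket{\Omega,t'}+\ket{\Omega,t}\ket{\psi,t'})/\sqrt{2}$ becomes a mixture of the two time bins once the bin is copied into a register and the register is discarded. Copying the order information of a spacetime realisation of the quantum switch in this way would likewise turn quantum control of the order into classical control. So ``the record is a function of the output'' argues against your conclusion, not for it.

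The paper avoids the issue altogether because it never creates new information. It shows that the memory subspaces $\mathrm{Im}_{\beta_m}(P^{I,t\le 2m}_{\C{K}}\circ V_{\le m})$ for different $\C{K}$ are already mutually orthogonal: by property 4, the isometry $V_{>m}$ maps them into orthogonal output subspaces. It then re-encodes the memory by isometries $\C{E}_m$ into $\C{H}^{\alpha_m}\otimes\ket{\C{K}}$. Hence $\bar V=\C{E}_{M+1}\circ V$ differs from $V$ only by an isometry on the purifying system.

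Your fresh-register construction can still be rescued, but for a reason you do not state: your register stores only the set $\C{K}$, and property 4 forces its final value to be the same on every branch. Write $P^{I,t\le 2m}_{\C{K}}$ for the projector onto ``exactly the agents in $\C{K}$ have received, one message each, at times $\le 2m$''. Two facts drive an induction:
\begin{itemize}
\item $\Pi^m_1P^{I,t\le 2m}_{\C{K}}V_{\le m+1}=0$ whenever $I(m)\in\C{K}$ (no second reception).
\item On the image of $V_{\le m+1}$, $P^{I,t\le 2m+2}_{\C{K}}$ equals $\Pi^m_\Omega P^{I,t\le 2m}_{\C{K}}+\Pi^m_1P^{I,t\le 2m}_{\C{K}\setminus I(m)}$ if $I(m)\in\C{K}$, and equals $\Pi^m_\Omega P^{I,t\le 2m}_{\C{K}}$ otherwise.
\end{itemize}
With these, one proves by induction that
\[
\bar V_m\circ\cdots\circ\bar V_1=\sum_{\C{K}}\big(P^{I,t\le 2m}_{\C{K}}\,V_{\le m}\big)\otimes\ket{\C{K}} .
\]
Since every agent receives exactly one message, only $\C{K}=\C{N}$ survives at the end, i.e.\ $\bar V=V\otimes\ket{\C{N}\cup F}$. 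The register ends in a fixed product state, so discarding it returns $\C{C}$. This computation is the missing step.

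The same formula also specifies the effective input space on which your isometry claim holds. There the register value is tied to the memory subspace, so the colliding label $\ket{\C{K}\cup I(m)}$ coming from some $\C{K}\ni I(m)$ carries zero amplitude. Your remark about labels of distinct branches addresses a non-issue: for a fixed input value of the register, the map is automatically isometric.

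Finally, the extension off the effective space cannot be arbitrary, because it must preserve the block form of \cref{eq:addcontrol0}. The paper ensures this by sending the complement into a fresh ancilla subspace orthogonal to the image. It outputs vacuum with control $\C{K}$ if $I(m)\in\C{K}$, and a fixed non-vacuum state with control $\C{K}\cup I(m)$ otherwise.
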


We will discuss how we can interpret this in \cref{sec:conclusion}.

\subsection{General mapping results}

In this section, we present our main result, namely that process box protocols are behaviourally equivalent to QC-QCs. We first give a formal definition of what this means. For this purpose, we will first put QC-QCs on the same footing as causal/process box protocols.

\begin{restatable}[QC-QC protocols]{defi}{qcqcp}\label{def:qcqcp}
    A QC-QC protocol $\QP$
    is specified by a QC-QC $\C{Q}$ together with a set of CPTP maps $\{\M: \C{L}(\C{H}^{A^I_k}) \rightarrow \C{L}(\C{H}^{A^O_k} \otimes \C{H}^{R_k^{x_k}})\}_{x_k}$ for each agent $\{A_k\}_{k=0}^{N+1}$ such that the input space of the global past $\C{H}^{A^I_0}$ and the output space of the global future $\C{H}^{A^O_{N+1}}$ are trivial.
\end{restatable}

The supermap view of a QC-QC specifies a default QC-QC protocol whereby for each agent, all quantum operations between their input and output spaces are allowed. The above definition also permits protocols involving a QC-QC and some subset of allowed operations for each agent, i.e., the same QC-QC can be part of different QC-QC protocols depending on the set of agent operations being considered. This shares an analogous structure to causal and process box protocols defined before. 

We can then define the notion of behavioural equivalence between causal box protocols and QC-QC protocols.

\begin{restatable}[Behavioural equivalence of QC-QC and causal box protocols]{defi}{qcqcsim}\label{def:qcqcsim}
    Consider an $N$-partite causal box protocol, $\PBP$ and an $N$-partite QC-QC protocol $\QPp$. We say that $\mathfrak{Q}$ is behaviourally equivalent to $\mathfrak{P}$ if  for all $k\in\{0,...,N+1\}$ and for each agent operation $\M$, there exists an agent operation $\Mp$, and vice versa, such that 
\begin{equation}
    \CM{\C{C}}{\M} = \CM{\C{Q}}{\Mp}
\end{equation}
\end{restatable}

Note that this definition is consistent with our definition of behavioural equivalence of causal box protocols. Given behaviourally equivalent causal box protocols $\mathfrak{P}, \mathfrak{P}'$ and a QC-QC protocol $\mathfrak{Q}$, $\mathfrak{Q}$ is behaviourally equivalent to $\mathfrak{P}$ iff $\mathfrak{Q}$ is behaviourally  equivalent to $\mathfrak{P}'$. This immediately follows from \cref{def:pb_eq,def:qcqcsim}.

We then have our main result below.
\begin{restatable}[All process box protocols map to QC-QCs]{thm}{pbtoqcqc}
\label{theorem:pbtoqcqc}
For every process box protocol $\mathfrak{P}$, there exists a QC-QC protocol which is behaviourally equivalent to $\mathfrak{P}$.
\end{restatable}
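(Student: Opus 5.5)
The plan is to reduce to the normal form of \cref{thm:simplifying}, read off the QC-QC internal operations from the controlled sequence representation of \cref{lemma:addcontrol}, and compress away the idle time steps. Since behavioural equivalence is transitive (as noted after \cref{def:qcqcsim}), we may replace $\mathfrak{P}$ by a behaviourally equivalent process box protocol satisfying properties 1--5 of \cref{thm:simplifying}. In that form:
\begin{itemize}
\item the spacetime is totally ordered;
\item at each even time exactly one agent $I(m)$ may receive a message, and at each odd time exactly one agent $O(m)$ may send one;
\item agent operations act time-independently through Kraus operators $A^i_{k,x_k}\otimes T_{+1}$;
\item the causal box outputs only into the 1-message spaces.
\end{itemize}
Applying \cref{lemma:addcontrol}, we obtain a sequence representation $\bar V_1,\dots,\bar V_{M+1}$ carrying a control register $\ket{\C{K}}$ that records the set of agents who have already received their message. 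Each step either routes the target to $I(m)$, updating $\C{K}\mapsto\C{K}\cup I(m)$, or sends vacuum and leaves $\C{K}$ unchanged.

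Next, we construct the QC-QC internal operations of \cref{eq:qcqcint}. Because of time independence, an agent receiving at $t=2m+2$ outputs at $t=2m+3$, and this message is consumed by $\bar V_{m+2}$ with $O(m+1)=I(m)$. Between two consecutive non-vacuum deliveries, the intermediate steps act only by vacuum-routing maps $\bar V^{\rightarrow\Omega}$ on the ancilla. We therefore define
\[
V^{\rightarrow k_{n+1}}_{\C{K}_{n-1},k_n}
\]
as the coherent sum over all time steps $m'$ at which the $(n{+}1)$-th delivery could occur with $I(m')=k_{n+1}$. Each term is the product of the vacuum-routing maps $\bar V^{\rightarrow\Omega}_{\C{K}_{n-1}\cup k_n,\cdot}$ over the idle steps, followed by $\bar V^{\rightarrow k_{n+1}}$. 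The time stamp of the delivery is retained in an enlarged ancilla $\alpha_{n+1}$, so that distinct branches stay orthogonal and the same coherence is preserved. The QC-QC control $\ket{\C{K}_{n-1},k_n}$ is read off from $\ket{\C{K}}$ together with the label of the agent currently holding the message. The first map $V^{\rightarrow k_1}_{\emptyset,\emptyset}$ comes from $\bar V_1$, and $V^{\rightarrow F}$ comes from the remaining steps together with $\bar V_{M+1}$. The agent operations of the QC-QC protocol are simply the time-independent maps $\sum_i A^i_{k,x_k}\cdot A^{i\dagger}_{k,x_k}$, with the outcome wire placed in $\C{H}^{R_k^{x_k}}$. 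This assignment also runs in reverse: any QC-QC-side operation lifts to a time-independent process box operation of this form. That reverse direction is what \cref{def:qcqcsim} needs.

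It remains to show that the composite outputs agree and that the constructed $V_n$ are isometries on the effective input space. For equality of outputs, we expand $\lc{\C{C}\circ\bigotimes_k\M}$ via \cref{lemma:looprep} in a product basis over time slices. AO and LO, through \cref{eq:ceff}, restrict the expansion to exactly one input message and one output message per agent. Summing over the delivery times then reproduces \cref{eq:qcqcaction} term by term. For the isometry condition, we use that each $\bar V_m$ is an isometry and that AO forces every agent to receive exactly once. Consequently, on reachable states the ``never deliver to anyone in $\C{N}\setminus\C{K}$'' branch must have zero amplitude before the final step. This yields normalisation on the effective input space, as required in \cref{eq:qcqcint2}.

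I expect the main obstacle to be this isometry and normalisation argument. One must show rigorously that the vacuum branches are eventually exhausted, using only the effective AO condition, which holds for the allowed operations only rather than for all operations. One must also show that the orthogonality of different delivery-time branches survives the compression without introducing any dependence of $V_n$ on the agents' choices. The way to attempt it is to use the sum over $\C{O}$-consistent histories in $\C{C}_{eff}$ and argue that AO applied to arbitrary allowed inputs pins down the effective input subspace.
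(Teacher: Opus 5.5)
Your construction is the same as the paper's. You reduce to the normal form of \cref{thm:simplifying} and use the controlled sequence representation of \cref{lemma:addcontrol}. You define $V^{\rightarrow k_{n+1}}_{\C{K}_{n-1},k_n}$ as a coherent sum over delivery steps of vacuum-routing chains followed by a delivery, keep the step index in the ancilla (the paper uses $\bigoplus_m \C{H}^{\alpha'_m}$), and recover \cref{eq:qcqcaction} by summing over time stamps compatible with each order.

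\textbf{The gap.} It lies in the step you flag as the main obstacle, and the route you propose for it would not work. A QC-QC requires each $V_{n+1}$ to be an isometry on the effective input space generated by the preceding internal operations together with \emph{arbitrary} agent operations. AO, and with it $C_{eff}$, is an effective condition that only holds under composition with the allowed operations $\{\M\}_{x_k}$. It says nothing about how $\C{C}$ acts on inputs that no allowed operation produces. \cref{ex:norestriction} is exactly such a case: AO holds, yet an unallowed output $\ket{1,t=3}$ makes the box deliver a second message. So ``AO applied to allowed inputs'' can neither pin down the effective input space of the QC-QC nor guarantee that the all-vacuum branch is exhausted for arbitrary agent operations.

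\textbf{The missing ingredient.} It is property 4 of \cref{thm:simplifying}, which you list but never use. It is built in \cref{lemma:cbone} by post-composing $\C{C}$ with the pseudo-causal boxes $\C{C}^I_{one,k}$, so it is a property of the causal box alone, valid for \emph{every} input. The ancilla spaces in \cref{lemma:addcontrol} are images over all inputs, so on those spaces property 4 implies three things. Deliveries to agents already in $\C{K}$ vanish. The reachable ancilla subspace at the last step with control $\C{K}\neq\C{N}$ is zero, which is the precise form of your ``exhaustion'' claim. And $\bar V^{\rightarrow F}_{\C{K},M+1}$ is an isometry.

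Normalisation then follows by a purely algebraic backward telescoping. Start from $V^{\rightarrow F\dagger}V^{\rightarrow F}=\mathbb{1}$ and repeatedly use $(V^{\rightarrow\Omega}_{\C{K},m+1}+V^{\rightarrow I(m)}_{\C{K},m+1})^\dagger(V^{\rightarrow\Omega}_{\C{K},m+1}+V^{\rightarrow I(m)}_{\C{K},m+1})=\mathbb{1}$. This identity holds because \cref{eq:addcontrol0} is isometric on each fixed control $\ket{\C{K}}$ and vacuum and non-vacuum outputs are orthogonal.

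You also need the off-diagonal part of the isometry condition, which your sketch does not address: $\sum_{k_{n+1}}V^{\rightarrow k_{n+1}\dagger}_{\C{K}_n\setminus k_n,k_n}V^{\rightarrow k_{n+1}}_{\C{K}_n\setminus l_n,l_n}=0$ for $k_n\neq l_n$. Here the two branches start at different steps $m\neq p$. After telescoping, the earlier branch reaches the later step only through vacuum-routing maps. It therefore presents vacuum on the later agent's output wire, where the identity on the one-message space vanishes.

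Finally, restrict each $\alpha_m$ to a finite-dimensional subspace $\alpha'_m$ containing the relevant images. This makes the internal operations maps between finite-dimensional spaces, as the QC-QC framework requires.
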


In \cite{salzger2024mappingindefinitecausalorder}, we considered the reverse direction of this mapping, i.e., we constructed for any QC-QC a causal box extension, which means that we can map a QC-QC to a causal box and each possible agent operation in the QC-QC picture to a corresponding operation in the causal box picture such that the two scenarios reproduce the same operational behaviour when the AO and LO conditions are satisfied. We note that the AO and LO conditions in \cite{salzger2024mappingindefinitecausalorder} were formulated in a more restricted manner, but it is immediate that they imply that our weaker formulations of these are satisfied. It thus follows from this previous work, that any QC-QC protocol can be mapped to a behaviourally equivalent process box protocol. We thus obtain the following corollary through the combination of the above theorem and the results of this previous work.

\begin{restatable}[Two-way mapping between PBs and QC-QCs]{coro}{bothmappings}
\label{cor:bothmaps}
  For every process box protocol there exists a corresponding QC-QC protocol which is behaviourally equivalent and for every  QC-QC protocol, there exists a corresponding process box protocol that is behaviourally equivalent.
\end{restatable}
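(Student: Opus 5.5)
The corollary has two directions, and I will treat them separately. The first direction needs no new work: \cref{theorem:pbtoqcqc} gives, for every process box protocol $\mathfrak{P}$, a QC-QC protocol $\mathfrak{Q}$ that is behaviourally equivalent to $\mathfrak{P}$ in the sense of \cref{def:qcqcsim}. So this direction is just an invocation of the main theorem.

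For the converse, fix a QC-QC protocol $\QPp$. I will use the causal box extension constructed in \cite{salzger2024mappingindefinitecausalorder}. It supplies three things:
\begin{itemize}
\item a multi-partite causal box $\C{C}$ built from the internal isometries $V_n$ in \cref{eq:qcqcint2}, embedded on a totally ordered spacetime with one time step per internal operation;
\item a map $\Mp \mapsto \M$ taking each agent operation to a causal box whose Kraus operators have the time-independent form of \cref{eq:osr}, including the global past and future;
\item the identity $\CM{\C{C}}{\M}=\CM{\C{Q}}{\Mp}$ for every choice of settings.
\end{itemize}
I take the allowed operations of the process box protocol to be exactly the images $\{\M\}_{x_k}$ of the allowed QC-QC operations $\{\Mp\}_{x_k}$. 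Then both quantifiers in \cref{def:qcqcsim} hold by construction: each $\M$ is by definition the image of some $\Mp$, and each $\Mp$ has its image. Behavioural equivalence follows directly from the identity above.

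It remains to show that $\mathfrak{P}=(\C{C},(\{\M\}_{x_k}))$ is a process box protocol, meaning it satisfies AO (\cref{def:ao}) and LO (\cref{def:lo}). The earlier work established AO and LO in a stronger form, so I will argue that this form implies the present weaker conditions.
\begin{itemize}
\item \textbf{Acting once.} There, every agent operation maps the one-message input space into the one-message output space. In addition, $\C{C}$ restricted to inputs from such operations outputs exactly one message per agent. Hence the projectors $\C{P}^{I}_k$ and $\C{P}^O_k$ act trivially inside the composition, which is AO.
\item \textbf{Local order.} The Kraus form of \cref{eq:osr} with $T_{+1}$ gives ALO (\cref{def:alo}) with $\C{O}_k(t)=t+1$, which is injective and strictly increasing. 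By \cref{lemma:aloimpliesnlo}, AO together with ALO yields LO.
\end{itemize}

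The only genuine obstacle is this compatibility check. The earlier paper formulated AO and LO on the agents' operations and the causal box separately, not effectively on the composed protocol. So I must confirm that its guarantees (a single message on each wire, output exactly one step after input) are the hypotheses of \cref{def:ao} and \cref{def:alo} as stated here. I expect this to reduce to the remark preceding the corollary, namely that the earlier, more restrictive conditions imply the present ones.
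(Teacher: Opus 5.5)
Your proposal is correct and follows the paper's route. The first direction is \cref{theorem:pbtoqcqc}. The second uses the causal box extension of \cite{salzger2024mappingindefinitecausalorder}, with the allowed process box operations restricted to the images of the given QC-QC operations and the remark that the earlier, stronger AO/LO conditions imply the present ones; your ALO-plus-\cref{lemma:aloimpliesnlo} step simply makes that implication explicit.
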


Furthermore, the PB protocols in the image of the  mapping of our previous work satisfy simplification properties, which are somewhat different from some of the properties obtained in \cref{thm:simplifying}. For example, the spacetime labels of such a PB protocol, which is $N$-partite, take values in $\C{T}=\{1,2,...,2N+2\}$, and each agent (besides the global past and future) has the same set of input and output time stamps $\C{T}^I=\{2,4,...,2N\}$ and $\C{T}^O=\{3,5,...,2N+1\}$ with the global past (trivial input) agent outputting at $t=1$ and the global future (trivial output) agent receiving input at $t=2N+2$. Moreover, in such PB protocols, it is still ensured that at each time step, only one agent acts on a non-vacuum message (see e.g., Sec 4.5 and Lemma 2 in \cite{salzger2024mappingindefinitecausalorder}) and that agent operations are time-independent (analogous to \cref{thm:simplifying}). This fact gives us another implication of our results (of the current and previous paper) which provides, an alternative simplification of process boxes: for any process box protocol $\PBP$, there exists a behaviourally equivalent process box protocol $\PBPp$ that respects all properties satisfied by the process box protocols induced by the causal box extensions of QC-QCs constructed in \cite{salzger2024mappingindefinitecausalorder}.

Before stating this final simplification, we observe two points and set out some notation. Notice that a multi-partite causal box, $  \C{C}: \C{L}(\bigotimes_{k=0}^{N} \FO  ) \rightarrow \C{L}(\bigotimes_{k=1}^{N+1} \FI)$ can be equivalently viewed as a map $  \C{C}: \C{L}(\bigotimes_{k=0}^{N} \bigotimes_{t\in \C{T}^{O}_k}\F{A^O_k}{t}  ) \rightarrow \C{L}(\bigotimes_{k=1}^{N+1} \bigotimes_{t\in \C{T}^I_k}\F{A^I_k}{t} )$ using the wire isomorphisms $\FIO\cong \bigotimes_{t\in \C{T}^{I/O}_k}\F{A^{I/O}_k}{t}$ for all $k$. Now, we denote by $\C{P}^{I/O,t}$ the projector onto the one message space at a fixed time $t$, over the inputs/outputs of all agents in $\{A_1,...,A_N\}$ (i.e., those with non-trivial in and outputs). Specifically, $ \C{P}^{I/O, t}$ then projects onto the following subspace of $ \C{L}(\bigotimes_{k=1}^{N}\C{F}^{I/O, t}_k)$.
    \begin{equation}
  \text{Span}\{\ket{i,t}^{A^{I/O}_j}\bigotimes_{k=1,k\neq j}^N\ket{\Omega,t}^{A^{I/O}_k}\}_{i\in \{0,...,d-1\},j\in \{1,...,N\}}.
    \end{equation}
     That is, the projector ensures that exactly one such agent receives/sends a non-vacuum message at $t$.
Then, $\C{P}^O = \bigotimes_{t'\in \C{T}^{O}_k} \C{P}^{O, t'}$ is a projector on the input space of $\C{C}$ while $\C{P}^I = \bigotimes_{t\in \C{T}^{I}_k} \C{P}^{I, t}$ is a projector on it's output space.

\begin{restatable}[Another simplification of process box protocols]{coro}{finalsimp}
\label{coro:finalsimp}
Let $\PBP$ be a process box protocol. Then, $\mathfrak{P}$ is behaviorally equivalent to a process box protocol $\PBPp$ satisfying the following properties:
\begin{enumerate}
    \item $\C{T}' = \{1,2,...,2N+3\}$, $A_0$ outputs at $t=1$ with certainty, $A_{N+1}$ receives an input at $t=2N+2$ with certainty, and for all $k\in \{1,...,N\}$, the sets of input and output time stamps are the same  $\C{T}'^I = \{2, 4,...,2N\}$ and $\C{T}'^O = \{3,...,2N+1\}$, with $\C{O}_k(t)=t+1$. That is, the process box $\C{C}'$ of the primed protocol acts as
    \begin{equation}
        \C{C}': \C{L}(\C{F}^{P,t=1}\otimes \bigotimes_{k=1}^{N}\F{A^{\prime O}_k}{\C{T}^{\prime O}}) \rightarrow \C{L}(\bigotimes_{k=1}^{N} \F{A^{\prime I}_k}{ \C{T}^{\prime I}} \otimes \F{F}{t=2N+2})
    \end{equation}
    \item For every agent $A_k$ with non-trivial input, the action of each local operation $\Mp$ on the 1-message space can be decomposed as 
    \begin{equation}
    \Mp|_{\C{L}(\C{H}^{A^I_k, \C{T}'})} = \Mp \circ \C{P}^I_k = \sum_i A^i_{k, x_k} \cdot A^{i \dagger}_{k, x_k} \otimes T_{+1} \cdot T^\dagger_{+1}
\end{equation}
for some $ A^i_{k, x_k}: \C{H}^{A^{\prime I}_k} \rightarrow \C{H}^{A^{\prime O}_k } \otimes \C{H}^{R_k^{x_k}, T=2N+3} $ and where $T_{+1}\ket{t} = \ket{t+1}$ acts on time stamps by increasing them by one.
    \item The causal box $\C{C}'$ maps a state with $m$ messages at $t$ to a state with $m$ messages at $t+1$. In particular,
    \begin{equation}\label{eq:oneatatime}
        \CM{\C{P}^O \circ \C{C}' \circ \C{P}^I}{\Mp} = \CM{\C{C}'}{\Mp}.
    \end{equation}
\end{enumerate}
\end{restatable}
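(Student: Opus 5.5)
The plan is to route through QC-QCs. Given $\mathfrak{P}$, \cref{theorem:pbtoqcqc} yields a behaviourally equivalent QC-QC protocol $\mathfrak{Q}$. We then apply the causal box extension of \cite{salzger2024mappingindefinitecausalorder} to $\mathfrak{Q}$, which gives a process box protocol $\mathfrak{P}'$ behaviourally equivalent to $\mathfrak{Q}$. Behavioural equivalence composes transitively. For each $\M$ we pick the matching QC-QC operation, and then the matching causal box operation (and the same in reverse). This gives equality of the composed result-states, so $\mathfrak{P}$ and $\mathfrak{P}'$ are behaviourally equivalent, as in the remark after \cref{def:qcqcsim}. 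What remains is to check that the $\mathfrak{P}'$ produced by the construction of \cite{salzger2024mappingindefinitecausalorder} has properties 1–3.

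For property 1, we read off the time stamps of that construction. The global past outputs at $t=1$. The $n$-th internal isometry $V_n$ of the QC-QC is placed between $t=2n-1$ and $t=2n$. Agents receive at even times $2,\dots,2N$ and respond at odd times $3,\dots,2N+1$, and the future receives at $t=2N+2$. The result wires sit at $T=2N+3$, which fixes $\C{T}'=\{1,\dots,2N+3\}$ and $\C{O}_k(t)=t+1$. Because each agent acts exactly once in every branch of a QC-QC, the past and future act with certainty at these times. Property 2 holds because the constructed agent operations are by definition the QC-QC Kraus operators tensored with the time-shift $T_{+1}$. This is precisely the form of \cref{eq:osr}, as in the footnote to \cref{thm:simplifying}. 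We must also confirm that the translated operations have the result wire at $T=2N+3$; this only requires relabelling $T$, which is allowed by \cref{lemma:relabeling}.

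Property 3 is the substantive check. In the construction, $V_n$ acts controlled on $\ket{\C{K}_{n-1},k_n}$: it takes the single message from agent $k_n$ at time $2n-1$ to a single message for agent $k_{n+1}$ at time $2n$. On the one-message-per-time-step subspace, $\C{C}'$ therefore preserves message number from $t$ to $t+1$. The causal box acts outside this subspace as well (it is defined on all Fock states), and there it should be extended in a number-preserving way, for instance by acting on each message separately or by the controlled isometry on multi-message sectors. Since this is how \cite{salzger2024mappingindefinitecausalorder} extends it, we cite that construction. To get \cref{eq:oneatatime}, note that AO and LO hold for $\mathfrak{P}'$ with these agents. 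Combined with the control structure, in every run exactly one agent sends at each odd time and exactly one agent receives at each even time (Lemma 2 of \cite{salzger2024mappingindefinitecausalorder}). Hence inserting $\C{P}^O$ and $\C{P}^I$ does not disturb the composition. This argument parallels \cref{lemma:checkio} and uses \cref{coro:normalise}, with the projectors $\C{P}^{I/O,t}$ playing the role of the pseudo-causal box projectors from \cref{lemma:projcb}.

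The main obstacle is property 3. It asks for a statement about the whole map $\C{C}'$, namely message-number preservation, and not only about its effective action. We therefore have to verify that the extension in \cite{salzger2024mappingindefinitecausalorder} really is specified off the 1-message subspace in a number-preserving way. If it is not, we will redefine $\C{C}'$ outside the effective subspace (any number-preserving causal extension will do). By \cref{eq:ceff}, only $C_{eff}$ affects behaviour, so this redefinition preserves behavioural equivalence, and AO and LO remain true.
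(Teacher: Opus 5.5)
Your proposal is correct and follows the paper's proof. You compose \cref{theorem:pbtoqcqc} with the causal box extension of \cite{salzger2024mappingindefinitecausalorder}, restrict to the image operations, and read properties 1--3 off that construction; the paper likewise just cites it as number-preserving at every step (Sec.~4.5 there), so your fallback of redefining $\C{C}'$ is never needed, and justifying it via \cref{eq:ceff} would be circular since that identity presupposes AO and LO. The only minor difference is that the paper derives \cref{eq:oneatatime} by a self-contained count (AO plus number preservation by both $\C{C}'$ and the agents spreads the $N$ messages equally over the $N$ input times, hence exactly one per time), rather than citing Lemma~2 of the prior work together with \cref{coro:normalise}.
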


Note that property 2 in the above is the same as property 2 in \cref{thm:simplifying}.

\section{Ruling out non-causal processes from physical principles}\label{sec:ruling}

{\bf Deriving QC-QCs from physical principles in spacetime} Our results combined with the results from \cite{VilasiniRennerPRA, VilasiniRennerPRL} provide a top-down approach to recover QC-QCs by considering all possible information processing protocols in a fixed spacetime and restricting them systematically via physical principles coming from relativistic causality and the set-up assumptions of the process matrix framework. In contrast, the original QC-QC paper \cite{Wechs_2021} used a bottom-up approach, which constructively defines QC-QCs are via generalised quantum circuits leading to a subclass of process where ICO arises through quantum controlled superpositions of orders. The question of whether QC-QCs are the most general  processes physically realisable in a spacetime in a meaningful way (e.g., through table-top experiments) has therefore remained open. Our top-down approach and results resolve this question by formalising ``physically realisable'' and ``meaningful way'' in a rigorous manner, with the former related to relativistic principles in a spacetime and the latter to respecting the set-up assumptions of the process matrix framework, even at a fine-grained level.

More explicitly, in \cite{VilasiniRennerPRA}, it is shown that causal boxes are the most general quantum information processing protocols that can be achieved in a fixed, acyclic spacetime, based on the assumptions that

\begin{itemize}
    \item (A1) relativistic causality is respected, even when considering all physically possible interventions which agents (treated as effectively classical\footnote{In contrast to Wigner's Friend type scenarios where agents are modelled as quantum systems and part of the quantum process being described.}, e.g., our experimental colleagues) can perform in spacetime.
    \item (A2) physical protocols only have a finite number of relevant information processing steps 
\end{itemize}

(A1) and (A2) alone do not single out QC-QCs or even situations describable as valid processes between a given set of agents, as for instance they allow for two agents Alice and Bob to trivially and maximally violate the GYNI causal inequality which is impossible with process matrices (\cref{ex:trivvio}, and also noted in \cite{Vilasini_2020}). In this work, building upon the initial ideas proposed in \cite{Vilasini_2020}, we have argued that one must additionally impose the assumptions AO (Acting Once, \cref{def:ao}) and LO (Local order, \cref{def:lo}), which capture the set-up assumptions (especially that of closed labs) of the process matrix framework at the spacetime level. Together, AO and LO can be understood as imposing a spatiotemporal (and thus more ``fine-grained'') version of the closed labs (CL) assumption of the process matrix framework. Originally, CL is formulated at the coarse-grained level of the process alone (where each agent interacts with the process via a single quantum input and quantum output space which contains no spacetime information). In the fine-grained treatment here, the formulation of CL explicitly accounts for the spatiotemporal structure involved in the inputs and outputs. This is because physical protocols in spacetime can allow agents to exchange messages at superpositions of different spacetime locations and a rigorous formulation of CL at the fine-grained level of such spacetime protocols was lacking prior to the process box framework \cite{Vilasini_2020,salzger2023mscthesis,salzger2024mappingindefinitecausalorder}.

Therefore, our results together with those of \cite{VilasiniRennerPRA, VilasiniRennerPRL} imply the following conclusion.
\begin{center}
The most general quantum information-processing protocols realisable in any acyclic spacetime  respecting the physical assumptions (A1), (A2), AO and LO (or alternatively (A1),(A2) and spatiotemporal closed labs) are behaviourally equivalent to QC-QCs. This in turn provides a physical interpretation for QC-QCs: QC-QCs are precisely those processes which are compatible with relativistic causality in a spacetime and the setup assumptions of the process matrix framework being respected in that spacetime (up to natural assumptions regarding finite information processing steps).    
\end{center}

\begin{figure}
    \centering
    \includegraphics{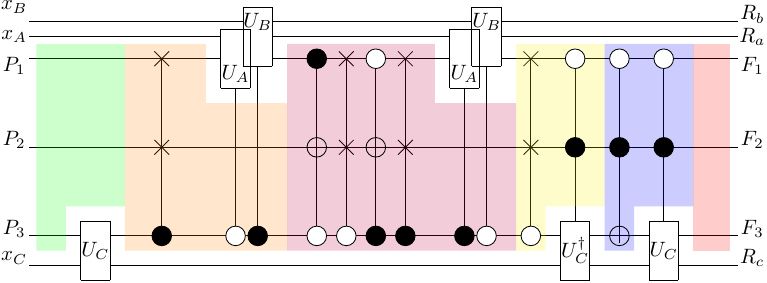}
    \caption{The circuit implementation of the Lugano process as proposed by \cite{Wechs_2023}. In the corresponding causal box protocol, the coloured regions correspond to the causal box while the uncoloured regions correspond to the agent operations. Each coloured region corresponds to one time step or equivalently an isometry in the sequence representation of the causal box (cf. \cref{eq:luganoseq}). }
    \label{fig:lugano}
\end{figure}

{\bf Why the Lugano process is ruled out, comparison to time-delocalised subsystems} We illustrate the implications of our results further with a well-known non-causal (and hence not a QC-QC) process, the Lugano process \cite{Baumeler_2014} which has attracted much research interest. In \cite{Wechs_2023}, it was shown that this process can be realised on so-called time-delocalised subsystems \cite{Oreshkov_2019}. Here we show how there is no mathematical contradiction between their result and ours which rules out such processes, i.e., the two works and results are consistent. However, our work additionally implies that realisation with time-delocalised subsystems (although interesting in its own right and potentially for beyond classical spacetime/agent scenarios) is insufficient to guarantee that there exists a physical realisation through an experiment in a classical background spacetime, respecting the set-up assumptions of the process matrix framework. In the case of non-causal processes like the Lugano process, our results imply that such an experiment in fact cannot exist.

To see the above-mentioned connection, we recall that \cite{Wechs_2023} provided a circuit depiction of their implementation of the Lugano process on time-delocalised subsystems, which is reproduced in \cref{fig:lugano}. Clearly, this respects the assumptions (A1) and (A2) by virtue of being a regular quantum circuit with finitely many operations whose order is consistent with the direction of time i.e., it can be described as a valid causal box protocol, as also explicitly modelled in \cref{ex:lugano}. However, our results imply that it must necessarily violate spatiotemporal closed labs whenever one considers agents' operations for which the process violates a causal inequality, i.e., at least one of our AO (Acting Once) or LO (Local Order) assumptions must be violated in such a case. In fact, in \cref{ex:lugano}, we show that this causal box protocol violates AO (while still respecting LO) whenever it is capable of violating a causal inequality.
\begin{restatable}[Time-delocalised Lugano process \cite{Wechs_2023} as a causal box protocol]{example}{lugano}\label{ex:lugano}
    Each agent's input/output space corresponds to a qubit. Alice's and Bob's input time stamps are $\C{T}^{I}_{A/B} = \{4, 6\}$, their output time stamps are $\C{T}^O_{A/B} = \{5,7\}$, while Charlie's input time stamps are $\C{T}^O_C = \{2, 8, 10\}$ and his output time stamps are $\C{T}^O_C = \{3, 9, 11\}$. For the agent operation, we assume they can be written as
    \begin{gather}
    \begin{aligned}
        \bar{U}_A &= (U_A \otimes \ket{t=5}\bra{t=4}) \otimes (U_A \otimes \ket{t=7}\bra{t=6}) \\
        \bar{U}_B &= (U_B \otimes \ket{t=5}\bra{t=4}) \otimes (U_B \otimes \ket{t=7}\bra{t=6}) \\
        \bar{U}_C &= (U_C \otimes \ket{t=3}\bra{t=2}) \otimes (U^\dagger_C \otimes \ket{t=9}\bra{t=8}) \otimes (U_C \otimes \ket{t=11}\bra{t=10})
    \end{aligned}
    \end{gather}
    where each $U_i, i = A, B, C$ is a unitary on the Fock space which preserves the number of messages (in particular, it maps vacuum to vacuum). The causal box is represented via the following sequence representation where we keep the time stamps implicit and $X^O, Y^O \in \{A^O, B^O\}$ with $X^O \neq Y^O$ 
    \begin{gather}\label{eq:luganoseq}
    \begin{aligned}
        V_1 &= \mathbb{1}^{P_3 \rightarrow C^I} \otimes \mathbb{1}^{P_1 P_2\rightarrow \alpha_1 \alpha_2} \\
        V_2 \ket{c}^{C^O} \ket{\alpha_1}^{\alpha_1} \ket{\alpha_2}^{\alpha_2} &= \begin{cases} \ket{\alpha_1}^{A^I} \ket{\alpha_2}^{\alpha} \ket{0}^{\alpha_C}, &c=0 \\
        \ket{\alpha_2}^{B^I} \ket{\alpha_1}^{\alpha} \ket{1}^{\alpha_C}, &c=1\end{cases} \\
        V_3 \ket{x}^{X^O} \ket{\alpha}^{\alpha} \ket{c}^{\alpha_C} &= \ket{\alpha \oplus x \oplus c}^{Y^I} \ket{x}^{\alpha} \ket{c}^{\alpha_C} \\
        V_4 \ket{y}^{Y^O} \ket{\alpha}^{\alpha} \ket{c}^{\alpha_C}&= \begin{cases} \ket{c}^{C^I} \ket{y}^{\alpha_1} \ket{\alpha}^{\alpha_2} \ket{c}^{\alpha_C}, &c=\alpha \neq y \\
        \ket{\Omega}^{C^I} \ket{y}^{\alpha_1} \ket{\alpha}^{\alpha_2} \ket{c}^{\alpha_C}, &else \end{cases} \\
        V_5 \ket{c'}^{C^O} \ket{y}^{\alpha_1} \ket{\alpha}^{\alpha_2} \ket{c}^{\alpha_C} &= \begin{cases} \ket{c'\oplus 1}^{C^I} \ket{y}^{\alpha_1} \ket{\alpha}^{\alpha_2} \ket{c}^{\alpha_C}, &c=\alpha \neq y \\
        \ket{\Omega}^{C^I} \ket{y}^{\alpha_1} \ket{\alpha}^{\alpha_2} \ket{c}^{\alpha_C}, &else \end{cases} \\
        V_6 \ket{c'}^{C^O} \ket{y}^{\alpha_1} \ket{\alpha}^{\alpha_2} \ket{c}^{\alpha_C} &= \begin{cases} \ket{c'}^{F_3} \ket{\alpha}^{F_1} \ket{y}^{F_2}, &c=\alpha \neq y \\
        \ket{c}^{F_3} \ket{y}^{F_1} \ket{\alpha}^{F_2}, &else \end{cases} \\
    \end{aligned}
    \end{gather}
    Let us now discuss what happens if we check AO by applying our projector $\C{P}^I_C$ to this protocol. From the above we see that we always have non-vacuum on the wire $C^I$ at $t=2$ (given that $P_3$ sends a non-vacuum state). Hence, only the branches where $C^I$ contains the vacuum at all other times survive. This corresponds to thebottom cases of $V_4, V_5$. As the output on $\alpha$ of $V_3$ is equal to $x$, we have that $\alpha = x$, this is the case when $c \neq x$ or $c = y$. Combining this with the fact that the value of $c$ controls which of $A^O$, $B^O$ is $X^O$ respectively $Y^O$, we find that the bottom case occurs iff $a=1$ or $b=0$ where $a$ is the output on $A^O$ and $b$ the output on $B^O$, which in \cref{fig:lugano} corresponds to third wire from the top (note that $A$ and $B$ only ever output a single message). 

    What we find then is the following: Either the allowed operations of $A$ and $B$ allow them to output $a=0$ and $b=1$ and $C$ acts multiple times, violating AO. The causal inequality violation observed in this protocol could then be interpreted as a trivial one analogous to the one discussed in \cref{ex:trivvio} (further justification for this interpretation given at the end of \cref{sec:relaxing} which analyses different types of violations of AO and LO). Alternatively, the allowed operations of $A$ force the output $a=1$ or the allowed operations of $B$ force the output $b=0$ and AO is respected. However, in this case, the causal box protocol can be viewed as $C$ always acting first followed by a classical/quantum switch involving $A$ and $B$ controlled on the output of $C$. This is a QC-QC and in particular does not violate any causal inequalities. 
\end{restatable}

At this point, it is important to compare how the set-up assumptions of the process matrix framework, in particular the closed labs (CL) assumption, were formalised here (via our AO and LO conditions) versus in \cite{Wechs_2023}. 

Ref. \cite{Wechs_2023} formulates CL at the coarse-grained level of the cyclic causal model \cite{Barrett_2021} associated with a process, which is associated with a directed graph involving the inputs and outputs of the agents (both the quantum ones as well as classical settings and outcomes) as nodes. Specifically the CL assumption in \cite{Wechs_2023} requires that in the (cyclic) causal model associated to the process, the causal influence of an agent $A_k$'s setting $x_k$ on the systems of any other agent must be ``screened off'' by the output $A_k^O$ and similarly, if the outcome $a_k$ of $A_k$ is causally influenced by any other agents' system, this influence must be screened off by the input $A_k^I$. This was motivated as capturing that the input and output systems $A_k^I$ and $A_k^O$ are the only means by which $A_k$'s ``closed lab'' can interact (be influenced or influence) the outside world, and there is only one such in and output per agent.
We observe that this condition is satisfied by any process matrix by definition and hence any implementation of the Lugano process, in particular the one in \cref{fig:lugano}, necessarily satisfies this condition. The reason is that the causal model for any process matrix would include the following four causal influences for each agent $A_k$, $x_k\rightarrow A_k^O$, $x_k\rightarrow a_k$, $A_k^{I}\rightarrow A_k^O$ and $A_k^{I}\rightarrow a_k$ and any causal connections between different agents only occurs via $A_k^I$ and $A_k^O$ as all such connections are described by the process matrix which is a map from all agents' output systems to all agents' input systems.\footnote{In fact, such a condition is also satisfied by all causal box protocols, even by those such as \cref{ex:trivvio} which do not correspond to valid processes. This is because all causal box protocols have a valid causal model (given by the sequence decomposition of all the causal boxes involved, which is guaranteed to exist). This causal model is fine-grained (as defined in \cite{VilasiniRennerPRA, VilasiniRennerPRL}) and also considers the time stamps of the input and output systems. In causal box protocols, we again have by construction, that for each agent $A_k$, the largest set of possible causal influences are: $x_k\rightarrow  A_k^{O,t'}$ (for all output times/positions $t'$), $x_k\rightarrow a_k$, $A_k^{I,t}\rightarrow A_k^{O,t'}$ and $A_k^{I,t}\rightarrow a_k$ (for all input times/positions $t$ and output times/positions $t'$).
Even allowing all of these influences to exist, we see that the only way $x_k$ could influence the systems of other agents is via one of the set of output systems $A_k^O:=\{A_k^{O,t'}\}_{t'\in \C{T}_k^O}$ and similarly the only way $a_k$ can be influenced by a system of another agent is through the inputs $A_k^I:=\{A_k^{I,t}\}_{t\in \C{T}_k^I}$. Again, this is because the causal box $\C{C}$ of such a protocol is the only thing that allows causal connections between different agents and it is defined only on these systems in and output systems).}

On the other hand, our formalisation of CL via AO and LO are at the fine-grained level of the spacetime implementation. Hence, one implementation of the same process could respect this version of CL while another implementation of the same process might violate it. Indeed, we have seen that violating this formulation of CL then allows for trivial violations of causal inequalities. This is analogous to the fact that a given quantum model that violates Bell inequalities can have its behaviour reproduced in a physical experiment that satisfies the set-up assumptions needed for ``interesting/loophole free'' violations of the inequality or through an experiment that violates these setup assumptions. The latter case includes for instance classical strategies that violate Bell inequalities via communication, and we must therefore impose the set-up assumptions at the fine-grained, spacetime level (as opposed to just the abstract quantum circuit level) via relevant spacelike separations in order to have loophole free experimental tests of Bell inequality violations. 

Violations of our AO and LO conditions can in principle be operationally verified to confirm violation of the closed labs assumption in a physical realisation of a process in spacetime. This is because we have defined them via projectors which are subnormalised pseudo-causal boxes\footnote{That they are pseudo-causal boxes captures an idealisation made for simplicity that the associated measurements are taken to be instantaneous, and does not restrict their physicality.} (\cref{lemma:projcb} and  \cref{lemma:lopseudocb}). In other words, we can think of these subnormalised pseudo-causal boxes as particular outcomes of a physical measurement which occurs with certainty and without disturbing the overall protocol iff the AO respectively LO condition is satisfied. For instance, the projectors of the AO condition allow us to verify whether the input/output of an agent acts on the one message space or not, in a coherent way that does not reveal the spacetime location of the one message.

With the above in mind, our results can be interpreted as suggesting that loophole free experimental tests of causal inequality violations are impossible in a fixed acyclic spacetime. This includes Minkowski as well as curved (globally hyperbolic) spacetimes which are non-dynamical.  Further, we have clarified how this conclusion is mathematically fully consistent with the claim of \cite{Wechs_2023} that causal inequality violating processes can be realised on time-delocalised subsystems, even though it might appear to be conflicting on the surface.

\section{Relaxing the constraints on process box protocols}\label{sec:relaxing}

As highlighted in the previous section, our AO (Acting Once) and LO (Local Order) definitions capture the set-up assumptions inherent in the process matrix framework, at the level of physical quantum protocols in spacetime. We have shown that their violation can result in a trivial ability for two agents to violate a causal inequality maximally with a causally ordered strategy (\cref{ex:trivvio}), and when they are satisfied, our main result shows that we recover QC-QCs, which are known not to violate any such causal inequalities. However, there remains the question of whether there is scope to slightly relax AO and LO and still recover QC-QCs while avoiding trivial violations such as \cref{ex:trivvio}, and similarly whether the process matrix set-up assumptions can be relaxed while still ensuring that violating causal inequalities under those relaxed assumptions remains ``non-trivial'' or ``loophole free''. This question is discussed in further detail below.

For both AO and LO there are two ways we can violate them. For AO, we have the following two options:
\begin{itemize}
    \item Agents are not fully active (i.e., do not receive and/or send a message)
    \item Agents act more than once
\end{itemize}
On the other hand, a violation of LO could consist of 
\begin{itemize}
    \item An agent sends a message only after receiving one but without the strict one-to-one correspondence between input and output positions enforced by the functions $\C{O}_k(t)$
    \item An agent sends a message before receiving one
\end{itemize}

In the case of both AO and LO, we shall call violations via the second option ``strong'' violations and violations via the first option ``weak'' violations. Strong violations of either AO or LO clearly also allow for trivial causal inequality violations as discussed in \cref{ex:trivvio} and thus to protocols that are no longer processes. However, one can consider allowing weak violations of AO and LO, and thus generalise our definition of process box protocols. In the case of such weak violations of spatiotemporal closed labs, we are not aware of any examples that can exploit these to violate causal inequalities.

We provide an example of a causal box protocol which violates AO and LO in the aforementioned weak ways but respects the strong aspects of AO and LO. We will see that this protocol can still be mapped to a behaviourally equivalent QC-QC protocol. 

\begin{restatable}[Weak violations of AO and LO]{example}{nolo}\label{ex:nolo}
    We consider a bipartite causal box protocol with agents Alice and Bob. The global past and future systems split into a target and a control, $P = P_T P_C$ and $F= F_T F_C$. The causal box acts as follows (where once again, we only specify the action on states which can actually occur)
    \begin{gather}
    \begin{aligned}
        V &\ket{\psi, t=1}^{P_T} \ket{0, t=1}^{P_C} \ket{\psi_A, t=5}^{A^O} = \ket{\psi, t=2}^{A^I} \ket{\psi_A, t=6}^{F_T} \ket{0, t=6}^{F_C} \\
        V &\ket{\psi, t=1}^{P_T} \ket{1, t=1}^{P_C} \ket{\psi_A, t=5}^{A^O} \ket{\psi_B, t=3}^{B^O} = \ket{\psi, t=2}^{B^I} \ket{\psi_B, t=4}^{A^I} \ket{\psi_A, t=6}^{F_T} \ket{1, t=6}^{F_C}.
    \end{aligned}
    \end{gather}
    The allowed operations for Alice are all causal boxes which only output at $t=5$ and separately preserve the vacuum and the one-message space. The allowed operations for Bob are all causal boxes which only output at $t=3$ and separately preserve the vacuum and the one-message space.

    This causal box protocol violates both AO and LO. On the one hand, it is possible for Bob to not act at all if the control is in the state $\ket{0}$. On the other hand, there is no injective relationship $\mathcal{O}_A$ between Alice's input and output time stamps. Regardless of the input time, she always outputs at $t=5$. The latter, in particular, allows Alice to apply a measurement which acts on both input times simultaneously. For example, the following Kraus operators define an allowed operation, where for simplicity, we assume that Alice's input message space is fixed to the state $\ket{0}$ 
    \begin{equation}\label{eq:nolo2}
        \sqrt{2} K_{\pm} = \ket{0, t=5}(\bra{0, t=2} \pm \frac{1}{\sqrt{2}} \bra{0, t=4}) + \frac{1}{\sqrt{2}} \ket{1, t=5} \bra{0,t=4}.
    \end{equation}
    We can, however, turn this causal box protocol into a behaviourally equivalent process box protocol satisfying LO and AO via the identifications (see also \cref{fig:figures} for a visualisation of both protocols)
    \begin{gather}
    \begin{aligned}
        \mathbb{C}^{d_{A^I}} \otimes (\ket{t=2} \oplus \ket{t=4}) &\cong \mathbb{C}^{2d_{A^I}} \otimes \ket{t=4} \\
        \mathbb{C}^{d_{A^O}} \otimes \ket{t=5} &\cong \mathbb{C}^{d_{A^O}} \otimes \ket{t=5} \\
        \ket{\Omega} \oplus (\mathbb{C}^{d_{B^I}} \otimes \ket{t=2}) &\cong \mathbb{C}^{d_{B^I}+1} \otimes \ket{t=2} \\
        \ket{\Omega} \oplus (\mathbb{C}^{d_{B^O}} \otimes \ket{t=3}) &\cong \mathbb{C}^{d_{B^O}+1} \otimes \ket{t=3}
    \end{aligned}
    \end{gather}
    where $d_{A^{I/O}}, d_{B^{I/O}}$ are respectively the dimensions of Alice and Bob's input/output spaces. In words, we treat the input time stamp of Alice as part of the message (thus, doubling the message space) and treat vacuum on Bob's wire as a message. The time-crossing operator in \cref{eq:nolo2}, then instead simply acts on superpositions of message states at the same time.

    Since this causal box protocol is behaviourally equivalent to a process box protocol it is thus also behaviourally equivalent to a QC-QC protocol by \cref{theorem:pbtoqcqc}. 
\end{restatable}

\begin{figure}
\centering
\begin{subfigure}{0.4\textwidth}
    \includegraphics[width=\textwidth]{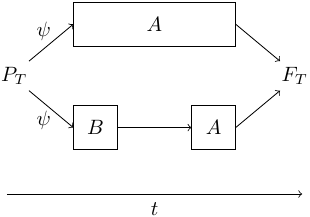}
    \caption{Depending on the value of a control system, either Alice or Bob receives a state $\psi$ at $t=2$. If Bob received the state, then his output is sent to Alice but if Alice receives the state, Bob does not act. Alice always outputs at the same time regardless of when she receives an input. This causal box protocol violates both AO (as Bob sometimes outputs no messages) and LO (as Alice outputs at the same time for two different input times).}
    \label{fig:first}
\end{subfigure}
\hfill
\begin{subfigure}{0.4\textwidth}
    \includegraphics[width=\textwidth]{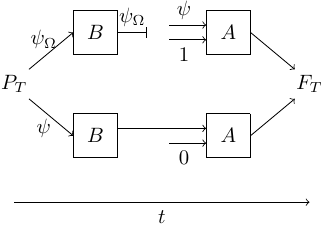}
    \caption{We promote the vacuum to a message (denoted with the state $\psi_\Omega$) so that Bob always acts. Alice now always receives an input at the same time. The information previously carried by the input time is now encoded in an additional qubit. This protocol satisfies AO and LO and is thus behaviourally equivalent to a QC-QC protocol.}
    \label{fig:second}
\end{subfigure}
        
\caption{The two protocols are behaviourally equivalent.}
\label{fig:figures}
\end{figure}

This leaves us with the following open questions: can we cleanly formalise these weak vs strong violations of AO and LO and show that these provide the exact boundary between violations of closed labs that are ``benign'' vs those that lead to ``problematic loopholes''? Note that in particular for AO, \cref{ex:sigproj} shows that there exists no pseudo-causal box which checks that there are 0 \textit{or} 1 messages in a spacetime region without disturbing the protocol. This is in contrast to checking the strictly 1-message case as in AO, and suggests subtleties in the analogous formalisation of the weak relaxation of AO. 

Moreover, can we precisely characterise when a strong violation of AO and/or LO leads to causal inequality violations akin to \cref{ex:trivvio}? Further, can every causal box protocol that only violates AO or LO in a weak sense be mapped to a QC-QC? The example above which involves weak violation does map to a QC-QC, but it also violates LO and AO in a relatively straightforward way. For one, we know that if Bob does not receive a message at $t=2$ he will not receive one at a later time. For another, we can think of this as a fixed order process where Bob always acts first (but potentially trivially on the vacuum state) and Alice acts second. These factors make it easy to find a behaviourally equivalent process box protocol which satisfies AO and LO, but it is unclear how to do this in general, when it could for example be dynamically determined whether an agent acts or not. If there exist protocols which violate AO and/or LO only weakly and cannot be mapped to QC-QCs, are any of them non-causal? We leave these questions for future work.

Let us briefly return to the Lugano process discussed in \cref{ex:lugano}. A question that naturally arises in light of the discussion of this section is whether the implementation of the Lugano process as a causal box protocol discussed in the previous section violates AO in a weak sense akin to \cref{ex:nolo}  or in a strong sense akin to \cref{ex:trivvio}. We see that it is in fact a case of a strong violation of AO. The causal box protocol for the Lugano process does satisfy LO (and also the stronger ALO condition of \cref{def:alo}), but it violates the AO condition. If this causal box protocol were to only weakly violates AO, it would mean that there is a party which does not always act on a non-vacuum message. Looking at \cref{eq:luganoseq}, we clearly see that this is not the case. Charlie always receives a message from $V_1$, while Alice and Bob always receive a message from $V_2$ or $V_3$ depending on Charlie's output. On the other hand, we saw in \cref{ex:lugano} that Charlie receives non-vacuum messages at two distinct times for certain operations of Alice and Bob. It thus follows that the circuit realisation of the Lugano process proposed in \cite{Wechs_2023}, when modelled as a causal box protocol does violate AO in the strong sense.

\section{Discussion and outlook}\label{sec:conclusion}

The main contributions and conclusions of this work are summarised in \cref{sec:intro} and \cref{sec:ruling}, and we will therefore not repeat them here. Instead, we discuss and clarify two important points in light of our results, and then outline some future directions.

{\bf On the role of the control vs time information} A natural question that arises from our result is why the principles of relativistic causality and closed labs in a spacetime single out QC-QCs, or, in particular, controlled superpositions of acyclic (possibly dynamical) causal orders. One physical intuition behind our result is that the information contained in relevant spacetime labels acts as a control for the causal order of agents' operations. As an illustrative example, think of a possible spacetime embedding of a scenario with two orders in which two agents can perform operations on a single message: we can describe the spacetime positions of when Alice and Bob receive their respective (non-vacuum) messages as a state $\ket{t_A, t_B}$ with $t_A \prec t_B$ if Alice acts first and $t_B \prec t_A$ if Bob acts first (note that superpositions are also possible but do not affect the argument at hand). States compatible with Alice acting first are then necessarily orthogonal to states compatible with Bob acting first, as $t_A\neq t_B$ and different spacetime labels are perfectly distinguishable in the causal box framework. This therefore suggests that the set of $N$ spacetime positions associated with when and where each of the $N$ agents ``acts once'' on their respective non-vacuum message encodes the same information as the control degree of freedom in the QC-QC framework (both of these can be in superposition and entangled with other systems in the respective frameworks). Notice that this argument crucially relies not just on relativistic causality but also on the fact that an agent acts exactly on one non-vacuum message, even if this is at a superposition of different spacetime locations (as imposed by the spatiotemporal closed labs condition, or AO and LO). This idea generalises to more than two agents, and also to situations where Alice's choice of operation (via her setting $x_A$) could dynamically determine the temporal order in which future agents Bob and Charlie may act, and could be formalised further using the concept of the effective Choi-Jamiołkowski matrix of a process box (further arguments and results concretely linking the control in QC-QCs to ``time of action'' in the PB framework, and showing a correspondence between control vs time decoherence in the two frameworks, are part of upcoming work \cite{Emilien_inprep}).

{\bf On the notion of indefinite causal order in spacetime contexts}
The concept of ``indefinite causal order (ICO)'' introduced in the process matrix framework has generated growing interest but has also been a subject of considerable debate in spatiotemporal contexts, in particular whether or not existing quantum switch experiments \cite{Procopio_2015, Rubino_2017,Goswami_2018,Wei_2019, Rubino_2022, Guo_2020, Goswami_2020, Taddei_2021} performed in Minkowski spacetime constitute a ``genuine implementation'' or ``simulation'' of ICO \cite{Chiribella2013,Oreshkov_2019, Paunkovi__2020, Felce_2022,Ormrod_2023, VilasiniRennerPRA, VilasiniRennerPRL, de_la_Hamette_2025, Kabel_2025,  vilasini2025}. Our formal results and the physical conclusions drawn in this work
stand independently of which side one takes on this debate (also noting that ``implementation'' vs ``simulation'' here do not have formal, agreed upon definitions), and provide a clear connection between coarse-grained and fine-grained operational views discussed below (both of which are formally and operationally well-defined) on such experiments.

While abstract objects in the QC-QC framework (such as the quantum switch) do exhibit ICO in the sense that they are causally non-separable, the causal box framework (and by extension process box protocols), which describes physical quantum protocols and experiments in spacetime, only describes objects with a definite acyclic causal order (both in the operational and spatiotemporal sense \cite{VilasiniRennerPRA, VilasiniRennerPRL}). 
Nevertheless, we have shown that there is a two-way mapping between process boxes and QC-QCs that preserves relevant operational behaviour, and ensures that each agent acts exactly once on a message. In these spacetime regimes, the process box description provides a fine-grained model for the abstract QC-QC, where the acyclic fine-grained causal structure can become cyclic under coarse-graining (see also \cite{salzger2024mappingindefinitecausalorder} for an explicit link between our results on mapping QC-QCs and causal boxes, and the concept of fine-graining introduced in \cite{VilasiniRennerPRA, VilasiniRennerPRL}). In light of this, we clarify that when we talk about realisations of ICO process matrices or QC-QCs in spacetime, this does not imply that such a realisation also exhibits ``ICO''. Indeed, the concept of ICO is defined in the spacetime agnostic abstract framework while the fine-grained models that account for the spacetime information in their operational description do not exhibit ICO. Thus, one should bear in mind that behavioural equivalence does not imply an equivalence between the operational causal structures of QC-QCs and process boxes as the former can admit indefinite causal order while the latter always has a fixed acyclic causal order explanation. Our results further clarify the operational relationships between these two perspectives in the case of QC-QCs, while shedding light on coarse and fine-grained causal structures of quantum processes.

{\bf Future directions} Below we outline some interesting future directions.

{\it Relaxing our closed labs assumption} As we have discussed in detail in \cref{sec:relaxing}, there are two clear routes for relaxing our spatiotemporal closed labs (i.e. AO + LO) assumption corresponding to weak vs strong violations of AO and/or LO. Based on the examples we have analysed for such cases, some natural questions are: can process box protocols which only weakly violate AO and/or LO still be mapped to behaviourally equivalent QC-QC protocols, and if not, do they still produce causal correlations? For process box protocols which strongly violate at least one of AO or LO, can we clearly characterise the conditions under which this can lead to trivial causal inequality violations (analogous to violating Bell inequalities using communication) vs situations where these still produce causal correlations, and can we quantify ``the resource'' required for such causal simulations of non-causal correlations, analogously to the research program which studies the classical communication cost required to simulate quantum violations of Bell inequalities \cite{TonerBacon2003,Brassard1999,Massar2001,Pironio2003,Brunner2005}? Finally, disregarding AO and LO altogether, and only imposing relativistic causality, what is the largest class of cyclic quantum causal structures that admit a fine-grained acyclic causal structure in the form of a causal box protocol, without involving any post-selection?\footnote{If we allow post-selection, then indeed all such cyclic structures or postselected CTCs can be realised experimentally, by definition of such CTCs.}

{\it Relaxing the finiteness assumption} Another possible relaxation concerns our assumption of finiteness of spacetime. Our results do not depend strongly on the set of relevant spacetime locations being finite. However, showing this rigorously would come with additional technical challenges as it is generally not possible to represent arbitrary causal boxes as just a single CPTP map when $\C{T}$ is not a finite set. Additionally, a naive extension of some of our techniques to this case might require infinite-dimensional message spaces (even prior to considering the Fock spaces of such message spaces) as some of our proofs rely on transferring information encoded in the spacetime label to the message itself. Hence, we would either have to rework our proofs to avoid this or explore alternative methods for such proofs. 

{\it Realisable processes beyond fixed spacetime and role of reference systems} While the fixed spacetime assumption captures precisely the regime of current quantum experiments when described from the perspective of the classical experimenters who perform them, there are
interesting theoretical quantum gravity scenarios \cite{Zych_2019,Castro_Ruiz_2020,Paunkovi__2020,M_ller_2021} that extend beyond this regime, where such a background spacetime
cannot be safely assumed. For instance, this includes the gravitational quantum switch \cite{Zych_2019}, a thought-experiment where a superposition of gravitating masses leads to a superposition of spacetime geometries and thus of causal orders. This motivates the question: which set of higher-order processes are realisable in the regime of such superpositions of acyclic spacetimes, or in more general models of quantum spacetime beyond such linearised superpositions? A related question is what happens if we replace our common background spacetime $\C{T}$ shared by all agents with a model where each agent uses some physical systems as references for space and time (e.g., rods and clocks) relative to which they describe ``when and where'' their operations occur?

Although our present results focused on fixed spacetimes, the general order-theoretic modelling of the spacetime as a partial order and the techniques backing the physical interpretation of our result (discussed at the start of this section) suggest some intuition that might extend beyond this case: if one can ensure distinct ordering of spacetime labels associated to different orders of operations and these labels are perfectly distinguishable, then one might expect a similar structure as QC-QCs to emerge, even if these labels do not arise from a shared background spacetime. This intuition is also supported by the analysis of \cite{schmitt2023operational} which suggests that in the Page-Wootters framework \cite{page1983evolution, Giovannetti_2015} for describing time with quantum clocks (as opposed to a background parameter), under physical constraints on distinguishability of the clock states and forward-ticking clocks, one recovers a similar structure of realisable processes as in the fixed spacetime case here, i.e., QC-QCs. 

In the broader research area, there exist several approaches for formulating quantum reference frames for space and time \cite{Castro_Ruiz_2020, Giacomini_2019, Hoehn_2020, Hoehn_2021, Loveridge_2019, castro2025relative, de_la_Hamette_2020, de_la_Hamette_2025, Kabel_2025}, abstract frameworks for incorporating relational notions of time into the study of processes \cite{Baumann_2022, Guerin_2018, wechs2024, apadula2026}, as well as a recent proposal \cite{vilasini2025} to describe operational events and their localisation relative to a Lab which includes reference degrees of freedom (distinct from QRFs). More generally, building on these ideas, it would be an interesting research program to study whether the QC-QC structure can be derived from physical principles and axioms constraining the reference degrees of freedom used for space and time, without assuming a background spacetime or a particular framework/theoretical model (e.g., linear superpositions of metrics, Page-Wootters framework) for such references or reference frames. 

\bigskip
{\bf Acknowledgements} We thank Cyril Branciard and Emilien de Bank for useful discussions and feedback on parts of this draft. VV thanks Lidia del Rio and Renato Renner for interesting discussions on early versions of the process box framework. MS is supported by National Science Centre, Poland (Preludium project, Classicality and compositionality of general probabilistic theories in infinite dimensions, project no.2024/53/N/ST2/01192 and Opus
project, Categorical Foundations of the Non-Classicality of Nature, project no. 2021/41/B/ST2/03149). VV's research at ETH was supported by an ETH Postdoctoral Fellowship. VV also acknowledges support from the PEPR integrated project EPiQ ANR-22-PETQ-0007 as part of Plan France 2030. 
\printbibliography

\appendix

\section{Appendix}

To keep equations compact, we will sometimes apply CP(TP) maps to states in the Hilbert space. This should be understood as applying the CP(TP) map to the corresponding density matrix, e.g., for the CPTP map given by the partial trace
\begin{equation}
    \text{tr}_{B} \ket{\psi}^{AB} := \text{tr}_{B} \ket{\psi}\bra{\psi}^{AB}.
\end{equation}
Analogously, we will sometimes apply linear operators on Hilbert spaces to density matrices. This should be understood as
\begin{equation}
    V \rho := V \rho V^\dagger.
\end{equation}

\subsection{Useful properties of causal boxes}\label{app:cb}

In this section, we prove a number of useful properties of causal boxes. We will make frequent use of them to prove our main results. The techniques developed here might also be useful in other settings using causal boxes.

\looprep*
\begin{proof}
    It is sufficient to prove the statement for normalised causal boxes as the analogous representation for a subnormalised causal box can be obtained from the one of the underlying normalised causal box.
    
    As $\chi$ is a causality function for $\C{C}$, we can find a sequence representation $V_1,...,V_M$ with $V_1$ preparing a pure normalised state on $\C{F}^{YZ, \C{T}_{1}} \otimes \C{H}^{\alpha_1}$, $V_m: \C{F}^{XY, \C{T}_{m-1}} \otimes \C{H}^{\alpha_{m-1}} \rightarrow \C{F}^{YZ, \C{T}_{m}} \otimes \C{H}^{\alpha_m}$ for $1<m<M$ and $V_M: \C{F}^{XY, \C{T}_{M-1}} \otimes \C{H}^{\alpha_{M-1}} \rightarrow \C{H}^{\alpha_M}$ with $\C{T} = \bigsqcup_{m=1}^{M-1} \C{T}_m$. The map $V := V_{M} \circ_{\alpha_{M-1}} V_{M-1} \circ_{\alpha_{M-2}} ... \circ_{\alpha_1} V_1$, where $\circ_{\alpha_m}$ denotes composition along the system $\C{H}^{\alpha_m}$ only, is an isometry and purification of $\C{C}$. 

    We now define
    \begin{equation}
        V' = \sum_{i} \bra{i}^Y V \ket{i}^Y.
    \end{equation}
    This is an isometry as we now show. Using that the basis in the above is a product basis in terms of the decomposition $\C{F}^{Y, \C{T}} \cong \bigotimes_{m=1}^M \C{F}^{Y, \C{T}_m}$, we find
    \begin{gather}
    \begin{aligned}
        V' &= \sum_{i_1,...,i_M} \bigotimes_{m=1}^{M-1} \bra{i_m}^{Y, \C{T}_m}  V_{M} \circ_{\alpha_{M-1}}  ... \circ_{\alpha_1} V_1 \bigotimes_{m'=1}^M \ket{i_m'}^{Y, \C{T}_m'} \\
        &= \sum_{i_1,...,i_M}  V_{M} \circ_{\alpha_{M-1}} \ket{i_{M-1}}\bra{i_{M-1}}^{Y, \C{T}_M} ... \circ_{\alpha_1} \ket{i_1}\bra{i_1}^{Y, \C{T}_1} V_1 \\
        &= V_{M} \circ_{Y\alpha_{M-1}}  ... \circ_{Y\alpha_1} V_1
    \end{aligned}
    \end{gather}
    where $\circ_{Y \alpha_m}$ denotes sequential composition along $Y$ and $\alpha_m$. Here, we used that $\sum_i \ket{i}\bra{i}^{Y, \C{T}_m}$ is the identity on $\C{F}^{Y, \C{T}_m}$. We thus find that $V'$ can be written as a composition of isometries and it is thus also an isometry. 

    Note that
    \begin{gather}
    \begin{aligned}
        \C{C}'(\cdot) &:= \text{tr}_{\alpha_M} V' \cdot V'^{\dagger} \\
        &= \text{tr}_{\alpha_M} \sum_{ij} \bra{i}^Y V \ket{i}^{Y} \cdot \bra{j}^Y V^{\dagger} \ket{j}^Y \\
        &= \sum_{ij} \bra{i}^Y \text{tr}_{\alpha_M} (V \ket{i}^{Y} \cdot \bra{j}^Y V^{\dagger}) \ket{j}^Y \\
        &= \sum_{ij} \bra{i}^Y \C{C}( \cdot \otimes \ket{i} \bra{j}^Y) \ket{j}^Y.
    \end{aligned}
    \end{gather}
    In particular, $\C{C}'$ is CPTP as it has a purification $V'$. We claim now that $\C{C}' = \C{C}^{\hookrightarrow}$. To show this, let us calculate the Choi-Jamiołkowski representation of $\C{C}'$ (which is well-defined as $\C{C}'$ is CPTP)
    \begin{gather}
    \begin{aligned}
        R_{\C{C}'}(\psi^X \otimes \psi^Z, \varphi^X \otimes \varphi^Z) &= \bra{\psi^Z} \sum_{ij} (\bra{i}^Y \C{C}(\ket{\bar{\psi}^X} \bra{\bar{\varphi}^X} \otimes \ket{i} \bra{j}^Y) \ket{j}^Y) \ket{\varphi^Z} \\
        &= \sum_{ij} \bra{\psi^Z} \bra{i}^Y \C{C}(\ket{\bar{\psi}^X} \bra{\bar{\varphi}^X} \otimes \ket{i} \bra{j}^Y) \ket{j}^Y \ket{\varphi^Z} \\
        &= R_{\C{C}^{\hookrightarrow}}(\psi^X \otimes \psi^Z, \varphi^X \otimes \varphi^Z)
    \end{aligned}
    \end{gather}
    where in the last line we used the definition of $R_{\C{C}^{\hookrightarrow}}$. Note that we can pull the bra and ket into the infinite sum in the second line as the sum converges (because it is equal by definition to $\C{C}'(\ket{\bar{\psi}^X} \bra{\bar{\varphi}^X})$ which in turn is well-defined as $\C{C}'$ is CPTP) and $\rho \mapsto  \bra{\psi^Z} \rho \ket{\varphi^Z}$ is a continuous map (it is norm contractive by Cauchy-Schwarz). 

    Since the Choi-Jamiołkowski representations of $\C{C}'$ and $\C{C}^{\hookrightarrow}$ are the same, these maps must also be the same.
    \end{proof}

\begin{restatable}[Pseudo-causal boxes]{defi}{pseudocb}\label{defi:pseudocb}
    A pseudo-causal box is a system with an input wire $X$ and an output wire $Y$, together with a CPTP map
\begin{equation}
    \C{C}: \mathcal{L}(\F{X}{\C{T}}) \rightarrow \mathcal{L}(\F{Y}{\C{T}})
\end{equation}
that fulfils for all $\C{T}' \subseteq \C{T}$ which are closed from the bottom
\begin{equation}\label{eq:pseudo-causality}
        \text{tr}_{\C{T}\backslash \C{T}'} \circ \C{C} = \text{tr}_{\C{T}\backslash \C{T}'} \circ \C{C} \circ \text{tr}_{\C{T}\backslash \C{T}'}
    \end{equation}
We call \cref{eq:pseudo-causality} the pseudo-causality condition and it should be interpreted in light of \cref{eq:wireisostate} so that the output space of $\text{tr}_{\C{T} \backslash \C{T}'}$ matches the input space of $\C{C}$.

We call $\C{C}: \mathcal{L}(\F{X}{\C{T}}) \rightarrow \mathcal{L}(\F{Y}{\C{T}})$ a subnormalised pseudo-causal box if there exists a (normalised) pseudo-causal box $\C{C}':\C{L}(\F{X}{\C{T}}) \rightarrow \mathcal{L}(\F{YR}{\C{T}})$ and a product state $\ket{\psi}^R = \bigotimes_{t \in \C{T}} \ket{\psi_t, t}^R$ such that 
\begin{equation}
    \C{C}(\cdot) = \bra{\psi} \C{C}'(\cdot) \ket{\psi}.
\end{equation}
\end{restatable}

\projcb*
\begin{proof}
    We define the isometries $V^{I/O, \C{T}'}_{k, count}$ which act as
    \begin{equation}\label{eq:msgcount}
        V^{I/O, \C{T}'}_{k, count} \ket{\psi_n}^{A^{I/O, \C{T}'}_k} \otimes \ket{\phi_m}^{A^{I/O, \C{T} \backslash \C{T}'}_k} = \ket{\psi_n}^{A^{I/O, \C{T}'}_k} \otimes \ket{\phi_m}^{A^{I/O, \C{T} \backslash \C{T}'}_k} \otimes \ket{n, m, T}^{C_k}
    \end{equation}
    where $\ket{\psi_n} \in \vee^n \C{H}^{A^{I/O}_k, \C{T}'}$ is an $n$-message state and $\ket{\phi_m}\in \vee^m \C{H}^{A^{I/O}_k, \C{T} \backslash \C{T}'}$ is an $m$-message state.

    This is a pseudo-causal box. To see this let $\C{T}'' \subsetneq \C{T}$ be a downwards-closed non-empty set (the pseudo-causality condition for $\C{T}'' = \C{T}$ is trivial) and define
    \begin{gather}
    \begin{aligned}
        \C{T}_1 &= \C{T}' \cap \C{T}'' \\
        \C{T}_2 &= (\C{T}\backslash\C{T}') \cap \C{T}'' \\
        \C{T}_3 &= \C{T}' \cap (\C{T}\backslash \C{T}'') \\
        \C{T}_4 &= (\C{T}\backslash\C{T}') \cap (\C{T}\backslash\C{T}'') \\
    \end{aligned}
    \end{gather}
    Notice that $\C{T}_i\cap \C{T}_j=\emptyset$ for $i\neq j$ and $\C{T} = \bigsqcup_{i=1}^4 \C{T}_i$. Due to linearity, it is sufficient to prove pseudo-causality for all pure states. Let
    \begin{equation}
        \ket{\psi} = \sum_{\{n_i\}_{i=1}^4}  \bigotimes_{i=1}^4 \ket{\psi^i_{n_i}}^{A^{I/O, \C{T}_i}_k}
    \end{equation} where $\ket{\psi^i_{n_i}}^{A^{I/O, \C{T}_i}_k} \in \vee^{n_i}\C{H}^{I/O, \C{T}_i}_k$ is an $n_i$-message state (not necessarily normalised) over the set of positions $\C{T}_i$. Then, 
    \begin{gather}
    \begin{aligned}
        \text{tr}_{\C{T} \backslash \C{T}''} V^{I/O, \C{T}'}_{k, count} \ket{\psi} =& \sum_{\{n_i\}_{i=1}^4, \{m_i\}_{i=1}^4} \ket{\psi_{n_1}^1}\bra{\psi_{m_1}^1} \otimes \ket{\psi_{n_2}^2} \bra{\psi_{m_2}^2} \underbrace{\text{tr}(\ket{\psi_{n_3}^3} \bra{\psi_{m_3}^3})}_{\delta_{n_3, m_3} \braket{\psi_{m_3}^3|\psi_{n_3}^3}} \underbrace{\text{tr}(\ket{\psi_{n_4}^4} \bra{\psi_{m_4}^4})}_{\delta_{n_4, m_4} \braket{\psi_{m_4}^4|\psi_{n_4}^4}}\\
        &\underbrace{\text{tr}(\ket{n_1+n_3, n_2+n_4, T}\bra{m_1+m_3, m_2+m_4, T}}_{\delta_{n_1+n_3, m_1+m_3}\delta_{n_2+n_3,m_2+m_4}}) \\
        =& \sum_{\{n_i\}_{i=1}^4, \{m_i\}_{i=1}^4} \ket{\psi_{n_1}^1}\bra{\psi_{m_1}^1} \otimes \ket{\psi_{n_2}^2} \bra{\psi_{m_2}^2} \underbrace{\text{tr}(\ket{\psi_{n_3}^3} \bra{\psi_{m_3}^3})}_{\delta_{n_3, m_3} \braket{\psi_{m_3}^3|\psi_{n_3}^3}} \underbrace{\text{tr}(\ket{\psi_{n_4}^4} \bra{\psi_{m_4}^4})}_{\delta_{n_4, m_4} \braket{\psi_{m_4}^4|\psi_{n_4}^4}}\\
        &\underbrace{\text{tr}(\ket{n_1, n_2, T}\bra{m_1, m_2, T}}_{\delta_{n_1, m_1}\delta_{n_2,m_2}}) \\
        =& \text{tr}_{\C{T} \backslash \C{T}''} V^{I/O, \C{T}'}_{k, count} \text{tr}_{\C{T} \backslash \C{T}''} \ket{\psi}
    \end{aligned}
    \end{gather}
    where we used that since $\C{T}''$ is a proper subset of $\C{T}$ and bottom-closed it cannot contain the maximal element $T$ as the smallest bottom-closed set that contains $T$ is $\C{T}$, that $\delta_{n_i+n_j, m_i+m_j} \delta_{n_j,m_j} = \delta_{n_i,m_i} \delta_{n_j,m_j}$ and that $\text{tr}_{\C{T}\backslash \C{T}''} = \text{tr}_{\C{T}_3 \cup \C{T}_4}$. 

    As $V^{I/O, \C{T}'}_{k, count}$ is also clearly an isometry, it is a normalised pseudo-causal box. Finally, notice that 
    \begin{equation}
        \C{P}^{I,O, \C{T}'}_k = \bra{1,0, T}V^{I/O, \C{T}'}_{k, count} \cdot V^{I/O, \C{T}' \dagger}_{k, count} \ket{1,0,T},
    \end{equation}
    i.e., $\C{P}^{I,O, \C{T}'}_k$ corresponds to an outcome of $V^{I/O, \C{T}'}_{k, count}$, making it a subnormalised pseudo-causal box.
\end{proof}

Since \cref{eq:pseudo-causality} is strictly weaker than the causality condition of causal boxes, any causal box is also a pseudo-causal box. Pseudo-causal boxes are closed under sequential and parallel composition, but not under arbitrary loop composition (e.g., the identity on $\F{X}{\C{T}}$ is a pseudo-causal box but looping the output back to the input diverges). Moreover, sequential composition with a causal box yields a causal box if all input respectively output wires of the pseudo-causal box are composed.

\begin{restatable}[Closure properties of pseudo-causal boxes]{lemma}{itscb}\label{lemma:itscb}
    Let $\C{C}: \C{L}(\F{X}{\C{T}}) \rightarrow \C{L}(\F{Y}{\C{T}}), \C{C}':\C{L}(\F{X'}{\C{T}}) \rightarrow \C{L}(\F{Y'}{\C{T}})$ be subnormalised pseudo-causal boxes. Then,
    \begin{equation}
        \C{C}' \otimes \C{C}
    \end{equation}
    is a subnormalised pseudo-causal box and if $Y = X'$ so is
    \begin{equation}
        \C{C}' \circ \C{C}.
    \end{equation}
    If $\C{C}$ or $\C{C}'$ is subnormalised causal box, then $\C{C}' \circ \C{C}$ is a subnormalised causal box. If $\C{C}$ and $\C{C}'$ are both normalised, then so are their sequential and parallel compositions.
\end{restatable}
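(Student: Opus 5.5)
The plan is to reduce everything to purifications and check the pseudo-causality condition \cref{eq:pseudo-causality} directly, treating the subnormalised case by absorbing the post-selected product states. First, for the normalised statements: given pseudo-causal boxes $\C{C},\C{C}'$ and a bottom-closed $\C{T}''\subseteq\C{T}$, parallel composition is immediate, since $\text{tr}_{\C{T}\backslash\C{T}''}$ factorises over the wires $X,X'$ (and $Y,Y'$) via the wire isomorphism \cref{eq:wireiso}. This gives $\text{tr}_{\C{T}\backslash\C{T}''}\circ(\C{C}'\otimes\C{C}) = (\text{tr}_{\C{T}\backslash\C{T}''}\circ\C{C}')\otimes(\text{tr}_{\C{T}\backslash\C{T}''}\circ\C{C})$, and we apply the condition to each factor. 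For sequential composition, I will chain the condition:
\[
\text{tr}_{\C{T}\backslash\C{T}''}\circ\C{C}'\circ\C{C}=\text{tr}_{\C{T}\backslash\C{T}''}\circ\C{C}'\circ\text{tr}_{\C{T}\backslash\C{T}''}\circ\C{C}=\text{tr}_{\C{T}\backslash\C{T}''}\circ\C{C}'\circ\text{tr}_{\C{T}\backslash\C{T}''}\circ\C{C}\circ\text{tr}_{\C{T}\backslash\C{T}''}.
\]
Removing the middle trace again by the first equality gives pseudo-causality of $\C{C}'\circ\C{C}$. CPTP-ness is preserved trivially, which settles the normalisation claims.

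For the case where one factor is a causal box, say $\C{C}$ with causality function $\chi$, I run the same chain but insert $\chi$. The first step uses pseudo-causality of $\C{C}'$ at $\C{T}''$. The second step uses the causality condition \cref{eq:cbreq} of $\C{C}$ at $\C{T}''$, yielding a trace over $\C{T}\backslash\chi(\C{T}'')$ on the input. Hence $\chi$ itself is a causality function for $\C{C}'\circ\C{C}$. If instead $\C{C}'$ is the causal box, I first apply \cref{eq:cbreq} for $\C{C}'$ to get $\text{tr}_{\C{T}\backslash\chi'(\C{T}'')}$ between the maps. I then need $\text{tr}_{\C{T}\backslash\C{T}''}\circ\C{C}'\circ\text{tr}_{\C{T}\backslash\chi'(\C{T}'')}\circ\C{C} = \text{tr}_{\C{T}\backslash\C{T}''}\circ\C{C}'\circ\text{tr}_{\C{T}\backslash\chi'(\C{T}'')}\circ\C{C}\circ\text{tr}_{\C{T}\backslash\chi'(\C{T}'')}$. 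This follows from pseudo-causality of $\C{C}$ at the bottom-closed set $\chi'(\C{T}'')$, so $\chi'$ works for the composite. The axioms of \cref{def:chi} hold because the function is unchanged.

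For subnormalised boxes, I write $\C{C}(\cdot)=\bra{\psi}\C{D}(\cdot)\ket{\psi}$ and $\C{C}'(\cdot)=\bra{\psi'}\C{D}'(\cdot)\ket{\psi'}$ with normalised $\C{D},\C{D}'$ and product states on separate result wires $R,R'$. In the sequential case, $\C{D}$ is extended by the identity on $R$ so that $\C{D}'$ passes $R$ through. Then $\C{C}'\circ\C{C}=\bra{\psi\psi'}(\C{D}'\otimes\text{id}_R)\circ\C{D}(\cdot)\ket{\psi\psi'}$, and $\ket{\psi}\otimes\ket{\psi'}$ is again a product over $t$ under \cref{eq:wireiso}. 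The main obstacle is that the identity wire $\text{id}_R$ is only pseudo-causal, not causal. In the causal-box case I would therefore avoid it: I would instead use the remark after \cref{def:subcb}, converting each post-selection to vacuum by local unitaries $U_t$, and then place $\C{D}$'s result wire in parallel rather than through $\C{D}'$. That is, I write $\C{C}'\circ\C{C}$ as the projection of $(\C{D}'\otimes\text{id}_R)\circ\C{D}$. Here $\text{id}_R$ is parallel to the causal box $\C{D}'$, but only inputs of $\C{D}'$ are sequentially composed. I expect the delicate point to be checking that this mixed parallel identity does not break strictness of $\chi$. If it does, I would instead delay $R$ by one step with an explicit causal shift, exploiting that $R$ is only read out after post-selection, so its timing is irrelevant.
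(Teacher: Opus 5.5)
\textbf{The sequential case with $\mathcal{C}'$ causal and $\mathcal{C}$ genuinely subnormalised has a gap that cannot be filled, because the statement is false there.}

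Your normalised argument is the paper's. You factorise the trace for $\mathcal{C}'\otimes\mathcal{C}$ and chain the (pseudo-)causality conditions for $\mathcal{C}'\circ\mathcal{C}$. The paper writes the composite's causality function uniformly as $\chi\circ\chi'$, with the identity standing in for a merely pseudo-causal factor, which amounts to your case split. Your dilation $(\mathcal{D}'\otimes\mathrm{id}_R)\circ\mathcal{D}$, post-selected on $\ket{\psi}\otimes\ket{\psi'}$, handles three further cases: subnormalised parallel composition, the pseudo-causal sequential claim, and the sequential case where the \emph{earlier} map $\mathcal{C}$ is causal. In that last case $\mathcal{D}$ is causal and $\mathcal{D}'\otimes\mathrm{id}_R$ need only be pseudo-causal, so the obstacle you name does not arise.

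The remaining case is $\mathcal{C}'$ causal and $\mathcal{C}$ a genuinely subnormalised pseudo-causal box, and neither of your fixes closes it.
\begin{itemize}
\item Drawing $R$ beside rather than through $\mathcal{D}'$ yields literally the same map $(\mathcal{D}'\otimes\mathrm{id}_R)\circ\mathcal{D}$, whose record at $t$ may depend on the input at $t$.
\item Delaying the record is impossible at maximal elements of the finite $\mathcal{T}$.
\item The timing of $R$ is not irrelevant, since a subnormalised causal box needs a \emph{causal} dilation that includes $R$.
\end{itemize}

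In fact no proof exists, because the claim is false here. Take $\mathcal{T}=\{1,2\}$ with $1\prec 2$. Let $\mathcal{C}(\rho)=\Pi\rho\Pi$, where $\Pi$ projects onto non-vacuum input at $t=2$. This is an outcome of the instantaneous vacuum/non-vacuum measurement at $t=2$ with its record at $t=2$, hence a subnormalised pseudo-causal box. Let $\mathcal{C}'(\sigma)=\mathrm{tr}(\sigma)\,\tau$ for a fixed state $\tau$, which is a normalised causal box.

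Then $\mathcal{C}'\circ\mathcal{C}$ returns $\tau$ on a single message at $t=2$ and $0$ on the vacuum. By contrast, any normalised causal box $\mathcal{E}$ on this $\mathcal{T}$ obeys $\mathcal{E}=\mathcal{E}\circ\mathrm{tr}_{\mathcal{T}\backslash\chi(\mathcal{T})}$ with $2\notin\chi(\mathcal{T})$ by \cref{lemma:chikill}. So $\bra{\varphi}\mathcal{E}(\cdot)\ket{\varphi}$ cannot distinguish these two inputs, and $\mathcal{C}'\circ\mathcal{C}$ is not a subnormalised causal box.

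The paper's proof hides exactly this point in its one-line passage to ``the underlying normalised map''. The statement should require the earlier map $\mathcal{C}$ to be causal or normalised. The paper's own uses of the bad case, such as $(\mathcal{M}^{A_k}_{x_k}\otimes\mathcal{O}_k)\circ\mathcal{P}^I_{LO,k}$ in \cref{lemma:lopseudocb}, then need a direct check. That check appears to go through, because there the post-selection record from \cref{lemma:projcb} sits at the final time $T$, after all inputs.
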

\begin{proof}
    If any of the maps are subnormalised, we can instead work with the underlying normalised map. Hence, it is sufficient to prove the statement for normalised (pseudo)-causal boxes.

Since all maps are CPTP, their compositions are also CPTP. Hence, we only need to check that the composed maps respect (pseudo)-causality (\cref{eq:cbreq} or respectively \cref{eq:pseudo-causality}). Let $\C{T}' \subseteq \C{T}$ be an arbitrary bottom-closed set. For the parallel composition, note
\begin{gather}
\begin{aligned}
     \text{tr}_{\C{T}\backslash \C{T}'} \circ (\C{C}' \otimes \C{C}) &= (\text{tr}_{\C{T}\backslash \C{T}'} \circ \C{C}') \otimes (\text{tr}_{\C{T}\backslash \C{T}'} \circ \C{C}) \\
     &=(\text{tr}_{\C{T}\backslash \C{T}'} \circ \C{C}' \circ \text{tr}_{\C{T}\backslash \C{T}'}) \otimes (\text{tr}_{\C{T}\backslash \C{T}'} \circ \C{C} \circ \text{tr}_{\C{T}\backslash \C{T}'}) \\
     &= \text{tr}_{\C{T}\backslash \C{T}'} \circ (\C{C}' \otimes \C{C}) \circ \text{tr}_{\C{T}\backslash \C{T}'} 
\end{aligned}
\end{gather}
where we used the pseudo-causality of $\C{C}$ and $\C{C}'$ and that the trace factorises over subsystems. 

For sequential composition, let $\chi$ ($\chi'$) be a causality function for $\C{C}$ ($\C{C}'$) if it is a causal box and identity if it is a pseudo-causal box. Then,
\begin{gather}
\begin{aligned}
    \text{tr}_{\C{T}\backslash \C{T}'} \circ \C{C}' \circ \C{C} &= \text{tr}_{\C{T}\backslash \C{T}'} \circ \C{C}' \circ \text{tr}_{\C{T}\backslash \chi'(\C{T}')} \circ \C{C} \\
     &= \text{tr}_{\C{T}\backslash \C{T}'} \circ \C{C}' \circ \text{tr}_{\C{T}\backslash \chi'(\C{T}')} \circ \C{C} \circ  \text{tr}_{\C{T}\backslash \chi( \chi'(\C{T}'))}\\
     &= \text{tr}_{\C{T}\backslash \C{T}'} \circ \C{C}' \circ \C{C} \circ  \text{tr}_{\C{T}\backslash \chi( \chi'(\C{T}'))}.
        \end{aligned}
\end{gather}
Here, we used (pseudo)-causality of $\C{C}$ and $\C{C}'$ in each line. If $\chi, \chi'$ are both trivial (i.e., both $\C{C}, \C{C}'$ are pseudo-causal boxes), this is the pseudo-causality condition. If either of them is an actual causality function, then this is the causality condition and $\C{C} \circ \C{C}'$ is a causal box.
\end{proof}

It is often more convenient to work with pure maps instead of CPTP maps. While the Stinespring dilation guarantees the existence of purifications of (pseudo)-causal boxes it is a priori unclear whether these are (pseudo)-causal boxes themselves (we can say that an isometry $V$ is a (pseudo)-causal box if the CPTP map $V \cdot V^\dagger$ is a (pseudo)-causal box). The lemma below shows that we can always find purifications where this is true.

\begin{restatable}[Purifications of causal boxes]{lemma}{purification}\label{lemma:purification}
    Let $\C{C}: \C{L}(\F{X}{\C{T}}) \rightarrow \C{L}(\F{Y}{\C{T}})$ be (pseudo-)causal box. Let $V: \F{X}{\C{T}} \rightarrow \F{Y}{\C{T}} \otimes \C{F}^{\alpha, T}$ be a purification of $\C{C}$ such that $T \succ  t$ for all $t \in \C{T}$. Then, $V$ is a (pseudo-)causal box. Moreover, such purifications exist for any choice of the finite-dimensional Hilbert space $\C{H}^\alpha$, in particular, when $\C{H}^\alpha = \mathbb{C}$ is trivial.
\end{restatable}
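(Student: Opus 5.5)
The plan is to handle the two claims separately. First, $V$ is a (pseudo-)causal box whenever its environment output sits at the top time $T$. Second, such purifications exist for every finite-dimensional $\C{H}^\alpha$, including trivial $\C{H}^\alpha$.

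\textbf{Causality of $V$.} Let $\chi$ be a causality function of $\C{C}$ (in the pseudo case, take $\chi$ to be the identity). Take a bottom-closed $\C{T}'\subseteq\C{T}$ with $\C{T}'\neq\C{T}$, so $T\notin\C{T}'$; this holds because the smallest bottom-closed set containing $T$ is $\C{T}$ itself, as already used in the proof of \cref{lemma:projcb}. Then $\text{tr}_{\C{T}\backslash\C{T}'}$ applied to $V\cdot V^\dagger$ traces out the whole purifying system $\alpha$ located at $T$, together with $Y$ on $\C{T}\backslash\C{T}'$. This gives
\begin{equation}
\text{tr}_{\C{T}\backslash\C{T}'}\circ(V\cdot V^\dagger)=\text{tr}_{\C{T}\backslash\C{T}'}\circ\C{C}.
\end{equation}
We now apply the (pseudo-)causality of $\C{C}$, and then reverse the same identity on the right-hand side. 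This yields
\begin{equation}
\text{tr}_{\C{T}\backslash\C{T}'}\circ(V\cdot V^\dagger)=\text{tr}_{\C{T}\backslash\C{T}'}\circ(V\cdot V^\dagger)\circ\text{tr}_{\C{T}\backslash\chi(\C{T}')}.
\end{equation}
For $\C{T}'=\C{T}$ the condition $\chi(\C{T})\subsetneq\C{T}$ is needed. Here I will observe that $\text{tr}_{\emptyset}$ is the identity. It then suffices that $\C{C}=\C{C}\circ\text{tr}_{\C{T}\backslash\chi(\C{T})}$, which itself holds by the causality condition at $\C{T}'=\C{T}$. So $V\cdot V^\dagger$ satisfies the condition for the same $\chi$, provided $\chi$ is still a valid causality function on the enlarged spacetime $\C{T}\cup\{T\}$. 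If $T$ is a new point, I will extend $\chi$ by setting $\chi(\C{T}\cup\{T\})=\C{T}$ and leave it unchanged on the other sets. I will check that this extension keeps monotonicity, union-preservation and strict shrinking. All three look routine, since any bottom-closed set containing $T$ is the whole enlarged set.

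\textbf{Existence.} Take any Stinespring isometry $W:\F{X}{\C{T}}\to\F{Y}{\C{T}}\otimes\C{K}$ of $\C{C}$. The space $\C{K}$ is possibly infinite-dimensional and separable. I place $\C{K}$ at position $T$ by identifying it with the one-message (or general Fock) sector of a wire at $T$. The problem is to match the prescribed $\C{F}^{\alpha,T}=\C{F}(\C{H}^\alpha\otimes\ket{T})$ for an arbitrary finite-dimensional $\C{H}^\alpha$, including $\C{H}^\alpha=\mathbb{C}$. The key observation is that this Fock space $\C{F}^{\alpha,T}$ is infinite-dimensional and separable whenever $\C{H}^\alpha\neq 0$. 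Even for $\C{H}^\alpha=\mathbb{C}$ it is $\bigoplus_n\vee^n\mathbb{C}\cong\ell^2(\mathbb{N})$, with basis given by the number of messages at $T$. Hence there is an isometric embedding $\iota:\C{K}\hookrightarrow\C{F}^{\alpha,T}$. Then $V=(\mathbb{1}\otimes\iota)W$ is an isometry, and tracing out $\alpha$ still yields $\C{C}$. By the first part, $V$ is a (pseudo-)causal box.

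\textbf{Main obstacle.} The step I expect to be delicate is making this rigorous with infinite-dimensional Fock spaces. Two things need care here. The Stinespring environment must be separable, which I will argue from the separability of the Fock spaces over finite $\C{T}$ and finite-dimensional message spaces. The partial trace over $\alpha$ must commute with the embedding $\iota$, which is standard because $\iota$ is an isometry. A secondary subtlety is being careful about whether $T$ is already an element of $\C{T}$, which would require the outputs $Y$ not to occupy $T$; I will treat $T$ as the adjoined top element as in the agent outcome wires.
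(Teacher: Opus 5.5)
Your proposal is correct and follows the paper's argument. Tracing out the complement of any bottom-closed $\C{T}'\subseteq\C{T}$ always removes the purifying system at the adjoined point $T$, which reduces the condition for $V$ to that of $\C{C}$. Existence then follows by embedding a separable Stinespring environment into the infinite-dimensional separable space $\C{F}^{\alpha,T}$; the paper pads to infinite dimension and uses a unitary, and your isometric embedding is the same idea.

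The one step to state cleanly is the extra set $\C{T}\cup\{T\}$ (which the paper leaves implicit). With $\chi(\C{T}\cup\{T\})=\C{T}$, the condition there reads $V\cdot V^\dagger=V\cdot V^\dagger\circ\text{tr}_{\{T\}}$, and it holds simply because $X$ has no input at $T$. It does not follow from $\C{C}=\C{C}\circ\text{tr}_{\C{T}\backslash\chi(\C{T})}$ when $\alpha$ is not traced out, since a purification can retain inputs that $\C{C}$ discards.
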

\begin{proof}
    Since $\C{C}$ is a CPTP map it has a purification $V': \F{X}{\C{T}} \rightarrow \F{Y}{\C{T}} \otimes \C{H}^{\alpha'}$. Due to the existence of minimal purifications where the cardinality of the purifying system is at most that of the input Hilbert space, we can choose $\C{H}^{\alpha'}$ to be an infinite-dimensional separable Hilbert space. As all such spaces are isomorphic, for any finite-dimensional $\C{H}^{\alpha}$, there exists a unitary isomorphism $U: \C{H}^{\alpha'} \rightarrow \C{F}^{\alpha, T}$ and $V:= U \circ V'$ is again a purification of $\C{C}$. Moreover, it is a (pseudo-)causal box. To see this, let $\chi$ be a causality function for $\C{C}$ if $\C{C}$ is a causal box or identity if $\C{C}$ is a pseudo-causal box and let $\C{T}' \subseteq \C{T}$ be a bottom-closed set (note that $\C{T}'$ is then also a bottom-closed set in $\C{T} \cup T$),
    \begin{gather}
    \begin{aligned}
        \text{tr}_{(\C{T} \cup T) \backslash \C{T}'} V &= \text{tr}_{\C{T} \backslash \C{T}'} \circ \text{tr}_T V \\
        &= \text{tr}_{\C{T} \backslash \C{T}'} \circ \text{tr}_\alpha V \\
        &= \text{tr}_{\C{T} \backslash \C{T}'} \circ \C{C} \\
        &= \text{tr}_{\C{T} \backslash \C{T}'} \circ \C{C} \circ \text{tr}_{\C{T}\backslash \chi(\C{T}')} \\
        &= \text{tr}_{(\C{T} \cup T) \backslash \C{T}'} \circ V \circ \text{tr}_{(\C{T} \cup T)\backslash \chi(\C{T}')}
    \end{aligned}
    \end{gather}
    where we used (pseudo)-causality of $\C{C}$, that $V$ is a purification of $\C{C}$ and that there is no input time $T$, so we can freely trace over it.
\end{proof}

\begin{restatable}[A pseudo-causal box into the one-message space]{lemma}{vone}\label{lemma:vone}
   Let $\C{H}^X$ be a finite-dimensional Hilbert space and $\C{T}$ a finite totally ordered spacetime. There exists a pseudo-causal box $\C{C}_{one}: \C{L}(\F{X}{\C{T}}) \rightarrow \C{L}(\F{X}{\C{T}})$ which acts as identity on the one-message space and whose image is the one-message space.  
\end{restatable}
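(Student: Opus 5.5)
We construct $\C{C}_{one}$ explicitly as an isometry followed by a partial trace, arranged so that every message stays at its own time stamp; pseudo-causality (which permits instantaneous processing) then reduces to a bookkeeping argument. Write $\C{T}=\{1,\dots,M\}$ and decompose $\F{X}{\C{T}}\cong\bigotimes_{t=1}^M \F{X}{t}$. The naive choice would be to project onto the one-message space and replace the orthogonal complement by a fixed one-message state. But \cref{ex:sigproj} shows that a projector onto a number sector can signal backwards in time. So the map must be built sequentially in $t$, so that what is emitted at time $t$ depends only on the inputs at times $\le t$.

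\textbf{Step 1: construction.} Carry a classical-like counter register $c\in\{0,1,2\}$, meaning "no message seen so far", "exactly one message seen", or "overflow". At time $t$, let $n$ be the number of messages in $\F{X}{t}$. Act as follows.
\begin{itemize}
\item If $c=0$ and $n=0$: output vacuum at $t$, unless $t=M$, in which case emit the fixed state $\ket{0,M}$ and move the counter to $c=1$.
\item If $c=0$ and $n=1$: pass the message through unchanged and set $c=1$.
\item If $n\ge 2$, or if $c\ge1$ and $n\ge1$: the input would leave the one-message space. Move the input content into a garbage register at the final time $T$, output vacuum at $t$, and set $c$ to reflect overflow.
\item If $c=1$ and $n=0$: output vacuum.
\end{itemize}
One point needs care in the overflow case. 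When $c$ was $1$ and a second message arrives, one message has already been emitted, and the output remains a valid one-message state. When $c=0$ and $n\ge2$, nothing has been emitted yet, so we must still emit a single message later. Treat this exactly like $c=0$: emit $\ket{0,M}$ at time $M$ if nothing has been emitted by then.

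Each step is an isometry, since it is controlled on orthogonal number sectors and on the counter, and it stores its inputs in garbage. Composing the steps, with the garbage and counter living at $T$ and then traced out, gives a CPTP map in the form of a sequence representation. Here each isometry maps time $t$ to time $t$, not to $t+1$.

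\textbf{Step 2: properties.} On one-message inputs $\ket{\psi,t}\ket{\Omega,\C{T}\setminus t}$, the only branch taken is "pass through", and the counter ends in a state independent of $\psi$. So after tracing out, the map is the identity on the one-message space, including coherences between different $t$, because the garbage and counter are identical across branches. Every branch emits exactly one message overall, so the image lies in the one-message space.

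\textbf{Step 3: pseudo-causality.} For a bottom-closed $\C{T}'=\{1,\dots,s\}$, the outputs at times $\le s$ are produced by the isometries $1,\dots,s$. These act only on inputs at times $\le s$, plus the internal memory, which has been traced. Hence \cref{eq:pseudo-causality} holds, by the same telescoping as in the proof of \cref{lemma:purification}, using a memory that is traced out at $T$.

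\textbf{Main obstacle.} The delicate step is the identity-on-coherences part of Step 2. The counter and garbage must decouple from the branch label $t$ on the one-message space, or coherences between different $t$ would be destroyed. I would make the garbage record the same state ($\ket{\Omega}$) in all one-message branches and finalise the counter to a single value $c=1$, so that the internal memory factorises on that subspace.
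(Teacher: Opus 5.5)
Your plan is the paper's construction, written sequentially. It passes the first single message through, dumps all excess into a traced-out garbage register, emits a default $\ket{0,M}$ at the last time if nothing was passed, and gets pseudo-causality from a time-sliced factorisation into isometries.

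\textbf{The gap is in Step~1.} Being ``controlled on orthogonal number sectors and on the counter'' does not make a step isometric: the images of the branches must also be orthogonal, and as you specified them they are not. At $t=M$, the branch $(c=0,n=0)$ sends the vacuum to $\ket{0,M}$ with $c=1$ and empty garbage. The branch $(c=0,n=1)$ sends the orthogonal input $\ket{0,M}\otimes\ket{\Omega,\C{T}\backslash\{M\}}$ to exactly the same state. So the normalised superposition of these two inputs is mapped to a vector of squared norm $2$. Hence your $\C{C}_{one}$ is not trace preserving and not a pseudo-causal box. Step~3 fails with it, since the telescoping needs the later slices to be isometries, so that tracing their outputs equals tracing their inputs.

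\textbf{The missing idea} is the paper's flag $\ket{0}^\alpha$ versus $\ket{1}^\alpha$: the default-emission branch must end in a memory value different from the one reached by genuine pass-through. This is compatible with Step~2 only because the default branch is never entered from a one-message input. It also means your resolution of the ``main obstacle'' needs refining: only the one-message branches may be finalised to $c=1$. The genuinely delicate point is orthogonality of images, which the paper checks case by case, including the $t=M$ comparison. It is not coherence across different $t$; your empty-garbage choice already handles that.

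\textbf{There are two further bookkeeping problems of the same kind.}
\begin{enumerate}
\item A counter with values $\{0,1,2\}$ cannot implement the distinction you yourself point out. If $(c=0,n\ge2)$ and $(c=1,n\ge1)$ both lead to $c=2$, then two inputs reach the same counter value: the one with two messages at $t=1$, and $\ket{u,1}\ket{v,2}$. So at $t=M$ either the first produces no message or the second produces two. Your two rules also conflict for a later single message: ``$c\ge1,n\ge1$ goes to garbage'' versus ``treat overflow-before-emission like $c=0$''. You need separate values for overflow before and after emission. Alternatively, as the paper does, let the first non-vacuum slot decide everything and dump all later slots.
\item The garbage must keep the original time stamps; the paper shifts them to $\C{T}'=\{M+1,\dots,2M\}$. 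A single symmetric Fock register at $T$ identifies orthogonal inputs such as $\ket{u,1}\ket{v,2}\ket{w,3}$ and $\ket{u,1}\ket{w,2}\ket{v,3}$ with $v\perp w$.
\end{enumerate}

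With these repairs, each step is an isometry on the reachable memory states. Your sequential version then gives a full sequence representation, slightly more than the paper's one-cut-at-a-time factorisation $V_{one}=V_2\circ V_1$.
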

\begin{proof}
    W.l.o.g., we assume $\C{T} = \{1,...,M\}$. Let $\C{T}' := \{M+1,...,2M\}$. We define $V_{one}: \C{F}^{X, \C{T}} \rightarrow \C{F}^{Y, \C{T} \cup \C{T}'} \otimes \C{H}^{\alpha}$ via its action on the spanning set of states $\{\ket{\psi_n, t} \otimes \bigotimes_{t'> t} \ket{\phi_{t'}, t'}|t\in \C{T}, \ket{\psi_n} \in \vee^n \C{H}^X, n \in \mathbb{N}\backslash \{0\}\}$ as follows
    \begin{equation}
        V_{one} \ket{\psi_n, t} \otimes \bigotimes_{t'> t} \ket{\phi_{t'}, t'} = \begin{cases}
            \ket{\psi_1, t} \otimes \bigotimes_{t'> t} \ket{\phi_{t'}, t+M} \otimes \ket{1}^\alpha, &n=1\\
            \ket{0, M}  \otimes \ket{\psi_n, t+M} \otimes \bigotimes_{t'> t} \ket{\phi_{t'}, t' +M}\otimes \ket{1}^\alpha, &n> 1\\
            \ket{0, M} \otimes \ket{\Omega, \C{T}'} \otimes \ket{0}^\alpha, & n=0
        \end{cases}
    \end{equation}
    This is an isometry. The three cases in the definition above are the action of $V_{one}$ on orthogonal subspaces. Restricted to each of these, it clearly acts as an isometry. Hence, we only need to check that it keeps these subspaces orthogonal. This is clear when comparing the case $n=0$ to the other two due to the system $\alpha$. Considering the case $n=1$, we see that if $t < M$, then the output has vacuum at $M$, and is thus orthogonal to any state obtained in the $n>1$ case. If $t=M$, then the input state must be $\ket{\psi_1, t=M}$ and the output has vacuum on $\C{T}'$ while in the $n>1$ case we always have non-vacuum on $\C{T}'$. 

    This is a pseudo-causal box (once we trace out $\alpha$). For some $r \in \C{T}$, we can explicitly rewrite
    \begin{equation}
        V_{one} = V_2 \circ V_1 
    \end{equation}
    where $V_1$ is simply the restriction of $V_{one}$ to time steps less or equal $r$ and $V_2: \C{F}^{X, \C{T}^{>r}} \otimes \C{H}^{\alpha} \rightarrow \C{F}^{X, \C{T}^{>r} \cup \C{T}'} \otimes \C{H}^{\alpha}$ (where $\C{T}^{>r} = \{r+1,...,M\}$) acts as
    \begin{equation}
        V_2 (\bigotimes_{t'>r} \ket{\psi_{t'}, t'}^{X} \otimes \ket{i}^{\alpha}) = \begin{cases}
            V_{one}|_{\C{F}^{X, \C{T}^{>r}}}\bigotimes_{t'>r} \ket{\psi_{t'}, t'}, &i=0 \\
            \bigotimes_{t'>r} \ket{\psi_{t'}, t'+M} \otimes \ket{1}^\alpha,& i=1
        \end{cases}
    \end{equation}
    Both $V_1$ and $V_2$ are isometries. For $V_1$ this is obvious as it is a restriction of an isometry. For $V_2$, notice first that the two cases in the definition correspond to the action of $V_2$ on orthogonal subspaces and the map is an isometry when restricted to either of them. The map $V_2$ maps these orthogonal subspaces to orthogonal subspaces, as in the $i=0$ case we have non-vacuum on the output in $\C{T}$ while in the $i=1$ case, we have vacuum on the output in $\C{T}$.
    
    Hence, we have
    \begin{equation}
        \text{tr}_{>r} \text{tr}_{\alpha} V_{one} = (\text{tr} V_2) \circ \text{tr}_{>r}V_1 = \text{tr}_{\alpha} \text{tr}_{>r}V_1 \circ \text{tr}_{>r}
    \end{equation}
    where we used that $V_2$ is an isometry, i.e., in particular trace-preserving. As this can be done for any $r \in \C{T}$, $\text{tr}_{\alpha} V_{one}$ defines a pseudo-causal box.

    The pseudo-causal box $\C{C}_{one} := \text{tr}_\alpha \text{tr}_{\C{T}'} V_{one}$ now satisfies the conditions in the lemma statement as can be seen straightforwardly from the definition of $V_{one}$. 
\end{proof}

\begin{restatable}[Extending maps to causal boxes]{lemma}{extension}\label{lemma:extension}
    Let $\C{C}: \C{L}(\C{H}^{X,\C{T}}) \rightarrow \C{L}(\C{H}^{Y, \C{T}})$ be a CPTP map such that $\C{H}^X, \C{H}^Y$ are finite-dimensional Hilbert spaces and $\C{T}$ is totally ordered and finite and it holds that
    \begin{equation}\label{eq:extendone1}
        \C{C}\ket{\psi, t} \in \C{L}(\C{H}^{Y, \geq t})
    \end{equation}
    for all $t \in \C{T}$ and all $\ket{\psi} \in \C{H}^X$. Then, there exists a pseudo-causal box $\C{C}': \C{L}(\F{X}{\C{T}}) \rightarrow\C{L}(\F{Y}{\C{T}})$ such that
    \begin{equation}\label{eq:extendone2}
        \C{C}'|_{\C{L}(\C{H}^{X, \C{T}})} = \C{C}
    \end{equation}
    and the image of $\C{C}'$ lies in the 1-message space, i.e.,
    \begin{equation}\label{eq:extendone3}
        \text{Im}(\C{C}') \subseteq \C{L}(\C{H}^{Y, \C{T}}).
    \end{equation}
    If 
    \begin{equation}\label{eq:extendone4}
        \C{C}\ket{\psi, t} \in \C{L}(\C{H}^{Y, > t})
    \end{equation}
    for all $t \in \C{T}$ and all $\ket{\psi} \in \C{H}^X$, then $\C{C}'$ can be chosen to be a causal box.
\end{restatable}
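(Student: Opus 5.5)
The plan is to build $\C{C}'$ as a composition $\C{C}' := \C{C}_{ext} \circ \C{C}_{one}$. Here $\C{C}_{one}$ is the pseudo-causal box of \cref{lemma:vone}. It acts as the identity on the one-message space $\C{H}^{X,\C{T}}$ and has image in that space. The map $\C{C}_{ext}: \C{L}(\F{X}{\C{T}}) \rightarrow \C{L}(\F{Y}{\C{T}})$ is any pseudo-causal box that agrees with $\C{C}$ on $\C{L}(\C{H}^{X,\C{T}})$. Since $\C{C}_{one}$ only ever feeds one-message states to $\C{C}_{ext}$, the composite equals $\C{C}$ on the one-message space, which gives \cref{eq:extendone2}. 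Its image equals the image of $\C{C}$, which lies in $\C{L}(\C{H}^{Y,\C{T}})$, giving \cref{eq:extendone3}. By \cref{lemma:itscb} the composite is a pseudo-causal box, and it is a causal box as soon as $\C{C}_{ext}$ is one. So everything reduces to constructing $\C{C}_{ext}$.

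\textbf{Sequence representation.} I would build $\C{C}_{ext}$ directly, using $\C{T} = \{1,\dots,M\}$.
1. Take a Stinespring isometry of $\C{C}$. Decompose the input as $\C{H}^{X,\C{T}} = \bigoplus_t \C{H}^X \otimes \ket{t}$, giving isometries $W_t: \C{H}^X \rightarrow \C{H}^{Y,\geq t} \otimes \C{H}^E$, where $\C{H}^{Y,\geq t}$ is a sum of one-message spaces at times $\geq t$. These must satisfy the cross-orthogonality $W_t^\dagger W_{t'} = 0$ for $t\neq t'$, inherited from the joint isometry.
2. Define a step isometry $V_s$ for each time step $s$, carrying an internal memory $\alpha_s = \ket{\text{idle}} \oplus (\C{H}^{Y,> s-1}\otimes \C{H}^E)$.
3. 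At step $s$, if the memory is idle and the input at $s$ is $\ket{\psi,s}$, apply $W_s$. Emit the component of the result at time $s$ as output, and store the components at later times in the memory.
4. If the memory is non-idle, emit its time-$s$ component and ignore (store/discard) any new input.
5. Send vacuum inputs, multi-message inputs, and inputs arriving after the first message to the branch that keeps the previous behaviour. Since $\C{C}_{one}$ has already projected the input into one-message states, the behaviour on these branches does not matter beyond being isometric. The simplest choice is to copy such inputs into a junk register.

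\textbf{Checking isometry and causality.} Each $V_s$ acts only on the inputs at time $s$ and the memory, and outputs only at time $s$. Composing them therefore yields an operation satisfying pseudo-causality: tracing out times $>r$ on the output depends only on inputs at times $\leq r$. Under the stronger hypothesis \cref{eq:extendone4}, step 3 places nothing at time $s$ itself, so $V_s$ maps inputs at $s$ only to outputs at $>s$. This gives a genuine sequence representation with the shift, hence a causal box.

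\textbf{Main obstacle.} The hardest step is keeping each $V_s$ isometric while the memory holds a superposition of idle and non-idle branches. Consider the branches where the input message arrived earlier at some $t<s$ versus later at $s$: the memory flags must keep these orthogonal. The danger is the branch where the message arrived at $t$, has already been emitted at time $s$, and the memory has become idle again. Here I would use the cross-orthogonality $W_t^\dagger W_{t'}=0$ together with a permanent record of the arrival time $t$ kept in the memory (as in the flag $\alpha$ of \cref{lemma:vone}). Because $\C{C}_{one}$ guarantees that only one-message states are fed in, this record can be traced out at the end at time $T$ without affecting the action of $\C{C}$. Carrying the record makes all branches orthogonal by construction, and \cref{lemma:purification} lets the leftover purifying system be placed at $T$.
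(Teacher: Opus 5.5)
Your reduction $\C{C}' = \C{C}_{ext}\circ\C{C}_{one}$ matches the paper's. However, your resolution of the main obstacle breaks \cref{eq:extendone2}. A permanent record of the arrival time $t$ is perfectly correlated with the branch $W_t$, so tracing it out at $T$ dephases the output in the arrival-time basis. The resulting channel on the one-message space is $\rho\mapsto\sum_t \C{C}(P_t\rho P_t)$, where $P_t$ projects onto $\C{H}^X\otimes\ket{t}$, and this is not $\C{C}(\rho)$. The map $\C{C}_{one}$ only fixes the number of messages; it does not remove superpositions over positions. For example, take $\C{C}$ to be the shift $\ket{\psi,t}\mapsto\ket{\psi,t+1}$. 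The input $(\ket{\psi,1}+\ket{\psi,2})/\sqrt{2}$ must give the pure state $(\ket{\psi,2}+\ket{\psi,3})/\sqrt{2}$, but your construction outputs an equal mixture. This is not a corner case: the paper applies the lemma exactly to such time-coherent maps, namely $\Mp$, $\C{COPY}_{\C{T}^I_k,k}$ and $\C{V}_{diff,k}$ in \cref{lemma:tiao}. The flag in \cref{lemma:vone} is harmless only because it is constant ($\ket{1}^\alpha$) on the entire one-message sector.

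The collision you identified is real, but cross-orthogonality alone resolves it, with no record. Give the memory three sectors: a not-yet sector $\ket{a}$, a pending sector, and a done sector $\ket{d}\otimes\C{H}^E$. Your $\alpha_s$ has no place for $E$ after emission. Let the emit/pend rule act only on the reachable pending subspace $\text{span}\{P_{Y,\geq s}W_t\ket{\psi} : t<s\}$, and send its orthogonal complement in $\C{H}^{Y,\geq s}\otimes\C{H}^E$ to junk. Then for a fresh arrival at $s$ the relevant overlaps vanish:
$\bra{\psi}W_t^\dagger P_{Y,\geq s}W_s\ket{\psi'} = \bra{\psi}W_t^\dagger W_s\ket{\psi'} = 0$, because $\text{Im}(W_s)\subseteq\C{H}^{Y,\geq s}\otimes\C{H}^E$. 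So pending and fresh branches can be routed into the same emitted/done and pending sectors isometrically. If instead the rule acts on the whole pending space, $V_s$ is not isometric, since pending states lying in $\text{Im}(W_s)$ collide with fresh arrivals. On one-message inputs the final flag is always $\ket{d}$, so nothing decoheres.

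The paper avoids this bookkeeping altogether and never builds a sequence representation. It extends $\C{C}$ by mapping vacuum to vacuum. It then checks the (pseudo-)causality condition cut by cut, using for each $r$ a two-block split $V = V_2\circ V_1$ of a purification into inputs before $r$ and from $r$ on. Finally it combines this with the pseudo-causality of $\C{C}_{one}$.
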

\begin{proof}
    We define $\C{C}' = \C{C} \circ \C{C}_{one}$ where $\C{C}_{one}: \C{L}(\F{X}{\C{T}}) \rightarrow \C{L}(\F{X}{\C{T}}) $ is the pseudo-causal box whose image is the one-message space and which acts as identity on the one-message space (the existence of such pseudo-causal boxes is shown in \cref{lemma:vone}). This is well-defined as the image of $\C{C}_{one}$ is the one-message space, satisfies \cref{eq:extendone2} as $\C{C}_{one}$ acts as identity on the one-message space and \cref{eq:extendone3} as $\C{C}$ maps into the one-message space. 

    Hence, we only need that this is a (pseudo-)causal box. Clearly, it is CPTP as the composition of CPTP maps is CPTP again. We now check the (pseudo)-causality condition. Let $V: \C{H}^{X, \C{T}} \rightarrow \C{H}^{Y, \C{T}} \otimes \C{H}^{\alpha}$ be a purification of $\C{C}$. For $r \in \C{T}$, we can write (up to restricting the RHS to the one-message space)
    \begin{equation}
        V = V_2 \circ V_1
    \end{equation}
    where $V_1 = (\ket{\Omega, < r}\bra{\Omega, < r}^{X} \otimes \ket{\Omega}^{\alpha})\oplus V|_{\C{H}^{X, < r}}$ and $V_2 = (\ket{\Omega, \geq r}\bra{\Omega, \geq r}^{X} \otimes \mathbb{1}^{\alpha}) \oplus V|_{\C{H}^{X, \geq r}}$. Note that the image of $V_2$ lies in $(\C{H}^{Y, \geq r} \oplus \ket{\Omega, \geq r}^{Y})\otimes \C{H}^{\alpha}$ due to \cref{eq:extendone1}. These are isometries as they are direct sums of isometries with orthogonal images.

    Hence, we have
    \begin{equation}\label{eq:extendone5}
        \text{tr}_{Y, \geq r} \text{tr}_{\alpha} V = (\text{tr} V_2) \circ \text{tr}_{Y, \geq r}V_1 = (\text{tr}_{X, \geq r} \otimes \text{tr}_{\alpha}) \circ  \text{tr}_{Y, \geq r}V_1 =  \text{tr}_{\alpha} \text{tr}_{Y, \geq r}V_1 \circ \text{tr}_{X, \geq r} = \text{tr}_{Y, \geq r} \text{tr}_{\alpha} V \circ \text{tr}_{X, \geq r}
    \end{equation}
    where we used that $V_2$ is an isometry, i.e., in particular trace-preserving and in the last equality we are allowed to commute $\text{tr}_{X, \geq r}$ past $V_1$ as $V_1$ has no such output wire to trace out. 
    
    We then have
    \begin{equation}\label{eq:extendone6}
        \text{tr}_{>r} \circ \C{C}' = \text{tr}_{>r} \circ \C{C} \circ \C{C}_{one} = \text{tr}_{>r} \circ \C{C} \circ \text{tr}_{>r} \circ \C{C}_{one} = \text{tr}_{>r} \circ \C{C} \circ \C{C}_{one} \circ \text{tr}_{>r}.
    \end{equation}
    where we used that $\C{C}_{one}$ is a pseudo-causal box. If \cref{eq:extendone4} holds, we can show causality by noting that the image of $V_2$ lies in $(\C{H}^{Y, > r} \oplus \ket{\Omega, > r}^{Y})\otimes \C{H}^{\alpha}$ which then leads to an analogous equation to \cref{eq:extendone5}, namely that $\text{tr}_{Y, > r} \text{tr}_{\alpha} V = (\text{tr} V_2) \circ \text{tr}_{Y, > r}V_1 = (\text{tr}_{X, \geq r} \otimes \text{tr}_{\alpha}) \circ  \text{tr}_{Y, > r}V_1 =  \text{tr}_{\alpha} \text{tr}_{Y, > r}V_1 \circ \text{tr}_{X, \geq r} = \text{tr}_{Y, > r} \text{tr}_{\alpha} V \circ \text{tr}_{X, \geq r}$. 
\end{proof}

\begin{restatable}[Causality functions remove maximal elements]{lemma}{chikill}\label{lemma:chikill}
    Let $\chi$ be a causality function on a finite spacetime $\C{T}$. For all bottom-closed $\C{T}' \subseteq \C{T}$ and all $t \in \C{T}'$ which are maximal in $\C{T}'$, it holds that $t \not \in \chi(\C{T}')$.
\end{restatable}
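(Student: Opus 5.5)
The plan is to argue by contradiction, using only the axioms of \cref{def:chi}. Fix a bottom-closed $\C{T}' \subseteq \C{T}$ and an element $t \in \C{T}'$ that is maximal in $\C{T}'$. Suppose, for contradiction, that $t \in \chi(\C{T}')$.

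The first step is to split $\C{T}'$ so that $t$ is isolated. Set $\C{T}'' := \C{T}' \setminus \{t\}$. This set is bottom-closed: if $s \preceq s'$ with $s' \in \C{T}''$, then $s \in \C{T}'$ because $\C{T}'$ is bottom-closed. Moreover $s \neq t$, since otherwise $t \preceq s'$ with $s' \in \C{T}'$ and $s' \neq t$, contradicting maximality of $t$ in $\C{T}'$. Next, let $\downarrow t := \{s \in \C{T} : s \preceq t\}$. This is bottom-closed by transitivity and is contained in $\C{T}'$. By construction
\begin{equation}
\C{T}' = \C{T}'' \cup \downarrow t .
\end{equation}

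The second step applies the union and monotonicity axioms to this decomposition. If $\C{T}''$ is empty, then $\C{T}' = \downarrow t$. Otherwise the union axiom gives
\begin{equation}
\chi(\C{T}') = \chi(\C{T}'') \cup \chi(\downarrow t).
\end{equation}
Monotonicity gives $\chi(\C{T}'') \subseteq \C{T}''$, and this set does not contain $t$. So if $t \in \chi(\C{T}')$, then necessarily $t \in \chi(\downarrow t)$.

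The third step uses the strictness axiom together with the fact that $\chi$ maps to bottom-closed sets. Since $\chi(\downarrow t)$ is bottom-closed and contains $t$, it contains every $s \preceq t$. Hence $\chi(\downarrow t) \supseteq \downarrow t$. This contradicts $\chi(\downarrow t) \subsetneq \downarrow t$, so $t \notin \chi(\C{T}')$.

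The only delicate point is the non-emptiness hypothesis in \cref{def:chi}. Both $\downarrow t$ and $\C{T}'$ are non-empty because they contain $t$. The case $\C{T}'' = \emptyset$ is handled separately as above, and there the bottom-closure argument applies directly to $\chi(\C{T}')$. I do not expect any further obstacle.
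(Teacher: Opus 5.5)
Your proof is correct and follows the paper's argument essentially verbatim: split $\C{T}' = (\C{T}'\setminus\{t\}) \cup \C{T}^{\preceq t}$, apply the union axiom, and exclude $t$ from each piece using strictness together with the fact that $\C{T}^{\preceq t}$ is the smallest bottom-closed set containing $t$. One small correction: $\chi(\C{T}'')\subseteq\C{T}''$ comes from the strictness axiom $\chi(\C{T}'')\subsetneq\C{T}''$, not from monotonicity; on the other hand, your separate treatment of $\C{T}''=\emptyset$ is slightly more careful than the paper, which applies the union and strictness axioms to $\C{T}'\setminus\{t\}$ without checking that it is non-empty.
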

\begin{proof}
    We can write $\C{T}' = \C{T}' \backslash \{t\} \cup \C{T}^{\preceq t}$ where $\C{T}^{\preceq t}= \{t' \in \C{T}|t' \preceq t\}$ (which is bottom-closed by definition). The set $\C{T}' \backslash \{t\}$ is also bottom-closed as $\C{T}'$ is bottom-closed and $t$ is maximal in $\C{T}'$. Then, since $\chi$ respects unions of bottom-closed sets
    \begin{equation}
        \chi(\C{T}') = \chi(\C{T}' \backslash \{t\}) \cup \chi(\C{T}^{\preceq t})
    \end{equation}
    We have that $t \not \in \chi(\C{T}' \backslash \{t\}) \subsetneq \C{T}' \backslash \{t\}$. But we also have that $\chi(\C{T}^{\preceq t}) \subsetneq \C{T}^{\preceq t}$ is bottom-closed but $\C{T}^{\preceq t}$ is the smallest bottom-closed set that contains $t$. Hence, $t \not \in \chi(\C{T}^{\preceq t})$ and thus $t \not \in \chi(\C{T}')$.
\end{proof}

\begin{restatable}[Certain implies non-disturbing]{lemma}{nondist}\label{lemma:nondist}
    Let $\C{C}: \C{L}(\C{F}^{X, \C{T}}) \rightarrow \C{L}(\C{F}^{XY,\C{T}})$ be a normalised or subnormalised causal box and $\C{P}:\C{L}(\C{F}^{X, \C{T}}) \rightarrow \C{L}(\C{F}^{X,\C{T}} \otimes \C{H}^{R, T})$ a normalised pseudo-causal box such that
    \begin{equation}
        \C{P} = \sum_{n =0}^\infty \C{P}_n \otimes \ket{n,T}\bra{n,T}^R
    \end{equation}
    where $\C{P}_n$ is a projector for each $n$. 
    If
    \begin{equation}
        \lc{\C{C} \circ \C{P}_n} = 0
    \end{equation}
    for all $n \neq 0$, then
    \begin{equation}
        \lc{\C{C} \circ \C{P}_0} = \C{C}^{\hookrightarrow}.
    \end{equation}
\end{restatable}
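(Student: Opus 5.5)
The idea is that the measurement $\C{P}$ is trace-preserving with a classical outcome recorded at the final time $T$. Summing over all outcomes should therefore reproduce the unmeasured protocol. The hypothesis kills every outcome except $n=0$, so only $\C{P}_0$ survives.

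Concretely, I would first observe that $\text{tr}_R\circ\C{P}=\sum_n \C{P}_n$. By \cref{lemma:itscb}, $\C{C}\circ\C{P}$ is a (subnormalised) causal box, since $\C{C}$ is a causal box and $\C{P}$ is a pseudo-causal box composed along all of its input wires. Its loop composition $\lc{\C{C}\circ\C{P}}$ is therefore well defined. Because $R$ sits at the maximal time $T$ and is never fed back, tracing it out commutes with the loop. This gives
\begin{equation}
\text{tr}_R\lc{\C{C}\circ\C{P}}=\lc{\C{C}\circ\text{tr}_R\circ\C{P}} .
\end{equation}
Expanding $\C{P}$ and using linearity of loop composition in its argument, I would write
\begin{equation}
\lc{\C{C}\circ\C{P}}=\sum_n \lc{\C{C}\circ\C{P}_n}\otimes\ket{n,T}\bra{n,T} .
\end{equation}
The hypothesis then reduces this to $\lc{\C{C}\circ\C{P}_0}\otimes\ket{0,T}\bra{0,T}$, and tracing out $R$ yields $\lc{\C{C}\circ\text{tr}_R\circ\C{P}}=\lc{\C{C}\circ\C{P}_0}$.

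It remains to identify the left-hand side with $\C{C}^{\hookrightarrow}$. Since $\text{tr}_R\circ\C{P}=\sum_n\C{P}_n$ and the $\C{P}_n$ are projectors of a trace-preserving map, one expects $\sum_n\C{P}_n$ to act as a decohering map that commutes with the counting. I would handle this more carefully than that, though. The needed fact is that feeding $\sum_n\C{P}_n$ into the loop gives the same result as the identity whenever only one outcome contributes.

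My plan here is to use the representation of \cref{lemma:looprep} with a product basis adapted to the sequence decomposition. I would exploit the cross terms:
\begin{equation}
\C{C}^{\hookrightarrow}=\lc{\C{C}\circ\text{id}}=\sum_{n,m}\lc{\C{C}\circ(P_n\cdot P_m)} .
\end{equation}
Each cross term is bounded via Cauchy–Schwarz, or positivity of the loop composed with the purified box from \cref{lemma:purification}, by the diagonal terms $\lc{\C{C}\circ\C{P}_n}$. Those diagonal terms vanish for $n\neq 0$, so only the $(0,0)$ term survives and the claim follows.

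The main obstacle is justifying this positivity and Cauchy–Schwarz step for loop compositions of non-CP "off-diagonal" maps $P_n\cdot P_m$. I would try to do it by purifying $\C{C}$ to an isometry $V$ and writing $V'=\sum_i\bra{i}V\ket{i}$ as in the proof of \cref{lemma:looprep}. The loop of $\C{C}\circ(P_n\cdot P_m)$ then becomes $\text{tr}_\alpha(V'_n\cdot V_m'^{\dagger})$ with $V'_n$ the correspondingly projected operators. Vanishing of $\text{tr}(V'_n\rho V_n'^{\dagger})$ forces $V'_n=0$ on the relevant states, which kills every cross term. Infinite sums are handled by the convergence argument already used in \cref{lemma:looprep}.
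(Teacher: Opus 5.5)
Your proposal is correct and essentially the paper's proof. The opening step, tracing out $R$ to get $\lc{\C{C}\circ\sum_n\C{P}_n}=\lc{\C{C}\circ\C{P}_0}$, is true but not needed. The core is what the paper does: purify $\C{C}$ via \cref{lemma:purification}, and note that the loop of $\C{C}\circ\C{P}_n$ is $\text{tr}_\alpha$ of the positive rank-one operator built from $V'_n=\sum_i\bra{i}VP_n\ket{i}$. The hypothesis therefore forces $V'_n=0$ for $n\neq 0$, and inserting $\mathbb{1}=\sum_n P_n$ then kills every term except the $(0,0)$ one.

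The only adjustment is where you swap $\sum_i$ with $\sum_n$. The paper does this in a basis that diagonalises the $P_n$, appealing to absolute convergence and basis-independence of the loop sum. You instead invoke the time-slice product basis of \cref{lemma:looprep}, which in general need not diagonalise the $P_n$.
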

\begin{proof}
    Consider a purification of $\C{C}$, $V: \C{F}^{X, \C{T}} \rightarrow \C{F}^{XY,\C{T}} \otimes \C{F}^{\alpha, T}$ where $T \succ t$ for all $t \in \C{T}$. This can be chosen to be a causal box due to \cref{lemma:purification} for any choice of finite-dimensional $\C{H}^{\alpha}$. For $n \neq 0$, we have (writing $\C{P}_n = P_n \cdot P_n$)
    \begin{gather}
    \begin{aligned}
        0 &= \lc{\C{C} \circ \C{P}_n} \\
        &= \lc{\text{tr}_{\alpha} V \circ \C{P}_n} \\
        &= \text{tr}_{\alpha} \circ \lc{V P_n \cdot P_n V^\dagger}
    \end{aligned}
    \end{gather}
    where we used associativity of the link product in the last line. 
    
    As $\text{tr}_{\alpha}$ is completely positive this implies that $\lc{V P_n \cdot P_n V^\dagger}=0$. Calculating the Choi-Jamiołkowski representation of the latter, we find for any $\ket{\psi}, \ket{\varphi} \in \F{Y}{\C{T}} \otimes \F{\alpha}{T} \otimes \F{R}{T}$ and any choice of basis $\ket{i}$ of $\F{X}{\C{T}}$
    \begin{gather}
    \begin{aligned}
        0 &= \sum_{ij} \bra{\psi} \bra{i} V P_n \ket{i} \bra{j} P_n V^\dagger \ket{j} \ket{\varphi} \\
        &= \sum_{i} \bra{\psi} \bra{i} V P_n \ket{i} \sum_j \bra{j} P_n V^\dagger \ket{j} \ket{\varphi}.
    \end{aligned}
    \end{gather}
    This can only be true if $\sum_{i} \bra{\psi} \bra{i} V P_n \ket{i} = 0$ for all $\ket{\psi} \in \F{Y}{\C{T}} \otimes \F{\alpha}{T} \otimes \F{R}{T}$. We then find using that $\sum_n P_n$ is a decomposition of unity (this follows from the fact that $\C{P}$ is normalised)
    \begin{gather}
    \begin{aligned}
        R_{V^{\hookrightarrow}}(\ket{\psi}, \ket{\varphi}) &= \sum_{ij}  \bra{\psi} \bra{i} V \ket{i} \bra{j} V^\dagger \ket{j} \ket{\varphi} \\
        &= \sum_{ij} \sum_{nm} \bra{\psi} \bra{i} V P_n \ket{i} \bra{j} P_m V^\dagger \ket{j} \ket{\varphi} \\
        &= \sum_{nm} \sum_{ij} \bra{\psi} \bra{i} V P_n \ket{i} \bra{j} P_m V^\dagger \ket{j} \ket{\varphi} \\
        &= \sum_{ij} \bra{\psi} \bra{i} V P_0 \ket{i} \bra{j} P_0 V^\dagger \ket{j} \ket{\varphi} \\
        &= R_{(V \circ P_0)^{\hookrightarrow}}(\ket{\psi}, \ket{\varphi}).
    \end{aligned}
    \end{gather}
    Here, we used that the sum converges absolutely and to the same value for any choice of basis. Hence, we can choose the basis to be one such that $P_n \ket{i} \in \{0, 1\}$ in which case the exchange of sums corresponds simply to a different order of summation which is allowed due to absolute convergence. 

    We thus find $V^{\hookrightarrow} = (V \circ P_0)^{\hookrightarrow}$ and the statement follows after tracing out the purifying system.
    \end{proof}

\begin{restatable}{coro}{normalise}\label{coro:normalise}
    Let $\C{C}: \C{L}(\C{F}^{X, \C{T}}) \rightarrow \C{L}(\C{F}^{XY,\C{T}})$ be a normalised causal box and $\C{P}:\C{L}(\C{F}^{X, \C{T}}) \rightarrow \C{L}(\C{F}^{X,\C{T}} \otimes \C{H}^{R, T})$ a normalised pseudo-causal box such that
    \begin{equation}
        \C{P} = \sum_{n =0}^\infty \C{P}_n \otimes \ket{n,T}\bra{n,T}^R
    \end{equation}
    where $\C{P}_n$ is a projector for each $n$. If
    \begin{equation}
        \text{tr}(\lc{\C{C} \circ \C{P}_0}) = 1
    \end{equation}
    then 
    \begin{equation}
        \lc{\C{C} \circ \C{P}_0} = \C{C}^{\hookrightarrow}.
    \end{equation}
\end{restatable}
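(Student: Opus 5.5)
The plan is to reduce the corollary to \cref{lemma:nondist}: it suffices to show that $\lc{\C{C} \circ \C{P}_n} = 0$ for every $n \neq 0$. To get this, I would use normalisation of the full loop composition together with positivity and additivity of the terms indexed by $n$.

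First, $\C{P}$ is a normalised pseudo-causal box and $\C{C}$ is a normalised causal box. Composing all output wires $X$ of $\C{P}$ with the inputs of $\C{C}$ therefore gives a normalised causal box by \cref{lemma:itscb}. Its full loop composition $\lc{\C{C} \circ \C{P}}$ is again a normalised causal box with trivial input, i.e.\ a normalised state, on $Y$ together with the outcome wire $R$ at $T$. Hence $\text{tr}\,\lc{\C{C}\circ\C{P}} = 1$.

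Second, since the outcome wire $R$ is not looped, the flag $\ket{n,T}\bra{n,T}^R$ simply factors out of the loop composition. I would verify this from the Choi-form definition in \cref{def:comp}: the sesquilinear form is linear in $\C{C}\circ\C{P}$, and the $R$-component only enters through the output vectors. This gives
\begin{equation}
\lc{\C{C}\circ\C{P}} = \sum_{n} \lc{\C{C}\circ\C{P}_n} \otimes \ket{n,T}\bra{n,T}^R .
\end{equation}
Each $\lc{\C{C}\circ\C{P}_n}$ is a (subnormalised) causal box output. It is therefore a positive operator, being obtained by projecting onto the outcome $n$ of the normalised state above. Taking traces gives $1 = \sum_n \text{tr}\,\lc{\C{C}\circ\C{P}_n}$, with every term nonnegative. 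By hypothesis the $n=0$ term equals $1$, so $\text{tr}\,\lc{\C{C}\circ\C{P}_n}=0$ for all $n\neq 0$. A positive operator with zero trace vanishes, so $\lc{\C{C}\circ\C{P}_n}=0$, and \cref{lemma:nondist} then yields $\lc{\C{C}\circ\C{P}_0}=\C{C}^{\hookrightarrow}$.

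The main obstacle is the second step, namely justifying that the decomposition over $n$ passes through the (possibly infinite) loop composition. This is where the infinite sum over $n$ and the infinite-dimensional Fock space sum over the loop basis must be interchanged. I would handle it by working in the alternative representation of \cref{lemma:looprep} for the causal box $\C{C}\circ\C{P}$, using a product basis adapted to its sequence representation. The sandwiching by $\bra{n,T}^R\cdot\ket{n,T}^R$ is a continuous linear map, and it commutes with the convergent sum defining the loop composition, exactly as in the convergence argument in the proof of \cref{lemma:looprep}. Trace-normalisation then holds because the total object is a state. The sum over $n$ of the traces converges to $1$ by $\sigma$-additivity of the trace on positive trace-class operators.
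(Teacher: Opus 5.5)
Your proposal is correct and follows the paper's own proof. You use normalisation of $\lc{\C{C}\circ\C{P}}$ and positivity of each $\lc{\C{C}\circ\C{P}_n}$ to force $\lc{\C{C}\circ\C{P}_n}=0$ for $n\neq 0$, and then invoke \cref{lemma:nondist}. Your added justification that the outcome flag on $R$ factors through the loop composition makes explicit a step the paper takes silently in its second equality.
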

\begin{proof}
    Since $\C{P}$ is normalised, we have that 
    \begin{gather}
    \begin{aligned}
        1 &= \text{tr}(\lc{\C{C} \circ \C{P}}) \\
        &= \sum_{n=0}^\infty \text{tr} (\lc{\C{C} \circ \C{P}_n}) \\
        &= \text{tr}(\lc{\C{C} \circ \C{P}_0}) + \sum_{n=1}^\infty \text{tr} (\lc{\C{C} \circ \C{P}_n}) \\
        &= 1 + \sum_{n=1}^\infty \text{tr} (\lc{\C{C} \circ \C{P}_n})
    \end{aligned}
    \end{gather}
    Since $\C{P}_n$ is a subnormalised pseudo-causal box, $ \text{tr} (\lc{\C{C} \circ \C{P}_n})$ is non-negative. Thus, the above equation implies that
    \begin{equation}
        \text{tr} (\lc{\C{C} \circ \C{P}_n}) = 0 
    \end{equation}
    for $n \neq 0$ which is only possible if
    \begin{equation}
         \lc{\C{C} \circ \C{P}_n} = 0
    \end{equation}
    as $\lc{\C{C} \circ \C{P}_n}$ must be a (subnormalised) state. Hence, the conditions of \cref{lemma:nondist} are satisfied and we arrive at the conclusion.
\end{proof}

\subsection{Useful properties of process box protocols}\label{app:pbp}
\begin{restatable}[Normalisation condition for AO]{coro}{aonormalise}\label{coro:aonormalise}
    Let $\PBP$ be a causal box protocol. Then, it satisfies AO if for all $k, x_k$
    \begin{equation}
        \text{tr} (\CM{\C{C}}{(\C{P}^O_k \circ \M \circ \C{P}^I_k)}) =  1.
    \end{equation}
\end{restatable}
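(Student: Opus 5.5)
The plan is to derive the statement from \cref{coro:normalise}, applied once for each agent $k$ and setting $x_k$. The roles are as follows. The normalised causal box is the composition of $\C{C}$ with all agent operations except the relevant input projector. The pseudo-causal box $\C{P}$ is the message-counting measurement from the proof of \cref{lemma:projcb}. This is the isometry $V^{I,\C{T}}_{k,count}$ (with $\C{T}'=\C{T}$) that records the number of messages in a register at $T$. It has the form $\sum_n \C{P}_n\otimes\ket{n,T}\bra{n,T}$, where the outcome ``one message'' gives $\C{P}_1=\C{P}^I_k$.

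Since we want a statement about the doubly projected composition, I will proceed in two stages. First, take the causal box to be $\C{C}\circ\bigotimes_{j}(\C{P}^O_j\circ\M[j]\circ\ldots)$ with the input count for agent $k$ left open. Relabel the outcomes so that the one-message outcome plays the role of $\C{P}_0$. The hypothesis $\text{tr}(\cdot)=1$ then gives, via \cref{coro:normalise}, that inserting $\C{P}^I_k$ is non-disturbing. Second, repeat the argument with the output counter $V^{O}_{k,count}$ to remove $\C{P}^O_k$.

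Formally, for a fixed $k$, I will show that
\[
\CM{\C{C}}{(\C{P}^O_k\circ\M\circ\C{P}^I_k)}=\CM{\C{C}}{\M},
\]
where the projectors act only on agent $k$ in the composition and the other agents use their unprojected operations. This is precisely \cref{def:ao} read per agent. The steps are as follows.

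(i) Since $\C{P}^O_k\circ\M\circ\C{P}^I_k$ is obtained from $\M$ by post-selecting outcomes of normalised pseudo-causal boxes, the full counting map $\sum_{n,m}\C{P}^{O}_{k,m}\circ\M\circ\C{P}^I_{k,n}\otimes\ket{n,m}\bra{n,m}$ is a normalised map. Composing it with the other agents and looping into $\C{C}$ gives total trace $1$. The subtle point is that composing a pseudo-causal box with a causal box yields a causal box only when all wires of one side are composed (\cref{lemma:itscb}), so the loop remains well defined. This holds here because the counter is sandwiched fully on agent $k$'s input (respectively output) wires.

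(ii) Each term with $(n,m)\neq(1,1)$ is a positive trace contribution. Since the $(1,1)$ term already has trace $1$ by hypothesis, all other terms have trace zero and therefore vanish.

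(iii) Apply \cref{lemma:nondist}, or directly \cref{coro:normalise}, with the joint projector family $\{\C{P}^O_{k,m}\otimes\C{P}^I_{k,n}\}$ viewed as a single decomposition of unity. The surviving term then equals the unprojected composition.

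The main obstacle is step (iii). The two projectors sit on opposite sides of $\M$, so they do not form a single projector family acting on one wire of one causal box, as \cref{lemma:nondist} requires. I would handle this by treating $\C{P}^I_k$ as acting on the output wire $A^I_k$ of the causal box $\C{C}\circ\bigotimes_{j\neq k}\M[j]$, and $\C{P}^O_k$ as acting on its input wire $A^O_k$. This makes the tensor product $\C{P}^I_k\otimes\C{P}^O_k$ a projector on the combined looped wire $(A^I_kA^O_k)$ of the causal box $\lc{\C{C}\circ\bigotimes_{j\neq k}\M[j]}\otimes\M$, after rewriting loop composition cyclically. This is exactly the situation of \cref{lemma:nondist}, with $X=A^I_kA^O_k$. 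Its proof goes through verbatim because the counting map $V^I_{k,count}\otimes V^O_{k,count}$ is a normalised pseudo-causal box with projector-valued outcomes. Finally, I will iterate over all $k$; equivalently, the hypothesis for all $k,x_k$ gives AO agent by agent.
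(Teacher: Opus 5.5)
Your mechanism is the paper's. The counting isometries of \cref{lemma:projcb} form a normalised pseudo-causal box with projector-valued outcomes. The projectors sit on looped wires of a parallel composition, and \cref{coro:normalise} (via \cref{lemma:nondist}) removes them.

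The gap is in what you apply this to. In \cref{def:ao} and in the hypothesis, $\CM{\C{C}}{(\C{P}^O_k \circ \M \circ \C{P}^I_k)}$ means $\lc{\C{C}\circ\bigotimes_{k=0}^{N+1}(\C{P}^O_k\circ\M\circ\C{P}^I_k)}$. Every agent is projected at the same time, and ``for all $k,x_k$'' ranges only over the settings. So AO is not ``\cref{def:ao} read per agent''. Your per-agent identities, with agent $k$ projected and the others not, do not yield it.

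Your closing step ``iterate over all $k$'' is also unjustified. Once agent $1$'s projectors are inserted, $\C{P}^O_1\circ\C{M}^{A_1}_{x_1}\circ\C{P}^I_1$ is a subnormalised map, not an allowed operation. Your per-agent result for agent $2$ therefore cannot be reapplied. You would need the trace with both agents' projectors inserted to equal $1$, and no argument you give supplies this. At the level of the loop-composed vectors, the single-agent conditions only constrain sums over the other agent's message counts, which leaves room for cancellations between branches. Note too that the premise of your step (ii), trace $1$ with only agent $k$ projected, is not literally the hypothesis.

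The repair is to run your step (iii) once with $X=\bigotimes_k A^I_kA^O_k$. Take $\C{C}\otimes\bigotimes_{k=0}^{N+1}\M$ as the normalised causal box and $\bigotimes_k(\C{V}^{I,\C{T}}_{k,count}\otimes\C{V}^{O,\C{T}}_{k,count})$ as the normalised pseudo-causal box. Its all-ones outcome is $\bigotimes_k(\C{P}^I_k\otimes\C{P}^O_k)$, so the hypothesis is exactly the trace condition of \cref{coro:normalise}, and AO follows in one step. This is the paper's proof. It also removes your ``main obstacle'' without any cyclic rewriting, because in the full parallel composition each $A^I_k$ and $A^O_k$ is both an input and an output to be looped.

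If you want to keep single-agent counters, run the iteration in the other direction. Start from the fully projected composition, whose trace is $1$ by hypothesis, and peel off one agent's projectors at a time. At each step the remaining box is subnormalised, so its total trace over that agent's counts is at most $1$. The all-ones outcome already has trace $1$, so the other outcomes vanish, and \cref{lemma:nondist}, which allows subnormalised boxes, removes the projectors.

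Your two-stage variant lacks exactly that observation. It feeds a box already containing the subnormalised factors $\C{P}^O_j\circ\C{M}^{A_j}_{x_j}$ into \cref{coro:normalise}, which is stated only for normalised boxes.
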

\begin{proof}
This follows immediately from \cref{coro:normalise}. We choose $\C{C} \otimes \bigotimes_{k=0}^{N+1} \M$ as the normalised causal box and $\C{V}^I_{count} \otimes \C{V}^O_{count}$ as the normalised pseudo-causal box, where 
\begin{equation}
\C{V}^{I/O}_{count} := \bigotimes_{k=0}^{N+1} \C{V}^{I/O, \C{T}}_{k,count}
\end{equation}
and $\C{V}^{I/O, \C{T}}_{k,count} = V^{I/O, \C{T}}_{k,count} \cdot V^{I/O, \C{T} \dagger}_{k,count}$ (see \cref{eq:msgcount} for the definition). The condition in \cref{coro:normalise} then holds for the outcome $\bigotimes_{k=0}^{N+1} (\C{P}^I_k \otimes \C{P}^O_k)$ by assumption. Thus, we have that
\begin{equation}
    \lc{(\bigotimes_{k=0}^{N+1} (\C{P}^I_k \otimes \C{P}^O_k))\circ (\C{C} \otimes \bigotimes_{k=0}^{N+1} \M)} = \CM{\C{C}}{(\C{P}^O_k \circ \M \circ \C{P}^I_k)} = \CM{\C{C}}{\M}.
\end{equation}
Since this holds for all $k, x_k$, the causal box protocol satisfies AO.
\end{proof}

\begin{restatable}[Comb decomposition of LO]{lemma}{lopseudocb}\label{lemma:lopseudocb}
    The maps 
    \begin{equation}\label{eq:loin}
    \sum_{t \in \C{T}^I_k} P^{I, t}_k \otimes \ket{0, t}^\alpha
\end{equation}
and
\begin{equation}\label{eq:loout}
    \sum_{t \in \C{T}^O_k} P^{O, t}_k \otimes \bra{0, t}^\alpha.
\end{equation}
where $\C{H}^{\alpha} = \mathbb{C}$ are subnormalised pseudo-causal boxes. From this it follows that $\C{P}_{eff,k}(\M)$ (related to the above maps via \cref{eq:lodecomp}) is a subnormalised causal box.
\end{restatable}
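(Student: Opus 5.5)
The plan is to prove the two claims in turn. First, the encoder \cref{eq:loin} and decoder \cref{eq:loout} are subnormalised pseudo-causal boxes. I will show this the same way as in \cref{lemma:projcb}: each map is exhibited as one outcome of a normalised pseudo-causal box. Second, the composite $\C{P}_{eff,k}(\M)$ is a subnormalised causal box. This will follow from the closure properties in \cref{lemma:itscb}, once I check that the memory $\C{O}_k$ is a genuine causal box.

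For the encoder, the key observation is that $\sum_{t\in\C{T}^I_k} P^{I,t}_k \otimes \ket{0,t}^\alpha$ equals $\sum_t P^{I,t}_k$ followed by an isometric ``copy the time stamp into $\alpha$'' map on the range of that projector. Here $\sum_t P^{I,t}_k = P^I_k$ is the 1-message projector. The planned steps are:
\begin{enumerate}
\item Take the normalised pseudo-causal box $V^{I,\C{T}}_{k,count}$ from \cref{eq:msgcount}, whose outcome $\ket{1,0,T}$ gives $\C{P}^I_k$.
\item Compose it with a controlled isometry $W$. On the 1-message branch, $W$ maps $\ket{\psi,t}\mapsto \ket{\psi,t}\otimes\ket{0,t}^\alpha$; on all other count outcomes, $W$ acts as the identity tensored with $\ket{\Omega}^\alpha$.
\item Check pseudo-causality of $W$ directly. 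On one-message states, the output at positions in a bottom-closed $\C{T}'$ depends only on the input restricted to $\C{T}'$, since the $\alpha$-message sits at the same $t$ as the input message. The check is therefore the same $\delta$-bookkeeping used for $V_{count}$, now with the four-region split $\C{T}_1,\dots,\C{T}_4$.
\item Conclude by \cref{lemma:itscb} (sequential composition) that the composite is a normalised pseudo-causal box, whose outcome ``count $=1$'' is \cref{eq:loin}.
\end{enumerate}

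The decoder is analogous but runs in the reverse direction. I first apply the output counting box on $A^O_k$ and on $\alpha$ jointly, and post-select on exactly one message in each. On that branch I apply the isometry $\ket{\psi,t}\ket{0,t'}^\alpha \mapsto \ket{\psi,t}\otimes\ket{\delta_{t,t'},T}$, which writes a flag at the final time $T$. Post-selecting the flag on $1$ then gives $\sum_t P^{O,t}_k\otimes\bra{0,t}^\alpha$. Every operation here is time-local or only writes at $T$, so pseudo-causality holds by the argument of \cref{lemma:projcb}. A subtlety is that the paper takes $\C{H}^\alpha=\mathbb{C}$, so the memory is a Fock space; this is why the outcome must be formulated as ``one $\alpha$-message''.

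For the final claim, $\C{O}_k$ is the map $\ket{0,t}^\alpha \mapsto \ket{0,\C{O}_k(t)}^\alpha$, restricted to one-message states. Because $\C{O}_k$ is injective, this map is an isometry on the one-message space. Because $\C{O}_k(t)\succ t$, it satisfies the strict condition \cref{eq:extendone4}. By \cref{lemma:extension} it therefore extends to a causal box. This uses a totally ordered $\C{T}$; for a general partial order I would adapt the $V_1,V_2$ splitting in that proof to bottom-closed sets. Tensoring $\C{O}_k$ with $\M$ gives a causal box. Sandwiching it between the pseudo-causal encoder and decoder keeps it a subnormalised causal box by \cref{lemma:itscb}, since the causal box sits in the middle and each sequential composition involves one causal box. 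This yields \cref{eq:lodecomp}.

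The main obstacle I expect is the extension of $\C{O}_k$ in the partial-order case, together with the careful pseudo-causality check for the decoder. For the decoder, the $\alpha$ time stamp and the $A^O_k$ time stamp are compared nonlocally in time, but the comparison is only written at $T$. I will argue this is harmless exactly as for the count register in \cref{lemma:projcb}.
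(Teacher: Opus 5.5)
Your overall strategy is the paper's: exhibit each map as an outcome of a normalised pseudo-causal box, then apply \cref{lemma:itscb}. However, both of your normalised witnesses fail pseudo-causality, because they condition on a \emph{global} message count that is stored at $T$. Take the encoder first. Your $W$ writes $\ket{0,t}^\alpha$ at time $t$ only on the count-$1$ branch, so the $\alpha$-output at $t$ depends on inputs at later times. Concretely, pick $t_1\prec t_2$ in $\C{T}^I_k$, let $\C{T}'=\{s : s\preceq t_1\}$, and feed in one message at $t_1$ and one at $t_2$. Then $W\circ V^{I,\C{T}}_{k,count}$ records count $2$ and leaves $\alpha$ in vacuum, so the reduced output on $\C{T}'$ has $\alpha$ empty. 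If you first apply $\text{tr}_{\C{T}\setminus\C{T}'}$, the count becomes $1$ and $\alpha$ carries $\ket{0,t_1}$. Hence \cref{eq:pseudo-causality} fails. Your step 3 checks only one-message inputs, which is not enough: the trace in \cref{eq:pseudo-causality} replaces the traced region by vacuum and so changes the message number. This is exactly the mechanism of \cref{ex:sigproj}. The paper avoids it by making the tag unconditional and time-local,
$\bigotimes_{t}\bigl((\mathbb{1}^{A^I_k,t}-\ket{\Omega,t}\bra{\Omega,t})\otimes\ket{0,t}^\alpha+\ket{\Omega,t}\bra{\Omega,t}\otimes\ket{\Omega,t}^\alpha\bigr)$,
which is trivially a normalised pseudo-causal box, and only afterwards applying $P^I_k$ (\cref{lemma:projcb}). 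On the one-message range the tag sits exactly at the message's time, which gives \cref{eq:loin}.

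Your decoder has the same flaw, and the appeal to the count register does not carry over. A record at $T$ is harmless there only because counts are additive over $\C{T}'$ and its complement; this is the step $\delta_{n_1+n_3,m_1+m_3}\delta_{n_3,m_3}=\delta_{n_1,m_1}\delta_{n_3,m_3}$ in the proof of \cref{lemma:projcb}. \cref{ex:sigproj} shows a non-additive record at $T$ that signals backwards. Your flag is computed only on the count-$(1,1)$ branch, and that is not additive. Let $A^O_k$ carry one message in superposition of $t_a,t_b\in\C{T}'$, and let $\alpha$ carry messages at $t_a$ and at some $t_c\notin\C{T}'$. The count is $(1,2)$, no flag is computed, and the $t_a$/$t_b$ coherence on $A^O_k$ survives. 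If you first trace out $\C{T}\setminus\C{T}'$, the count becomes $(1,1)$ and the flag ($1$ versus $0$) dephases that coherence. This could only be avoided if the other branches performed a matching comparison, which is no longer your construction. Separately, $\ket{\psi,t}\ket{0,t'}^\alpha\mapsto\ket{\psi,t}\ket{\delta_{t,t'},T}$ is not injective, so it is not an isometry.

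The paper instead uses the isometry $V_{comp,k}$, which compares the occupation numbers of $A^O_k$ and $\alpha$ separately at each $t$ and writes the discrepancy into ancillas $\alpha,\beta$ at that same $t$. It then post-selects these on the vacuum product state $\bra{\Omega,\C{T}}^{\alpha\beta}$ and composes with $P^O_k$; time-locality makes pseudo-causality immediate.

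Your last step (encoder, then the causal box $\M\otimes\C{O}_k$, then decoder, via \cref{lemma:itscb}) is the paper's. For it you need neither \cref{lemma:extension} nor a total order. Relabelling each message $t\mapsto\C{O}_k(t)$ on the full Fock space is an isometry by injectivity. The bottom closure of $\C{O}_k^{-1}(\C{T}')$ is a causality function for it, and it is a proper subset because $\C{O}_k(t)\succ t$.
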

\begin{proof}
    The map in \cref{eq:loin} is equal to
    \begin{equation}
        P^I_k \circ \bigotimes_{t\in \C{T}^I_k} ((\mathbb{1}^{A^I_k,t} - \ket{\Omega, t}\bra{\Omega, t}^{A^I_k})   \otimes \ket{0, t}^\alpha + \ket{\Omega, t}\bra{\Omega, t}^{A^I_k} \otimes \ket{\Omega, t}^\alpha)
    \end{equation}
    where $\mathbb{1}^{A^I_k, t}$ is the identity on $\C{F}^{A^I_k, t}$ and $P^I_k$ projects $\FI$ onto its one-message space $\C{H}^{A^I_k, \C{T}^I_k}$. To see this, notice that $P^I_k = \sum_{t\in \C{T}^I_k} P^{I, t}_k$ and that 
    \begin{gather}
    \begin{aligned}
        P^{I, t}_k &\circ \bigotimes_{t'\in \C{T}^I_k} ((\mathbb{1}^{A^I_k,t'} - \ket{\Omega, t'}\bra{\Omega, t'}^{A^I_k})   \otimes \ket{0, t'}^\alpha + \ket{\Omega, t'}\bra{\Omega, t'}^{A^I_k} \otimes \ket{\Omega, t'}^\alpha) \\
        &=P^{I, t}_k \circ ((\mathbb{1}^{A^I_k,t} - \ket{\Omega, t}\bra{\Omega, t}^{A^I_k})   \otimes \ket{0, t}^\alpha \otimes \bigotimes_{t' \neq t} \ket{\Omega, t}\bra{\Omega, t}^{A^I_k} \otimes \ket{\Omega, t}^\alpha) \\
        &= P^{I, t}_k \circ (\mathbb{1}^{A^I_k,t}    \otimes \ket{0, t}^\alpha \otimes \bigotimes_{t' \neq t} \ket{\Omega, t}\bra{\Omega, t}^{A^I_k} \otimes \ket{\Omega, t}^\alpha)\\
        &= P^{I, t}_k \otimes \ket{0, t}^\alpha
    \end{aligned}
    \end{gather}
    where we used the wire isomorphism in the last equality.
    
    Note that $P^I_k$ is a subnormalised pseudo-causal box due to \cref{lemma:projcb} while $\bigotimes_{t\in \C{T}^I_k} ((\mathbb{1}^{A^I_k,t} - \ket{\Omega, t}\bra{\Omega, t}^{A^I_k})   \otimes \ket{0, t}^\alpha + \ket{\Omega, t}\bra{\Omega, t}^{A^I_k} \otimes \ket{\Omega, t}^\alpha)$ is a normalised pseudo-causal box, hence, by \cref{lemma:itscb} their composition is again a subnormalised pseudo-causal box.
    
    On the other hand, \cref{eq:loout} is equal to (where $P_k^O= \sum_{t\in \C{T}^O_k} P^{O, t}_k$)
    \begin{equation}\label{eq:loout2}
        \bra{\Omega, \C{T}}^{\alpha \beta} V_{comp, k} \circ P^O_k
    \end{equation}
    where $V_{comp,k}$ acts locally at each time step $t$ as (where $\ket{\psi_n, t}$ is an $n$-message state)  
    \begin{equation}
        V_{comp,k} (\ket{\psi_n, t}^{A^O_k} \bigotimes_m \ket{0, t}^\alpha) = \begin{cases} \ket{\psi_n, t}^{A^O_k} \bigotimes_{m-n} \ket{0, t}^\alpha \otimes \ket{\Omega,t}^{\beta}, &m \geq n \\
        \ket{\psi_n, t}^{A^O_k} \bigotimes_{n-m} \ket{0, t}^\alpha \otimes \ket{0,t}^{\beta}, &m < n\end{cases}%
    \end{equation}
    In the above, the zero-fold tensor product on $\alpha$ corresponds to the vacuum state $\ket{\Omega,t}^\alpha$, hence this state is obtained whenever $m=n$.
    This is an isometry and a pseudo-causal box as the output at $t$ only depends on the input at $t$. Hence, $\bra{\Omega, \C{T}}^{\alpha \beta} V_{comp, k}$ is a subnormalised pseudo-causal box by definition and \cref{eq:loout2} is also a subnormalised pseudo-causal box by \cref{lemma:projcb,lemma:itscb}.

    To see that \cref{eq:loout,eq:loout2} are indeed equal note that $P^O_k$ projects onto the subspace where there is exactly one message on $A^O_k$ anywhere in $\C{T}^O_k$ and that $\bra{\Omega, \C{T}}^{\alpha \beta} V_{comp, k}$ projects onto the subspace where there is an equal number of messages on $A^O_k$ and on $\alpha$ at every time step. Thus, both together project onto the subspace where there is exactly one message on $A^O_k$ and $\alpha$ at the same $t$, which is precisely what \cref{eq:loout} does.

    Hence, $\C{P}_{eff, k}(\M)$ is a subnormalised causal box by \cref{lemma:itscb} as the sequential composition of a subnormalised pseudo-causal box, a causal box and then again a subnormalised pseudo-causal box.
\end{proof}

\aloimpliesnlo*

\begin{proof}
    Using that $\M$ satisfies ALO, we immediately find  
    \begin{equation}
        \C{P}^{O, r}_k \circ \M \circ \C{P}^{I, t}_k = \C{P}^{O, r}_k \circ \C{P}^{O, \C{O}_k(t)}_k \circ  \M \circ \C{P}^{I, t}_k= \delta_{r, \C{O}_k(t)} \C{P}^{O, \C{O}_k(t)}_k \circ  \M \circ \C{P}^{I, t}_k = \delta_{r, \C{O}_k(t)} \M \circ \C{P}^{I, t}_k.
    \end{equation}
   
    Let now $V_k: \FI \rightarrow \FO \otimes \C{H}^{\alpha}$ be a purification of $\M$. Then, due to the above and the positivity of the trace we have
    \begin{equation}\label{eq:purealo}
        P^{O, r}_k \circ V_k \circ P^{I, t}_k = \delta_{r, \C{O}_k(t)} V_k \circ P^{I, t}_k.
    \end{equation}
    Note that this implies the weaker statement
    \begin{equation}
        V_k \circ P^I_k = P^O_k \circ V_k \circ P^I_k.
    \end{equation}
    Plugging in $P^{I/O}_k = \sum_{t\in \C{T}^{I/O}_k} P^{I/O, t}_k$ on the RHS,
    \begin{gather}
    \begin{aligned}
        V_k \circ P^I_k &= \sum_{t \in \C{T}^I_k, t' \in \C{T}^O_k} P^{O, t'}_k \circ V_k \circ P^{I, t}_k \\
        &= \sum_{t \in \C{T}^I_k} P^{O, \C{O}_k(t)}_k \circ V_k \circ P^{I, t}_k \\
        &= P_{eff,k}(V_k)
    \end{aligned}
    \end{gather}
    where we used \cref{eq:purealo}. Tracing out the purifying system $\alpha$, we find $\M \circ \C{P}^I_k = \C{P}_{eff,k}(\M)$.
\end{proof}

\begin{restatable}[Process box protocols are behaviourally equivalent to protocols satisfying ALO]{lemma}{nlotoalo}\label{lemma:nlotoalo}
    Let $\PBP$ be a process box protocol on a totally ordered spacetime. Then, there exists a process box protocol $\mathfrak{P}' = (\C{C}, (\{\Mp\}_{x_k})_{k=0}^{N+1})$ which satisfies ALO and is behaviourally equivalent to $\mathfrak{P}$. 
\end{restatable}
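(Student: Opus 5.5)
The plan is to keep the multi-partite causal box $\C{C}$ unchanged and replace each agent operation by its ``LO-projected'' version, extended to a valid causal box on the whole Fock space. For each agent $A_k$ with non-trivial input and output and each setting $x_k$, we take the map $\C{P}_{eff,k}(\M)$ restricted to the one-message input space $\C{H}^{A^I_k,\C{T}^I_k}$. By definition of $\C{P}_{eff,k}$, this restriction sends a one-message input at $t$ into $\C{L}(\C{H}^{A^O_k,\C{O}_k(t)})$, tensored with the outcome wire at $T$. It therefore maps $\C{L}(\C{H}^{A^I_k,\C{T}})$ into $\C{L}(\C{H}^{A^O_k,\C{T}}\otimes\C{H}^{R_k^{x_k},T})$, with outputs at times $\C{O}_k(t)\succ t$.

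This restriction is only trace non-increasing, since the LO projection can cut weight. So first we complete it to a CPTP map $\C{N}_{k,x_k}$ on $\C{L}(\C{H}^{A^I_k,\C{T}})$. We add a term that sends the discarded weight from input time $t$ to a fixed output state at $\C{O}_k(t)$ (for instance $\ket{0,\C{O}_k(t)}$, with a default outcome), using the completing effect $\mathbb{1}-\C{P}_{eff,k}(\M)^\dagger(\mathbb{1})$, which is block-diagonal in $t$. One must check that this completion is well defined and respects $t\mapsto\C{O}_k(t)$ blockwise. This should follow because $\C{P}_{eff,k}(\M)$ already has a block structure in $t$: cross terms between different $t,t'$ are mapped to cross terms between $\C{O}_k(t)$ and $\C{O}_k(t')$.

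Since $\C{T}$ is totally ordered and $\C{O}_k(t)\succ t$, $\C{N}_{k,x_k}$ satisfies \cref{eq:extendone4}. \Cref{lemma:extension} then yields a causal box $\Mp$ on the full Fock space that agrees with $\C{N}_{k,x_k}$ on the one-message space and whose image lies in the one-message space. Agents with trivial input or output keep their operations, and the same applies to $A_0$ and $A_{N+1}$ (up to the AO projections).

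Next we verify ALO for $\Mp$ with the same $\C{O}_k$. For an input in $\C{H}^{A^I_k,t}$ the output lies in $\C{H}^{A^O_k,\C{O}_k(t)}$, so $\C{P}^{O,\C{O}_k(t)}_k\circ\Mp\circ\C{P}^{I,t}_k=\Mp\circ\C{P}^{I,t}_k$.

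The core step is behavioural equivalence. For the original operations, the process box property together with \cref{lemma:checkio} gives $\CM{\C{C}}{\M}=\CM{\C{C}}{\C{P}_{eff,k}(\M)}$. Because $\C{P}_{eff,k}(\M)\le\C{N}_{k,x_k}\circ\C{P}^I_k$ in the CP order and the left-hand side is normalised, the extra completion terms contribute zero trace inside the protocol. By positivity they contribute exactly zero, so $\CM{\C{C}}{\C{N}_{k,x_k}\circ\C{P}^I_k}=\CM{\C{C}}{\M}$. This can be done one agent at a time, or jointly with the trace argument of \cref{coro:normalise}. Since $\Mp\circ\C{P}^I_k=\C{N}_{k,x_k}\circ\C{P}^I_k$ and AO-type normalisation holds, \cref{coro:aonormalise} and \cref{lemma:checkio} let us drop the input projector and conclude $\CM{\C{C}}{\Mp}=\CM{\C{C}}{\M}$. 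The same equality gives AO for $\mathfrak{P}'$, and LO follows from ALO via \cref{lemma:aloimpliesnlo}. The converse direction of behavioural equivalence uses the same correspondence $x_k\mapsto x_k$.

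I expect the main obstacle to be the trace-zero argument for the completion terms when several agents are replaced simultaneously. The cleanest approach is to work with the normalised total $\sum_{\text{branches}}$ and note that every cross-combination involving at least one completion term is a positive contribution. Their sum equals $1-1=0$, hence each of them vanishes, by the same positivity reasoning as in \cref{coro:normalise}.
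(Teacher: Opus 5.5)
Your proof is correct and follows the paper's overall strategy. You keep $\C{C}$ and replace each operation of an agent with non-trivial input and output by a trace-completed version of $\C{P}_{eff,k}(\M)$ that emits exactly one message at $\C{O}_k(t)$. The branch reproducing $\C{P}_{eff,k}(\M)$ already has trace one inside the protocol, so the completion branches vanish. \cref{coro:normalise} then removes the projectors, giving AO and behavioural equivalence, and LO follows from \cref{lemma:aloimpliesnlo}.

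The genuine difference is how the completion is built.

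\emph{The paper's route.} It completes coherently on the full Fock space. After the comb $\C{P}^I_{LO,k}$, $\M\otimes\C{O}_k$, it applies a per-time-stamp isometry $V_{LO,k}$. Wherever the number of output messages disagrees with the memory flag, this isometry swaps the output content into an ancilla $\beta$ and writes the correct number of $\ket{0}$ messages. Because $V_{LO,k}$ is a pseudo-causal box, causality on the whole Fock space comes from \cref{lemma:itscb} without any appeal to the total order. The result is normalised via \cref{def:subcb}. The untouched branch is identified by vacuum on $\beta\gamma$, through $\bra{\Omega}^{\beta\gamma}P^O_kV_{LO,k}=P^O_{LO,k}$.

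\emph{Your route.} You complete incoherently on the finite-dimensional one-message space. The measure-and-prepare term is driven by the defect effect $\sum_{t}P^{I,t}_k\bigl(\mathbb{1}-(\M)^{\dagger}(P^{O,\C{O}_k(t)}_k)\bigr)P^{I,t}_k$. Its block-diagonality does follow from injectivity of $\C{O}_k$ and orthogonality of the $P^{O,s}_k$. You then hand causality and the extension to the Fock space to \cref{lemma:extension}. This is where your argument actually uses the total-order hypothesis. It also gives you directly that $\Mp$ maps into the one-message space, so $\C{P}^O_k\circ\Mp=\Mp$ is immediate.

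Two small points deserve a sentence each in your write-up.
\begin{enumerate}
\item Positivity of every cross term is cleanest if the completion records a classical flag at $T$. Then each branch is itself a subnormalised causal box.
\item Agents with trivial input can be left unchanged only because the original AO already forces $\C{P}^O_k\circ\M=\M$ for them. Their output states feed a normalised network, so $\mathrm{tr}(\C{P}^O_k\circ\M)=1$. You need this when you invoke \cref{coro:aonormalise} for $\mathfrak{P}'$.
\end{enumerate}
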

\begin{proof}
    We define the subnormalised causal box 
    \begin{equation}
        \C{M}_{x_k, sub}^{\prime A_k} = \C{V}_{LO,k} \circ (\M \otimes \C{O}_k) \circ \C{P}^{I}_{LO, k}.
    \end{equation}
    where $\C{P}^I_{LO,k}$ is defined as in \cref{eq:loin}, $\C{O}_k = \sum_{t, t' \in \C{T}^I_k} \ket{\C{O}_k(t)}^\alpha \bra{t}^\alpha \cdot \ket{t'}^\alpha\bra{\C{O}_k(t')}^\alpha$ (note that the maps $\mathcal{O}_k: \C{T}^I_k \rightarrow \C{T}^O_k$ are specified by the given protocol $\mathfrak{P}$ (involved in its LO condition)) and $\C{V}_{LO,k}(\cdot) = V_{LO,k} \cdot V_{LO,k}^\dagger$ where $V_{LO,k}: \FO \otimes \C{F}(\ket{0}^\alpha \otimes \C{H}^{\C{T}^O_k}) \rightarrow \FO \otimes \F{\beta}{\C{T}^O_k} \otimes  \C{F}(\ket{0}^\gamma \otimes \C{H}^{\C{T}^O_k})$ (with $\C{H}^\beta \cong \C{H}^{A^O_k}$) acts separately on each time stamp as (where $\ket{\psi_n, t}$ is an $n$-message state)
    \begin{equation}\label{eq:vlok}
        V_{LO,k} \ket{\psi_n, t}^{A^O_k} \bigotimes_m \ket{0, t}^\alpha = \begin{cases}
            \ket{\psi_n, t}^{A^O_k} \ket{\Omega, t}^{\beta}\ket{\Omega, t}^\gamma, &m=n \\
            (\bigotimes_m \ket{0, t}^{A^O_k})  \ket{\psi_n, t}^\beta \ket{0, t}^\gamma, & n\neq m.
        \end{cases}
    \end{equation}
    This is an isometry and respects pseudo-causality as the output at $t$ depends only on the input at $t$. Hence, $V_{LO,k}$ is a pseudo-causal box. Thus, $\C{M}_{x_k, sub}^{\prime A_k}$ is indeed a subnormalised causal box by \cref{lemma:itscb}. 

    As $\C{M}_{x_k, sub}^{\prime A_k}$ is a subnormalised causal box, there exists by definition (\cref{def:subcb}) a normalised causal box $\C{M}_{x_k, norm}^{\prime A_k}$ such that $\C{M}_{x_k, sub}^{\prime A_k}(\cdot) = \bra{\Omega, \C{T}}^R \C{M}_{x_k, norm}^{\prime A_k}(\cdot) \ket{\Omega, \C{T}}^R$. We now define 
    \begin{equation}
        \Mp = \text{tr}_{R \beta \gamma} \circ \C{M}_{x_k, norm}^{\prime A_k}
    \end{equation}
    As $\C{M}_{x_k, sub}^{\prime A_k}$ is obtained through sequential composition of a map ($\C{P}^I_{LO,k}$) which is CPTP when restricted to the one-message space followed by other CPTP maps, it is also be CPTP when restricted to the one-message space, and so we must have that $\Mp \circ \C{P}^I_k = \text{tr}_{\beta \gamma} \circ \C{M}_{x_k, sub}^{\prime A_k}$ for $\Mp$ to be CPTP.
    
    The map $\Mp$ satisfies ALO as $\C{O}_k \circ \C{P}^{I}_{LO, k} \circ \C{P}^{I, t}_k = \C{P}^{I, t}_k \otimes \C{O}_k(\ket{0, t}\bra{0, t}^{\alpha}) =\C{P}^{I, t}_k \otimes \ket{0, \C{O}_k(t)}\bra{0, \C{O}_k(t)}^{\alpha}$ and from \cref{eq:vlok}, we see that $V_{LO,k}$ maps the subspace $\FO \otimes \ket{0, \C{O}_k(t)}^\alpha$ to $\C{H}^{A^O_k, \C{O}_k(t)} \otimes \C{F}^{\beta, \C{T}^O_k} \otimes (\ket{\Omega, \C{O}_k(t)}^\gamma \oplus \ket{0, \C{O}_k(t)}^\gamma)$. The projector $\C{P}^{O, \C{O}_k(t)}_k$ acts trivially on this space.

    To check AO and behavioural equivalence, note first that $\bra{\Omega, \C{T}}^{\beta\gamma} P^O_k \circ V_{LO,k} = P^{O}_{LO,k}$ (cf. \cref{eq:loout2}). Thus,
    \begin{gather}
    \begin{aligned}
        \CM{\C{C}}{\text{tr}_{\beta \gamma R} &(\ket{\Omega, \C{T}}\bra{\Omega, \C{T}}^{\beta\gamma R} \circ  \C{P}^O_k \circ \C{M}_{x_k, norm}^{\prime A_k} \circ \C{P}^I_k)} \\
        &= \CM{\C{C}}{\text{tr}_{\beta \gamma} (\ket{\Omega, \C{T}}\bra{\Omega, \C{T}}^{\beta\gamma} \circ \C{P}^O_k \circ \C{M}_{x_k, sub}^{\prime A_k} \circ \C{P}^I_k)} \\
        &= \CM{\C{C}}{\C{P}^O_{LO,k} \circ (\M \otimes \C{O}_k) \circ \C{P}^{I}_{LO, k}} \\
        &= \CM{\C{C}}{\C{P}_{eff,k}(\M)} \\
        &= \CM{\C{C}}{\M}
    \end{aligned}
    \end{gather}
    where we used the decomposition of $\C{P}_{eff,k}$ from \cref{eq:lodecomp} and that $\mathfrak{P}$ satisfies LO. As the last line has trace 1, we can drop the projectors $\ket{\Omega, \C{T}}\bra{\Omega, \C{T}}^{\beta\gamma R}, \C{P}^{O}_k, \C{P}^I_k$ by \cref{coro:normalise}, giving us both AO and behavioural equivalence,
    \begin{equation}
        \CM{\C{C}}{ \C{P}^O_k \circ \C{M}_{x_k}^{\prime A_k} \circ \C{P}^I_k } = \CM{\C{C}}{\M} = \CM{\C{C}}{\Mp}
    \end{equation}
\end{proof}

\begin{restatable}[1-message output]{lemma}{cbone}\label{lemma:cbone}
    Let $\PBP$ be a process box protocol on a totally ordered spacetime. Then, there exists a causal box $\C{C}'$ such that the process box protocol $\PBPchoose{\mathfrak{P}'}{\C{C}'}{\M}$ is behaviourally equivalent to $\mathfrak{P}$ and
    \begin{equation}\label{eq:cbonecond}
        \bigotimes_{k=0}^{N+1} \C{P}^I_k  \circ \C{C}' = \C{C}'.
    \end{equation}
\end{restatable}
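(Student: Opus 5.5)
The plan is to set $\C{C}' := \bigl(\bigotimes_{k=1}^{N+1}\C{C}^{(k)}_{one}\bigr)\circ \C{C}$. Here $\C{C}^{(k)}_{one}:\C{L}(\FI)\rightarrow\C{L}(\FI)$ is the pseudo-causal box of \cref{lemma:vone} for agent $A_k$'s input wire. It acts as the identity on the one-message space $\C{H}^{A^I_k,\C{T}^I_k}$, and its image lies in that space. Since $\C{T}$ is totally ordered and finite, and the message spaces are finite-dimensional, \cref{lemma:vone} applies. For the global past the input is trivial, so $\C{P}^I_0$ is trivial and no map is needed there.

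\textbf{Causal box property.} A parallel composition of pseudo-causal boxes is a pseudo-causal box, and its sequential composition with a causal box is a causal box; both facts are in \cref{lemma:itscb}. Hence $\C{C}'$ is a normalised causal box with the same input and output wires as $\C{C}$. Because the image of each $\C{C}^{(k)}_{one}$ lies in the one-message space, we get $\C{P}^I_k\circ\C{C}^{(k)}_{one}=\C{C}^{(k)}_{one}$. This gives \cref{eq:cbonecond} directly.

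\textbf{Behavioural equivalence with the same agent operations.} The identity $\C{C}^{(k)}_{one}\circ\C{P}^I_k=\C{P}^I_k$ holds because $\C{C}^{(k)}_{one}$ acts as the identity on the one-message space. Using this and associativity/order-independence of composition, I would write
\begin{equation*}
\CM{\C{C}'}{(\M\circ\C{P}^I_k)}=\Bigl(\C{C}\circ\bigotimes_k(\M\circ\C{P}^I_k\circ\C{C}^{(k)}_{one})\Bigr)^{\hookrightarrow}=\CM{\C{C}}{(\M\circ\C{P}^I_k)},
\end{equation*}
where the $\C{C}^{(k)}_{one}$ are moved onto the agents' input side. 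By \cref{lemma:checkio}, the right-hand side equals $\CM{\C{C}}{\M}$.

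It remains to drop $\C{P}^I_k$ on the left. The left-hand side therefore has trace one, and $\C{P}^I_k$ is the outcome $n=1$ of the normalised message-counting pseudo-causal box from the proof of \cref{lemma:projcb}. \Cref{coro:normalise} then gives $\CM{\C{C}'}{\M}=\CM{\C{C}'}{(\M\circ\C{P}^I_k)}$. Doing this for all $k$ simultaneously gives behavioural equivalence, with each $\M$ mapped to itself.

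\textbf{AO and LO for $\mathfrak{P}'$.} The same chain of equalities works with $\M$ replaced by $\C{P}^O_k\circ\M\circ\C{P}^I_k$, and also by $\C{P}_{eff,k}(\M)$, which is a subnormalised causal box by \cref{lemma:lopseudocb}. Each of these maps ends with a factor $\circ\,\C{P}^{I,t}_k$ on the input side, so $\C{C}^{(k)}_{one}$ is absorbed in the same way. The AO and LO identities of $\mathfrak{P}$ therefore carry over verbatim to $\mathfrak{P}'$, with the same maps $\C{O}_k$.

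\textbf{Main obstacle.} The delicate step is justifying that $\C{C}^{(k)}_{one}$ may be moved across the loop composition onto the agents' side inside $(\cdot)^{\hookrightarrow}$. This requires that loop composition with pseudo-causal boxes is well defined when they are sandwiched between causal boxes. I would handle it by first forming the sequential composite $\M\circ\C{P}^I_k\circ\C{C}^{(k)}_{one}$, which is a subnormalised causal box by \cref{lemma:itscb}, and then invoking order-independence of composition for causal boxes. Equivalently, one can compute both sides via the representation of \cref{lemma:looprep} with product bases adapted to $\C{T}$.
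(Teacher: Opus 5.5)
Your construction $\C{C}'=\bigl(\bigotimes_{k=1}^{N+1}\C{C}^{(k)}_{one}\bigr)\circ\C{C}$ is the paper's, and your tools (\cref{lemma:itscb}, AO with \cref{lemma:checkio}, \cref{coro:normalise}) are the right ones. However, the central chain puts the projector on the wrong side of $\C{C}^{(k)}_{one}$.

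After you move $\C{C}^{(k)}_{one}$ to the agents' side, the agent composite is $\M\circ\C{P}^I_k\circ\C{C}^{(k)}_{one}$. To reach $\CM{\C{C}}{(\M\circ\C{P}^I_k)}$ you would need $\C{P}^I_k\circ\C{C}^{(k)}_{one}=\C{P}^I_k$. The identity you quote, $\C{C}^{(k)}_{one}\circ\C{P}^I_k=\C{P}^I_k$, is true but is the other ordering. The one you need is false. Since the image of $\C{C}^{(k)}_{one}$ is the one-message space, $\C{P}^I_k\circ\C{C}^{(k)}_{one}=\C{C}^{(k)}_{one}$. For example, a vacuum or multi-message component of $\C{C}$'s output is sent by $\C{C}^{(k)}_{one}$ to a normalised one-message state, and your projector then lets it through.

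In fact, by \cref{eq:cbonecond}, your left-hand side is just $\CM{\C{C}'}{\M}$. So the second equality of your chain is exactly the behavioural equivalence you are trying to prove, and the later step of ``dropping $\C{P}^I_k$'' is vacuous. Your AO/LO paragraph has the same defect: the input projectors inside $\C{P}^O_k\circ\M\circ\C{P}^I_k$ and $\C{P}_{eff,k}(\M)$ act after $\C{C}^{(k)}_{one}$ and do not absorb it. The step you flag as the main obstacle, moving $\C{C}^{(k)}_{one}$ across the loop, is harmless by associativity; the real problem is this ordering.

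The repair, which is what the paper does, is to insert the projector on $\C{C}$'s output, before $\C{C}^{(k)}_{one}$. There your identity does apply:
\begin{equation*}
\CM{\bigotimes_{k=1}^{N+1}\bigl(\C{C}^{(k)}_{one}\circ\C{P}^I_k\bigr)\circ\C{C}}{\M}=\CM{\bigl(\bigotimes_{k=1}^{N+1}\C{P}^I_k\bigr)\circ\C{C}}{\M}=\CM{\C{C}}{(\M\circ\C{P}^I_k)}=\CM{\C{C}}{\M},
\end{equation*}
where the last step uses AO of $\mathfrak{P}$ and \cref{lemma:checkio}. The same computation works with $\M$ replaced by $\C{P}^O_k\circ\M\circ\C{P}^I_k$ (using AO), or by $\C{P}_{eff,k}(\M)$ (using $\C{P}_{eff,k}(\M)\circ\C{P}^I_k=\C{P}_{eff,k}(\M)$ and LO).

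The right-hand side has trace one. So \cref{coro:normalise} lets you delete the inserted projector, treating $\M\circ\C{C}^{(k)}_{one}$ as part of the causal box, which \cref{lemma:itscb} permits. This gives behavioural equivalence, AO and LO for $\mathfrak{P}'$. This is where AO of the original protocol does real work: it shows that the branches of $\C{C}$'s output outside the one-message space carry zero weight \emph{before} $\C{C}^{(k)}_{one}$ can map them into that space.
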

\begin{proof}
    We define 
    \begin{equation}
        \C{C}' := \bigotimes_{k=1}^{N+1} \C{C}^I_{one, k} \circ \C{C}
    \end{equation}
    where $\C{C}^I_{one, k}: \C{L}(\FI) \rightarrow \C{L}(\FI)$ is a pseudo-causal box which acts as identity on the one-message space and whose image is the one-message space (the existence of such pseudo-causal boxes is shown in \cref{lemma:vone}). As each $\C{C}^I_{one, k}$ is a normalised pseudo-causal box, $\C{C}'$ is a normalised causal box by \cref{lemma:itscb}. Further, \cref{eq:cbonecond} is satisfied as the image of $\C{C}^I_{one,k}$ is the one-message input space of agent $k$. The resulting protocol $\mathfrak{P}'$ is a process box protocol. For this notice that $\bigotimes_{k=1}^{N+1} (\C{C}^I_{one, k} \circ \C{P}^{I}_k) \circ \C{C} =\bigotimes_{k=1}^{N+1} \C{P}^{I}_k\circ\C{C}$ as $\C{C}^I_{one, k}$ is the identity on the one-message space and thus
    \begin{gather}
    \begin{aligned}
        \CM{\bigotimes_{k=1}^{N+1} \C{C}_{one, k} \circ \C{P}^{I}_k \circ \C{C}}{\M} &= \CM{\C{C}}{\M} \\
        \CM{\bigotimes_{k=1}^{N+1} \C{C}_{one, k} \circ \C{P}^{I}_k \circ \C{C}}{(\C{P}^{O}_k \circ \M \circ \C{P}^I_k)} &= \CM{\C{C}}{\M} \\
        \CM{\bigotimes_{k=1}^{N+1} \C{C}_{one, k} \circ \C{P}^{I}_k \circ \C{C}}{\C{P}_{eff,k}( \M )} &= \CM{\C{C}}{\M}
    \end{aligned}
    \end{gather}
    using AO (together with \cref{lemma:checkio}) of $\mathfrak{P}$ in the first two lines and LO of $\mathfrak{P}$ and the fact that $\C{P}_{eff,k}(\M) \circ \C{P}^I_k = \C{P}_{eff,k}(\M)$ in the last line. As the RHS is a normalised causal box and thus has trace 1, by \cref{coro:normalise}, the above equalities still hold true if we remove $\C{P}^I_k$ on the LHS, yielding respectively behavioural equivalence, AO and LO. 
\end{proof}

\subsection{Proofs of all results}\label{app:proofs}

\relabeling*
\begin{proof}
    Relabelling the systems which are composed cannot change the result of the composition. Hence,
    \begin{equation}
        \CM{\C{C}}{\M} = \CM{\C{C}'}{\Mp}.
    \end{equation}
    We now need to show that $\PBPp$ is in fact a causal box protocol (i.e., $\C{C}'$ and $\Mp$ for all $k, x_k$ are causal boxes) and that if $\mathfrak{P}$ is a process box protocol, then so is $\mathfrak{P}'$. 

    Let us first check that $\Mp$ is a causal box. Let $\chi_k$ be a causality function for $\M$. We now define a causality function $\chi'_k$ for $\Mp$ defined on any bottom-closed subsets $\tilde{\C{T}} \subseteq \C{T}'$ as
    \begin{equation}
        \chi'_k: \tilde{\C{T}} \mapsto bc(\C{R}_{I_k} \circ \chi_k \circ \C{R}^{-1}_{O_k}(\tilde{\C{T}} \cap \text{Im}(\C{R}_{O_k})))
    \end{equation}
    where $bc(\cdot)$ refers to the bottom-closure of the set in the argument and $\C{R}^{-1}_{O_k}(\tilde{\C{T}} \cap \text{Im}(\C{R}_{O_k}))$ refers to the preimage of $\tilde{\C{T}} \cap \text{Im}(\C{R}_{O_k})$ under $\C{R}_{O_k}$. As $\chi_k$ only acts on bottom-closed sets, we have to show that $\C{R}^{-1}_{O_k}(\tilde{\C{T}} \cap \text{Im}(\C{R}_{O_k})$ is bottom-closed for all bottom-closed $\tilde{\C{T}} \subseteq \C{T}'$. Let $t_1 \prec t_2$ with $t_2 \in \C{R}^{-1}_{O_k}(\tilde{\C{T}} \cap \text{Im}(\C{R}_{O_k}))$, then, due to \cref{eq:relabelcondition}, it follows that $\C{R}_{O_k}(t_1) \prec \C{R}_{O_k}(t_2) \in \tilde{\C{T}} \cap \text{Im}(\C{R}_{O_k})$. As $\tilde{\C{T}}$ is a bottom-closed set, we thus have $\C{R}_{O_k}(t_1) \in \tilde{\C{T}} \cap \text{Im}(\C{R}_{O_k})$ and hence $t_1 \in \C{R}^{-1}_{O_k}(\tilde{\C{T}} \cap \text{Im}(\C{R}_{O_k}))$. Thus, $\chi'_k$ is well-defined. 

    We now check that $\chi'_k$ is a valid causality function. It commutes with taking the union as $bc, \C{R}_{I_k}, \chi_k, \C{R}^{-1}_{O_k}$ all do, hence their composition does as well. It also respects inclusions for the same reason. Finally, it maps any bottom-closed set to a proper subset of itself. As $\tilde{\C{T}}, \chi'_k(\tilde{\C{T}})$ are both bottom-closed it suffices to show that for a maximal element $t \in \chi'_k(\tilde{\C{T}})$, there exists $t' \in \tilde{\C{T}}$ with $t' \succ t$. As $t$ is maximal, there exists $t'' \in \chi_k ( \C{R}^{-1}_{O_k}(\tilde{\C{T}} \cap \text{Im}(\C{R}_{O_k})))$ such that $t = \C{R}_{I_k}(t'')$. Hence, there exists $\tilde{t} \in \C{R}^{-1}_{O_k}(\tilde{\C{T}} \cap \text{Im}(\C{R}_{O_k}))$ with $\tilde{t}\succ t''$ due to \cref{lemma:chikill} applied to the causality function $\chi_k$. Due to the first line of \cref{eq:relabelcondition}, it then follows $\C{R}_{O_k}(\tilde{t}) \succ \C{R}_{I_k}(t'') = t$ and thus $\chi'_k(\tilde{\C{T}}) \subsetneq \tilde{\C{T}}$. Hence, $\chi'_k$ is a valid causality function.

    We now check that $\Mp$ satisfies causality with the causality function $\chi'_k$. For this, notice first that 
    \begin{gather}
    \begin{aligned}
        \text{tr}_{\C{T}' \backslash \tilde{\C{T}}} \circ \C{R}_{O_k} &= \C{R}_{O_k} \circ\text{tr}_{\C{T} \backslash \C{R}^{-1}_{O_k} (\tilde{\C{T}} \cap \text{Im}(\C{R}_{O_k}))} \\
        \C{R}_{I_k}^{-1} \circ \text{tr}_{\C{T}' \backslash \chi'_k(\tilde{\C{T}})}  &=  \C{R}_{I_k}^{-1} \circ \text{tr}_{\C{T}' \backslash bc(\C{R}_{I_k} \circ \chi_k \circ \C{R}^{-1}_{O_k}(\tilde{\C{T}} \cap \text{Im}(\C{R}_{O_k})))}\\ 
        &=\text{tr}_{\C{T} \backslash \C{R}_{I_k}^{-1}(bc(\C{R}_{I_k} \circ \chi_k \circ \C{R}^{-1}_{O_k}(\tilde{\C{T}} \cap \text{Im}(\C{R}_{O_k})))\cap\text{Im}(\C{R}_{I_k}))} \circ  \C{R}_{I_k}^{-1}\\ 
        &=\text{tr}_{\C{T} \backslash \chi_k(\C{R}^{-1}_{O_k} (\tilde{\C{T}} \cap \text{Im}(\C{R}_{O_k})))} \circ  \C{R}_{I_k}^{-1}.
    \end{aligned}
    \end{gather}
    Using this, we find
    \begin{gather}\label{eq:relabelling1}
    \begin{aligned}
        \text{tr}_{\C{T}'\backslash \tilde{\C{T}}} \circ \Mp \circ \text{tr}_{\C{T}'\backslash\chi(\tilde{\C{T}})}        &= \text{tr}_{\C{T}'\backslash \tilde{\C{T}}} \circ \C{R}_{O_k} \circ \M \circ \C{R}^{-1}_{I_k} \circ \text{tr}_{\C{T}'\backslash\chi(\tilde{\C{T}})} \\
        &= \C{R}_{O_k} \circ\text{tr}_{\C{T} \backslash \C{R}^{-1}_{O_k} (\tilde{\C{T}} \cap \text{Im}(\C{R}_{O_k}))} \circ \M \circ \text{tr}_{\C{T} \backslash \chi_k(\C{R}^{-1}_{O_k} (\tilde{\C{T}} \cap \text{Im}(\C{R}_{O_k})))}\circ  \C{R}_{I_k}^{-1} \\
        &= \C{R}_{O_k} \circ \text{tr}_{\C{T} \backslash \C{R}^{-1}_{O_k} (\tilde{\C{T}} \cap \text{Im}(\C{R}_{O_k}))} \circ \M \circ  \C{R}_{I_k}^{-1} \\
        &= \text{tr}_{\C{T}'\backslash \tilde{\C{T}}} \circ \C{R}_{O_k} \circ \M \circ  \C{R}_{I_k}^{-1} \\
        &= \text{tr}_{\C{T}'\backslash \tilde{\C{T}}} \circ \Mp.
    \end{aligned}
    \end{gather}
    where in the third line we used that $\M$ satisfies causality for $\chi_k$. 

    Let us now check that $\C{C}'$ is a causal box. Let $\chi$ be a causality function for $\C{C}$. We define a causality function $\chi'$ for $\C{C}'$ via
    \begin{equation}
        \chi': \tilde{\C{T}} \mapsto bc(\bigcup_{k=0}^{N+1}\C{R}_{O_k} \circ \chi \circ \C{R}^{-1}_{I_k}(\tilde{\C{T}} \cap \text{Im}(\C{R}_{I_k})).
    \end{equation}
    The argument that this is a valid causality functions is analogous to the argument we used for $\chi'_k$ (now using the second line of \cref{eq:relabelcondition} for the proof that $\chi'$ maps bottom-closed sets to proper subsets of themselves). 

    We then have analogously as before
    \begin{gather}
    \begin{aligned}
        \text{tr}_{\C{T}' \backslash \tilde{\C{T}}} \circ \C{R} &= \C{R} \circ \text{tr}_{\C{T} \backslash (\bigcup_{k=0}^{N+1} \C{R}^{-1}_{I_k} (\tilde{\C{T}} \cap \text{Im}(\C{R}_{I_k}))} \\
        \C{R}^{-1} \circ \text{tr}_{\C{T}' \backslash \chi'(\tilde{\C{T}})}  &= \text{tr}_{\C{T} \backslash \chi(\bigcup_{k=0}^{N+1}\C{R}^{-1}_{I_k} (\tilde{\C{T}} \cap \text{Im}(\C{R}_{I_k})))} \circ \C{R}^{-1}.
    \end{aligned}
    \end{gather}
    where we additionally used that $\chi$ respects (finite) unions. From this it follows that $\C{C}'$ respects causality for $\chi'$ by an analogous calculation as in \cref{eq:relabelling1}. Hence, $\mathfrak{P}'$ is a causal box protocol.
    
    If $\mathfrak{P}$ is a process box protocol, then $\mathfrak{P}'$ must satisfy AO and LO (for the relabelled $\C{O}'_k(\C{R}_{I_k}(t)) = \C{R}_{O_k}(\C{O}_k(t))$). For this notice that
    \begin{gather}
    \begin{aligned}
        \C{R} \circ \C{P}^{I/O}_k &= \C{P}^{\prime I/O}_k \circ \C{R} \\
        \C{R} \circ P^{I/O, t}_k &= P^{\prime I/O, \C{R}_{I_k/O_k}(t)}_k \circ \C{R}
    \end{aligned}
    \end{gather}
    where we added a prime to the projectors on the relabelled Fock spaces to distinguish them from the projectors on the original spaces. Hence, with the definition $\C{O}'_k(\C{R}_{I_k}(t)) = \C{R}_{{O}_k}(\C{O}_k(t))$ and using again that relabelling composed systems does not change the overall composition, we find
    \begin{gather}
    \begin{aligned}
        \CM{\C{C}'}{\C{P}^{\prime O}_k \circ \Mp \circ \C{P}^{\prime I}_k} &= \CM{\C{R} \circ \C{C} \circ \C{R}^{-1}}{\C{P}^{\prime O}_k \circ \C{R} \circ \M \circ \C{R}^{-1} \circ \C{P}^{\prime I}_k} \\
        &= \CM{\C{R} \circ \C{C} \circ \C{R}^{-1}}{\C{R} \circ \C{P}^O_k \circ  \M \circ \C{P}^{I}_k \circ \C{R}^{-1} } \\
        &= \CM{\C{C}}{\M} \\
        &= \CM{\C{C}'}{\Mp}
    \end{aligned}
    \end{gather}
    where we used that $\C{R}$ is a unitary and that $\mathfrak{P}$ satisfies AO in the second to last line and behavioural equivalence between the two protocols in the last line. Thus, the new protocol satisfies AO. Further, we find
    \begin{gather}
    \begin{aligned}
         & \CM{\C{R} \circ \C{C} \circ \C{R}^{-1}}{\sum_{t,t' \in \C{T}^I_k}P^{\prime O, \C{O}'_k(\C{R}_{I_k}(t))}_k( \C{R} \circ \M \circ \C{R}^{-1}(P^{\prime I, \C{R}_{I_k}(t)}_k \cdot P^{\prime I, \C{R}_{I_k}(t')}_k)) P^{\prime O, \C{O}'_k(\C{R}_{I_k}(t'))}_k} \\
        &=  \CM{\C{R} \circ \C{C} \circ \C{R}^{-1}}{\sum_{t,t' \in \C{T}^I_k}\C{R}\circ(P^{O, \C{O}_k(t)}_k\M (P^{I, t}_k \C{R}^{-1}(\cdot) P^{I, t'}_k) P^{O, \C{O}_k(t')}_k)} \\
        &= \CM{\C{C}}{\M}  \\
        &= \CM{\C{C}'}{\C{M}^{\prime A_k}_{x_k}}
    \end{aligned}
    \end{gather}
    where again we used that $\C{R}$ is a unitary and that $\mathfrak{P}$ satisfies LO in the second to last line and behavioural equivalence between the two protocols in the last line. Thus, (since $\mathfrak{P}'$ also already satisfies AO) the new protocol satisfies LO. Thus, it is a process box protocol. 
\end{proof}

\simplifying*

\begin{proof}
We will prove the theorem iteratively by showing that if a process box protocol satisfies the first $n=0,1,2,3,4$ properties, then there exists a behaviourally equivalent process box protocol which satisfies the first $n+1$ properties.

\textbf{Property 1:} Let $\PBP$ be an arbitrary process box protocol on a spacetime $\C{T}$. Let $\C{T}' = \{1,...,M\}$ with $M = |\C{T}|$. Let $\mathcal{R}: \C{T} \rightarrow \C{T'}$ be an isomorphic map such that for all $t_1, t_2 \in \C{T}$ with $t_1 \prec t_2$ it holds that $\mathcal{R}(t_1) \prec \mathcal{R}(t_2)$ (it is always possible to find such a map by the order-extension theorem \cite{szpilrajn1930extension}). By \cref{lemma:relabeling}, the protocol $\mathfrak{P}'$ which is obtained from $\mathfrak{P}$ as in \cref{lemma:relabeling} via the relabelling $\mathcal{R}$ is then behaviourally equivalent to $\mathfrak{P}$ and also a process box protocol.

\textbf{Properties 2\&3:} This is shown in \cref{lemma:tiao}.

\textbf{Property 4:} According to \cref{lemma:cbone}, we can always find a behaviourally equivalent process box protocol where property 4 holds. As the agent operations and the spacetime remain the same, properties 1, 2 and 3 still hold.

\textbf{Property 5:} This follows from \cref{lemma:relabeling} with the relabeling (where $k=1,...,N$) 
\begin{gather}
    \begin{aligned}
        &\C{R}_P(1) = 1 \\
        &\C{R}_{I_k}(m) = 2(m-1)N + 2k \\
        &\C{R}_{O_k}(m) =  2(m-2)N + 2k +1 \\
        &\C{R}_F(M) = 2(M-1)N + 2.
    \end{aligned}
    \end{gather}
    This satisfies the condition of \cref{lemma:relabeling} as $\C{R}_{I_k}, \C{R}_{O_k}$ are obviously strictly order-preserving and for any $k, l$ and $m$
    \begin{gather}
    \begin{aligned}
        \C{R}_{I_k}(m) &= 2(m-1)N + 2k < 2(m-1)N + 2k +1 = \C{R}_{O_k}(m+1) \\
        \C{R}_{O_k}(m) &= 2(m-2)N + 2k +1 < 2mN + 2l = \C{R}_{I_l}(m+1)
    \end{aligned}
    \end{gather}
    where in the bottom line we used that $2k \leq 2N$ and $l \geq 1$. Note that we can choose $\C{T}'$ to be the union of images of $\C{R}_{I_k/O_k}$ together with time stamps for the global past, future and the result wire so that every time stamp $t \in \C{T}'$ is indeed an element of some $\C{T}^{\prime I/O}_k$. 
    Further, properties 1, 2, 3 and 4 of \cref{thm:simplifying} still apply. The new spacetime is still totally ordered, the global past is still the only agent with trivial input and we have that $\C{R}_{I_k}(m) = 2(m-1)N + 2k$ while $\C{R}_{O_k}(m+1) = 2(m-1)N+2k+1$. Thus, $\C{O}_k(t) = t+1$ and a relabelling does not affect the form of the agent operations. A relabelling does not change the number of messages, hence, the image of the causal box is still the 1-message space. Finally, property 5 is satisfied as can be readily seen from the definition of $\C{R}$. For any agent the (relabelled) input time stamps are manifestly even, while the relabelled output time stamps are manifestly odd (which also implies there can be no overlap between input and output time stamps). Further, if $t$ is an input time stamp for two agents $k \neq l$, then $t= 2(m-1)N+ 2k = 2(m' -1)N+2l$ for some $m \neq m'$, which implies $N(m-m') = l-k$ and thus $|l-k| \geq N$. However, $0 < l, k \leq N$ and so this cannot hold. An analogous argument applies to the output time stamps.
\end{proof}

\begin{restatable}[Time-independent agent operations]{lemma}{tiao}
\label{lemma:tiao}
Let $\PBP$ be a process box protocol on a totally ordered spacetime $\C{T}$. Then, there exists a process box protocol $\PBPp$ which is behaviourally equivalent to $\mathfrak{P}$ which satisfies properties 1, 2 and 3 of \cref{thm:simplifying}.
\end{restatable}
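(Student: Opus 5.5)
We construct $\mathfrak{P}'$ in three stages: first give agents with trivial input a non-trivial input, then enforce ALO via \cref{lemma:nlotoalo}, and finally move all timing information out of the time stamps and into the messages. Property 1 is preserved automatically, since every construction stays on a finite totally ordered spacetime, possibly after one final order-preserving relabelling justified by \cref{lemma:relabeling}.

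\textbf{Property 2.} Let $A_k$, $k\neq 0$, be an agent with trivial input. Only AO constrains it, so it emits exactly one message somewhere in $\C{T}^O_k$.
\begin{itemize}
\item Give $A_k$ a one-dimensional input wire with the single time stamp $t_0 = \min \C{T}$ (after prepending a new earliest time if needed, which is allowed by \cref{lemma:relabeling}).
\item Modify $\C{C}$ so that it always sends $\ket{0,t_0}$ on this wire. This is a causal box because it has no input dependence.
\item Define the new agent operation to ignore its input and act as the old $\M$.
\end{itemize}
AO then holds trivially. LO does not hold yet, because the output time is not a function of the input time; this is exactly what the property 3 construction will repair.

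\textbf{Property 3.} First apply \cref{lemma:nlotoalo}, so that every operation satisfies ALO with an injective $\C{O}_k$. The original $\C{O}_k$ are available for all agents except those just modified in the property 2 step, for which we do something different below. Now encode time into the message. Set the new message spaces to
\[
\C{H}^{A'^I_k} = \C{H}^{A^I_k}\otimes \mathbb{C}^{|\C{T}^I_k|}, \qquad \C{H}^{A'^O_k} = \C{H}^{A^O_k}\otimes \mathbb{C}^{|\C{T}^I_k|}.
\]
Every message of agent $A_k$ is received at a single new time $s_k$ and emitted at $s_k+1$. The new causal box $\C{C}'$ is built as follows.
\begin{itemize}
\item Wherever $\C{C}$ would output $\ket{\psi,t}^{A^I_k}$, $\C{C}'$ outputs $\ket{\psi,t}$ with the label $t$ moved into the message register, at time $s_k$.
\item It reads outputs at $s_k+1$ and reinterprets $\ket{\phi,\tau}$ as $\ket{\phi,\C{O}_k(\tau)}$, which is well defined because $\C{O}_k$ is injective.
\end{itemize}
For the trivial-input agents of the property 2 step, we instead let the output label carry the chosen output time directly.

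The new agent operation is defined on the one-message space by
\[
A^i_{k,x_k}\,(\ket{\psi}\otimes\ket{t}) := (\text{Kraus of } \M \text{ from } t \text{ to } \C{O}_k(t))\ket{\psi}\otimes \ket{t}.
\]
Its Kraus operators are $\bigoplus_t K^i_{t}\otimes\ket{t}\bra{t}$, which gives exactly \cref{eq:osr}. Off the one-message space we extend it by \cref{lemma:extension} so that it is a causal box. Its time action is $T_{+1}$, so $\C{O}_k(t)=t+1$ in the new protocol.

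\textbf{Main obstacle: causality of $\C{C}'$.} The new $\C{C}'$ must still be a causal box, yet we have collapsed many time stamps onto the single slot $s_k$. This breaks causality naively, since the collapsed messages originally lived at different times. To fix this, use the effective description \cref{eq:ceffrho} together with the sequence representation of $\C{C}$ on the totally ordered $\C{T}$, and do not use a single slot per agent. Instead, keep one slot per original time by choosing $s_k(t)$ as the relabelled position of $t$, exactly as in \cref{lemma:relabeling}: the new input time is $\C{R}_{I_k}(t)$ and the new output time is $\C{R}_{I_k}(t)+1$. The output times are then placed in the new spacetime just before the original $\C{O}_k(t)$, so that they precede everything that could depend on them. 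The conditions \cref{eq:relabelcondition} ensure the simulation is causal. Here $\C{C}'$ simply buffers each agent's output from $\C{R}_{I_k}(t)+1$ until the relabelled $\C{O}_k(t)$, using an internal memory, which is causal because the buffer delay is forward in time.

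\textbf{Verification.} With this placement, the label register is redundant but harmless. Time-independence still holds because the agent now outputs at $t+1$ for every $t$. Behavioural equivalence, AO and LO then follow from ALO by direct composition. As in \cref{lemma:nlotoalo}, \cref{coro:normalise} removes the projectors once the composed state is shown to have trace 1.
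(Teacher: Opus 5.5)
Once you drop the single-slot attempt, your construction is the paper's. $\C{C}'$ coherently copies the arrival time $t$ into a register. The new agent applies $\M$ to message$\otimes$register and $T_{+1}$ to the time stamp. $\C{C}'$ then delays the agent's output from $t+1$ to $\C{O}_k(t)$; no relabelling is needed for this, since $t+1\le \C{O}_k(t)$ already holds on $\{1,\dots,M\}$. Finally, \cref{lemma:extension}, \cref{lemma:itscb} and \cref{coro:normalise} finish the argument.

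The gap is the claim that the label register is ``redundant but harmless''. After the agent acts, the timing information exists twice, in the output register and in the time stamp $t+1$, and the two copies are perfectly correlated. At the same time, $\M\circ\C{P}^I_k$ is coherent across input times: it contains the terms $A^{i,t}_{k,x_k}\cdot A^{i,t'\dagger}_{k,x_k}\otimes\ket{\C{O}_k(t)}\bra{t}\cdot\ket{t'}\bra{\C{O}_k(t')}$ with $t\neq t'$. Your buffer turns one copy into the new time stamp $\C{O}_k(t)$, but because it is isometric it must keep a record of the other. If that record is simply traced out, the off-diagonal terms are killed. Your ``direct composition'' then yields a time-decohered version of $\M\circ\C{P}^I_k$. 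Behavioural equivalence fails in general, for example whenever $\C{C}$ delays the earlier time branch of an agent's output and interferes it with the later one.

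The missing idea is a coherent erasure of the duplicate. You need a reversible map that merges both copies into the single new time stamp and leaves the register in a $t$-independent state before it is discarded. The paper builds this into $V_{diff,k}$, which for $t\le r$ sends $\ket{r}^{\C{T}_{\alpha,k}}\ket{t}$ to $\ket{0}^{\beta}\ket{\C{O}_k(t-1)-r}^{\C{T}_{\alpha,k}}\ket{r}$. The message moves forward to the time $r$ read from the register, which is your buffer. The register is overwritten by the discrepancy, which is $0$ on every branch that actually occurs.

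A $\beta$-flagged branch leaves pairs with $t>r$ where they are. This makes the map an isometry on the whole one-message space that never moves a message backwards. Only then can \cref{lemma:extension} turn it into a pseudo-causal box; your buffer, by contrast, is specified, and causal, only on the intended subspace. The paper then checks $\C{V}_{diff,k}\circ\Mp\circ\C{COPY}_{\C{T}^I_k,k}\circ\C{P}^I_k=\M\circ\C{P}^I_k\otimes\ket{0}\bra{0}^{\beta}\otimes\ket{0}\bra{0}^{\C{T}_{\alpha,k}}$, which is precisely the identity your verification step needs.

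Your trivial-input agents do not have this problem, since their arrival time is fixed and the register transfers cleanly into the time stamp. Separately, \cref{lemma:nlotoalo} is stated for process box protocols, and your property-2 step breaks LO for the modified agents. Apply it first, as the paper does.
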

\begin{proof}
    W.l.o.g., we assume that the first $n$ agents' input is trivial and the inputs of agents $n+1$ to $N$ are non-trivial. 
    
    By \cref{lemma:nlotoalo}, we can assume that $\M$, for $k>n$, satisfies ALO (see \cref{def:alo}). Since $\M$, and thus in particular its restriction to the one-message space, is a CPTP map, we can write a Kraus decomposition
    \begin{equation}
        \M|_{\C{L}(\C{H}^{A^I_k, \C{T}})} = \sum_i A^{i}_{k, x_k} \cdot A^{i \dagger}_{k, x_k}.
    \end{equation}
    As $\M$ satisfies ALO, we must have that $P^{O, r}_k \circ A^{i}_{k, x_k} \circ P^{I, t}_k =0$ if $r \neq \C{O}_k(t)$ if the agent $k$ has non-trivial input. Hence, we can write
    \begin{equation}
        \M|_{\C{L}(\C{H}^{A^I_k, \C{T}})} = \sum_i \sum_{t, t'} A^{i, t}_{k, x_k} \cdot A^{i, t' \dagger}_{k, x_k} \otimes \ket{\C{O}_k(t)}\bra{t} \cdot \ket{t'} \bra{\C{O}_k(t')}
    \end{equation}
    where $A^{i, t}_{k, x_k} \otimes \ket{\C{O}_k(t)}\bra{t} = A^i_{k, x_k} \circ P^{I, t}$ ($=P^{O, \C{O}_k(t)} \circ A^i_{k, x_k} \circ P^{I, t}$ due to ALO) if $A_k$ has non-trivial input. If $A_k$ has trivial input, then $\M$ must be a state in the 1-message space, as otherwise AO could not be satisfied, and we can write (note that the Kraus operators are states in the Hilbert space here)
    \begin{equation}
        \M = \sum_{i} \sum_{t, t'} A^{i, t}_{k, x_k} A^{i, t' \dagger}_{k, x_k} \otimes \ket{t} \bra{t'}.
    \end{equation}
    We define now if $A_k$ has non-trivial input
    \begin{equation}
        \Mp|_{\C{L}(\C{H}^{A^{\prime I}_k, \C{T}^I_k})} = \sum_i \sum_{t, t', r, r'} A^{i, r}_{k, x_k} \cdot A^{i, r' \dagger}_{k, x_k} \otimes \ket{\C{O}_k(r)}\bra{r} \cdot \ket{r'} \bra{\C{O}_k(r')}^{\C{T}_{\alpha,k}} \otimes \ket{t+1}\bra{t} \cdot \ket{t'} \bra{t'+1}
    \end{equation}
    where the system $\C{T}_{\alpha,k}$ is considered to be part of the message (one can think of $\C{T}_{\alpha, k}$ as storing when the message should have been sent, or equivalently when it should be processed by the causal box, in the original protocol). Clearly, this is time-independent as we can write (somewhat abusing notation)
    \begin{equation}\label{eq:tiao0}
        \Mp|_{\C{L}(\C{H}^{A^{\prime I}_k, \C{T}^I_k})} = \M|_{\C{L}(\C{H}^{A^I_k, \C{T}^I_k})} \otimes \C{T}_{+1}
    \end{equation}
    where $\M|_{\C{L}(\C{H}^{A^I_k, \C{T}})}$ acts only on the message part (i.e., $A^I_k \otimes \C{T}_{\alpha,k}$) of the state and $\C{T}_{+1}$ acts only on the time stamp part of the state (increasing it by one). This can be extended to a causal box on the full Fock space due to \cref{lemma:extension}.

    On the other hand, if $A_k$ has trivial input, we define
    \begin{equation}\label{eq:trivagentnewdef}
        \Mp = \sum_i \sum_{r, r'} A^{i, r}_{k, x_k}  A^{i, r' \dagger}_{k, x_k} \otimes \ket{r}\bra{r'}^{\C{T}_{\alpha,k}} \otimes \ket{t=1} \bra{t=1}
    \end{equation}

    We further define the causal box $\C{C}': \C{L}(\bigotimes_{k=0}^N \C{F}(\C{H}^{A^O_k} \otimes \C{H}^{\C{T}_{\alpha, k}} \otimes \C{H}^{\C{T}^I_k})) \rightarrow \C{L}(\bigotimes_{k=1}^N \C{F}(\C{H}^{A^I_k} \otimes \C{H}^{\C{T}_{\alpha, k}} \otimes \C{H}^{\C{T}^I_k}))$ 
    \begin{equation}
        \C{C}' = \bigotimes_{k=n+1}^{N}\C{COPY}_{\C{T}^I_k,k} \circ (\C{C} \otimes \text{tr}_{\C{T}_{\alpha}\beta}) \circ \bigotimes_{k=1}^{N} \C{V}_{diff,k}
    \end{equation}
    where $\C{T}_{\alpha} = \bigotimes_{k=1}^N \C{T}_{\alpha,k}$,
    \begin{equation}
        \C{COPY}_{\C{T}^I_k,k}|_{\C{L}(\C{H}^{A^{I}_k, \C{T}})} = \sum_{ss' \in \C{T}^I_k} \C{I}^{A^I_k} \otimes \ket{s}\bra{s'}^{\C{T}_{\alpha,k}} \otimes \ket{s} \bra{s} \cdot \ket{s'}\bra{s'}
    \end{equation}
    coherently copies the time stamp after $\C{C}$ acts onto $\C{T}_{\alpha}$ and $\C{V}_{diff, k} = V_{diff, k} \cdot V_{diff, k}^\dagger$ with 
     \begin{gather}
     \begin{aligned}
        V_{diff,k}&|_{\C{H}^{\C{T}_{\alpha,k}, \C{T}^O_k}} =\\
        &= \begin{cases}
        \sum_{t \leq r} \ket{0}^{\beta} \otimes \ket{t-1}^{\C{T}_{\alpha,k}}\bra{r}^{\C{T}_{\alpha,k}} \otimes \ket{r}\bra{t} + \sum_{t > r} \ket{1}^{\beta}\otimes \ket{r}^{\C{T}_{\alpha,k}}  \bra{r}^{\C{T}_{\alpha,k}} \otimes \ket{t}\bra{t}, &k \leq n \\
        \sum_{t \leq r} \ket{0}^{\beta} \otimes \ket{\C{O}_k(t-1)-r}^{\C{T}_{\alpha,k}} \bra{r}^{\C{T}_{\alpha,k}} \otimes \ket{r}  \bra{t} + \sum_{t > r} \ket{1}^{\beta} \otimes \ket{r}^{\C{T}_{\alpha,k}} \bra{r}^{\C{T}_{\alpha,k}} \otimes \ket{t}  \bra{t}, &k>n 
        \end{cases}
    \end{aligned}
    \end{gather}
    where $n$ is the number of agents with trivial input (this map essentially swaps the time stamp and $\C{T}_{\alpha,k}$ and updates the value stored in $\C{T}_{\alpha, k}$ according to the function $t\mapsto t-1$  for agents with trivial input or $t\mapsto \C{O}_k(t-1)-r$ for agents with non-trivial output. In this way, $\C{C}$ takes the correct time information into account while the updated value on $\C{T}_{\alpha, k}$ encodes whether the values of $t, r$ match based on what the LO condition of the initial protocol would tell us. The second term is purely to define the map on the entire 1-message space and can be ignored otherwise). This is an isometry as it preserves orthonormality of the basis given by the state $\ket{r}^{\C{T}_{\alpha,k}} \ket{t}$. If $k \leq n$, it maps these states to $\ket{0}^{\beta} \ket{t-1}^{\C{T}_{\alpha, k}} \ket{r}$ or $\ket{1}^{\beta} \ket{r}^{\C{T}_{\alpha, k}} \ket{t}$. If $k > n$, they get mapped $\ket{0}^{\beta} \ket{\C{O}_k(t-1)-r}^{\C{T}_{\alpha, k}} \ket{r}$ (orthogonality follows from injectivity of $\C{O}_k$) or $\ket{1}^{\beta} \ket{r}^{\C{T}_{\alpha, k}} \ket{t}$.
    
    The map $\C{COPY}_{\C{T}^I_k,k}$ does not change the time stamp of a single message while the map $\C{V}_{diff,k}$ either does not change it or increases it. Hence, we can extend them to full pseudo-causal boxes using \cref{lemma:extension}. By \cref{lemma:itscb}, $\C{C}'$ is then a causal box. 

    This new causal box protocol satisfies ALO w.r.t. $\C{O}_k(t)=t+1$ for all agents with non-trivial input as all agent operations output a single message at $t+1$ when inputting a message at $t$ due to \cref{eq:tiao0}. It also satisfies AO as, due to being extended using \cref{lemma:extension}, the images of $\Mp$ as well as of $\C{COPY}_{\C{T}^I_k,k}$ are the one-message space of the respective agent, i.e., $\C{P}^{\prime O}_k \circ \Mp = \Mp$ and $\C{P}^{\prime I}_k \circ \C{COPY}_{\C{T}^I_k,k} = \C{COPY}_{\C{T}^I_k,k}$ (where we added primes on the projectors on the Fock spaces of the new protocol to distinguish them from the projectors on the Fock spaces of the original protocol). Hence,
    \begin{gather}
    \begin{aligned}
        \CM{\C{C}'}{\C{P}^{\prime O}_k\circ \Mp \circ \C{P}^{\prime I}_k} &= \CM{\bigotimes_{l=n+1}^{N}\C{P}^{\prime I}_l \circ \C{COPY}_{\C{T}^I_l,l} \circ (\C{C} \otimes \text{tr}_{\C{T}_{\alpha}\beta}) \circ \bigotimes_{l'=1}^{N} \C{V}_{diff,l'}}{\C{P}^{\prime O}_k \circ \Mp}\\
        &= \CM{\C{C}'}{ \Mp }
    \end{aligned}
    \end{gather}
    where we used cyclicity of loop composition. 
    
    To check behavioural equivalence with $\mathfrak{P}$ let us first calculate
    \begin{equation}
        \C{V}_{diff,k} \circ \Mp \circ \C{COPY}_{\C{T}^I_k,k} \circ \C{P}^I_k
    \end{equation}
    For agents with non-trivial input, we find (using again that both $\Mp$ and $\C{COPY}_{\C{T}^I_k,k}$ output to the 1-message space)
    \begin{gather}
    \begin{aligned}
        \C{V}_{diff,k}& \circ \Mp   \circ \C{COPY}_{\C{T}^I_k,k}  \circ \C{P}^{I}_k\\
        =&\C{V}_{diff,k} \circ \sum_i \sum_{\substack{tt'rr'\\ss'}} A^{i, r}_{k, x_k} \cdot A^{i, r' \dagger}_{k, x_k} \otimes \ket{\C{O}_k(r)}\braket{r|s} \braket{s'|r'} \bra{\C{O}_k(r')}^{\C{T}_{\alpha,k}} \otimes \ket{t+1}\braket{t|s}  \bra{s}\cdot \ket{s'} \braket{s'|t'} \bra{t'+1}\\
        =& \C{V}_{diff,k} \circ \sum_i \sum_{tt'} A^{i, t}_{k, x_k} \cdot A^{i, t' \dagger}_{k, x_k} \otimes \ket{\C{O}_k(t)} \bra{\C{O}_k(t')}^{\C{T}_{\alpha,k}} \otimes \ket{t+1} \bra{t}\cdot \ket{t'} \bra{t'+1}\\
        =& \sum_i \sum_{tt'} A^{i, t}_{k, x_k} \cdot A^{i, t' \dagger}_{k, x_k} \otimes \ket{0}\bra{0}^{\beta} \otimes \ket{0} \bra{0}^{\C{T}_{\alpha,k}} \otimes \ket{\C{O}_k(t)} \bra{t}\cdot \ket{t'}\bra{\C{O}_k(t')}  \\
        =& \M \circ \C{P}^I_k \otimes \ket{0}\bra{0}^{\beta} \otimes \ket{0} \bra{0}^{\C{T}_{\alpha,k}}
    \end{aligned}
    \end{gather}
    where we used that by definition $V_{diff,k} \ket{\C{O}_k(t)}^{\C{T}_{\alpha,k}} \ket{t} = \ket{\C{O}_k(t) -\C{O}_k(t)}^{\C{T}_{\alpha,k}} \ket{\C{O}_k(t)}$ for agents with non-trivial input.
    
    If the input of $A_k$ is trivial, we find 
    \begin{gather}
    \begin{aligned}
        \C{V}_{diff,k} \circ \Mp &= \sum_i \sum_{r, r'} A^{i, r}_{k, x_k}  A^{i, r' \dagger}_{k, x_k} \otimes \ket{0}\bra{0}^{\C{T}_{\alpha,k}} \otimes \ket{r} \bra{r'} \\
        &= \M  \otimes \ket{0}\bra{0}^{\beta} \otimes \ket{0} \bra{0}^{\C{T}_{\alpha,k}}
    \end{aligned}
    \end{gather}
    where we used that $V_{diff, k}\ket{r}^{\C{T}_{\alpha, k}} \ket{t=1} = \ket{0}^{\C{T}_{\alpha, k}} \ket{t=r}$ for agents with trivial input. Loop composing this with $\C{C} \otimes \text{tr}_{\C{T}_{\alpha} \beta}$ (using associativity of loop composition), we find
    \begin{gather}
    \begin{aligned}
        \CM{\bigotimes_{l=n+1}^{N}\C{COPY}_{\C{T}^I_l,l} \circ \C{P}^I_l \circ (\C{C} \otimes \text{tr}_{\C{T}_{\alpha}\beta}) \circ \bigotimes_{l'=1}^{N} \C{V}_{diff,l'}}{\Mp} &= \CM{\C{C}}{\M \circ \C{P}^I_k} \\
        &= \CM{\C{C}}{\M}
    \end{aligned}
    \end{gather}
    where we used AO (in the form of \cref{coro:aonormalise}) of $\mathfrak{P}$ to obtain the second line. As the second line has trace 1 we can drop the 1-message space projectors on the LHS by \cref{coro:normalise}. Thus, we obtain behavioural equivalence
    \begin{equation}
        \CM{\C{C}'}{\Mp} = \CM{\C{C}'}{\Mp}
    \end{equation}
    
In order to get rid of agents with trivial input, we first consider a relabelling $\C{R}_S(t) = t+1$ for all systems $S \neq P$ and $\C{R}_P(t) = t$. Note that in the latter case $t=1$, and from this it follows straightforwardly that this is a relabelling in accordance with \cref{lemma:relabeling}. Agents with trivial input which are not the global past thus have operations of the form $\rho_{x_k} \otimes \ket{t=2}\bra{t=2}$, where $\rho_{x_k} \in \C{L}(\C{H}^{A^O_k} \otimes \C{H}^{R_k^{x_k}, T})$ in this relabelled protocol which follows directly from applying the relabelling to \cref{eq:trivagentnewdef}. Replace each such operation with the map which maps an $m$-message it receives at $t$ to $\bigodot_m \rho$ (with $\bigodot_{m=0} \rho_{x_k}$ being the vacuum as always) at $t+1$ (this evidently satisfies ALO with $\C{O}_k(t) = t+1$. Thus, the agents $1,...,n$ no longer have trivial input. Replace the causal box $\C{C}'$ with $\C{C}' \otimes \bigotimes_{k=1}^n \ket{0, t=1}\bra{0, t=1}^{A^I_k}$. It is immediate that this is still a process box protocol and is behaviourally equivalent to the one we constructed previously.
\end{proof}

\addcontrol*
\begin{proof}
    Throughout this proof, for a linear map $V: \C{H}^A \rightarrow \C{H}^{BC}$, we will use the notation (note that the states $\ket{\psi_C}$ in the below can be subnormalised)
    \begin{equation}
        \text{Im}_C(V) := \text{span}\{\ket{\psi_C} \in \C{H}^C|\exists \ket{\psi_A} \in \C{H}^A, \ket{\psi_B} \in \C{H}^B: \ket{\psi_C} = \bra{\psi_B} V \ket{\psi_A}\} \subseteq \C{H}^C.
    \end{equation}
    We remark that $\text{Im}(V) \subseteq \C{H}^B \otimes \text{Im}_C(V)$. 
    
    Let $V_1,...,V_{M+1}$ be a sequence representation of $\C{C}$ such that $V_{m+1}$ takes inputs during time step $t=2m+1$ and produces outputs during time step $t=2m+2$. Denote with $V := V_{M+1} \circ ... \circ V_1$ the resulting purification of $\C{C}$. 

    Due to \cref{lemma:cbone}, we can restrict the co-domain of $V_{m+1}$
    \begin{equation}
        V_{m+1}: \C{F}^{A^O_{O(m)}, t=2m+1} \otimes \C{H}^{\beta_m} \rightarrow (\C{H}^{A^I_{I(m)}, t=2m+2}  \oplus \ket{\Omega, t=2m+2}^{A^I_{I(m)}})\otimes \C{H}^{\beta_{m+1}}
    \end{equation}
    where $\beta_m$ denotes an ancilla. W.l.o.g., we assume that $\C{H}^{\beta_{m+1}} = \text{Im}_{\beta_{m+1}} (V_{m+1})$. If this is not the case, replace $\C{H}^{\beta_{m+1}}$ in the above equation with $\text{Im}_{\beta_{m+1}} (V_{m+1})$.

    We now claim that for each $m$, there exists an ancillary space $\alpha_m$ and an isometry 
    \begin{equation}
        \C{E}_m: \C{H}^{\beta_m} \rightarrow \C{H}^{\alpha_m} \otimes \bigoplus_{\C{K}: |\C{K}| \leq m} \ket{\C{K}}
    \end{equation}
    which can be written as
    \begin{equation}\label{eq:addcontrola}
        \C{E}_m = \bigoplus_{\C{K}: |\C{K}| \leq m} \C{E}^{\C{K}_m} \otimes \ket{\C{K}}
    \end{equation}
    where the map $\C{E}^{\C{K}_m}$ acts on the subspace 
    \begin{equation}\label{eq:addcontrola2}
        \text{Im}_{\beta_m}(P^{I, t \leq 2m}_{\C{K}} \circ V_{\leq m})
    \end{equation}
    where $V_{\leq m} := V_{m}\circ…\circ V_1$ and $P^{I, t\leq2m}_{\C{K}}$ is the projector onto the subspace 
    \begin{equation}
        \bigotimes_{k \in \C{K}} \C{H}^{A^I_k, t\leq 2m} 
        \subsetneq \bigotimes_{k=1}^{N+1} \F{A^I_k}{t\leq 2m},
    \end{equation}
    and maps $\bar{V}_1,...,\bar{V}_{M+1}$ which satisfy \cref{eq:addcontrol0} and (where $m \neq 1$)
    \begin{gather}\label{eq:addcontrolb}
    \begin{aligned}
        \bar{V}_1 &= \C{E}_1 \circ V_1 \\
        \bar{V}_{m+1} \circ \C{E}_m &= \C{E}_{m+1} \circ V_{m+1} 
    \end{aligned}
    \end{gather}
    The latter implies that $\bar{V}_1,..., \bar{V}_{M+1}$ is a sequence representation of $\C{C}$,
    \begin{gather}
    \begin{aligned}
        \bar{V}_{M+1} \circ ... \circ \bar{V}_1 &= \bar{V}_{M+1} \circ ... \circ \bar{V}_2 \circ \C{E}_1 \circ V_1 \\
        &= \bar{V}_{M+1} \circ ... \circ \bar{V}_3 \circ \C{E}_2 \circ V_2 \circ V_1 \\
        &= \C{E}_{M+1} \circ V_{M+1} \circ ... \circ V_1 \\
        &= \C{E}_{M+1} \circ V
    \end{aligned}
    \end{gather}
    where we repeatedly used \cref{eq:addcontrolb}. As $\C{E}_{M+1}$ acts only on the ancilla of the output of $V$ and is an isometry, the map $\bar{V} := \C{E}_{M+1} \circ V$ is still a purification of $\C{C}$.

    We prove the above claim by induction over $m$. 
    
    Consider $V_1$ and the spaces $\text{Im}_{\beta_1}(P^{I, t=2}_{I(0)} \circ V_1)$ and $\text{Im}_{\beta_1}(P^{I, \Omega, t=2}_{I(0)} \circ V_1)$, where $P^{I, \Omega, t=2}_{I(0)}$ is the pure projector on the vacuum on the wire $A^I_{I(0)}$ at $t=2$. We claim that these two spaces are orthogonal. To see this, notice first that 
    \begin{equation}\label{eq:addcontrolc}
        \text{Im}(V) \subseteq \bigotimes_{k} \C{H}^{A^I_k, \C{T}} \otimes \C{H}^{\beta_{F}}
    \end{equation}
    which is simply the purified version of property 4 of \cref{thm:simplifying}. Hence, if we project onto the zero or one-message space at $t=2$, we find 
    \begin{gather}
    \begin{aligned}
        \text{Im}(V_{\geq 2} \circ P^{I, t=2}_{I(0)} \circ V_1) &\subseteq \C{H}^{A^I_{I(0)}, t=2} \otimes \bigotimes_{k \neq I(0)} \C{H}^{A^I_k, t>2} \otimes \C{H}^{\beta_{F}} \\
        \text{Im}(V_{\geq 2} \circ P^{I, \Omega, t=2}_{I(0)} \circ V_1) &\subseteq \ket{\Omega, t=2}^{A^I_{I(0)}} \otimes \bigotimes_{k} \C{H}^{A^I_k, t>2} \otimes \C{H}^{\beta_{F}}.
    \end{aligned}
    \end{gather}
    This implies that
    \begin{gather}
    \begin{aligned}
        \text{Im}(V_{\geq 2}|_{\text{Im}_{\beta_1}(P^{I, t=2}_{I(0)} \circ V_1)}) &\subseteq \bigotimes_{k \neq I(0)} \C{H}^{A^I_k, t>2} \otimes \C{H}^{\beta_{F}} \\
        \text{Im}(V_{\geq 2}|_{\text{Im}_{\beta_1}(P^{I, \Omega, t=2}_{I(0)} \circ V_1)}) &\subseteq \bigotimes_{k} \C{H}^{A^I_k, t>2} \otimes \C{H}^{\beta_{F}}.
    \end{aligned}
    \end{gather}
    These are orthogonal subspaces as (using the wire isomorphisms) in the first line the space associated to the agent $I(0)$ is vacuum at all times and in the second line it is the 1-message space. Hence, $V_{\geq 2}$, an isometry, maps $\text{Im}_{\beta_1}(P^{I, t=2}_{I(0)} \circ V_1)$ and $\text{Im}_{\beta_1}(P^{I, \Omega, t=2}_{I(0)} \circ V_1)$ to orthogonal subspaces. 
    This is only possible if these two spaces are themselves orthogonal.

    Thus, there exists an isometry 
    \begin{equation}
        \C{E}_1: \C{H}^{\beta_1} \rightarrow \C{H}^{\alpha_1} \otimes (\ket{\emptyset} \oplus \ket{\{I(0)\}})
    \end{equation}
    for some large enough $\alpha_1$ with
    \begin{gather}
    \begin{aligned}
        \C{E}_1|_{\text{Im}_{\beta_1}(P^{I,t=2}_{I(0)} \circ V_1)} &= \C{E}^{I(0)}_1 \otimes \ket{\{I(0)\}} \\
        \C{E}_1|_{\text{Im}_{\beta_1}(P^{I, \Omega, t=2}_{I(0)} \circ V_1)} &= \C{E}^{\emptyset}_1 \otimes \ket{\emptyset},
    \end{aligned}
    \end{gather}
    where $\C{E}^{I(0)}_1,\C{E}^{\emptyset}_1: \C{H}^{\beta_1} \rightarrow \C{H}^{\alpha_1}$ are arbitrary isometries. Hence, $\C{E}_1$ satisfies \cref{eq:addcontrola} by definition.
    
    We define
    \begin{equation}
        \bar{V}_1 := \C{E}_1 \circ V_1
    \end{equation}
    which then acts as set out in \cref{eq:addcontrol0} by construction.

    Assume now that we have defined the isometries $\C{E}_m, \bar{V}_m$ up to some $m$. We now aim to find $\C{E}_{m+1}$.

    We proceed analogously to the base case $m=1$. 
    Let $\C{K} \subseteq \C{N}$ such that $|\C{K}| \leq m$. Using again \cref{eq:addcontrolc}, we find that 
    
    \begin{equation}
        \text{Im}(V_{>(m+1)} \circ P^{I, t \leq 2m+2}_{\C{K}} \circ V_{\leq (m+1)}) \subseteq \bigotimes_{k \in \C{K}} \C{H}^{A^I_k, \leq t=2m+2} \otimes \bigotimes_{l \not \in \C{K}} \C{H}^{A^I_l, t>2m+2} \otimes \C{H}^{\beta_{F}}
    \end{equation}
    which then implies that
    \begin{equation}
        \text{Im}(V_{>(m+1)}|_{\text{Im}_{\beta_{m+1}}(P^{I, t \leq 2m+2}_{\C{K}} \circ V_{\leq (m+1)})}) \subseteq \bigotimes_{l \not \in \C{K}} \C{H}^{A^I_l, t>2m+2} \otimes \C{H}^{\beta_{F}}.
    \end{equation}
    Hence, $V_{>(m+1)}$, an isometry, maps $\text{Im}_{\beta_{m}}(P^{I, t \leq 2m+2}_{\C{K}}\circ V_{\leq m})$ to orthogonal subspaces for different $\C{K}$ (note that even a difference in one index makes the spaces for two sets $\C{K}, \C{K}'$ in the above equation orthogonal). This is only possible if these spaces are themselves orthogonal. Note further that since $\bigoplus_{\C{K}}P^{I, t \leq 2m+2}_{\C{K}}$ acts as identity on $\text{Im}(V_{\leq (m+1)})$, we find that
    \begin{equation}
        \bigoplus_{\C{K}} \text{Im}_{\beta_{m+1}}(P^{I, t \leq 2m+2}_{\C{K}}\circ V_{\leq (m+1)}) = \text{Im}_{\beta_{m+1}} (V_{\leq (m+1)}) = \C{H}^{\beta_{m+1}}.
    \end{equation}
    We thus define $\C{E}_{m+1}$ by its action on these orthogonal subspaces,
    \begin{equation}
        \C{E}_{m+1}|_{\text{Im}_{\beta_{m}}(P^{I, t \leq 2m+2}_{\C{K}}\circ V_{\leq (m+1)})} = \C{E}_{m+1}^{\C{K}} \otimes \ket{\C{K}}
    \end{equation}
    for some arbitrary isometries
    \begin{equation}
        \C{E}^{\C{K}}_{m+1}: \text{Im}_{\beta_{m}}(P^{I, t \leq 2m+2}_{\C{K}}\circ V_{\leq (m+1)}) \rightarrow \C{H}^{\alpha_{m+1}}
    \end{equation}
    and a large enough space $\alpha_{m+1}$. Thus, we have found an isometry $\C{E}_{m+1}$ which satisfies the decomposition \cref{eq:addcontrola}. We now define (time stamps in the below are implicit)
    \begin{equation}\label{eq:addcontrol4}
        \bar{V}_{m+1} \ket{\psi}^{A^O_{O_m}} \ket{\alpha_m}^{\alpha_m} \ket{\C{K}} := \begin{cases}
            \C{E}_{m+1} \circ V_{m+1} \circ \C{E}^{-1}_{m}, &\ket{\alpha_m}^{\alpha_m} \ket{\C{K}} \in \text{Im}(\C{E}_m) \\
            \ket{\Omega}^{A^I_{I(m)}} \ket{\psi}^{\alpha_{m+1}'} \ket{\alpha_m}^{\alpha''_{m+1}} \ket{\C{K}}, &\ket{\alpha_m}^{\alpha_m} \ket{\C{K}} \in \text{Im}(\C{E}_m)^\perp, I(m) \in \C{K} \\
            \ket{0}^{A^I_{I(m)}} \ket{\psi}^{\alpha_{m+1}'} \ket{\alpha_m}^{\alpha''_{m+1}} \ket{\C{K} \cup I(m)}, &\ket{\alpha_m}^{\alpha_m} \ket{\C{K}} \in \text{Im}(\C{E}_m)^\perp, I(m) \not \in \C{K}
        \end{cases}
    \end{equation}
  
    where $\C{H}^{\alpha'_{m+1}} \cong \C{F}^{A^O_{O(m)}, t=2m+1}, \C{H}^{\alpha''_{m+1}} \cong \C{H}^{\alpha_m}$ and $\C{H}^{\alpha'_{m+1} \alpha''_{m+1}}$ is some subspace of $\C{H}^{\alpha_{m+1}}$ which is orthogonal to $\text{Im}_{\alpha_{m+1}}(\C{E}_{m+1})$ (note that we can always assume such a subspace to exist w.l.o.g., as otherwise we can simply consider a larger ancilla).

    By definition, we then have the desired relation between $\bar{V}_{m+1}, V_{m+1}$
    \begin{equation}
        \bar{V}_{m+1} \circ \C{E}_m = \C{E}_{m+1} \circ V_{m+1} \circ \C{E}^{-1}_m \circ \C{E}_m = \C{E}_{m+1} \circ V_{m+1}.
    \end{equation} 

    The map $\bar{V}_{m+1}$ is an isometry as its restriction to each of the three cases in the definition is clearly an isometry (in each case, it is defined as a composition of isometries). As these cases correspond to orthogonal subspaces and are mapped to orthogonal subspaces the overall map $\bar{V}_{m+1}$ is also an isometry.
    
    Further, it satisfies the property in \cref{eq:addcontrol0}. This is obvious in the bottom two cases of the definition. For the top case, let $\ket{\alpha_m}^{\alpha_m} \ket{\C{K}} \in \text{Im}(\C{E}_m)$ (note that the image of $\C{E}_m$ is spanned by such product states due to \cref{eq:addcontrola} which holds by induction assumption). We need to show that $\bar{V}_{m+1}$ maps the control to $\ket{\C{K}}$ if it outputs vacuum on the agent's wire and to $\ket{\C{K} \cup I(m)}$ if it outputs non-vacuum on the agent's wire. Using \cref{eq:addcontrola,eq:addcontrola2}, we can assume that 
    \begin{equation}
        \C{E}_{m}^{-1}\ket{\alpha_m}^{\alpha_m} \ket{\C{K}} \in \text{Im}_{\beta_m}(P^{I, t \leq 2m}_{\C{K}} \circ V_{\leq m}).
    \end{equation}
    From this, it immediately follows that 
    \begin{gather}
    \begin{aligned}
        P^{I, t=2m+2}_{I(m)} &\circ V_{m+1} \circ \C{E}^{-1}_m \ket{\psi}^{A^O_{O_m}} \ket{\alpha_m}^{\alpha_m} \ket{\C{K}} \in \text{Im}_{A^I_{I(m)}\beta_{m+1}}(P^{I, t \leq 2m+2}_{\C{K} \cup I(m)}\circ V_{\leq (m+1)}) \\
        P^{I, \Omega, t=2m+2}_{I(m)} &\circ V_{m+1} \circ \C{E}^{-1}_m \ket{\psi}^{A^O_{O_m}} \ket{\alpha_m}^{\alpha_m} \ket{\C{K}} \in \text{Im}_{A^I_{I(m)}\beta_{m+1}}(P^{I, t \leq 2m+2}_{\C{K}}\circ V_{\leq (m+1)}).
    \end{aligned}
    \end{gather}
    The statement then follows from the fact that by definition $\C{E}_{m+1}$ maps states in $\text{Im}_{A^I_{I(m)}\beta_{m+1}}(P^{I, t \leq 2m+2}_{\C{K} \cup I(m)}\circ V_{\leq (m+1)})$ to a control $\ket{\C{K} \cup I(m)}$ and states in $\text{Im}_{A^I_{I(m)}\beta_{m+1}}(P^{I, t \leq 2m+2}_{\C{K}}\circ V_{\leq (m+1)})$ to a control $\ket{\C{K}}$ or more explicitly 
    \begin{gather}
    \begin{aligned}
        \bar{V}_{m+1} \ket{\psi}^{A^O_{O_m}} \ket{\alpha_m}^{\alpha_m} \ket{\C{K}} =& \C{E}_{m+1} \circ V_{m+1} \circ \C{E}^{-1}_m \ket{\psi}^{A^O_{O_m}} \ket{\alpha_m}^{\alpha_m} \ket{\C{K}} \\
        =& \C{E}_{m+1}|_{\text{Im}_{\beta_{m+1}}(P^{I, t \leq 2m+2}_{\C{K} \cup I(m)}\circ V_{\leq (m+1)})} \circ P^{I, t=2m+2}_{I(m)} \circ V_{m+1} \circ \C{E}^{-1}_m \ket{\psi}^{A^O_{O_m}} \ket{\alpha_m}^{\alpha_m} \ket{\C{K}} \\
        +& \C{E}_{m+1}|_{\text{Im}_{\beta_{m+1}}(P^{I, t \leq 2m+2}_{\C{K}}\circ V_{\leq (m+1)})} \circ P^{I, \Omega, t=2m+2}_{I(m)} \circ V_{m+1} \circ \C{E}^{-1}_m \ket{\psi}^{A^O_{O_m}} \ket{\alpha_m}^{\alpha_m} \ket{\C{K}} \\
        =& (\C{E}_{m+1}^{\C{K} \cup I(m)} \circ P^{I, t=2m+2}_{I(m)} \circ V_{m+1} \circ \C{E}^{-1}_m \ket{\psi}^{A^O_{O_m}} \ket{\alpha_m}^{\alpha_m} \ket{\C{K}}) \otimes \ket{\C{K} \cup I(m)} \\
        +& (\C{E}_{m+1}^{\C{K}} \circ P^{I, \Omega, t=2m+2}_{I(m)} \circ V_{m+1} \circ \C{E}^{-1}_m \ket{\psi}^{A^O_{O_m}} \ket{\alpha_m}^{\alpha_m} \ket{\C{K}}) \otimes \ket{\C{K}}
    \end{aligned}
    \end{gather}
    where in the second equality we used that $V_{m+1}$ only produces 0- and 1-message states. This completes the proof by induction, and hence we obtain the desired statement.

    Finally, note that if $m=M$, then $I(M) = F \not \in \C{K}$ for all $\C{K}$, and so the second case in \cref{eq:addcontrol4} never occurs while in the first case the output must be non-vacuum to ensure the overall causal box's image lies in the 1-message space. 
\end{proof}

\pbtoqcqc*
\begin{proof}
    It suffices to show that for every process box protocol $\PBP$ satisfying the properties of \cref{thm:simplifying} there exists a QC-QC protocol $\mathfrak{Q}$ which is behaviourally equivalent to $\mathfrak{P}$. 

    Let $\bar{V}_1,..,\bar{V}_{M+1}$ be a sequence representation of $\C{C}$ as in \cref{lemma:addcontrol}.

    We note a number of useful facts. Since any restriction of an isometry is again an isometry, 
    \begin{equation}\label{eq:pbtoqcqc1}
        \bar{V}^{\rightarrow I(m)}_{\C{K},m+1} \otimes \ket{\C{K} \cup I(m)} \bra{\C{K}}+\bar{V}^{\rightarrow \Omega}_{\C{K},m+1} \otimes \ket{\C{K}} \bra{\C{K}}
    \end{equation}
    is an isometry for all $m, \C{K}$ when restricting the control space to the span of $\ket{\C{K}}$.
    For $m=M$, we further have that
    \begin{equation}\label{eq:pbtoqcqc2}
        \bar{V}^{\rightarrow F}_{\C{K}, M+1} \otimes \ket{\C{K} \cup F}\bra{\C{K}} 
    \end{equation}
    is an isometry when restricting the control space to the span of $\ket{\C{K}}$. This immediately implies that $\bar{V}^{\rightarrow F}_{\C{K}, M+1}$ is also an isometry.
    
    Further, for all $m, p$
    \begin{equation}\label{eq:pbtoqcqc3}
        \bar{V}^{\rightarrow I(m) \dagger}_{\C{K},m+1} \bar{V}^{\rightarrow \Omega}_{\C{K},p+1} = 0 
    \end{equation}
    as one of the two maps outputs a non-vacuum state while the other outputs a vacuum state on the agent's wire, which are orthogonal. Similarly, for $m \neq p$ we additionally find
    \begin{gather}
    \begin{aligned}
        \bar{V}^{\rightarrow I(m) \dagger}_{\C{K},m+1} \bar{V}^{\rightarrow I(p)}_{\C{K},p+1} = 0 \\
        \bar{V}^{\rightarrow \Omega \dagger}_{\C{K},m+1} \bar{V}^{\rightarrow \Omega }_{\C{K},p+1} = 0.
    \end{aligned}
    \end{gather}
    This is because one of the two maps outputs on the wire $\alpha_{m+1}$ while the other one outputs onto $\alpha_{p+1}$ which we can treat as orthogonal.
    
    We then also have that
    \begin{equation}
        \bar{V}_{m+1, \C{K}} := \bar{V}^{\rightarrow I(m)}_{\C{K},m+1} +\bar{V}^{\rightarrow \Omega}_{\C{K},m+1} 
    \end{equation}
    is an isometry as
    \begin{gather}
    \begin{aligned}
                \bar{V}_{m+1, \C{K}}^\dagger \bar{V}_{m+1, \C{K}} =& (\bar{V}^{\rightarrow I(m)}_{\C{K},m+1} +\bar{V}^{\rightarrow \Omega}_{\C{K},m+1})^{\dagger} (\bar{V}^{\rightarrow I(m)}_{\C{K},m+1} +\bar{V}^{\rightarrow \Omega}_{\C{K},m+1}) \\
                =& \bar{V}^{\rightarrow I(m) \dagger}_{\C{K},m+1} \bar{V}^{\rightarrow I(m)}_{\C{K},m+1} +\bar{V}^{\rightarrow \Omega \dagger}_{\C{K},m+1} \bar{V}^{\rightarrow \Omega}_{\C{K},m+1} \\
                =& \bra{\C{K}} (\bar{V}^{\rightarrow I(m)}_{\C{K},m+1} \otimes \ket{\C{K} \cup I(m)} \bra{\C{K}}+\bar{V}^{\rightarrow \Omega}_{\C{K},m+1} \otimes \ket{\C{K}} \bra{\C{K}})^\dagger \\
                &(\bar{V}^{\rightarrow I(m)}_{\C{K},m+1} \otimes \ket{\C{K} \cup I(m)} \bra{\C{K}}+\bar{V}^{\rightarrow \Omega}_{\C{K},m+1} \otimes \ket{\C{K}} \bra{\C{K}}) \ket{\C{K}} \\
                =& \mathbb{1}^{A^{I}_{I(m)}, t=2m+2} \otimes \mathbb{1}^{\alpha_m}.
    \end{aligned}
    \end{gather}
    where in the second line we used \cref{eq:pbtoqcqc3} and in the last line we used that \cref{eq:pbtoqcqc1} is an isometry when acting on the control in state $\ket{\C{K}}$.

    We define now 
    \begin{gather}\label{eq:defineqcqcv}
    \begin{aligned}
        V^{\rightarrow I(m)}_{\C{K},m+1}(\cdot) &:= \bra{t=2m+2} \bar{V}^{\rightarrow I(m)}_{\C{K},m+1}|_{A^{O,t=2m+1}_{O(m)} \alpha'_m} ( \cdot \otimes \ket{t=2m+1}) \\
        V^{\rightarrow \Omega}_{\C{K},m+1}(\cdot) &:= \ket{\Omega}^{A^O_{O(m+1)}} \bra{\Omega, t=2m+2}^{A^I_{I(m)}} \bar{V}^{\rightarrow \Omega}_{\C{K},m+1}|_{A^{O,t=2m+1}_{O(m)} \alpha'_m}  (\cdot \otimes \ket{t=2m+1}).
    \end{aligned}
    \end{gather}
    Here, the space $\C{H}^{\alpha'_m}$ is a finite-dimensional subspace of the in general infinite-dimensional space $\C{H}^{\alpha_m}$. Since $V^{\rightarrow I(0)}_{\emptyset, 1}: \C{H}^{P} \rightarrow \C{H}^{A^{I}_{I(0)}} \otimes \C{H}^{\alpha_1}, V^{\rightarrow \Omega}_{\emptyset, 1}: \C{H}^{P} \rightarrow \ket{\Omega}^{A^I_{I(0)}} \otimes \C{H}^{\alpha_1}$ have finite-dimensional domain their images (and thus the span of the union of their images) must be finite-dimensional as well. Hence, there must exist a finite-dimensional subspace $\C{H}^{\alpha'_1} \subseteq \C{H}^{\alpha_1}$ such that $\text{Im}(V^{\rightarrow I(0)}_{\emptyset, 1}), \text{Im}(V^{\rightarrow \Omega}_{\emptyset, 1}) \subseteq \C{H}^{A^{I}_{I(0)}} \otimes \C{H}^{\alpha'_1}$ and we can write
    \begin{equation}\label{eq:restrictcodomain}
        V^{\rightarrow I(0)}_{\emptyset, 1}, V^{\rightarrow \Omega}_{\emptyset, 1}: \C{H}^{P} \rightarrow \C{H}^{A^{I}_{I(0)}} \otimes \C{H}^{\alpha'_1}.
    \end{equation}
    We iterate this procedure and find a subspace $\C{H}^{\alpha'_{m+1}} \subseteq \C{H}^{\alpha_{m+1}}$ for each $m$ with $\text{Im}(V^{\rightarrow I(m)}_{\C{K}, m+1}), \text{Im}(V^{\rightarrow \Omega}_{\C{K}, m+1}) \subseteq \C{H}^{A^{I}_{I(m)}} \otimes \C{H}^{\alpha'_{m+1}}$ for all $\C{K}$ and then restrict the co-domain analogously to \cref{eq:restrictcodomain}. The maps defined in \cref{eq:defineqcqcv} are hence linear maps with finite-dimensional domain and co-domain, making them appropriate for the QC-QC framework.

    We define now the internal operations of the QC-QC, (in the below sum the terms corresponding to $m'=m$ are $V^{\rightarrow I(m')}_{\C{K}_n,m'+1}$)
    \begin{equation}\label{eq:qcqcbuildingblock}
        V^{\rightarrow k_{n+1}}_{\C{K}_{n-1}, k_n} = \sum_{\substack{m',m: m' \geq m\\ O(m) = k_n, I(m') = k_{n+1}}} \underbrace{V^{\rightarrow I(m')}_{\C{K}_n,m'+1} \circ V^{\rightarrow \Omega}_{\C{K}_n,m'} \circ ... \circ V^{\rightarrow \Omega}_{\C{K}_{n},m+1}|_{\C{H}^{A^O_{O(m)} \alpha'_m}}}_{=: V^{\rightarrow k_{n+1}, m -1\rightarrow m'}_{\C{K}_{n-1}, k_n}}
    \end{equation}
    where $\C{K}_n$ is a shorthand for $\C{K}_{n-1} \cup k_n$ and
    \begin{equation}
        V_{n+1} = \sum_{\substack{\C{K}_{n-1} \\ k_n, k_{n+1}}} V^{\rightarrow k_{n+1}}_{\C{K}_{n-1}, k_n} \otimes \ket{\C{K}_{n-1} \cup k_n, k_{n+1}}\bra{\C{K}_{n-1}, k_n}.
    \end{equation}
    This map is defined with an ancillary input wire $\beta_n$ and ancillary output wire $\beta_{n+1}$ both of which can are isomorphic to $\bigoplus_{m=1}^{M+1} \C{H}^{\alpha'_m}$. These maps need to be isometries on their effective input spaces. We can actually show a slightly stronger condition, namely that $V_{n+1}$ is an isometry on the space
    \begin{equation}\label{eq:effspacesuper}
        \bigoplus_{k_n} (\C{H}^{A^O_{k_n}} \otimes \bigoplus_{m: O(m) =k_n} \C{H}^{\alpha'_m} \otimes \bigoplus_{\C{K}_{n-1}} \ket{\C{K}_{n-1}, k_n} )   \end{equation}
    because given our definition of the internal operations this is a superset of the effective input space of $V_{n+1}$ (we recall that the effective input space corresponds to the states which can produced by the preceding internal operations when composed with arbitrary agent operations). Note now that 
    \begin{equation}
        V_{n+1}^\dagger V_{n+1} = \sum_{\substack{\C{K}_{n},\\ k_n, l_n \in \C{K}_n, \\k_{n+1} \not \in \C{K}_n}}V^{\rightarrow k_{n+1} \dagger}_{\C{K}_{n}\backslash k_n, k_n} V^{\rightarrow k_{n+1}}_{\C{K}_{n}\backslash l_n, l_n} \otimes \ket{\C{K}_{n}\backslash k_n, k_n} \bra{\C{K}_n \backslash l_n, l_n}
    \end{equation}
    which implies that the map $V_{n+1}$ is then an isometry on the space in \cref{eq:effspacesuper} if
    \begin{equation}
        \sum_{k_{n+1}}V^{\rightarrow k_{n+1} \dagger}_{\C{K}_{n}\backslash k_n, k_n} V^{\rightarrow k_{n+1}}_{\C{K}_{n}\backslash l_n, l_n} = \delta_{k_n, l_n} \mathbb{1}^{A^{O}_{k_n}} \otimes \bigoplus_{O(m) = k_n} \mathbb{1}^{\alpha'_m}
    \end{equation}
    for all $\C{K}_n$ and $k_n, l_n \in \C{K}_n$. 
    
    Plugging in \cref{eq:qcqcbuildingblock} into the LHS of the above equation yields
    \begin{gather}
    \begin{aligned}
        \sum_{k_{n+1}}V^{\rightarrow k_{n+1} \dagger}_{\C{K}_{n}\backslash k_n, k_n}& V^{\rightarrow k_{n+1}}_{\C{K}_{n}\backslash l_n, l_n} \\
        =& \sum_{k_{n+1}} \sum_{\substack{m',m: m' \geq m\\ O(m) = k_n,\\ I(m') = k_{n+1}}} \sum_{\substack{p', p: p' \geq p\\ O(p) = l_n,\\ I(p') = k_{n+1}}}V^{\rightarrow \Omega \dagger}_{\C{K}_n,m+1}|_{\C{H}^{A^O_{k_n} \alpha'_m}} ... V^{\rightarrow \Omega \dagger}_{\C{K}_{n},m'}\underbrace{V^{\rightarrow I(m') \dagger}_{\C{K}_n,m'+1}  V^{\rightarrow I(p')}_{\C{K}_n,p'+1}}_{=0 \text{ if } m' \neq p'} V^{\rightarrow \Omega}_{\C{K}_n,p'} ...  V^{\rightarrow \Omega}_{\C{K}_{n},p+1}|_{\C{H}^{A^O_{l_n} \alpha'_p}} \\
        =& \sum_{\substack{m',m: m' \geq m\\ O(m) = k_n}} \sum_{\substack{p:\\ O(p) = l_n}} V^{\rightarrow \Omega \dagger}_{\C{K}_n,m+1}|_{\C{H}^{A^O_{k_n} \alpha'_m}} ... V^{\rightarrow \Omega \dagger}_{\C{K}_{n},m'}V^{\rightarrow I(m') \dagger}_{\C{K}_n,m'+1}  V^{\rightarrow I(m')}_{\C{K}_n,m'+1} V^{\rightarrow \Omega}_{\C{K}_n,m'} ...  V^{\rightarrow \Omega}_{\C{K}_{n},p+1}|_{\C{H}^{A^O_{l_n} \alpha'_p}} \\
        =& \sum_{\substack{m:\\ O(m) = k_n}} \sum_{\substack{p:\\ O(p) = l_n}} \Big[V^{\rightarrow \Omega \dagger}_{\C{K}_n,m+1}|_{\C{H}^{A^O_{k_n} \alpha'_m}} ... V^{\rightarrow \Omega \dagger}_{\C{K}_{n},M}\underbrace{V^{\rightarrow F \dagger}_{\C{K}_n,M+1}  V^{\rightarrow F}_{\C{K}_n,M+1}}_{=\mathbb{1}^{A^O_{O(M)}\alpha'_M}} V^{\rightarrow \Omega}_{\C{K}_n,M} ...  V^{\rightarrow \Omega}_{\C{K}_{n},p+1}|_{\C{H}^{A^O_{l_n} \alpha'_p}} \\
        &+ \sum_{M>m' \geq m}  V^{\rightarrow \Omega \dagger}_{\C{K}_n,m+1}|_{\C{H}^{A^O_{k_n} \alpha'_m}} ... V^{\rightarrow \Omega \dagger}_{\C{K}_{n},m'}V^{\rightarrow I(m') \dagger}_{\C{K}_n,m'+1}  V^{\rightarrow I(m')}_{\C{K}_n,m'+1} V^{\rightarrow \Omega}_{\C{K}_n,m'} ...  V^{\rightarrow \Omega}_{\C{K}_{n},p+1}|_{\C{H}^{A^O_{l_n} \alpha'_p}} \Big] \\
        =& \sum_{\substack{m:\\ O(m) = k_n}} \sum_{\substack{p:\\ O(p) = l_n}} \Big[ V^{\rightarrow \Omega \dagger}_{\C{K}_n,m+1}|_{\C{H}^{A^O_{k_n} \alpha'_m}} ... \underbrace{(V^{\rightarrow \Omega \dagger}_{\C{K}_{n},M} V^{\rightarrow \Omega}_{\C{K}_n,M} + V^{\rightarrow I(M) \dagger}_{\C{K}_{n},M} V^{\rightarrow I(M)}_{\C{K}_n,M})}_{=\mathbb{1}^{A^O_{O(M-1)} \alpha'_{M}}} ...  V^{\rightarrow \Omega}_{\C{K}_{n},p+1}|_{\C{H}^{A^O_{l_n} \alpha'_p}} \\
        &+ \sum_{M-1 > m' \geq m} V^{\rightarrow \Omega \dagger}_{\C{K}_n,m+1}|_{\C{H}^{A^O_{k_n} \alpha'_m}} ... V^{\rightarrow \Omega \dagger}_{\C{K}_{n},m'}V^{\rightarrow I(m') \dagger}_{\C{K}_n,m'+1}  V^{\rightarrow I(m')}_{\C{K}_n,m'+1} V^{\rightarrow \Omega}_{\C{K}_n,m'} ...  V^{\rightarrow \Omega}_{\C{K}_{n},p+1}|_{\C{H}^{A^O_{l_n} \alpha'_p}}\Big] \\
        =& \sum_{\substack{m > p:\\ O(m) = k_n, O(p) = l_n}} \underbrace{(V^{\rightarrow \Omega}_{\C{K}_n, m+1} + V^{\rightarrow I(m)}_{\C{K}_n, m+1})^\dagger|_{\C{H}^{A^O_{k_n} \alpha'_m}}(V^{\rightarrow \Omega}_{\C{K}_n, m+1} + V^{\rightarrow I(m)}_{\C{K}_n, m+1})}_{=\mathbb{1}^{A^O_{O(m)}\alpha'_m}} ...  V^{\rightarrow \Omega}_{\C{K}_{n},p+1}|_{\C{H}^{A^O_{l_n} \alpha'_p}} \\
        &+ \sum_{\substack{m < p:\\ O(m) = k_n, O(p) = l_n}} V^{\rightarrow \Omega \dagger}_{\C{K}_n,m+1}|_{\C{H}^{A^O_{k_n} \alpha'_m}}... \underbrace{(V^{\rightarrow \Omega}_{\C{K}_n, p+1} + V^{\rightarrow I(p)}_{\C{K}_n, p+1})^\dagger(V^{\rightarrow \Omega}_{\C{K}_n, p+1} + V^{\rightarrow I(p)}_{\C{K}_n, p+1})|_{\C{H}^{A^O_{k_n} \alpha'_p}}}_{=\mathbb{1}^{A^O_{O(p)}\alpha'_m}}\\
        &+ \sum_{\substack{m:\\ O(m) = k_n, O(m) = l_n}} \underbrace{(V^{\rightarrow \Omega \dagger}_{\C{K}_{n},m+1} V^{\rightarrow \Omega}_{\C{K}_n,m+1} + V^{\rightarrow k_{n+1} \dagger}_{\C{K}_{n},m+1} V^{\rightarrow l_{n+1}}_{\C{K}_n,m+1})|_{\C{H}^{A^O_{k_n} \alpha'_m}}}_{=\mathbb{1}^{A^O_{k_{n}} \alpha'_{m}}} \\
        =& \sum_{\substack{m > p:\\ O(m) = k_n, O(p) = l_n}} \mathbb{1}^{A^O_{O(m)}\alpha'_m} V^{\rightarrow \Omega}_{\C{K}_n,m}...  V^{\rightarrow \Omega}_{\C{K}_{n},p+1}|_{\C{H}^{A^O_{l_n} \alpha'_p}}\\
        &+ \sum_{\substack{m < p:\\ O(m) = k_n, O(p) = l_n}} V^{\rightarrow \Omega \dagger}_{\C{K}_n,m+1}|_{\C{H}^{A^O_{k_n} \alpha'_m}}... V^{\rightarrow \Omega \dagger}_{\C{K}_n,p} \mathbb{1}^{A^O_{O(p)}\alpha'_p}\\
        &+ \sum_{\substack{m:\\ O(m) = k_n, O(m) = l_n}} \underbrace{(V^{\rightarrow \Omega \dagger}_{\C{K}_{n},m+1} V^{\rightarrow \Omega}_{\C{K}_n,m+1} + V^{\rightarrow k_{n+1} \dagger}_{\C{K}_{n},m+1} V^{\rightarrow l_{n+1}}_{\C{K}_n,m+1})|_{\C{H}^{A^O_{k_n} \alpha'_m}}}_{=\mathbb{1}^{A^O_{k_{n}} \alpha'_{m}}} \\
        &= \delta_{k_n, l_n}  \mathbb{1}^{A^{O}_{k_n}} \otimes \bigoplus_{O(m) = k_n} \mathbb{1}^{\alpha'_m}.
    \end{aligned}
    \end{gather}
    Here, we repeatedly used the properties discussed at the beginning of the proof. In the third equality, we split off the last term (corresponding to $m'=M-1$) from the rest of the sum, in the fourth equality we again split off the last term (corresponding to $m'=M$) from the sum over $m'$, combining it with the previously split off term. We repeat this procedure until this is no longer possible which yields the fifth equality. There, in the underbraced equalities, we used that if $U$ is an isometry and $U_1$ is a restriction of $U$, then $U^\dagger U_1 = U_1^\dagger U_1$ which is the identity on the restricted space. In the sixth equality we used that $V^{\rightarrow \Omega}_{\C{K}_n, m}$ outputs a vacuum on the agent wire and $\mathbb{1}^{A^O_k}$ is 0 on vacuum (as it is the identity on the one-message space). Hence, the sums corresponding to $m>p$ and $m<p$ vanish. For the sum corresponding to $m=p$, notice that if $k_n \neq l_n$, it contains no terms, hence, the $\delta_{k_n, l_n}$ in the last line.

    We have thus found that these internal operations define a valid QC-QC. We have to show now that this QC-QC, with appropriate choice of allowed agent operations, is behaviourally equivalent to the original process box protocol.

    For this purpose, we identify an agent's operation $\M$ in the process box protocol whose restriction to the 1-message space we can write as 
    \begin{equation}
        \sum_{i_k} \bar{A}^{i_k}_{k, x_k} \cdot \bar{A}^{i_k \dagger}_{k, x_k} =  \sum_{i_k} \sum_{t, t' \in \C{T}^I_k} A^{i_k}_{k, x_k} \cdot A^{i_k \dagger}_{k, x_k} \otimes \ket{t+1}\bra{t} \cdot \ket{t'}\bra{t'+1} 
    \end{equation}
    according to \cref{lemma:tiao} with the following agent operation in the QC-QC picture
    \begin{equation}
        \sum_{i_k} A^{i_k}_{k, x_k} \cdot A^{i_k \dagger}_{k, x_k}.
    \end{equation}
    Due to linearity and AO, it suffices to show that for any choices of $x_k, i_k$ 
    \begin{equation}
        \CM{\bar{V}}{\bar{A}^{i_k}_{k, x_k}} = \CM{\bar{V}}{\bar{A}^{i_k}_{k, x_k} \circ P^I_k } = \CM{V}{A^{i_k}_{k, x_k}}
    \end{equation}
    where $\bar{V} = \bar{V}_M \circ ... \circ \bar{V}_1$ (and $\bar{V}$ is thus a purification of $\C{C}$ while $V$ is a purification of the constructed QC-QC). 

    Due to associativity of loop composition, we can first calculate the sequential composition of $\bar{V}$ with $\bigotimes_{k=1}^{N+1} P^I_k = \bigotimes_{k=1}^{N+1} \sum_{m_k: I(m_k) = k} P^{I,t=2m_k}_k$. For a given choice of ordering of agents $(k_1,...,k_N)$ and time stamp labels compatible with this ordering, $m_{k_1} < m_{k_2} < ... < m_{k_N}$ we find with the help of the sequence representation of $\bar{V}$ (note that composition is along ancillas only on the RHS below)
    \begin{gather}
    \begin{aligned}
        \bigotimes_{k_n} P^{I,t=2m_{k_n}}_{k_n} \circ \bar{V} =& \bar{V}^{\rightarrow F}_{\C{N}, M+1} \circ \bar{V}^{\rightarrow \Omega}_{\C{N}, M} \circ ... \circ  \bar{V}^{\rightarrow \Omega}_{\C{N},m_{k_N}+2}\circ \bar{V}^{\rightarrow k_N}_{\C{N} \backslash k_N, m_{k_{N}}+1} \circ \bar{V}^{\rightarrow \Omega}_{\C{N}\backslash k_N,m_{k_N}} \circ \\
        &\circ... \circ \bar{V}^{\rightarrow k_{n+1}}_{\{k_1,...,k_n\},m_{k_n}+1} \circ ... \circ \bar{V}^{\rightarrow k_1}_{\emptyset,m_{k_1}+1} \circ ...
    \end{aligned}
    \end{gather}
    In words, the RHS consists of the maps from \cref{lemma:addcontrol} in sequence with the map which sends a non-vacuum state to $k_n$ in the $m_{k_n}-1$-th position for each $k_n$ and the maps which output vacuum in all other positions.

    Composing this with the agent operations, and using that the agent operations map vacuum at $t$ to vacuum at $t+1$ as well as \cref{eq:defineqcqcv} to get rid of the time stamps and the vacuum projectors 
    \begin{gather}
    \begin{aligned}
      \lc{\bigotimes_{k_n}& P^{I,t=2m_{k_n}}_{k_n} \circ\bar{V} \circ \bigotimes_{k=0}^{N+1} \bar{A}^{i_k}_{k, x_k}} \\
        &= A^{i_F}_{F, x_F} \underbrace{V^{\rightarrow F}_{\C{N}, M+1} V^{\rightarrow \Omega}_{\C{N}, M} ... V^{\rightarrow \Omega}_{\C{N},m_{k_N}+2}}_{=V^{\rightarrow F, m_{k_N}+1 \rightarrow M}_{\C{N}\backslash k_N, k_N}} A^{i_{k_N}}_{k_N, x_{k_N}} V^{\rightarrow k_N}_{\C{N} \backslash k_N, m_{k_{N}}+1} ... \\
        &= A^{i_F}_{F, x_F}\circ V^{\rightarrow F, m_{k_N} \rightarrow M}_{\C{N}\backslash k_N, k_N} \circ ...\circ A^{i_{k_n}}_{k_n, x_{k_n}}  \circ V^{\rightarrow k_{n+1}, m_{k_n} \rightarrow m_{k_{n+1}}}_{\C{K}_{n-1}, k_n} \circ ... \circ A^{i_{k_1}}_{k_1, x_{k_1}}  \circ V^{\rightarrow k_1, -1 \rightarrow m_{k_1}}_{\emptyset, \emptyset} \circ A^{i_P}_{P, x_P}.
    \end{aligned}
    \end{gather}
    Finally, summing over all time stamps compatible with a given order of the agents $(k_1,...,k_N)$ and then summing over all possible orders yields
    \begin{gather}\label{eq:pbisqc}
    \begin{aligned}
        \lc{\bar{V} &\circ \bigotimes_{k=0}^{N+1} \sum_{m_k: I(m_k) = k} A^{i_k}_{k, x_k}\otimes  \ket{t=2m_k+1}\bra{t=2m_k}}\\ 
        =& \sum_{(k_1,...,k_N)} \sum_{m_{k_1} < ... < m_{k_N}} A^{i_F}_{F, x_F} \circ V^{\rightarrow F, m_{k_N} \rightarrow M}_{\C{N}\backslash k_N, k_N} \circ ...\circ A^{i_{k_n}}_{k_n, x_{k_n}}  \circ V^{\rightarrow k_{n+1}, m_{k_n} \rightarrow m_{k_{n+1}}}_{\C{K}_{n-1}, k_n} \circ ... \\
        &\circ A^{i_{k_1}}_{k_1, x_{k_1}}  \circ V^{\rightarrow k_1, -1 \rightarrow m_{k_1}}_{\emptyset, \emptyset} \circ A^{i_P}_{P, x_P} \\
        =& \sum_{(k_1,...,k_N)} A^{i_F}_{F, x_F} \circ \sum_{m_{k_N}} V^{\rightarrow F, m_{k_N} \rightarrow M}_{\C{N}\backslash k_N, k_N} \circ A^{i_{k_N}}_{k_N, x_{k_N}} \circ \sum_{m'_{k_N}, m_{k_{N-1}}}V^{\rightarrow k_N, m_{k_{N-1}} \rightarrow m'_{k_N}}_{\C{N} \backslash k_N, k_{N-1}} \circ A^{i_{k_{N-1}}}_{k_{N-1}, x_{k_{N-1}}} \circ  ... \circ \\
        &\circ A^{i_{k_{n+1}}}_{k_{n+1}, x_{k_{n+1}}}\circ \underbrace{\sum_{m'_{k_{n+1}}, m_{k_n}} V^{\rightarrow k_{n+1}, m_{k_n} \rightarrow m'_{k_{n+1}}}_{\C{K}_{n-1}, k_n}}_{=V^{\rightarrow k_{n+1}}_{\C{K}_{n-1}, k_n}} \circ A^{i_{k_{n}}}_{k_{n}, x_{k_{n}}} \circ... \circ A^{i_{k_{1}}}_{k_{1}, x_{k_{1}}}\circ \sum_{m_{k_1}} V^{\rightarrow k_1, -1 \rightarrow m_{k_1}}_{\emptyset, \emptyset} \circ A^{i_P}_{P, x_P} \\
        =& \sum_{(k_1,...,k_N)} A^{i_F}_{F, x_F} \circ V^{\rightarrow F}_{\C{N}\backslash k_N, k_N} \circ A^{i_{k_N}}_{k_N, x_{k_N}} \circ ... \circ A^{i_{k_{n+1}}}_{k_{n+1}, x_{k_{n+1}}} \circ V^{\rightarrow k_{n+1}}_{\C{K}_{n-1}, k_n} \circ A^{i_{k_{n}}}_{k_{n}, x_{k_{n}}} \circ ... A^{i_{k_{1}}}_{k_{1}, x_{k_{1}}} \circ V^{\rightarrow k_1}_{\emptyset, \emptyset} \circ A^{i_P}_{P, x_P}.
    \end{aligned}
    \end{gather}
    In the second equality we used that
    \begin{equation}
        V^{\rightarrow k_{n+2}, m_{k_{n+1}} \rightarrow m'_{k_{n+2}}}_{\C{K}_{n}, k_{n+1}} \circ V^{\rightarrow k_{n+1}, m_{k_n} \rightarrow m'_{k_{n+1}}}_{\C{K}_{n-1}, k_n} = \delta_{m_{k_{n+1}}, m'_{k_{n+1}}} V^{\rightarrow k_{n+2}, m_{k_{n+1}} \rightarrow m'_{k_{n+2}}}_{\C{K}_{n}, k_{n+1}} \circ V^{\rightarrow k_{n+1}, m_{k_n} \rightarrow m_{k_{n+1}}}_{\C{K}_{n-1}, k_n}
    \end{equation}
    as the map $V^{\rightarrow k_{n+1}, m_{k_n} \rightarrow m'_{k_{n+1}}}_{\C{K}_{n-1}, k_n}$ outputs on wire $\alpha'_{m'_{k_{n+1}}}$ on which $V^{\rightarrow k_{n+2}, m_{k_{n+1}} \rightarrow m'_{k_{n+2}}}_{\C{K}_{n}, k_{n+1}}$ is 0 if $m_{k_{n+1}} -1 \neq m'_{k_{n+1}}$.

    In the last line of \cref{eq:pbisqc}, we recognise the action of a QC-QC on maps (cf. \cref{eq:qcqcaction}), hence
    \begin{equation}
        \CM{\bar{V}}{\sum_{m_k} (A^{i_k}_{k, x_k} \otimes \ket{t=2m_k+1}\bra{t=2m_k})} = \CM{V}{A^{i_k}_{k, x_k}}
    \end{equation}
    which (once we sum over the indices $i_k$ and trace out the purifying system) proves that the QC-QC protocol we constructed is behaviourally equivalent to the process box protocol $\mathfrak{P}$. 
\end{proof}

\finalsimp*

\begin{proof}
By \cref{theorem:pbtoqcqc}, the process box protocol $\PBP$ can be mapped to a behaviourally equivalent QC-QC protocol associated with a QC-QC $\C{Q}$. By the results of \cite{salzger2024mappingindefinitecausalorder}, every QC-QC $\C{Q}$ can be mapped to a causal box (referred to as a causal box extension there), together with a mapping of each agent operation in the QC-QC picture to one in the causal box picture such that the AO and LO conditions (referred there as spatiotemporal closed labs) are satisfied and the two scenarios are behaviourally equivalent. In the language of the present paper, this shows that every QC-QC protocol defined by taking a QC-QC and allowing for each agent, all possible quantum instruments compatible with their input and output spaces, can be mapped to a behaviourally equivalent process box protocol. 
Furthermore, the process box protocols in this mapping from QC-QCs satisfies further useful conditions detailed in Sec 4 of \cite{salzger2024mappingindefinitecausalorder}. This includes property 1 (see Sec 4.2, also Fig 8 in \cite{salzger2024mappingindefinitecausalorder}), property 2 (see definitions 3 and 4 and the calculations preceding them in \cite{salzger2024mappingindefinitecausalorder}) and property 3 (in Sec 4.5 in \cite{salzger2024mappingindefinitecausalorder} the causal box is defined via an explicit construction of its sequence representation, where each $V_n$ maps the $m$-message space at $t$ to the $m$-message space at $t+1$) of the above corollary. \Cref{eq:oneatatime} follows from this as follows: Due to AO, only terms as
\begin{equation}
    \bigotimes_{k=1}^{N+1}\bra{i_k,t_k}^{A^I_k} \C{C}' \circ \bigotimes_{k=0}^{N+1} \Mp(\ket{i_k,t_k} \bra{j_k, t'_k}) \bigotimes_{k=1}^{N+1}\ket{j_k, t'_k}^{A^I_k}
\end{equation}
can contribute to the loop composition $\CM{\C{C}'}{\Mp}$. As $\Mp$ maps a 1-message state at $t$ to a 1-message state at $t+1$, we then find that the above vanishes unless for each $t$, the number of messages at $t$ matches the number of matches at $t+2$ for both $\bigotimes_{k=0}^N\ket{i_k,t_k}$ as well as $\bigotimes_{k=0}^N\ket{j_k, t'_k}$. As this holds for all $t$, there must actually be an equal number of messages at each input time. As there are $N$ input times and $N$ messages in total, this number must be 1. Thus, the terms that do not vanish are precisely those which appear on the LHS of \cref{eq:oneatatime}. 

This immediately implies the same also for QC-QC protocols where we allow any subset of possible operations (not necessarily all instruments) for the agents. 

Then, keeping the causal box of the above protocol (obtained from the QC-QC to PB mapping of the previous work), but restricting the agents' operations to be precisely those that appear via the image of the two mappings (PB protocol to QC-QC protocol and QC-QC protocol to PB protocol) applied to the original process box protocol $\PBP$, it is easy to see that we now obtain a new process box protocol $\PBPp$ which is behaviourally equivalent to $\PBP$ and which respects all three properties of the corollary. 

\end{proof}

\end{document}